\documentclass[reqno]{amsart}

\usepackage[foot]{amsaddr}

\usepackage[UKenglish]{babel}

\usepackage[utf8]{inputenc}

\usepackage{microtype}
\usepackage{relsize}
\usepackage{hyperref}
\usepackage{amssymb}

\usepackage[capitalise,noabbrev,nameinlink]{cleveref}

% COMMANDE POUR L'APPENDIX AUTO GÉNÉRÉ
\newenvironment{sendappendix}{}{}
\newenvironment{ifappendix}{\killcontents}{\endkillcontents}
\newenvironment{ifnotappendix}{}{}
\newif\ifsubmission
\submissiontrue
\newcommand\appref[1]{the full paper}%\cref

\usepackage[capitalise,noabbrev]{cleveref}
\usepackage{thmtools,thm-restate}
\usepackage{enumerate}
\theoremstyle{plain}
\newtheorem{theorem}{Theorem}[section]
\newtheorem{lemma}[theorem]{Lemma}

\newtheorem{corollary}[theorem]{Corollary}
\theoremstyle{definition}

\newtheorem{example}[theorem]{Example}
\theoremstyle{remark}

\newtheorem{fact}[theorem]{Fact}
\crefname{lemma}{Lemma}{lemmas}
\crefname{definition}{Definition}{definitions}
\crefname{fact}{Fact}{facts}
\Crefname{fact}{Fact}{Facts}
\crefname{enumi}{item}{items}
\Crefname{enumi}{Item}{Items}
\crefname{claim}{Claim}{claims}
\crefname{equation}{Equation}{equations}

\usepackage{natbib} 
\providecommand{\doi}[1]{\href{https://doi.org/#1}{\nolinkurl{doi:#1}}}

\title{When Locality Meets Preservation}
\author{Aliaume Lopez}
\address{Universit\'e Paris-Saclay, ENS Paris-Saclay, CNRS, LMF, France}
\address{Universit\'e Paris Cit\'e, CNRS, IRIF, France}
\email{aliaume.lopez@ens-paris-saclay.fr}
\subjclass[2010]{Primary 03C40; Secondary 03C13, 54H30}
% PACKAGES MISE EN PAGE
%\usepackage[left=2cm,right=2cm]{geometry}
%\usepackage[utf8]{inputenc}
%\usepackage[T1]{fontenc}
%\usepackage[textsize=footnotesize]{todonotes}

% FIGURE INCLUSION

\usepackage{import}
\usepackage{pdfpages}
\usepackage{transparent}
\usepackage{xcolor}

\newcommand{%
\def\svgwidth{1\columnwidth}
\import{./figures/}{.pdf_tex}
}[2][1]{%
\def\svgwidth{#1\columnwidth}
\import{./figures/}{#2.pdf_tex}
}

% PACKAGES MATHÉMATIQUES

\usepackage{amsthm}
\usepackage{amsmath}
\usepackage{booktabs}
\usepackage{stmaryrd}
\usepackage{faktor}
\usepackage{paralist}

\usepackage{upgreek}

\usepackage{mathtools}
\usepackage{thmtools}
\usepackage{thm-restate}
%\mathrel{\raisebox{1pt}{\rule{.4pt}{.7ex}\hspace{-.8pt}}{\twoheadrightarrow}}}

% PACKAGES TIKZ
\usepackage{tikz}
\usetikzlibrary{positioning}
\usetikzlibrary{patterns}
\usepackage{tikz-cd}
\usetikzlibrary{backgrounds}

% PACKAGES RÉFÉRENCES, LIENS ETC
%\usepackage{hyperref}
%\hypersetup{colorlinks=true,linktocpage}
%\usepackage[capitalise,noabbrev,nameinlink]{cleveref}

% ENUMERATIONS, SPECIAL ENVIRONMENTS

%\usepackage{paralist}

% USER DEFINED COMMANDS

\newcommand\defined{\triangleq}

\newcommand\ltrsk{\L{}o\'s-Tarski Theorem}

\newcommand{\puext}{preservation under extensions}

\newcommand\FO{\mathsf{FO}}
\newcommand\MSO{\mathsf{MSO}}

\newcommand\Mod{\operatorname{Struct}}
\newcommand\ModF{\operatorname{Fin}}

% Classes of structures

\newcommand\td{\operatorname{td}}

%\operatorname{Graphs}}
%\operatorname{Graphs}}
%\operatorname{Cycles}}
%\operatorname{Paths}}

\newcommand{\modset}[1]{\left\llbracket #1 \right\rrbracket}

\newcommand{\Balls}[3]{\operatorname{Balls}(#1,#2,#3)}

\newcommand{\Neighb}[3]{\mathcal{N}_{#1}(#2,#3)}

\newcommand{\Specter}[3]{\operatorname{Types}^{#2,#3}_{#1}}
\newcommand{\SpecterM}[3]{\operatorname{MTypes}^{#2,#3}_{#1}}

\newcommand{\rk}{\operatorname{rk}}
\newcommand{\NeighbT}[4]{\operatorname{tp}_{#1}^{#4}\left(#2,#3\right)}
\newcommand{\NeighbM}[4]{\operatorname{mtp}_{#1}^{#4}\left(#2,#3\right)}

\newcommand{\setof}[2]{\left\{ #1 \mid #2 \right\}}

\renewcommand{\vec}[1]{\mathbf{#1}}

\newcommand{\locleq}[3]{\mathbin{{\Rrightarrow}^{{#1},{#3}}_{#2}}}

\usepackage{environ}
\NewEnviron{killcontents}{}

\usepackage{tcolorbox}
\newtcolorbox{questions}{colback=green!5!white,colframe=green!75!black,fonttitle=\bfseries,title=Questions}
\newtcolorbox{warnings}{colback=orange!5!white,colframe=orange!75!black,fonttitle=\bfseries,title=Warning}
\newtcolorbox{insight}{colback=blue!5!white,colframe=blue!75!black,fonttitle=\bfseries,title=Insight}

%\newtheorem{tool}[theorem]{Tool}
%\newtheorem{conjecture}[theorem]{Conjecture}

% Using the NORD color palette
% for extra fanciness
% https://www.nordtheme.com/docs/colors-and-palettes
\definecolor{myBlue}{HTML}{88C0D0}
\definecolor{myDarkBlue}{HTML}{5E81AC}
\definecolor{myTeal}{HTML}{8FBCBB}
\definecolor{myOrange}{HTML}{D08770}
\definecolor{myGreen}{HTML}{A3BE8C}
\definecolor{myRed}{HTML}{BF616A}
\definecolor{myYellow}{HTML}{EBCB8B}
\definecolor{myPurple}{HTML}{B48EAD}
\definecolor{myBlack}{HTML}{3B4252}
\definecolor{Prune}{RGB}{99,0,60}

\usepackage{varwidth}

\usepackage[ruled]{algorithm2e} % For algorithms

\newcommand{\BadAxiom}{\mathcal{A}_{\textnormal{ord}}}
\newcommand{\BadOrder}{\modset{\mathcal{A}_{\textnormal{ord}}}_{\ModF(\upsigma)}}
\newcommand{\BadOrderS}{\mathcal{C}_{\textnormal{ord}}}

\newcommand{\XClass}{\mathcal{C}}

\newcommand{\CrefAbbrevNames}
  {%
    \crefname{figure}{Fig.}{Figs.}%
    \crefname{theorem}{Thm.}{Thms.}%
    \crefname{lemma}{Lem.}{Lemms.}%
    \crefname{corollary}{Cor.}{Cors.}%
  }
  \newcommand{\shortcref}[1]
  {{%
    \CrefAbbrevNames
    \cref{#1}%
  }}

\newcommand{\badphi}{\varphi_{B}}

\usepackage{pifont}

\newcommand{\localisable}{\ding{33}}
\newcommand{\locreduce}{\ding{91}}

%\author{Aliaume Lopez}
%\email{aliaume.lopez@ens-paris-saclay.fr}
%\orcid{https://orcid.org/0000-0002-4205-327X}
%\affiliation{%
  %\institution{Universit\'e Paris-Saclay, ENS Paris-Saclay, CNRS, LMF}
  %\city{Gif-sur-Yvette}
  %\country{France}}
%\affiliation{%
  %\institution{Universit\'e de Paris, CNRS, IRIF}
  %\city{Paris}
  %\country{France}}

\makeatletter
\let\oldparagraph\paragraph
\newcommand{\@paragraphstar}[1]{\oldparagraph*{\textbf{#1}}}
\newcommand{\@paragraphnostar}[1]{\oldparagraph{\textbf{#1}}}
\renewcommand{\paragraph}{\@ifstar{\@paragraphstar}{\@paragraphnostar}}
\makeatother

\begin{document}

\begin{abstract}
 This paper investigates the expressiveness of a fragment of
first-order sentences in Gaifman normal form, namely the positive
Boolean combinations of basic local sentences.  We show that they
match exactly the first-order sentences preserved under local
elementary embeddings, thus providing a new general preservation
theorem and extending the \L{}\'os-Tarski Theorem.

This full preservation result fails as usual in the finite, and we
show furthermore that the naturally related decision problems are
undecidable.  In the more restricted case of preservation under
extensions, it nevertheless yields new well-behaved classes of finite
structures: we show that preservation under extensions holds if and only if
it holds locally.

\end{abstract}

\keywords{Undecidability, Preservation theorem, Well quasi ordering, Tree depth, Locality,
Gaifman normal form, Finite Model Theory.
}

\clearpage

\maketitle
%\tableofcontents

\section{Introduction}
\label{sec:intro}%! TEX root = ../acm/monotone_gaifman.tex
\paragraph*{Preservation theorems.}
In classical model
theory, preservation theorems characterise
first-order definable sets enjoying some semantic property as those
definable in a suitable syntactic fragment
\cite[e.g.,][Section~5.2]{chang1990model}.
A well-known instance of a preservation theorem is
the \L{}o\'s-Tarski Theorem~\cite{tarski54,los55}: a first-order
sentence~$\varphi$ is preserved under extensions over all structures---i.e.,
$A\models\varphi$ and $A$ is an induced substructure of~$B$ imply
$B\models\varphi$---if and only if it is equivalent to an existential
sentence.
Similarly, the Lyndon Positivity Theorem applied to a unary predicate $X$~\cite{lyndon59}
connects surjective homomorphisms that are \emph{strong}
in every predicate except $X$
to sentences that are positive in $X$.
These two preservation theorems can be seen as
bridges between syntactic and semantic fragments in~\cref{fig:intro:summary}.

As most of classical model theory,
preservation theorems typically rely on compactness,
which is known to fail in the finite case.
Consequently,
preservation theorems generally do
not relativise to classes of structures, and in particular to the
class $\ModF(\upsigma)$ of all finite structures~[see the discussions
in \citenum{rosen02}, Section~2 and \citenum{kolaitis07},
Section~3.4]. For example, Lyndon's Positivity Theorem fails even
over the class of finite words~\cite{Kuperberg21}.
Nevertheless,
there are a few known instances of classes of finite structures where
some preservation theorems
hold~\cite{atserias2006preservation,Rossman08,Rossman16,
	harwath2014preservation}, and this type of question is still actively
investigated~\cite[e.g.][]{flum2021,Kuperberg21,dawar20}.

In the case of `order-based' preservation theorems like
the \L{}o\'s-Tarski Theorem, that hinge on the interplay between the
properties of the ordering and those of first-order logic, one can
typically focus on either of these two aspects.

Focusing on the
ordering, one can ask for the class $\XClass$ of
considered structures to
be \emph{well-quasi-ordered} with respect to
the induced substructure ordering (hereafter written
$\subseteq_i$).
In order-theoretic terms,
the set $\modset{\varphi}_{\XClass}$
of models
of a sentence $\varphi$ in $\XClass$
is \emph{upwards-closed} whenever
$\varphi$ is preserved under extensions.
The assumption that $\XClass$ is well-quasi-ordered implies that
$\modset{\varphi}_\XClass$ has finitely many minimal elements
for~$\subseteq_i$ and this is well known to imply {\puext}.  Practical
instances where~$\subseteq_i$ gives rise to a well-quasi-order are
scarce; it is the case for graphs of bounded tree-depth \citep*{Ding92}, but
to our knowledge the characterisation
by \citet*{Daligault2010} of which classes of bounded clique-width are
well-quasi-ordered yields the broadest known such class
(see the left column in \cref{fig:considered_classes}).

Focusing on first-order logic, one can leverage finite model theory results over classes of
structures provided they are somewhat locally well-behaved; this can be ensured
for instance by some flavour of \emph{sparsity}.
Two instances of
this are classes of bounded degree and the class of all graphs of
treewidth less than~$k$~\cite{atserias2008preservation}.  This is the
direction we take in this paper.

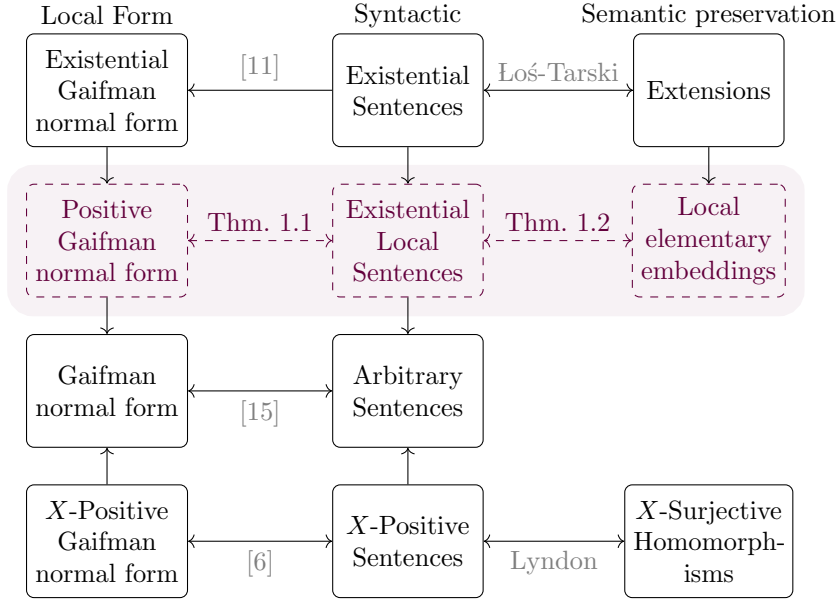
\begin{figure}[tbh]
	\centering
	\begin{tikzpicture}[xscale=2,yscale=1,
			mynode/.style={
					draw,
					minimum height=1.5cm,
					minimum width=2cm,
					execute at begin node={\begin{varwidth}{2cm}\centering},
								execute at end node={\end{varwidth}},
					rounded corners=1mm,
				},
			newresult/.style={
					Prune,
					dashed
				}
		]
		\begin{scope}
			\node[mynode] (GNF) at (0,2) {
				\centering Gaifman normal form};
			\node[mynode] (PxGNF) at (0,0)
			{$X$-Positive
				Gaifman normal form};
			\node[mynode,newresult] (PGNF) at (0,4) {Positive Gaifman normal form};
			\node[mynode] (ENF) at (0,6) {
				Existential Gaifman normal form};
			\draw[->] (ENF) -- (PGNF);
			\draw[->] (PGNF) -- (GNF);
			\draw[->] (PxGNF) -- (GNF);
			\node (C) at (0,7) {Local Form};
		\end{scope}
		\begin{scope}[xshift=4cm]
			\node[mynode] (Px) at (0,0) {
				$X$-Surjective Homomorphisms};
			\node[mynode,newresult] (Up) at (0,4) {
				Local elementary embeddings};
			\node[mynode] (Ext) at (0,6) {

				Extensions};
			\draw[->] (Ext) -- (Up);
			\node (B) at (0,7) { Semantic
				preservation};
		\end{scope}
		\begin{scope}[xshift=2cm]
			\node[mynode] (FO) at (0,2) {

				Arbitrary Sentences};
			\node[mynode] (PxFO) at (0,0) {

				$X$-Positive Sentences};
			\node[mynode,newresult] (LFO) at (0,4) {

				Existential Local Sentences};
			\node[mynode] (EFO) at (0,6) {

				Existential Sentences};
			\draw[->] (EFO) -- (LFO);
			\draw[->] (LFO) -- (FO);
			\draw[->] (PxFO) -- (FO);
			\node (A) at (0,7) {Syntactic};
		\end{scope}
		\draw[<->] (GNF) -- node[gray,midway,
			below]{[\citenum{libkin2012elements}]} (FO);

		\draw[<->] (PxGNF) --
		node[gray,midway,below]{[\citenum{dawar2006approximation}]} (PxFO);
		\draw[<->] (Px) --
		node[gray,midway,below]{Lyndon}
		(PxFO);

		\draw[->] (EFO) --
		node[gray,midway,above]{[\citenum{grohe2004existential}]}
		(ENF);

		\draw[<->] (Ext) --
		node[gray,midway,above]{\L{}o\'s-Tarski}
		(EFO);

		\draw[<->,newresult] (Up) --
		node[gray,midway,Prune,above]{
			\shortcref{lem:ccl:existlocalnfinfinite}
		} (LFO);

		\draw[<->,newresult] (PGNF) -- node[gray,Prune,midway,above]
		{\shortcref{thm:ccl:mongaifman}} (LFO);

		\begin{scope}[on background layer]
			\draw[line width=0.5cm, Prune!5,
				fill=Prune!5,
				rounded corners=1mm,
				line cap=round]
			(PGNF.south west) -- (Up.south east) --
			(Up.north east) -- (PGNF.north west) -- cycle;
		\end{scope}

	\end{tikzpicture}
	\caption{Comparison of the expressiveness of different fragments
		of {\boldmath $\FO[\upsigma]$} over \emph{general structures}.}
	\label{fig:intro:summary}
\end{figure}

\paragraph*{Locality.}
Given a structure $A$ over a finite relational signature $\sigma$,
its \emph{Gaifman
	graph} has the elements of $A$ as vertices and an edge $(a, b)$
whenever both $a$ and $b$ are in relation in~$A$. The distance $d_A(a,
	b)$ between two elements of~$A$ is their distance in the Gaifman graph
of~$A$.  For a tuple $\vec{a} \in A$, and a radius $r \in \mathbb{N}$
one can consider the $r$-neighborhood around $\vec{a}$ in $A$ defined
as $\Neighb{A}{\vec{a}}{r}
	\defined
	\setof{a' \in A}{\exists a \in \vec{a}, d_A(a,a') \leq r}
	=
	\bigcup_{a \in \vec{a}} \Neighb{A}{a}{r}
$; we emphasise that this union is not required to be disjoint.
Slightly abusing notations, we identify the set
$\Neighb{A}{\vec{a}}{r}\subseteq_i A$ with the corresponding induced
substructure of~$A$.

A first-order formula $\varphi(\vec{x})$ is said to
be \emph{$r$-local} if its evaluation over a structure $A$ and a tuple
$\vec{a}$ from $A$ only depends on the $r$-neighborhood of $\vec{a}$
in $A$, i.e.  $A, \vec{a} \models \varphi$ if and only if
$\Neighb{A}{\vec{a}}{r}, \vec{a} \models \varphi$.  In the particular
case of $r = 0$, local sentences are equivalent to quantifier free
sentences.  Since $\upsigma$ is a finite relational signature, every
formula $\varphi$ can be \emph{relativised} to a $r$-local formula,
hereafter denoted by $\varphi_{\leq r}$, which coincides with
$\varphi$ over neighborhoods of size $r$, although $\varphi_{\leq r}$
might have a higher quantifier rank.

It is more delicate to craft a notion of locality
for first order sentences
since they have no free variables.
The classical approach is to consider
\emph{basic local sentences} of the form
$\exists \vec{x}. \bigwedge_{i \neq j} d(x_i, x_j) > 2r \wedge
	\bigwedge_{i} \psi_{\leq r}(x_i)$
where $\psi_{\leq r}$
is a formula relativised to the $r$ neighborhood
of its single free variable.
Simply
put, the evaluation of a local $r$-basic sentence
is determined solely by the evaluation of
$\psi_{\leq r}$ over \emph{disjoint}
neighborhoods of radius $r$.
Note that the predicates $d(x, y) \leq r$
and
$y \in \Neighb{}{\vec{x}}{r}$
are definable for each fixed $r \in \mathbb{N}$.

The Gaifman Locality Theorem~\cite{gaifman1982local}
states
that every first-order sentence is equivalent to a Boolean combination of
basic local sentences.
This can be thought of as
$\FO[\upsigma]$
being limited to describing the
\emph{local behaviour} of structures.
In the study of preservation theorems
such as the {\ltrsk} over finite structures,
a first step is often to use Gaifman's normal form
and rely on the structural properties of
models
as a substitute for compactness~\cite{atserias2008preservation,atserias2006preservation,harwath2014preservation,dawar20}.

In~\cref{fig:intro:summary} several variants
of the Gaifman normal form are listed
along with their correspondence to
syntactic fragments of first-order logic.
For instance, in order to provide a faster
model checking algorithm,
\citeauthor{grohe2004existential}
crafted an \emph{existential Gaifman normal form}
as follows:
every existential first order sentence is equivalent to a \emph{positive}
Boolean combination of
\emph{existential} basic local sentences~\cite{grohe2004existential},
where an existential
basic local sentence is of the form
$\exists \vec{x}. \bigwedge_{i \neq j} d(x_i, x_j) > 2r \wedge
	\bigwedge_{i} \psi_{\leq r}(x_i)$
and $\psi_{\leq r}$
is an \emph{existential} formula relativised to the $r$-neighborhood
of its one free variable.
Note that this not an equivalence:
$\exists x. \exists y. d(x,y) > 2$ is in existential Gaifman normal form
but not preserved under extensions nor equivalent
to an existential sentence.
Another variant
of the Gaifman normal form,
tailored to the study of Lyndon's Positivity Theorem
over one unary predicate $X$
is
defined by \citet*[Theorem 2]{dawar2006approximation}.
It allows the authors to provide approximation schemes
for evaluating sentences positive in one unary predicate.

\subsection{Contributions}

Our first main contribution is a new line of equivalent characterisations
through local normal forms, syntactic restrictions, and preservation under
some ordering, as embodied in \cref{fig:intro:summary}.
This generalises the
correspondences between
the
existential Gaifman normal forms
defined by \citeauthor*{grohe2004existential},
existential sentences, and preservation under extensions.

\subsubsection{Existential Local Sentences and Positive Gaifman Normal Forms}

This paper studies a positive variant
of locality through the prism of
existential closures of $r$-local formulas,
abbreviated here as \emph{existential local sentences}. Those
are of the form
$\exists \vec{x}. \tau(\vec{x})$
where $\tau$ is an $r$-local formula.

As opposed to basic local sentences,
\emph{existential local sentences}
allow
interaction between the existentially quantified variables,
which increases their expressiveness.
Existential local sentences
also generalise existential sentences,
as quantifier-free formulas are $0$-local
(conversely, $0$-local formulas
can be rewritten as quantifier-free formulas).
Thus, by allowing formulas with a non-zero
locality radius, we provide a middle ground
between existential sentences and arbitrary sentences.

We prove that existential local sentences
and \emph{positive} Boolean combinations of basic local sentences
are equally expressive,
regardless of the class of structures considered.
This theorem, proven in \cref{sec:pgaif},
is related to the existential Gaifman normal form
of~\citet{grohe2004existential}
and relies, in part, on similar combinatorial arguments.

\begin{restatable}[Positive Locality]{theorem}{positivelocality}
	\label{thm:ccl:mongaifman}
	Let $\XClass \subseteq \Mod(\upsigma)$
	be a class of structures
	and
	$\varphi \in \FO[\upsigma]$ be a first-order sentence.
	The sentence $\varphi$ is equivalent over $\XClass$
	to an existential local sentence if and only if
	it is equivalent over $\XClass$ to a positive Boolean
	combination of basic local sentences.
	%The following two properties are equivalent.
	%\begin{enumerate}[(a)]
	%\item The sentence $\varphi$ is equivalent (over
	%$\XClass$) to an existential local sentence.
	%\item The sentence $\varphi$ is equivalent (over $\XClass$) to a positive
	%Boolean
	%combination of basic local sentences.
	%\end{enumerate}
\end{restatable}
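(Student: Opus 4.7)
The plan is to establish the two implications separately. For the easy direction, one observes that a basic local sentence
\[
\exists \vec{x}. \bigwedge_{i \neq j} d(x_i, x_j) > 2r \wedge \bigwedge_i \psi_{\leq r}(x_i)
\]
is already of the shape $\exists \vec{x}. \tau(\vec{x})$ with an $r$-local body: the distance predicate $d(x_i, x_j) > 2r$ is $r$-local in $(x_i, x_j)$ (equivalent to disjointness of their $r$-balls), and locality is preserved by finite conjunction. Closure of existential local sentences under positive Boolean combinations is then routine, by pooling existential witnesses and taking the maximum of the locality radii.

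For the main direction, fix $\varphi = \exists \vec{x}. \tau(\vec{x})$ with $\tau$ an $r$-local formula on an $n$-tuple of quantifier rank $k$. To any tuple $\vec{a}$ I associate its \emph{cluster partition} $P_{\vec{a}}$, the connected components of the graph on $\{1, \ldots, n\}$ whose edges are the pairs $(i, j)$ with $d_A(a_i, a_j) \leq 2r$. Elements of distinct clusters lie at pairwise distance strictly greater than $2r$, so the cluster $r$-neighborhoods $\Neighb{A}{\vec{a}_C}{r}$ are pairwise disjoint and their disjoint union equals $\Neighb{A}{\vec{a}}{r}$. Since $\tau$ is $r$-local, its value at $\vec{a}$ is computed in this disjoint union, and the Feferman--Vaught theorem then expresses that value as a Boolean function of the $\FO_k$-types of the individual cluster-neighborhoods $(\Neighb{A}{\vec{a}_C}{r}, \vec{a}_C)$. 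Enumerating those finitely many types as $r$-local formulas $\theta_{s,t}(\vec{y})$ in $s$ variables, the sentence $\varphi$ becomes a finite positive disjunction---ranging over partitions $P$ of $\{1, \ldots, n\}$ and type assignments $t$ making the Boolean function output true---of sentences of the form
\[
\exists \vec{x}. (P_{\vec{x}} = P) \wedge \bigwedge_{C \in P} \theta_{|C|, t(C)}(\vec{x}_C).
\]

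What remains---and is the step I expect to be the main technical obstacle---is to recognise each such disjunct as a positive Boolean combination of basic local sentences. The plan is first to collapse each cluster $C$ to a single representative $z_C$: since $C$ has diameter at most $2r(|C|-1)$ in $A$, both the cluster connectivity condition and the type $\theta_{|C|, t(C)}$ fold into a single $r(2|C|-1)$-local formula $\Psi_C(z_C)$ in one free variable. This rewrites the disjunct as $\exists \vec{z}. \bigwedge_{i \neq j} d(z_i, z_j) > 2r \wedge \bigwedge_i \Psi_{C_i}(z_i)$, which falls short of being a basic local sentence only because the outer distance threshold $2r$ is below twice the inner radius $R = r(2n-1)$. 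To bridge this gap, I would iterate the same cluster-decomposition argument at the level of the representatives $\vec{z}$: expand into a finite positive disjunction indexed by the possible coarser super-cluster configurations on $\vec{z}$, absorb each super-cluster into a fresh single representative via a larger single-variable local formula, and recurse. Since each round strictly decreases the number of surviving representatives---bounded above by $n$---the iteration terminates within $n$ rounds and yields the required positive Boolean combination of basic local sentences.
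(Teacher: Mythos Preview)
Your easy direction and the cluster-partition/Feferman--Vaught reduction are fine and parallel the paper's \cref{lem:ccl:existlocal,lem:pos:asblvsabl}. The gap is in what you call ``the main technical obstacle''. After collapsing each cluster to a single representative you obtain sentences of the shape $\exists \vec{z}.\,\bigwedge_{i\neq j} d(z_i,z_j)>2R \wedge \bigwedge_i \Psi_{C_i}(z_i)$ in which the one-variable local formulas $\Psi_{C_i}$ are in general \emph{different} across~$i$. These are what the paper calls \emph{asymmetric} basic local sentences, not basic local sentences; you write that the only defect is the mismatch between the distance threshold and the locality radius, but the asymmetry is a second, independent defect. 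Your iteration does not remove it: on the branch of the disjunction where all current representatives are already pairwise at distance $>2R$ (the trivial coarser partition), nothing merges, the count does not decrease, and you are left precisely with the asymmetric sentence above---now with matching radii, but still asymmetric. So ``each round strictly decreases the number of surviving representatives'' is false on that branch, and the recursion terminates on some branches at multi-variable asymmetric sentences.

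Turning an asymmetric basic local sentence into a positive Boolean combination of symmetric ones is the substantial missing step; it is the content of \cref{lem:pos:blvsabl} and is not a formality. Already for two variables, $\exists x_1 x_2.\, d(x_1,x_2)>2r \wedge \psi_1(x_1)\wedge\psi_2(x_2)$ has no obvious rewriting as a positive combination of sentences $\exists^{\geq m}_{r'} x.\,\psi(x)$. The paper's argument proceeds by induction on the number of variables, splitting on whether some $\psi_i$ has at least $k$ scattered witnesses (handled via \cref{fact:loc:repetitions}) or all witnesses of all $\psi_i$ lie in a region covered by boundedly many balls; in the latter case the finitely many possible spatial configurations of witnesses are encoded as ``template graphs'', and a security-cylinder device (\cref{fact:loc:securitycylinders}) is needed to make the resulting local description independent of which center is chosen. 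None of this combinatorics is present in your plan.
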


\subsubsection{A Semantic Characterisation of Existential Local Sentences}
Recall that a map $h \colon A \to B$
is an \emph{elementary embedding} whenever
for every first order sentence $\varphi(\vec{x})$
and every $\vec{a} \in A^k$
$A, \vec{a} \models \varphi$ if and only if
$A, h(\vec{a}) \models \varphi$.
A localised notion of elementary embeddings is obtained
as follows:
$h \colon A \to B$
is a \emph{local elementary embedding}
when for every $k\geq 1$, $r\geq 0$, $\vec{a} \in A^k$,
and $r$-local formula $\varphi$,
$A, \vec{a} \models \varphi$
if and only if $B, h(\vec{a}) \models \varphi$.
This definition is a strengthening of the \emph{induced substructure}
ordering,
defined by $A \subseteq_i B$ whenever there exists
an injective morphism $h \colon A \to B$
such that for all relations $R \in \upsigma$
and elements $\vec{a} \in A^k$,
$A, \vec{a} \models R(\vec{x})$
if and only if
$B, h(\vec{a}) \models R(\vec{x})$.

We define another natural
ordering on structures by writing
$A \locleq{r}{q}{k} B$
whenever,
for all $\varphi = \exists x_1, \dots x_k. \tau(\vec{x})$
where $\tau$ is an $r$-local formula of quantifier rank at most $q$,
$A \models \varphi$ implies $B \models \varphi$.
An existential local sentence
is naturally preserved under $\locleq{r}{q}{k}$
for some $r,q \geq 0$ and $k \geq 1$.
We define the limit of those preorders as
\begin{equation}
	\locleq{\infty}{\infty}{\infty}
	\defined
	\bigcap_{r \geq 0}
	\bigcap_{q \geq 0}
	\bigcap_{k \geq 1}
	\locleq{r}{q}{k} \, .
\end{equation}

\begin{restatable}[Local preservation]{theorem}{existlocalinfinite}
	\label{lem:ccl:existlocalnfinfinite}
	Let $\varphi$ be a sentence in $\FO[\upsigma]$.
	The following properties are equivalent
	over the class of all structures $\Mod(\upsigma)$.
	\begin{enumerate}[(a)]
		\item The sentence $\varphi$ is equivalent to an existential local
		      sentence.
		\item There exist $r,q,k \in \mathbb{N}$ such
		      that $\varphi$ is preserved under $\locleq{r}{q}{k}$.
		\item The sentence $\varphi$ is
		      preserved under $\locleq{\infty}{\infty}{\infty}$.
		\item The sentence $\varphi$
		      is preserved under local elementary embeddings.
	\end{enumerate}
\end{restatable}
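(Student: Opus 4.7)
The plan is to close the loop (a) $\Rightarrow$ (b) $\Rightarrow$ (c) $\Rightarrow$ (d) $\Rightarrow$ (c) $\Rightarrow$ (a). Three of these are essentially immediate from the definitions. For (a) $\Rightarrow$ (b), read the parameters $r,q,k$ off an existential local sentence equivalent to $\varphi$. For (b) $\Rightarrow$ (c), use that $\locleq{\infty}{\infty}{\infty}$ is contained in every $\locleq{r}{q}{k}$. For (c) $\Rightarrow$ (d), observe that any local elementary embedding $h \colon A \to B$ already witnesses $A \locleq{\infty}{\infty}{\infty} B$, since a witness $\vec{a}$ for $\exists \vec{x}. \tau(\vec{x})$ in $A$ gives a witness $h(\vec{a})$ in $B$ by $r$-locality of $\tau$.

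The implication (c) $\Rightarrow$ (a) is a standard compactness argument. Let $T = \{\psi : \varphi \models \psi,\ \psi \text{ existential local}\}$ and suppose $B \models T$. Consider $\{\varphi\} \cup \{\neg \psi : \psi \text{ existential local},\ B \not\models \psi\}$: an inconsistency would give, by compactness, $\varphi \models \psi_1 \vee \dots \vee \psi_n$ with $B \not\models \psi_i$ for all $i$; merging existential prefixes under fresh variables shows that this disjunction is again existential local and hence lies in $T$, contradicting $B \models T$. So the theory is consistent, any model yields $A \models \varphi$ with $A \locleq{\infty}{\infty}{\infty} B$, and (c) delivers $B \models \varphi$. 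Hence $T \models \varphi$; a final compactness step, using that finite conjunctions of existential local sentences are existential local, produces a single existential local sentence equivalent to $\varphi$.

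The substantive step is (d) $\Rightarrow$ (c). Suppose $A \locleq{\infty}{\infty}{\infty} B$ and $A \models \varphi$. Take an $|A|^+$-saturated elementary extension $B^*$ of $B$, so that $A \locleq{\infty}{\infty}{\infty} B^*$. I construct a local elementary embedding $h \colon A \to B^*$ by a ``forth-only'' back-and-forth, maintaining the invariant that for every existential local formula $\xi(\vec{x})$ and every matched tuple $\vec{a} \mapsto h(\vec{a})$, $A \models \xi(\vec{a})$ implies $B^* \models \xi(h(\vec{a}))$. At the empty tuple this is $A \locleq{\infty}{\infty}{\infty} B^*$. To extend by a fresh $a \in A$, realise in $B^*$ the partial type $\{\xi(h(\vec{a}), y) : A \models \xi(\vec{a}, a),\ \xi \text{ existential local}\}$ over the parameters $h(\vec{a})$; saturation reduces this to finite satisfiability, and for any finite family $\xi_1, \dots, \xi_m$ the formula $\exists y. \bigwedge_j \xi_j(\vec{x}, y)$ is itself existential local after merging existentials, so the invariant at the current stage transfers its satisfiability from $\vec{a}$ in $A$ (witnessed by $a$) to $h(\vec{a})$ in $B^*$. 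Since every $r$-local formula and its negation are existential local with an empty existential prefix, the invariant forces local types to be preserved along $h$, so $h$ is a local elementary embedding. Applying (d) gives $B^* \models \varphi$, and then $B \models \varphi$ by elementarity.

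The main obstacle is precisely the choice of invariant. A naive back-and-forth preserving only $r$-local types at each stage does not suffice: the type realisation one needs amounts to transferring $\exists y. \chi(\vec{x}, y)$ from $\vec{a}$ to $h(\vec{a})$, but this formula is not in general $r$-local in $\vec{x}$ when $\chi$ is $r$-local in $(\vec{x}, y)$. Strengthening to existential local formulas is exactly tight: the class is preserved by the two syntactic operations driving the inductive step (conjunction and existential quantification over new variables), initially available from the hypothesis, and strong enough to yield local-type preservation and hence a local elementary embedding at the end.
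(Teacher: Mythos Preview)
Your proof is correct and takes a genuinely different route from the paper. The paper closes the cycle via $(d) \Rightarrow (a)$ directly, using a local-diagram argument: for each $A \models \varphi$, adjoin constants $c_a$ for $a \in A$ and consider the theory $T^+(\mathsf{F},A)$ of all sentences that are $r$-local around these constants and hold in $\hat{A}$; any model of this theory carries a local elementary embedding from $A$, so by $(d)$ and two applications of compactness one extracts finitely many existential local sentences $\theta_{C_1},\dots,\theta_{C_n}$ whose disjunction is equivalent to $\varphi$. You instead factor through $(c)$: your $(c) \Rightarrow (a)$ is the dual compactness maneuver (collect existential-local consequences of $\varphi$, rather than the local diagram of a single model), and your $(d) \Rightarrow (c)$ is a separate model-theoretic step, building a local elementary embedding into a saturated elementary extension via a forth construction whose invariant is preservation of all existential local formulas. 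Your decomposition yields $(c) \Leftrightarrow (d)$ as a standalone fact with a direct proof, whereas the paper only obtains this implication by passing through $(a)$; the price you pay is invoking saturated models, while the paper's route needs nothing beyond compactness.
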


\subsubsection{Non-relativisation in the Finite}

The proof of \cref{lem:ccl:existlocalnfinfinite} does not relativise
to classes of finite structures, except for the equivalence
$(a) \Leftrightarrow (b)$.  Our second main contribution
in \cref{sec:ffc} is to show that \cref{lem:ccl:existlocalnfinfinite}
fails over $\ModF(\upsigma)$ the class of finite
structures, and to characterise for which parameters
$(r,k,q)$ preservation under $\locleq{r}{q}{k}$ leads to an
existential local form over $\ModF(\upsigma)$.

The picture is even bleaker when applying the
methodology of \citet*{flum2021} for the \L{}o\'s-Tarski Theorem
or \citet*{Kuperberg21} for Lyndon's Positivity Theorem: we show that most decision problems
ensuing this failure are undecidable.
Namely, we show in \cref{sec:ff:undecidable}
that
it is not possible to decide whether
a sentence is preserved under local elementary embedding
in the finite setting, nor is it possible
to decide whether a sentence preserved under elementary embedding
is equivalent to an existential local one,
and even under the promise that
the sentence is equivalent to an existential local one
such an equivalent sentence is not computable.

\subsubsection{Application to Preservation Under Extensions}
We use our understanding of existential local sentences
to split the proof of preservation under extensions
over a class $\XClass$
in two distinct steps:
\begin{inparaenum}
	\item[(\localisable)]
        $\XClass$ is \emph{localisable}, i.e.
	sentences preserved under extensions over $\XClass$
	are equivalent to existential local sentences $\XClass$,
	\item[(\locreduce)]
        $\XClass$ satisfies \emph{existential local preservation under extensions}, i.e.
        existential local sentences preserved under extensions
        over $\XClass$
        are equivalent to existential sentences over $\XClass$.
\end{inparaenum}
A class $\XClass$ that satisfies both (\localisable)
and (\locreduce) satisfies preservation under extensions, hence
this proof scheme is correct.
Moreover, existential sentences are existential local therefore
this proof scheme is complete.
\Cref{sec:presthm} is devoted to exploring the new classes of structures
where preservation under extensions holds that are gained
through this proof scheme. This is done by providing a finer
understanding of the interaction between locality and
combinatorial arguments in preservation under extensions.

(\localisable) Let us say that a class $\XClass \subseteq \ModF(\upsigma)$ is closed under
disjoint unions when the disjoint union of one structure
from~$\XClass$ and a finite structure from~$\ModF(\upsigma)$ remains in~$\XClass$.
When the class $\XClass$ is stable under induced substructures, we say
that $\XClass$ is \emph{hereditary}.
We prove in \cref{sec:presthm:inducedsub} that
hereditary classes of finite structures
closed under disjoint unions
are \emph{localisable} (\cref{lem:failure:inducedpres}). 
This is the diagonal edge labelled (\localisable) in~\cref{fig:intro:diagonal},
that this bypasses the non-relativisation of
\cref{lem:ccl:existlocalnfinfinite} in the finite.
As a consequence,
the class $\ModF(\upsigma)$ of finite structures
satisfies (\localisable).
Moreover, we prove in \cref{sec:locqo} that
closure under local elementary embeddings
coincides with closure under disjoint unions in the case of finite
structures.
This highlights the crucial use of
local elementary embeddings
in the literature~\cite{atserias2006preservation,
	atserias2008preservation, Rossman08},
in the guise of disjoint unions.

(\locreduce) 
We prove in \cref{sec:presthm:localdecstruct} that
a hereditary class $\XClass$ satisfies (\locreduce)
if and only if
its ``local neighbourhoods'' $\Balls{\XClass}{r}{k}$ satisfy preservation under extensions
for $r,k \geq 0$ (\cref{lem:pres:locwbpres}).
We construct the local neighbourhoods of a class
$\XClass$ by collecting the neighbourhoods around $k$ points
in structures of $\XClass$ as follows:
\begin{equation}
	\Balls{\XClass}{r}{k}
	\defined \setof{ \Neighb{A}{\vec{a}}{r}}{
		A \in \XClass
		\wedge
		\vec{a} \in A^{\leq k}
	} \, .
\end{equation}
This allows to bridge the gap between
existential sentences and existential local sentences
preserved under extensions in \cref{fig:intro:diagonal}
via the edge labelled (\locreduce).

\bigskip

The remaining of \cref{sec:presthm:localwb}
is devoted to proving
that the combination of
(\localisable) \cref{lem:failure:inducedpres}
and
(\locreduce) \cref{lem:pres:locwbpres}
strictly generalise previously known properties
that imply preservation under extensions.
To that end, we study classes $\XClass$ that are hereditary,
closed under disjoint unions, and
``locally well-behaved'', i.e. such that
$\Balls{X}{r}{k}$ is ``well-behaved'' for all $r,k \geq 0$.
Instances of ``well-behaved'' are
finite classes, classes of bounded tree-depth,
or more generally classes that are well-quasi-ordered
with respect to $\subseteq_i$, as depicted in the left
column of \cref{fig:considered_classes}.
By localising these properties, we obtain the
right column of \cref{fig:considered_classes},
which still implies preservation under extensions, but
strictly improves previously known results, with the exception
of ``locally finite classes'' that coincides
with the ``wide classes`` of \citet[Theorem 4.3]{atserias2008preservation}.
This validates our proof scheme as we effectively decoupled
the locality of first-order logic (\localisable)
from the combinatorial behaviour considered
(\locreduce)
in our proofs of preservation under extensions.

\begin{figure}[ht]
	\centering
	\begin{tikzpicture}[
			xscale=1.5,
			yscale=1.5,
			mynode/.style={
					draw,
					minimum height=1.5cm,
					minimum width=2.5cm,
					execute at begin node={\begin{varwidth}{2.5cm}\centering},
								execute at end node={\end{varwidth}},
					rounded corners=1mm,
				},
			newresult/.style={
					Prune,
					dashed
				}
		]
		\begin{scope}[xshift=4cm]
			\node[mynode,newresult] (Up) at (0,0) {
				Local elementary embeddings};
			\node[mynode] (Ext) at (0,2) {

				Extensions};
			\draw[->] (Ext) -- (Up);
			\node (B) at (0,3) { Semantic
				preservation};
		\end{scope}
		\begin{scope}[xshift=0cm]
			\node[mynode,newresult] (LFO) at (0,0) {

				Existential Local Sentences};
			\node[mynode] (EFO) at (0,2) {

				Existential Sentences};
			\node (A) at (0,3) {Syntactic};
		\end{scope}

		\draw[->] (EFO) --
		(Ext);

		\draw[<->,thick,red] (LFO) --
		node[midway, left, gray] {
			(\locreduce)
			\shortcref{lem:pres:locwbpres}
			+ $\subseteq_i$
		}
		(EFO);

		\draw[->,thick,red] (LFO) --
		node[gray,midway,below]{
			\shortcref{cor:ff:counterex}
		} (Up);

		\draw[->,thick, red]
		(Ext) --
		node[gray,midway, xshift=-2mm, yshift=1mm]{
			\rotatebox{23}{
				(\localisable)
				\shortcref{cor:pthm:localisable}
			}
		}
		(LFO);

		\begin{scope}[on background layer]
			\draw[line width=0.5cm, Prune!5,
				fill=Prune!5,
				rounded corners=1mm,
				line cap=round]
			(Ext) -- (Ext.south west) -- (Ext.north west) -- (Ext.north
			east) -- (Ext.south east) --
			(Ext.south west) --
			(Ext) --  (LFO) --
			(LFO.north east) -- (LFO.south east) -- (LFO.south west) --
			(LFO.north west) -- (LFO.north east) -- (LFO)
			-- (EFO)
			-- (EFO.north west) -- (EFO.south west) -- (EFO.south east)
			-- (EFO.north east) -- (EFO.north west)
			;
		\end{scope}
	\end{tikzpicture}
	\caption{Comparison of the expressiveness of fragments
		of {\boldmath $\FO[\upsigma]$} over hereditary classes of finite structures
		stable under disjoint unions. Single headed arrows represent strict inclusions.}
	\label{fig:intro:diagonal}
\end{figure}
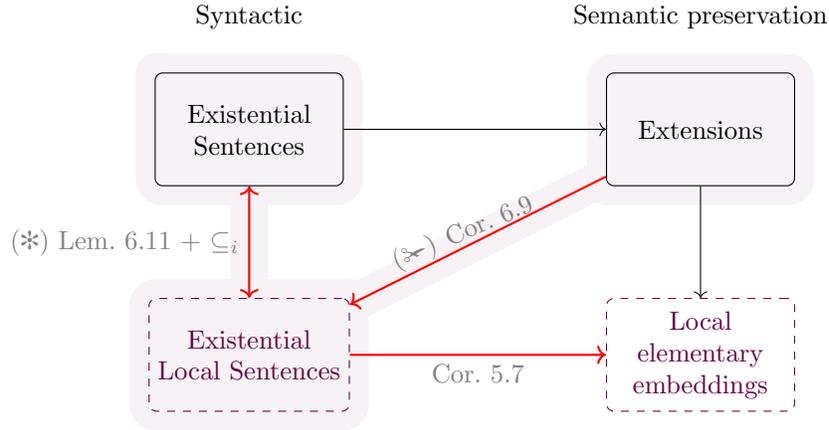

\begin{figure}[ht]
	\centering
	\begin{tikzpicture}[
			rotate=90,
			xscale=0.6,
			yscale=-3.5,
			mynode/.style={
					draw,
					minimum height=1cm,
					minimum width=3cm,
					execute at begin node={\begin{varwidth}{3cm}\centering},
								execute at end node={\end{varwidth}},
					rounded corners=1mm,
				},
			newresult/.style={
					Prune,
					dashed
				}
		]
		\node[mynode] (F) at (0,0) {finite};
		\node[mynode] (LF) at (0,2) {locally\\finite};

		\node[mynode] (T) at (3,0) {bounded treedepth};
		\node[mynode,newresult] (LT) at (3,2) {locally\\ bounded treedepth};

		\node[mynode] (W) at (6,0) {wqo};
		\node[mynode,newresult] (LW) at (6,2) {locally\\ wqo};

		\node[mynode] (P) at (9,0) {{pr.\ under extensions}};
		\node[mynode,newresult] (LP) at (9,2) {locally\\ {pr.\ under extensions}};

		\node[mynode] (WI) at (-3, 2) {wide};
		%\nodemynode[draw,align=center] (BD) at (0,4) {Bounded degree};

		%\draw[double equal sign distance,shorten >=0.5cm,shorten <=0.5cm] 
		%(LF) -- (BD);
		\draw[double equal sign distance,shorten >=0.1cm,shorten <=0.1cm]
		(LF) --
		node[midway, right] {\tiny (\labelcref{prop:presthm:finiteballs})}
		(WI);
		\draw[thick, Prune, double equal sign distance,shorten >=0.5cm,shorten <=0.5cm]
		(LP) --
		node[midway, above] {\tiny (\labelcref{prop:pres:localtrsk})}
		(P);
		\draw[->, thick, Prune] (F) -> (LF);
		\draw[->, thick, Prune] (LF) --
		node[midway, right] {\tiny (\labelcref{ex:diamond})}
		(LT);
		\draw[->, thick, Prune] (LT) --
		node[midway, right] {\tiny (\labelcref{ex:cliques})}
		(LW);
		\draw[->, thick, Prune] (LW) --
		node[midway, right] {\tiny (\labelcref{ex:cpp})}
		(LP);

		\draw[->] (F) --
		(T);
		\draw[->] (T) --
		node[midway, left] {\tiny \cite{Ding92}}
		(W);
		\draw[->] (W) -> (P);

		\draw[->, thick, Prune] (T) --
		node[midway, above] {\tiny (\labelcref{ex:diamond})}
		(LT);
		\draw[->, thick, Prune] (W) --
		node[midway, below] {\tiny (\labelcref{ex:diamond})}
		(LW);

		\begin{scope}[on background layer]
			\draw[line width=0.5cm, Prune!5,
				fill=Prune!5,
				rounded corners=1mm,
				line cap=round]
			(LT.south west) -- (LT.south east) --
			(LP.north east) -- (LP.north west) -- cycle;
		\end{scope}

	\end{tikzpicture}
	\caption{
		Implications of properties over \emph{hereditary} classes of
		\emph{finite} structures
		stable under disjoint unions.
	}
	\label{fig:considered_classes}
\end{figure}
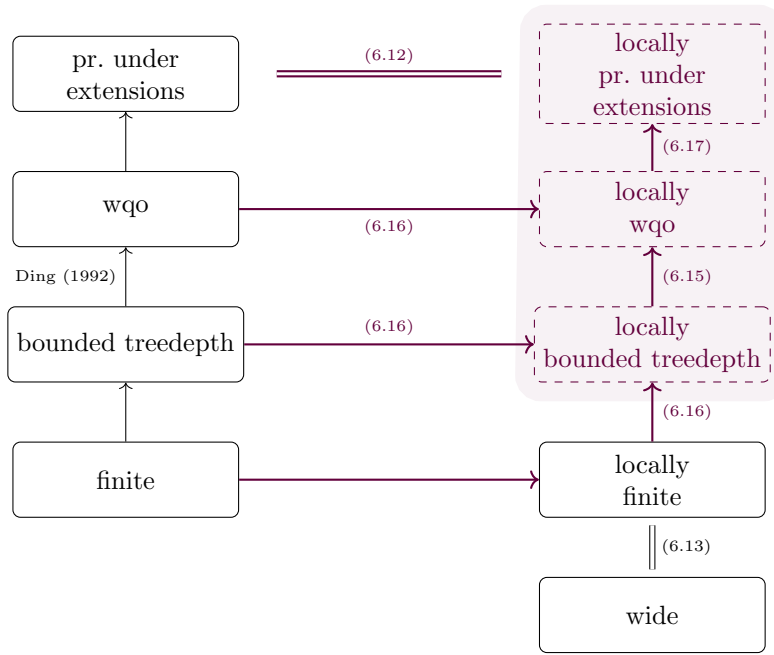

\section{Locality preorders}
\label{sec:locqo}As most of this paper is centered
around preorders of the form $\locleq{r}{q}{k}$,
we start by illustrating them
over several examples and
by relating them to well-known preorders
for specific values of $r,q$ and $k$.

Given a structure $A \in \Mod(\upsigma)$
and a tuple $\vec{a}$ of elements from $A$,
let us write $\NeighbT{A}{\vec{a}}{r}{q}$ for
the $(q,r)$-local type of $\vec{a}$,
i.e.,
the set of all formulas
of quantifier rank at most $q$ with $|\vec{a}|$ free variables
that are $r$-local such that $(A, \vec{a}) \models \varphi(\vec{x})$.
Note that there are only finitely many possible local types
for a given pair $(q,r) \in \mathbb{N}^2$ and a given number of variables~$|\vec{a}|$.
Those local types can be collected to fully describe the local
behaviour of~$A$ in
\begin{equation}
	\Specter{r}{q}{k}(A) \defined \setof{\NeighbT{A}{\vec{a}}{r}{q}}{ \vec{a} \in A^{\leq
				k}}\;.
\end{equation}
We shall often use this collection of types to
reason with our preorders.

\begin{fact}[Type collection]
	\label{fact:qo:specters}
	Let $r,q,k \in \mathbb{N}$ with $k \geq 1$.
	For all $A,B \in \Mod(\upsigma)$,
	$A \locleq{r}{q}{k} B$
	if and only if
	$\Specter{r}{q}{k}(A) \subseteq
		\Specter{r}{q}{k}(B)$.
\end{fact}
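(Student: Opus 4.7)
The plan is to prove both implications by translating between the two formulations via the standard finiteness-up-to-equivalence principle for first-order logic. Because $\upsigma$ is a finite relational signature, there are, up to logical equivalence, only finitely many formulas of quantifier rank at most $q$ with a bounded number of free variables; in particular, the $r$-local ones (locality being a semantic condition) form a finite collection up to equivalence. This lets me encode each type $t = \NeighbT{A}{\vec{a}}{r}{q}$ by a single $r$-local formula $\tau_t(\vec{x})$ of quantifier rank at most $q$, obtained as the conjunction of one representative from each equivalence class lying in $t$. I will also use that the negation of an $r$-local formula of rank at most $q$ is again $r$-local of rank at most $q$, which makes $t$ maximal: any tuple satisfying $\tau_t$ realises exactly $t$.

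For the forward direction, assume $A \locleq{r}{q}{k} B$ and pick $t \in \Specter{r}{q}{k}(A)$ realised by some $\vec{a} \in A^{\ell}$ with $\ell \leq k$. The sentence $\exists \vec{x}.\, \tau_t(\vec{x})$ is (after padding with dummy variables to reach exactly $k$ quantifiers, which is harmless on non-empty structures) of the shape required by the definition of $\locleq{r}{q}{k}$, and $A$ satisfies it by construction through $\vec{a}$. The preorder hypothesis then yields $\vec{b} \in B^{\ell}$ with $B, \vec{b} \models \tau_t$, and by maximality of $t$ this forces $\NeighbT{B}{\vec{b}}{r}{q} = t$, so $t \in \Specter{r}{q}{k}(B)$.

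For the converse, assume $\Specter{r}{q}{k}(A) \subseteq \Specter{r}{q}{k}(B)$ and let $\varphi = \exists x_1, \ldots, x_k.\, \tau(\vec{x})$ with $\tau$ an $r$-local formula of quantifier rank at most $q$. If $A \models \varphi$ via a witness $\vec{a}$, then $\tau \in \NeighbT{A}{\vec{a}}{r}{q} \in \Specter{r}{q}{k}(A)$, so by hypothesis there is $\vec{b} \in B$ (of the same length, since types record arities) realising the same type. In particular $B, \vec{b} \models \tau$, whence $B \models \varphi$.

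I do not anticipate any substantial obstacle: the argument is a routine unpacking of definitions. The one non-trivial ingredient is the finiteness of $\FO$ formulas of bounded quantifier rank up to equivalence over a finite relational signature, which is standard. The only mild subtlety is making sure $\tau_t$ captures $t$ exactly rather than merely implying a fragment of it, which is precisely what maximality of $t$ under negation of $r$-local formulas delivers.
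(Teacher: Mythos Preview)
The proposal is correct and takes essentially the same approach as the paper. The only presentational difference is that the paper explicitly conjoins the negations of formulas not in the type when building the isolating formula, whereas you appeal to maximality of the type (completeness under negation, since negation preserves both $r$-locality and quantifier rank) to get the same effect; you also make the padding for tuples of length $<k$ explicit, which the paper glosses over by working directly with $k$-tuples.
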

\begin{proof}
	Assume that $A \locleq{r}{q}{k} B$.  Let us write $T$ for the
	finite set (up to logical equivalence) of possible $r$-local
	formulas at radius $r$, quantifier rank $q$ and $k$ variables.
	Given a vector $\vec{a} \in A^k$, define
	$T_+ \defined \NeighbT{A}{\vec a}{r}{q}$ and $T_- \defined
		T \setminus T_+$. Those are finite collections of $r$-local
	formulas of quantifier rank at most $q$ with $k$ free variables
	$x_1, \dots x_k$.  Let us write
	$\psi(\vec{x}) \defined \bigwedge_{t \in T_+}
		t(\vec{x}) \wedge\bigwedge_{t \in T_-} \neg t(\vec{x})$, which is
	$r$-local and of quantifier rank at most $q$.  The structure $A$
	satisfies $\varphi \defined \exists \vec{x}. \psi(\vec{x})$ through
	the choice of the vector $\vec a$.  Because $A \locleq{r}{q}{k} B$,
	$B \models \varphi$ and this provides a vector $\vec b \in B^k$
	such that $B, \vec b \models \psi(\vec{x})$.  In turn, this proves that
	the sentences of quantifier rank at most $q$ and locality radius $r$
	that hold over $\vec b$ are exactly those in $T_+$.  Finally,
	$\NeighbT{B}{\vec b}{r}{q} = \NeighbT{A}{\vec a}{r}{q}$.

	Conversely, assume that
	$\Specter{r}{q}{k}(A) \subseteq \Specter{r}{q}{k}(B)$.  Let
	$\varphi$ be a sentence of the shape
	$\exists \vec{x}. \tau(\vec{x})$, where $|\vec{x}| = k$ and
	$\tau(\vec{x})$ is an $r$-local formula of quantifier rank at most
	$q$, that is true in $A$.  There exists a vector $\vec{a} \in A^k$
	such that $A, \vec a \models \tau(\vec{x})$, in particular
	$\tau(\vec{x}) \in \NeighbT{A}{\vec a}{r}{q}$.  Since
	$\Specter{r}{q}{k}(A) \subseteq \Specter{r}{q}{k}(B)$, there
	exists $\vec b \in B^k$ such that $\NeighbT{A}{\vec a}{r}{q}
		= \NeighbT{B}{\vec b}{r}{q}$.  Thus, $B, \vec
		b \models \tau(\vec{x})$ and therefore $B \models \varphi$.
\end{proof}

\begin{fact}[Refinement]
	\label{fact:qo:refinement}
	If $(q,r,k) \leq (q',r',k')$ component-wise,
	then
	$\locleq{r'}{q'}{k'} \subseteq \locleq{r}{q}{k}$.
\end{fact}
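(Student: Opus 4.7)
My plan is to leverage \cref{fact:qo:specters} to reformulate the statement as an inclusion of specters: it suffices to show that $\Specter{r'}{q'}{k'}(A) \subseteq \Specter{r'}{q'}{k'}(B)$ entails $\Specter{r}{q}{k}(A) \subseteq \Specter{r}{q}{k}(B)$. The key idea is a canonical ``projection'' from a $(q', r')$-local $k'$-type down to a $(q, r)$-local $k$-type (by forgetting the last $k' - k$ free variables and restricting to formulas of rank at most $q$ and locality radius at most $r$), together with the observation that every $(q, r)$-local $k$-type realised in $A$ lifts to a $(q', r')$-local $k'$-type realised in $A$ via padding.

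Concretely, given a tuple $\vec{a} \in A^k$ realising an arbitrary $(q, r)$-local $k$-type $t$, I would pad it to $\vec{a}' \defined (a_1, \dots, a_k, a_1, \dots, a_1) \in A^{k'}$ (the case of an empty $A$ is vacuous, since then no existential sentence is satisfied in $A$). By hypothesis, there exists $\vec{b}' \in B^{k'}$ with $\NeighbT{A}{\vec{a}'}{r'}{q'} = \NeighbT{B}{\vec{b}'}{r'}{q'}$, and I claim that the first $k$ components $\vec{b} = (b'_1, \dots, b'_k)$ realise the same type $t$ in $B$. To prove the claim, I would check that for every $(q, r)$-local formula $\varphi(x_1, \dots, x_k)$, the formula $\varphi'(x_1, \dots, x_{k'}) \defined \varphi(x_1, \dots, x_k)$ obtained by adding $k' - k$ unused free variables is $(q', r')$-local, and that $C, \vec{c}' \models \varphi' \iff C, (c'_1, \dots, c'_k) \models \varphi$ for any structure $C$; the equivalence and the rank bound are immediate.

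For $r'$-locality of $\varphi'$, I would combine two observations: first, an $r$-local formula is $r'$-local, because the $r$-neighborhood of a tuple computed inside its $r'$-neighborhood coincides with the ambient $r$-neighborhood; second, adding unused free variables preserves $r$-locality, because the $r$-neighborhood of the original tuple can equivalently be computed inside the (larger) $r$-neighborhood of the padded tuple. Both observations rest on the same elementary fact, namely that any path of length at most $r$ starting from a vertex of $\vec{a}$ remains inside $\Neighb{A}{(\vec{a}, \vec{b})}{r}$, so that $\Neighb{\Neighb{A}{(\vec{a},\vec{b})}{r}}{\vec{a}}{r} = \Neighb{A}{\vec{a}}{r}$. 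This bookkeeping on nested neighborhoods is the only step requiring minor attention; otherwise the proof is routine.
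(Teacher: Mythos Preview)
Your proof is correct. The paper does not actually prove this fact---it is stated without proof, presumably because it is meant to follow immediately from the definition of $\locleq{r}{q}{k}$.

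Your route through \cref{fact:qo:specters} works, but a marginally more direct argument is available: straight from the definition, any sentence $\exists x_1,\dots,x_k.\,\tau(\vec x)$ with $\tau$ an $r$-local formula of rank at most $q$ can be padded with $k'-k$ dummy existentially quantified variables to obtain an equivalent sentence (over nonempty structures) whose inner formula is $r'$-local of rank at most $q'$, hence falls in the scope of $\locleq{r'}{q'}{k'}$. The two locality observations you single out---that $r$-local implies $r'$-local, and that adding unused free variables preserves locality because $\Neighb{\Neighb{A}{(\vec a,\vec b)}{r}}{\vec a}{r}=\Neighb{A}{\vec a}{r}$---are exactly what is needed in either phrasing. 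Your version simply carries out the same bookkeeping at the level of types rather than sentences; neither buys anything over the other.
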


\begin{example}[Inequalities between two structures]
	We consider undirected graphs as structures in $\Mod(\upsigma)$
	where $\upsigma \defined\{ (E,2) \}$.
	Let $P_n$ be a finite path of size $n$
	and $C_m$ be a finite cycle of size $m$.

	We can prove that $C_m \locleq{r}{q}{k} P_n$
	whenever $n \geq m > k (2r + 1)$. Indeed,
	consider $k$ points of $C_m$ and their balls of radius $r$.
	As $m > k (2r +1) $, the unions of these balls
	exclude at least one point of $C_m$, and are thus
	a finite union of paths. As $n \geq m$, it is possible to
	find points in $P_n$ behaving similarly at radius $r$.

	However, as soon as $k, r$ and $q \geq 1$,
	$P_n$ is not below $C_m$
	for $\locleq{r}{q}{k}$. Indeed, it suffices to
	select an endpoint of the path $P_n$, and to assert that
	it is of degree one using a sentence of quantifier rank $1$
	evaluated at radius $1$. One cannot find a similar
	point in $C_m$ as all nodes are of degree two.
\end{example}

\begin{ifappendix}
	The connection between disjoint unions, local elementary embeddings
	and induced substructures is provided in the following
	lemmas and examples that are detailed in \cref{app:sec:locqo}.
\end{ifappendix}

\begin{restatable}{lemma}{lemmaelemloc}
	\label{lem:qo:elemloc}
	Let $A, B$ be two structures in $\Mod(\upsigma)$.
	If $h \colon A \to B$ is a local elementary embedding
	then $A \locleq{\infty}{\infty}{\infty} B$.
\end{restatable}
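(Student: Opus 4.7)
The plan is to reduce to the finite, parameterised preorders using Fact~\ref{fact:qo:refinement}, which tells us that $\locleq{\infty}{\infty}{\infty} = \bigcap_{r,q,k} \locleq{r}{q}{k}$, and then to verify each such preorder via the type-collection characterisation given by Fact~\ref{fact:qo:specters}. So I would fix arbitrary parameters $r,q \in \mathbb{N}$ and $k \geq 1$, and aim to show $\Specter{r}{q}{k}(A) \subseteq \Specter{r}{q}{k}(B)$.

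For this, take any $\vec{a} \in A^{\leq k}$ and consider the image $h(\vec{a}) \in B^{\leq k}$. For any $r$-local formula $\psi(\vec{x})$ of quantifier rank at most $q$ with $|\vec{x}| = |\vec{a}|$ free variables, the defining property of a local elementary embedding yields $A, \vec{a} \models \psi$ iff $B, h(\vec{a}) \models \psi$. Quantifying this equivalence over all such~$\psi$ gives $\NeighbT{A}{\vec{a}}{r}{q} = \NeighbT{B}{h(\vec{a})}{r}{q}$, so the local type of $\vec{a}$ in $A$ is witnessed in $B$ by $h(\vec{a})$. Since $\vec{a}$ was arbitrary, the desired inclusion of specters follows, and Fact~\ref{fact:qo:specters} then yields $A \locleq{r}{q}{k} B$.

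Since $r,q,k$ were chosen arbitrarily, intersecting over all parameters gives $A \locleq{\infty}{\infty}{\infty} B$, which is the conclusion. There is no real combinatorial obstacle here: the statement is essentially an unfolding of definitions, with the only mild subtlety being to check that the definition of local elementary embedding (which quantifies over $k$-tuples and $r$-local formulas of \emph{arbitrary} quantifier rank) is strong enough to subsume the bounded-quantifier-rank condition appearing in the definition of $\locleq{r}{q}{k}$, which it trivially is.
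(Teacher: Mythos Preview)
Your proposal is correct and follows essentially the same approach as the paper's own proof: fix finite parameters $r,q,k$, take an arbitrary tuple $\vec{a}$ in $A$, and use the defining property of a local elementary embedding to conclude that $h(\vec{a})$ realises the same $(q,r)$-local type in $B$, hence $A \locleq{r}{q}{k} B$. One small remark: the equality $\locleq{\infty}{\infty}{\infty} = \bigcap_{r,q,k} \locleq{r}{q}{k}$ is the \emph{definition} of the limit preorder in the paper rather than a consequence of Fact~\ref{fact:qo:refinement}, so your citation there is slightly off, but this does not affect the argument.
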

\begin{sendappendix}
	\begin{ifappendix}
		\lemmaelemloc*
	\end{ifappendix}
	\begin{proof}
		Fix $q,r,k \in \mathbb{N}$,
		and consider a tuple $\vec{a} \in A^k$.
		By construction, the tuple $h(\vec{a}) \in B^k$
		satisfies the same local $\FO$ formulas,
		and in particular, $\NeighbT{A}{\vec{a}}{r}{q}
			= \NeighbT{B}{h(\vec{a})}{r}{q}$, thus $A \locleq{r}{q}{k} B$.
	\end{proof}
\end{sendappendix}

\begin{restatable}[Preorders in the finite]{lemma}{preservationinthefininte}
	\label{lem:qo:locqofin}
	Let $A, B$ be two finite structures in $\ModF(\upsigma)$. The
	following statements are equivalent.
	\begin{enumerate}
		\item There exists $C$ such that $A \uplus C = B$.
		\item $A \locleq{\infty}{\infty}{\infty} B$.
		\item $A \locleq{r}{q}{\infty} B$ for some $r, q \geq 1$.
		\item There exists a local elementary embedding from $A$ to $B$.
	\end{enumerate}
\end{restatable}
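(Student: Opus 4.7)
The plan is to close the cycle $(1) \Rightarrow (4) \Rightarrow (2) \Rightarrow (3) \Rightarrow (1)$. The first three implications are short: for $(1) \Rightarrow (4)$, I would observe that if $B = A \uplus C$, then no path in the Gaifman graph of $B$ starting in $A$ ever leaves $A$, so $\Neighb{B}{\vec{a}}{r} = \Neighb{A}{\vec{a}}{r}$ for every tuple $\vec{a} \in A^k$ and every $r$; the canonical injection $A \hookrightarrow B$ is therefore a local elementary embedding. Then $(4) \Rightarrow (2)$ is exactly \cref{lem:qo:elemloc}, and $(2) \Rightarrow (3)$ is a direct specialisation via \cref{fact:qo:refinement}.

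The substantive step is $(3) \Rightarrow (1)$, where finiteness of $A$ really enters. By \cref{fact:qo:refinement} I may weaken the parameters to $r = q = 1$. Write $n \defined |A|$ and let $\vec{a} = (a_1, \dots, a_n)$ be an enumeration of $A$, so that $\Neighb{A}{\vec{a}}{1} = A$. Using $k = n$ in \cref{fact:qo:specters}, the hypothesis provides $\vec{b} = (b_1, \dots, b_n) \in B^n$ with $\NeighbT{A}{\vec{a}}{1}{1} = \NeighbT{B}{\vec{b}}{1}{1}$.

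From this single tuple $\vec{b}$ I would extract the desired disjoint‑union decomposition in two moves. First, quantifier‑free formulas on $\vec{x}$ (formulas $x_i = x_j$ and atomic formulas $R(x_{i_1}, \dots, x_{i_a})$ for $R \in \upsigma$) are $1$-local of quantifier rank $1$, hence belong to the local type; so the $b_i$ are pairwise distinct and $a_i \mapsto b_i$ is an isomorphism from $A$ onto the induced substructure $B|_{\vec{b}}$. Second, I need that no Gaifman edge of $B$ connects $\vec{b}$ to $B \setminus \vec{b}$, which I capture by the single $1$-local rank-$1$ formula
\[
  \varphi(\vec{x}) \;\defined\; \exists y.\; \Bigl(\bigvee_{i=1}^{n} \operatorname{adj}(y, x_i)\Bigr) \wedge \bigwedge_{i=1}^{n} y \neq x_i,
\]
where $\operatorname{adj}(y, x_i)$ is the quantifier‑free Gaifman‑adjacency predicate (a finite disjunction over relations in $\upsigma$ and over positions, well defined because $\upsigma$ is finite relational). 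Locality of $\varphi$ is immediate since every witness $y$ must sit in $\Neighb{}{\vec{x}}{1}$. Because $\vec{a}$ exhausts $A$, $A \not\models \varphi(\vec{a})$; by hypothesis on the local types this transfers to $B \not\models \varphi(\vec{b})$, i.e.\ $\Neighb{B}{\vec{b}}{1} \subseteq \{b_1, \dots, b_n\}$. Consequently no relation tuple of $B$ can have elements both inside and outside $\{b_1, \dots, b_n\}$, so $B = A \uplus C$ with $C \defined B|_{B \setminus \vec{b}}$, as required.

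The only real obstacle is designing the rank‑$1$ local formula $\varphi$ above and verifying it is genuinely $1$-local in the sense of the paper; once that is done, the rest is a straightforward reading off of the preserved local type.
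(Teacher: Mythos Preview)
Your proof is correct and uses the same essential ingredients as the paper: enumerating $A$ as a single tuple, matching its $(1,1)$-local type in $B$ via \cref{fact:qo:specters}, and using the rank-$1$ $1$-local formula asserting the existence of a neighbour outside $\vec{x}$ to rule out edges between $\vec{b}$ and its complement. The only difference is organisational: the paper closes the cycle as $(1)\Rightarrow(2)\Rightarrow(3)\Rightarrow(4)\Rightarrow(1)$, splitting your $(3)\Rightarrow(1)$ into $(3)\Rightarrow(4)$ (the map $\vec{a}\mapsto\vec{b}$ is itself a local elementary embedding) followed by $(4)\Rightarrow(1)$ (any local elementary embedding exhibits $B$ as a disjoint union), whereas you merge these and supply the easy $(1)\Rightarrow(4)$ instead.
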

\begin{proof}[Proof of $(1) \Rightarrow (2)$]
	Given a vector $\vec{a} \in A^k$, a radius $r$, and
	quantifier rank $q$,
	one notices that
	$\Neighb{A}{\vec a}{r}
		= \Neighb{A \uplus C}{\vec a}{r}$, hence
	$\NeighbT{A}{\vec{a}}{r}{q} = \NeighbT{B}{\vec{a}}{r}{q}$.
	We have proven that $A \locleq{\infty}{\infty}{\infty} B$.
\end{proof}
\begin{proof}[Proof of $(2) \Rightarrow (3)$] By definition.
\end{proof}
\begin{proof}[Proof of $(3) \Rightarrow (4)$]
	Let us consider $\vec{a} \in A^{|A|}$
	a vector containing all the points of $A$
	exactly once.
	There exists a vector $\vec{b} \in B^{|A|}$
	such that
	$\NeighbT{A}{\vec{a}}{1}{1} = \NeighbT{B}{\vec{b}}{1}{1}$.
	As a consequence, $\Neighb{B}{\vec{b}}{1} = \Neighb{B}{\vec{b}}{0}$
	since this equation holds for $\vec a$ and is expressible
	using one universal quantifier.
	The mapping $h \colon \vec{a} \mapsto \vec{b}$ is
	a local elementary embedding. Indeed,
	$r$-neighborhoods around $\vec a$ (resp. $\vec b$)
	are subsets of $\vec a$ (resp. $\vec b$). This
	proves that $r$-local formulas around $\vec a$ (resp. $\vec b$)
	can be rewritten as $0$-local, and we conclude using the equality
	of their $(1,1)$-local types.
\end{proof}
\begin{proof}[Proof of $(4) \Rightarrow (1)$]
	Let $h \colon A \to B$ be a local elementary embedding.
	Let $D$ be the substructure induced by $h(A)$ in~$B$; since
	$h$ is a local elementary embedding, $A$ and $D$ are isomorphic.
	Let us consider a point $c \in B \setminus D$.
	Assume by contradiction that
	there exists a relation containing both $c$
	and an element of $d\in D$. Let $\varphi(\vec x)$ be
	a $1$-local formula of quantifier rank $1$
	with $|A|$ free variables
	stating that there exists a point not in $\vec{x}$
	connected to some element of $\vec{x}$.
	Since $h$ is a local elementary embedding and $d \in h(A)$,
	$B, h(A) \models \varphi(\vec x)$
	and $A, A \models \varphi(\vec x)$. This is absurd,
	hence $B = D \uplus (B \setminus D) = A \uplus (B \setminus D)$.
	%\end{description}
\end{proof}

However, the existence of a
local elementary embedding is in general not equivalent
to $\locleq{\infty}{\infty}{\infty}$, as shown next.%% witnessed
%% by the following example.

\begin{restatable}[Preorder difference]{example}{exampleinfinitegrid}
	Over the signature of graphs, let $G$ be an infinite grid,
	and $G' \defined G \uplus G$.  There exists no local elementary
	embedding from $G'$ to $G$ but $G' \locleq{\infty}{\infty}{\infty}
		G$.
\end{restatable}
\begin{sendappendix}
	\begin{ifappendix}
		\exampleinfinitegrid*
	\end{ifappendix}
	\begin{proof}
		Assume by contradiction that there exists an elementary embedding $h$ from
		$G'$ to $G$. There exist two points $u, v$ in $G'$
		that are at an infinite distance,
		and their images $h(u), h(v)$ are at a finite distance $r$ in $G$.
		The sentence $\psi(x,y)$ stating that $x$ and $y$ are at distance
		less than $r$ is $r$-local around $x$ and $y$, hence
		is satisfied for $h(u)$ and $h(v)$ if and only if
		it is satisfied in $G'$ for $u$ and $v$; this is absurd.

		However, $G' \locleq{\infty}{\infty}{\infty} G$
		since given $r,q \geq 0$, $k \geq 1$ and a vector $\vec{u} \in (G')^k$
		one can find points in $G$ at
		long enough distances such that the neighborhoods coincide. \qedhere
	\end{proof}
\end{sendappendix}

Over finite structures, when the parameter values of $r,q$ or $k$ are
too small, one ends up with a preorder that is trivial, except for one
specific combination where we obtain the \emph{extension} preorder
$\subseteq_i$.

\begin{fact}[Trivial orders]
	Over finite structures, the following preorders are trivial, i.e.\
	every pair of structures is related: $\locleq{r}{q}{0}$ whenever
	$r, q \in \mathbb{N} \cup \{ \infty \}$, and $\locleq{\infty}{0}{1}$.
\end{fact}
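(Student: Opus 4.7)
The fact comprises two separate triviality claims, and I would tackle them in turn by the same general template: unfold the definition of the preorder, identify the class of sentences available, and argue that no such sentence can distinguish any pair of finite structures.

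For the family $\locleq{r}{q}{0}$, the case $k = 0$ collapses the existential prefix, so the sentences in question are simply $r$-local formulas $\tau$ of quantifier rank at most $q$ in zero free variables. I would appeal to the semantic definition of $r$-locality applied to the empty tuple $() \in A^0$: one has $A \models \tau \iff \Neighb{A}{()}{r}, () \models \tau$, and the right-hand side involves
\[
\Neighb{A}{()}{r} \;=\; \bigcup_{a \in ()} \Neighb{A}{a}{r} \;=\; \emptyset,
\]
which is the empty structure and does not depend on $A$. Hence $\tau$ has the same truth value on every structure (equivalently, one can see syntactically that the only $r$-local formulas of quantifier rank $\leq q$ with no free variables are $\top$ and $\bot$, since every relativised quantifier would range over an empty ball), and so every pair $A, B$ is related by $\locleq{r}{q}{0}$.

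For $\locleq{\infty}{0}{1}$, with $k = 1$ and $q = 0$ the sentences have the form $\exists x.\, \tau(x)$ where $\tau(x)$ is quantifier-free in the single free variable $x$. Such a $\tau$ is a Boolean combination of the equality atom $x = x$ and the self-application atoms $R(x, \dots, x)$ for $R \in \upsigma$, and thus only describes the atomic type of a single point. I would argue that under the paper's signature conventions (and in particular using the non-emptiness of the finite structures considered in $\ModF(\upsigma)$) each such existential sentence reduces to a constant: the only informative sentence it can express is the non-emptiness of $\ModF(\upsigma)$, which is witnessed by every structure, so the preorder relates every pair of finite structures.

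I expect the first case to be a one-line observation once the empty neighborhood is spotted. The main subtlety, and the step I would spend most care on, is the second: making the argument watertight requires fixing the paper's convention on signatures, since without such a convention a rogue unary predicate $P$ would make $\exists x.\, P(x)$ distinguish a structure where $P$ is non-empty from one where it is empty, breaking the claim. Once the convention is pinned down, the argument collapses to enumerating the finite list of quantifier-free one-variable formulas and checking that each yields a constant existential sentence on $\ModF(\upsigma)$.
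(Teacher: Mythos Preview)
Your treatment of the first family $\locleq{r}{q}{0}$ is correct and matches what the paper would need: with $k=0$ the neighborhood of the empty tuple is the empty structure, so every $r$-local formula in zero free variables is constant. (The paper states the Fact without proof, so there is nothing to compare against here.)

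For $\locleq{\infty}{0}{1}$, however, you have correctly located a genuine obstruction and then hand-waved it away. The quantifier-free formulas in one variable include the atoms $P(x)$ for unary $P\in\upsigma$ and $R(x,\dots,x)$ for higher-arity $R$, and the paper imposes \emph{no} signature convention ruling these out. On the contrary, it freely introduces unary predicates later (the predicate $B$ in Section~\ref{sec:ffc}, and the state and letter predicates in Section~\ref{sec:ff:undecidable}). Over such a signature, $\exists x.\,P(x)$ is an $r$-local sentence of quantifier rank~$0$ for every $r$, and it separates a structure with non-empty $P$ from one with empty $P$; so $\locleq{\infty}{0}{1}$ is \emph{not} trivial in the generality the paper otherwise works in.

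In short: your diagnosis of the subtlety is right, but your resolution (``once the convention is pinned down'') does not work, because there is no such convention in the paper. The Fact as literally stated fails for signatures with unary relations (or self-loops); the most charitable reading is that it is meant for simple graphs with a single irreflexive edge relation and no unary predicates, in which case your enumeration argument is sound. You should say this explicitly rather than assume an unstated restriction.
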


\begin{restatable}[Extension preorder]{lemma}{extensionpreorder}
	Over finite structures and for all
	$q,r \in \mathbb{N} \cup \{ \infty \}$,
	${\locleq{0}{q}{\infty}}={\locleq{r}{0}{\infty}}={\subseteq_i}$.
\end{restatable}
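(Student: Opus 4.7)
The plan is to observe that both preorders in the statement coincide, on the syntactic side, with ``preservation under existential sentences with quantifier-free matrix'', and then to identify that preorder with $\subseteq_i$ in the finite case by a diagram argument.

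First, I would simplify both preorders to the same thing. For $r = 0$, the paper has already noted that every $0$-local formula is equivalent to a quantifier-free formula (the $0$-neighbourhood of $\vec{a}$ is $\vec{a}$ itself as an induced substructure). For $q = 0$, a formula of quantifier rank $0$ is by definition quantifier-free, and any quantifier-free formula is trivially $r$-local for every $r$ (it only inspects atomic facts about its free variables). Using \cref{fact:qo:specters}, or unfolding the definition directly, both $\locleq{0}{q}{\infty}$ and $\locleq{r}{0}{\infty}$ therefore reduce to the preorder
\[
A \leq_{\exists} B \ \Longleftrightarrow\ \text{every existential sentence with quantifier-free matrix satisfied by } A \text{ is satisfied by } B,
\]
regardless of the remaining parameter.

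Next, I would prove $\leq_{\exists} \,=\, \subseteq_i$ on $\ModF(\upsigma)$. The direction $\subseteq_i \,\subseteq\, \leq_{\exists}$ is the standard fact that an induced-substructure embedding preserves quantifier-free formulas under its image, and hence preserves existential sentences. For the converse, I would use the atomic diagram trick available precisely because $k = \infty$. Given finite $A = \{a_1, \dots, a_n\}$, take $\vec{x} = (x_1, \dots, x_n)$ and set
\[
\psi_A(\vec{x}) \defined \bigwedge_{i \neq j} x_i \neq x_j \;\wedge\; \bigwedge \{\, R(\vec{x}_I) : R \in \upsigma,\, A \models R(\vec{a}_I) \,\} \;\wedge\; \bigwedge \{\, \neg R(\vec{x}_I) : R \in \upsigma,\, A \not\models R(\vec{a}_I)\,\},
\]
which is quantifier-free (and finite because $\upsigma$ is finite relational and $A$ is finite). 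Then $A \models \exists \vec{x}.\,\psi_A(\vec{x})$ via $\vec{a}$, so $B \models \exists \vec{x}.\,\psi_A(\vec{x})$, yielding a tuple $\vec{b} \in B^n$ of pairwise distinct elements whose atomic diagram matches that of $\vec{a}$. The map $a_i \mapsto b_i$ is therefore an induced-substructure embedding of $A$ into $B$, so $A \subseteq_i B$.

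Combining these two steps with the reduction of the first paragraph yields the claimed triple equality. No step is genuinely hard: the only point that requires a little care is making sure that when $r = 0$ the quantifier rank parameter $q$ really becomes irrelevant (and symmetrically for $q = 0$ and $r$), for which the observation about $0$-local formulas versus quantifier-free formulas already stated in the paper is exactly what is needed.
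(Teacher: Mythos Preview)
Your proposal is correct and follows essentially the same approach as the paper: first collapse both preorders to ``existential sentences with quantifier-free matrix'' using the observation that $0$-local formulas and rank-$0$ formulas are both equivalent to quantifier-free formulas, then identify this with $\subseteq_i$ via the atomic diagram (the paper phrases the same argument through the $(q,0)$-local type of a tuple enumerating all of $A$, which amounts to your $\psi_A$).
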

\begin{sendappendix}
	\begin{ifappendix}
		\extensionpreorder*
	\end{ifappendix}
	\begin{proof}
		We first notice that $\locleq{0}{q}{\infty}$
		and $\locleq{r}{0}{\infty}$ cannot quantify
		over the neighborhoods of a vector $\vec{a}$
		since they either contain only $\vec{a}$
		or cannot quantify at all. This explains the first equality.

		\begin{enumerate}
			\item Assume $A \locleq{0}{q}{\infty} B$ for some $q \in \mathbb{N} \cup
				      \{ \infty \}$, consider
			      $k \defined |A|$
			      and let $\vec{a} \in A^k$ be the list of all elements in $A$.
			      Assume without loss of generality that $q$ is finite.
			      The inequality provides a vector $\vec{b} \in B^k$
			      such that $\NeighbT{A}{\vec{a}}{0}{q} = \NeighbT{B}{\vec{b}}{0}{q}$.
			      In particular, $\Neighb{A}{\vec{a}}{0} = A$ and
			      $\Neighb{B}{\vec{b}}{0} = \vec{b}$.
			      Since it is possible to express using quantifier free formulas
			      all the relations between elements in $A$,
			      the substructure of $B$ induced by $\vec{b}$
			      is isomorphic to $A$.
			      In particular, $A \subseteq_i B$.

			\item Assume that $A \subseteq_i B$ though a map $h$,
			      let $k \in \mathbb{N}$
			      and $\vec{a} \in A^k$.
			      Let us define $\vec{b} \defined h(\vec{a})$;
			      in particular, $\Neighb{A}{\vec{a}}{0} = A$ and
			      $\Neighb{B}{\vec{b}}{0} = h(\vec{a})$.
			      By definition of the map $h$,
			      $h(\vec{a})$ induces a substructure of $B$
			      that is isomorphic to the substructure induced by $\vec{a}$ in $A$.
			      This entails that the two structures are elementary equivalent,
			      hence that
			      $\NeighbT{A}{\vec{a}}{0}{q} = \NeighbT{B}{\vec{b}}{0}{q}$
			      for all $q \geq 0$.
			      We have proven that $A \locleq{0}{q}{k}$ for all $q,k \in
				      \mathbb{N}$, hence also for $q,k \in \mathbb{N} \cup \{ \infty \}$.
			      \qedhere
		\end{enumerate}
	\end{proof}
\end{sendappendix}

\section{Positive Gaifman Normal Form}
\label{sec:pgaif}The aim of this section is to provide
a connection between the positive variant of Gaifman normal forms
and existential local sentences.
The theorem and its proof are heavily inspired by the combinatorics
behind \citeauthor*{grohe2004existential}'s proof
of the existential Gaifman normal form. In particular,
the main issue arises from
finding points with disjoint neighborhoods.

As basic local sentences are existential local,
the only difficulty in~\cref{thm:ccl:mongaifman}
is converting an existential local sentence
into a positive Boolean combination of basic local sentences.
We split this transformation into intermediate syntactic steps
\begin{description}
    \item[Existential local] 
        $\exists \vec{x}. \psi(\vec{x})$
        where $\psi$ is an $r$-local formula.
    \item[Almost basic local]
        $\exists \vec{x}.
        \bigwedge_{i \neq j} d(x_i,x_j) > 2r
        \wedge
        \psi(\vec{x})$
        where $\psi$ is an $r$-local formula.
    \item[Asymmetric basic local]
        $\exists \vec{x}.
        \bigwedge_{i \neq j} d(x_i,x_j) > 2r
        \wedge
        \bigwedge_{i} \psi_i(x_i)$
        where $(\psi_i)_i$ is a family
        of $r$-local formulas with exactly one free variable.
    \item[Basic local]
        $\exists \vec{x}.
        \bigwedge_{i \neq j} d(x_i,x_j) > 2r
        \wedge
        \bigwedge_{i} \psi(x_i)$
        where $\psi$ is an
        $r$-local formula with exactly one free variable.
\end{description}
\begin{ifnotappendix}
    Recall that a formula $\varphi\left(\vec{x}\right)$ can be made
    $r$-local around $\vec{x}$ by relativising quantifications
    to $\Neighb{}{\vec{x}}{r}$. As a consequence, the classes
    above can equivalently be defined syntactically.

    To provide some intuition about the behaviour of
    theses syntactic restrictions interpolating between
    existential local sentences and basic local sentences,
    let us use examples in plain English using two colours ``green''
    and ``blue'', writing increasingly complex sentences.
    A basic local sentence can ask ``whether there exists at least
    $2$ disjoint neighbourhoods containing only green points'',
    which can be rewritten as
        $\exists x_1, x_2.
        d(x_1,x_2) > 2r
        \wedge \text{green}_r(x_1)
        \wedge \text{green}_r(x_2)$.
    An asymmetric basic local sentence
    can ask ``whether there exists $2$ disjoint neighbourhoods,
    one green and one blue'', 
    which can be rewritten as
        $\exists x_1, x_2.
        d(x_1,x_2) > 2r
        \wedge \text{green}_r(x_1)
        \wedge \text{blue}_r(x_2)$.
    The asymmetry comes from the ability to select a different
    property for each disjoint neighbourhood.
    An almost basic local sentence
    can ask ``whether there exists $2$ disjoint neighbourhoods
    that have the same colour'',
    which can be rewritten as
        $\exists x_1, x_2.
        d(x_1,x_2) > 2r
        \wedge (\text{green}_r(x_1)
        \wedge \text{green}_r(x_2))
        \vee (\text{blue}_r(x_1)
        \wedge \text{blue}_r(x_2))
        $.
    An existential local sentence
    can ask ``whether there exists $2$ points such that
    every green point connected to one is connected to the other'',
    which can be rewritten as
        $\exists x_1, x_2.
        \forall y \in \Neighb{}{x_1x_2}{r}.
        (\text{green}(y) \wedge
        E(x_1,y))
        \iff
        (\text{green}(y) \wedge
        E(x_2,y))
        $.
\end{ifnotappendix}

Asymmetric basic local sentences already appear as an intermediate
steps towards basic local sentences in the constructions
of~\citet{grohe2004existential} and \citet{dawar2006approximation}.  Most of the
transformations will rely on the description of the `spatial'
repartition of elements in a given structure $A$. We handle this
description through the following lemma%
\begin{ifappendix}
proven in \cref{app:sec:pgaif}%
\end{ifappendix}.%
\begin{ifnotappendix}
The intended purpose of \cref{lem:ccl:extentedcover}
is to build, from a neighbourhood $\Neighb{A}{\vec{a}}{r}$, a
disjoint union of neighbourhoods containing the former one
using a radius and a number of points controlled independently of the structure $A$.
\end{ifnotappendix}

\begin{restatable}{lemma}{extendedcover}
    \label{lem:ccl:extentedcover}
    For every $k, r \geq 0$,
    for every structure $A \in \Mod(\upsigma)$
    and vector $\vec{a} \in A^{\leq k}$
    there exists
    a vector $\vec{b} \in A^{\leq k}$
    and a radius $r \leq R \leq 4^k r$
    such that
    $\Neighb{A}{\vec{a}}{r} \subseteq 
    \Neighb{A}{\vec{b}}{R}$
    and $\forall b \neq b' \in \vec{b},
    \Neighb{A}{b}{3R} \cap \Neighb{A}{b'}{3R} = \emptyset$.
\end{restatable}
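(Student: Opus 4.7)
The plan is to apply an iterative merging procedure that, at each step, replaces two close elements of $\vec{b}$ by a single midpoint lying on a shortest path between them, while quadrupling the radius. I would initialize $\vec{b} := \vec{a}$ and $R := r$, maintaining throughout the invariant $\Neighb{A}{\vec{a}}{r} \subseteq \Neighb{A}{\vec{b}}{R}$. As long as there exist distinct $b, b' \in \vec{b}$ with overlapping $3R$-neighborhoods---equivalently, $d_A(b, b') \leq 6R$---I would pick a vertex $c$ on a shortest $b$-to-$b'$ path in the Gaifman graph of $A$ so that $d_A(b, c), d_A(c, b') \leq \lceil d_A(b, b')/2 \rceil \leq 3R$, replace $b$ and $b'$ by $c$ in $\vec{b}$, and update $R := 4R$.

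The central computation is the verification of the covering invariant after a merge. For any $x \in \Neighb{A}{b}{R_{\textnormal{old}}}$ the triangle inequality yields
\[
d_A(c, x) \;\leq\; d_A(c, b) + d_A(b, x) \;\leq\; 3 R_{\textnormal{old}} + R_{\textnormal{old}} \;=\; R_{\textnormal{new}},
\]
and symmetrically for $x \in \Neighb{A}{b'}{R_{\textnormal{old}}}$; the other elements of $\vec{b}$ are untouched but see their covering radius grow. Since each merge strictly decreases $|\vec{b}|$ by one, the loop terminates after at most $|\vec{a}| - 1 \leq k - 1$ iterations, at which point $|\vec{b}| \leq k$, the disjointness condition holds by the stopping criterion, and $R \leq 4^{k-1} r \leq 4^k r$.

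The main obstacle is ensuring that a suitable midpoint $c \in A$ exists. Since $d_A(b, b') \leq 6R$ is finite, $b$ and $b'$ lie in the same connected component of the Gaifman graph, a shortest path between them exists, and all its vertices belong to $A$, so $c$ can be chosen inside $A$. As $d_A$ is integer-valued and $R$ remains integral throughout (starting at $r \in \mathbb{N}$ and only being multiplied by $4$), one checks that $\lceil d_A(b, b')/2 \rceil \leq 3R$ holds whenever $d_A(b, b') \leq 6R$: if $d_A(b, b')$ is odd then it is at most $6R - 1$ since $6R$ is even, so $(d_A(b,b') + 1)/2 \leq 3R$. This makes the midpoint bound tight and the induction go through.
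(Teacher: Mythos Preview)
Your proof is correct and follows essentially the same strategy as the paper: repeatedly merge two centers whose $3R$-balls meet into a single center while multiplying $R$ by $4$, until no such pair remains. The only cosmetic differences are that the paper phrases this as an induction on $k$ rather than a loop, and that the paper chooses $c$ as any point in $\Neighb{A}{b}{3R}\cap\Neighb{A}{b'}{3R}$ rather than a midpoint on a shortest path---this sidesteps your parity argument, since $d(b,c),d(c,b')\leq 3R$ hold by definition of the intersection, but your midpoint is of course such a point and the bounds are identical.
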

\begin{sendappendix}
    \begin{ifappendix}
        \extendedcover*
    \end{ifappendix}
\begin{proof}
    We proceed by induction over $k$.
    \begin{itemize}
        \item When $k \leq 1$ it suffices to take
            $R \defined r$ and for every vector $\vec{a} \in A^{\leq k}$
            build
            $\vec{b} \defined \vec{a}$
            and notice that
            $\Neighb{A}{\vec{a}}{r}
            = \Neighb{A}{\vec{b}}{R}$.

        \item When $k \geq 2$, we proceed by a simple case analysis
            \begin{enumerate}
                \item Either the balls $\Neighb{A}{a}{3r}$ are pairwise disjoint
                    when $a$ ranges over $\vec{a}$; 
                    in which case it suffices to consider $\vec{b} \defined
                    \vec{a}$ and $R \defined r$ to conclude.
                \item Or at least two of the balls $\Neighb{A}{a}{3r}$
                    intersect 
                    when $a$ ranges over $\vec{a}$,
                    and we can assume without loss of generality
                    that the neighborhoods of $a_1$ and $a_2$ at
                    radius $3r$ intersect.

                    Let us consider $c \in \Neighb{A}{a_1}{3r} \cap
                    \Neighb{A}{a_2}{3r}$.
                    Define $\vec{c} \defined (c, a_3, \ldots, a_{k})$,
                    this vector is of size $1$ when
                    $k = 2$.

                    Because $d(a_1, c) \leq 3r$ and $d(a_2, c) \leq 3r$,
                    $\Neighb{A}{a_1 a_2}{r} \subseteq
                    \Neighb{A}{c}{4r}$.

                    By induction hypothesis,
                    there exists a radius $4r \leq R \leq 4^{k-1} (4r)$
                    and a vector $\vec{b} \in A^{\leq k - 1}$
                    such that 
                    $\Neighb{A}{\vec{c}}{4r} \subseteq 
                    \Neighb{A}{\vec{b}}{R}$
                    and
                    $\forall b\neq b' \in \vec{b},
                    \Neighb{A}{b}{3R} \cap \Neighb{A}{b'}{3R} = \emptyset$.

                    Since
                    $\Neighb{A}{\vec{a}}{r} \subseteq 
                    \Neighb{A}{\vec{c}}{4r}$
                    and $r \leq 4r \leq R \leq 4^k r$
                    we concluded.
                    \qedhere
            \end{enumerate}
    \end{itemize}
\end{proof}
\end{sendappendix}

\subsection{From Existential Local Sentences to Asymmetric Basic Local Sentences}
\label{sec:pgaif:elstoabls}

Using~\cref{lem:ccl:extentedcover}
it is already possible
to transform an existential local sentence
into a positive Boolean combination
of \emph{almost} basic local sentences.
%that are of the shape
%$
%\exists \vec{x}.
%\bigwedge_{i \neq j} d(x_i,x_j) > 2r
%\wedge
%\psi_{\leq r} (\vec{x})
%$
%where $\psi_{\leq r}$ is a $r$-local formula.

\begin{lemma}[From Existential local to Almost basic local]
    \label{lem:ccl:existlocal}
    Let $\varphi(\vec{x})$ be an $r$-local formula.
    There exist $1 \leq n \leq 4^{|\vec{x}|} r$ and
    $\psi_1, \dots, \psi_n$
    almost basic local sentences
    such that
    $\exists \vec{x}. \varphi(\vec{x})$
    is equivalent to
    $\bigvee_{1 \leq i \leq k} \psi_i$
    over $\Mod(\upsigma)$.
\end{lemma}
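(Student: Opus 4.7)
The plan is to lift \cref{lem:ccl:extentedcover} from the semantic covering statement to a syntactic disjunction of almost basic local sentences. Set $k \defined |\vec{x}|$. A direct inspection of the recursive construction in \cref{lem:ccl:extentedcover} reveals that only the pairs $(R, \ell)$ with $R \in \{r, 4r, 16r, \ldots, 4^{k-1}r\}$ and $\ell \in \{1, \ldots, k\}$ can arise; in particular there are at most $k$ such pairs, well within the stated bound.

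For each such $(R, \ell)$, I would build the almost basic local sentence
\begin{equation*}
\psi_{R, \ell} \defined \exists b_1, \ldots, b_\ell.\ \bigwedge_{i \neq j} d(b_i, b_j) > 2(R + r)\ \wedge\ \chi_{R}(\vec{b}),
\end{equation*}
where $\chi_R(\vec{b})$ is the $(R+r)$-local formula expressing
\begin{equation*}
\exists x_1, \ldots, x_k.\ \bigl(\Neighb{}{\vec{x}}{r} \subseteq \Neighb{}{\vec{b}}{R}\bigr)\ \wedge\ \varphi(\vec{x}).
\end{equation*}
The key technical observation is that $\chi_R$ is genuinely $(R+r)$-local in $\vec{b}$: checking $\Neighb{}{\vec{x}}{r} \subseteq \Neighb{}{\vec{b}}{R}$ only requires inspecting points at distance at most $R + r$ from $\vec{b}$, and once this containment holds, the $r$-locality of $\varphi$ guarantees that $\varphi(\vec{x})$ can be decided inside $\Neighb{}{\vec{b}}{R}$.

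The equivalence with $\exists \vec{x}.\varphi(\vec{x})$ then splits in the obvious way. For ($\Rightarrow$), given a witness $\vec{a}$ in $A$, I apply \cref{lem:ccl:extentedcover} to obtain $\vec{b}$ of length $\ell$ and radius $R$; the $6R$-disjointness of the balls around $\vec{b}$ dominates the required $2(R+r)$-separation because $R \geq r$ throughout, and $\vec{a}$ itself witnesses the inner existential of $\chi_R(\vec{b})$. For ($\Leftarrow$), given a satisfying assignment of some $\psi_{R,\ell}$, the inner tuple $\vec{x}$ satisfies $\Neighb{A}{\vec{x}}{r} \subseteq \Neighb{A}{\vec{b}}{R}$, and one concludes $A, \vec{x} \models \varphi$ directly from $r$-locality of $\varphi$.

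The main subtlety I expect is that the naive alternative ``relativize $\varphi$ to $\Neighb{}{\vec{b}}{R}$'', which would produce an $R$-local $\chi_R$, introduces false positives when the $r$-neighborhood of $\vec{x}$ straddles the boundary of $\Neighb{}{\vec{b}}{R}$, since edges leaving that ball can flip the evaluation of $\varphi$. Making the containment $\Neighb{}{\vec{x}}{r} \subseteq \Neighb{}{\vec{b}}{R}$ an explicit conjunct of $\chi_R$ is exactly what removes those false positives, at the cost of enlarging the locality radius from $R$ to $R + r$; the construction still fits the almost basic local template because the $6R$-disjointness furnished by \cref{lem:ccl:extentedcover} is strictly stronger than the $2(R+r)$-separation required.
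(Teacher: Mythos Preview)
Your argument is correct and follows the paper's route: apply \cref{lem:ccl:extentedcover} to a witness $\vec a$, obtain well-separated centres $\vec b$, and take a finite disjunction over the possible $(R,\ell)$. The paper writes its disjuncts as $\exists \vec b.\,\bigwedge_{i<j} d(b_i,b_j)>2R \wedge \exists \vec x\in\Neighb{}{\vec b}{R}.\,\varphi(\vec x)$, with separation $2R$ and no explicit containment conjunct; you instead work at radius $R{+}r$, require separation $2(R{+}r)$, and add $\Neighb{}{\vec x}{r}\subseteq\Neighb{}{\vec b}{R}$. The concern you raise in your last paragraph is genuine: if the paper's inner formula is read as relativised to $\Neighb{}{\vec b}{R}$ (so that it is $R$-local and matches the $2R$ separation), the backward implication already fails on a $6$-cycle with $\varphi(x)$ asserting that $x$ has degree~$1$; if instead $\varphi$ is evaluated in the ambient structure, the inner formula is only $(R{+}r)$-local and the $2R$ separation no longer fits the almost-basic-local template. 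Your enlargement of the radius, together with the $6R$-separation supplied by \cref{lem:ccl:extentedcover} (which dominates $2(R{+}r)$ since $R\geq r$), is precisely the fix; the containment conjunct is harmless but not strictly needed once $\varphi$ is evaluated unrelativised at radius $R{+}r$. Your observation that only the $k$ pairs $(4^j r,\,k{-}j)$ can arise from the recursion is also correct and sharper than the paper's index set $\Delta$.
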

\begin{proof}
    Let us define $\Delta \defined \setof{(k,R)}{0 \leq k \leq |\vec{x}|
    \wedge r \leq R \leq 4^{|\vec{x}|} r}$
    and 
    \begin{equation}
        \begin{array}{rcl}
            \psi_{(k,R)} & \defined &
            \exists b_1, \dots, b_k.  \\
                                     & & \bigwedge_{1 \leq i < j \leq k} d(b_i, b_j) > 2R
            \\ \addlinespace
                                     & & \wedge\:\exists \vec{x} \in \Neighb{}{\vec{b}}{R}. \varphi(\vec{x})
        \end{array}
    \end{equation}

    To conclude it suffices to prove that
    $\exists \vec{x}. \varphi(\vec{x})$
    is equivalent to
    $\bigvee_{(k,R) \in \Delta}
    \psi_{(k,R)}
    \defined \Psi$.

    \begin{itemize}
        \item Assume that $A \in \Mod(\upsigma)$
            satisfies $\exists \vec{x}. \varphi(\vec{x})$.
            Then there exists $\vec{a} \in A^{|\vec{x}|}$
            such that $\Neighb{A}{\vec{a}}{r} \models \exists \vec{x}.
            \varphi(\vec{x})$ since $\varphi$ is $r$-local around $\vec{x}$.
            Using~\cref{lem:ccl:extentedcover},
            there exists a size $0\leq k \leq |\vec{a}|$
            a radius
            $r \leq R \leq 4^{|\vec{a}|} r$,
            and a vector $\vec{b} \in A^k$ such that
                $\Neighb{A}{\vec{a}}{r}
                \subseteq 
                \Neighb{A}{\vec{b}}{R}$
            and the balls of radius $3R$ around the
            points of $\vec{b}$ do not intersect.

            In particular,
            $\biguplus_{b \in \vec{b}} \Neighb{A}{b}{R} \models
            \exists \vec{x}. \varphi(\vec{x})$ since witnesses in
            $\Neighb{A}{\vec{a}}{r}$ can still be found 
            and $\varphi$ is $r$-local.
            This proves that $A \models \psi_{(k,R)}$
            hence that $A \models \Psi$.

        \item Assume that $A \in \Mod(\upsigma)$
            satisfies $\Psi$.
            Then there exists $(k,R)$ such that
            $A \models \psi_{k,R}$
            thus proving that 
            there exists $\vec{b} \in A^k$
            such that
            $\biguplus_{b \in \vec{b}} \Neighb{A}{b}{R} \models \exists
                \vec{x}. \varphi(\vec{x})$.
            Since $r \leq R$ and $\varphi$ is $r$-local
            this proves
            $A \models \exists \vec{x}. \varphi(\vec{x})$.
            \qedhere
    \end{itemize}
\end{proof}

An application of
the Feferman-Vaught technique~\cite{makowsky2004algorithmic, feferman1959first}
allows to transform \emph{almost} basic local sentences
into the \emph{asymmetric} ones introduced
by~\citet*{grohe2004existential}, i.e.\ to sentences of the form
$
\exists \vec{x}.
\bigwedge_{i \neq j} d(x_i, x_j) > 2r \wedge 
\bigwedge_{i} \psi_{i}(x_i)
$
where each $\psi_i$ is a $r$-local formula
when $i$ ranges from $1$ to $|\vec{x}|$.
The combination of this transformation%
\begin{ifappendix}
(detailed in~\cref{app:sec:pgaif:elstoabls})%
\end{ifappendix}
with \cref{lem:ccl:existlocal} generalises a similar statement over
existential sentences~\cite[Theorem 6]{grohe2004existential}
to existential local sentences.

\begin{restatable}[From Almost basic local to Asymmetric Basic local]{lemma}{asblvsabl}
    \label{lem:pos:asblvsabl}
    Every almost basic local sentence
    is equivalent to a disjunction of
    asymmetric basic local sentences.
\end{restatable}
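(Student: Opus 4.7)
The plan is to exploit the fact that the distance constraint $\bigwedge_{i\neq j} d(x_i,x_j) > 2r$ forces the $r$-neighborhoods $\Neighb{A}{x_i}{r}$ to be pairwise disjoint, and then apply a Feferman-Vaught style decomposition of $\psi(\vec{x})$ over this disjoint union.

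More precisely, let $\exists \vec{x}.\bigwedge_{i\neq j}d(x_i,x_j)>2r \wedge \psi(\vec{x})$ be an almost basic local sentence, with $\psi$ an $r$-local formula in $|\vec{x}|=n$ free variables. First, since $\psi$ is $r$-local around $\vec{x}$, its truth in $A$ at $\vec{a}$ depends only on $\Neighb{A}{\vec{a}}{r}$. Under the distance constraint, this neighborhood is the disjoint union $\biguplus_{i=1}^n \Neighb{A}{a_i}{r}$, with each $a_i$ sitting in its own connected component. Relativizing all quantifiers of $\psi$ to $\Neighb{}{\vec{x}}{r}$ (which remains $r$-local), I obtain an equivalent formula $\tilde\psi$ interpreted on this disjoint union.

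The key step is a Feferman-Vaught decomposition of $\tilde\psi$ over the $n$ components. By induction on the quantifier rank of $\tilde\psi$, I would show that $\tilde\psi(\vec{x})$ is equivalent, on a disjoint union $B_1 \uplus \cdots \uplus B_n$ with $x_i \in B_i$, to a finite Boolean combination $\bigvee_{\ell} \bigwedge_{i=1}^n \chi_{\ell,i}(x_i)$ where each $\chi_{\ell,i}$ is a formula with one free variable evaluated in $B_i$ alone. Pulling this back through the relativization, each $\chi_{\ell,i}(x_i)$ becomes an $r$-local formula in the single variable $x_i$ (obtained by relativizing its quantifiers to $\Neighb{}{x_i}{r}$). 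Substituting this decomposition into the almost basic local sentence and distributing the existential quantifier over the outer disjunction yields
\[
\bigvee_{\ell}\; \exists \vec{x}.\; \bigwedge_{i\neq j} d(x_i,x_j) > 2r \;\wedge\; \bigwedge_{i=1}^n \chi_{\ell,i}(x_i),
\]
which is precisely a disjunction of asymmetric basic local sentences.

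The main obstacle is establishing the Feferman-Vaught decomposition cleanly while keeping track of the two constraints that each $\chi_{\ell,i}$ has exactly one free variable and that it remains $r$-local after the relativization is folded back in. The induction on quantifier rank is standard: the atomic and Boolean cases are immediate (the distance constraint ensures there are no atoms mixing variables from different components, since the signature is relational), and for $\exists y.\,\theta$ one guesses in which component $B_i$ the witness $y$ lies, then applies the induction hypothesis to $\theta$. Since the partition of $\Neighb{A}{\vec{a}}{r}$ into the $\Neighb{A}{a_i}{r}$ is a priori not first-order definable inside $\tilde{\psi}$, I would argue about it externally: the lemma produces a finite list of candidate decompositions indexed by all ways of assigning quantified variables to components, and semantic equivalence of the resulting disjunction to $\tilde\psi$ over the disjoint-union models is then a direct verification.
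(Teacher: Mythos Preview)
Your proposal is correct and takes essentially the same approach as the paper: both use a Feferman--Vaught decomposition over the disjoint $r$-balls $\Neighb{A}{a_i}{r}$, with the paper implementing it via an explicit syntactic typing scheme (assigning each quantified variable a ``type'' $T_i$ indicating its ball, and replacing mixed-type atoms by $\bot$) while you phrase the same induction semantically (guessing the component for each $\exists y$, with cross-component atoms evaluating to false). The final step of relativising each one-variable factor to $\Neighb{}{x_i}{r}$ to recover $r$-locality is also the same in both.
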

\begin{sendappendix}
    \begin{ifappendix}
        \asblvsabl*
    \end{ifappendix}
\begin{proof}
    Let $\varphi \defined \exists \vec{x}.
    \bigwedge_{i \neq j} d(x_i, x_j) > 2r \wedge 
    \psi(\vec{x})$
    where $\psi$ is a $r$-local formula around $\vec{x}$.
    Following a syntactical proof of Feferman-Vaught
    given by Thomas Colcombet,
    we introduce \emph{types} $T_1, \dots, T_{|\vec{x}|}$
    and define
    typed formulas inductively as follows
    \begin{align*}
        \tau :=\, &
            x:T \mid
            R(x_1 : T, \dots, x_n : T) \quad \text{ when } R \in \upsigma \\
                &\tau \wedge \tau \mid
            \tau \vee \tau \mid
            \neg \tau \mid \\
                &\exists x:T. \tau \mid
            \forall x:T. \tau
    \end{align*}
    In this new language, a relation can only be checked using variables
    of the same type $T$.
    The evaluation of a formula $\tau$ is defined as the
    evaluation of the formula $\tau'$ obtained by removing type annotations.

    An \emph{environment} $\rho$ is a mapping from finitely many variables to
    types. Given an environment $\rho$, we write $\rho[x \mapsto T]$
    to denote the environment $\rho'$ obtained by $\rho'(y) \defined \rho(y)$
    if $x \neq y$ and $\rho'(x) = T$.
    We first translate $\psi$ into a typed formula inductively
    given an environment $\rho$ from the free variables of $\psi$
    to types:
    \begin{itemize}
        \item $f(\exists x. \psi, \rho) = \bigvee_{1 \leq i \leq |\vec{x}|}
            \exists x:T_i. f(\psi, \rho[x \mapsto T_i])$,
        \item $f(\psi_1 \vee \psi_2, \rho) = f(\psi_1, \rho) \vee f(\psi_2,
            \rho)$,
        \item $f(\neg \psi, \rho) = \neg f(\psi, \rho)$,
        \item $f(R(y_1, \dots, y_n), \rho) = R(y_1 : \rho(y_1), \dots, y_n :
            \rho(y_n))$ if all the variables have the same type under $\rho$,
            and
            $f(R(y_1, \dots, y_n), \rho) = \bot$ otherwise.
    \end{itemize}

    By a straightforward induction on the formula $\psi$,
    one can show that for every structure $A$,
    every valuation $\nu$ from variables to elements of $A$,
    and environment $\rho$ mapping free variables of $\psi$
    to types, such that $d_A(\nu(x),\nu(y)) \leq 1$
    implies $\rho(x) = \rho(y)$
    the following holds:
    $A, \nu \models \psi(\vec{x})$ if and only
    if $A, \nu \models f(\psi(\vec{x}), \rho)$.
    The only non trivial case is the translation of relations,
    which is handled of through the hypothesis on $\nu$ and $\rho$.

    As a consequence, for every structure $A$,
    $A \models \exists \vec{x}.
    \bigwedge_{i \neq j} d(x_i, x_j) > 2r \wedge 
    \psi(\vec{x})$
    if and only if 
    $A \models \exists \vec{x}.
    \bigwedge_{i \neq j} d(x_i, x_j) > 2r \wedge 
    f(\psi(\vec{x}), x_i \mapsto T_i)$.

    A second induction allows us to prove that
    typed formulas are equivalent to positive Boolean combinations
    of \emph{monotyped} formulas, i.e. formulas where only one type $T$
    appears.
    The only non-trivial case is the existential quantification,
    handled through the equivalence between
    $\exists x : T. (\psi_1 \wedge \psi_2)$
    and $(\exists x : T. \psi_1) \wedge \psi_2$
    whenever $\psi_2$ contains no variable of type $T$.

    Since the environment $\rho \colon x_i \mapsto T_i$
    assigns a different type to every free variable of $\psi(\vec{x})$,
    the positive Boolean combination obtained by transforming $f(\psi(\vec{x}), \rho)$
    is composed of formulas with exactly one free variable. Let us write
    $\bigvee_{n} \bigwedge_{1 \leq m \leq k} \tau_{n,m} (x_m : T_m)$
    for this positive Boolean combination.
    For every structure $A$,
    $A$ models $\exists \vec{x}.
    \bigwedge_{i \neq j} d(x_i, x_j) > 2r \wedge 
    \psi(\vec{x})$
    if and only if
    $A$ models $\exists \vec{x}.
    \bigwedge_{i \neq j} d(x_i, x_j) > 2r \wedge 
    \bigvee_{n} \bigwedge_{1 \leq m \leq k} \tau_{n,m}(x_m : T_m)
    $.
    By removing type annotations on the formulas $\tau_{n,m}$
    and considering their relativisation to the $r$-neighborhood
    of their single free variable, one obtains $\tau_{n,m}^r(x)$.
    Let us prove that this relativisation preserves the expected equivalence
    with $\varphi$:
    \begin{align*}
        A \models \varphi &\iff
        \exists \vec{a} \in A^k, \bigwedge_{1 \leq i \neq j \leq k} d_A(a_i,a_j)
        > 2r \wedge A, \vec{a} \models \psi(\vec{x}) 
                        \\&\text{$\psi$ is $r$-local} \\
                          &\iff
        \exists \vec{a} \in A^k, \bigwedge_{1 \leq i \neq j \leq k} d_A(a_i,a_j)
        > 2r \wedge \Neighb{A}{\vec{a}}{r}, \vec{a} \models \psi(\vec{x})
                        \\&\text{by definition}
        \\
                          &\iff
        \exists \vec{a} \in A^k, \bigwedge_{1 \leq i \neq j \leq k} d_A(a_i,a_j)
        > 2r \\& \quad\quad\quad \wedge \Neighb{A}{\vec{a}}{r}, 
                          \vec{a} \models 
        \bigvee_{n} \bigwedge_{1 \leq m \leq k} \tau_{n,m}(x_m : T_m)
                        \\&\text{disjunctions}
        \\
                          &\iff
        \exists n. \exists \vec{a} \in A^k, \bigwedge_{1 \leq i \neq j \leq k} d_A(a_i,a_j)
        > 2r \\&\quad\quad\quad \wedge \Neighb{A}{\vec{a}}{r}, \vec{a} \models 
        \bigwedge_{1 \leq m \leq k} \tau_{n,m}(x_m : T_m)
                        \\&\text{relativisation}
        \\
                          &\iff
        \exists n. \exists \vec{a} \in A^k, \bigwedge_{1 \leq i \neq j \leq k} d_A(a_i,a_j)
        > 2r \wedge A, \vec{a} \models 
        \bigwedge_{1 \leq m \leq k} \tau_{n,m}^r(x_m)
                       \\ &\text{$\exists$ and $\wedge$}
        \\
                          &\iff
        \exists n. 
         A \models 
        \exists \vec{x}. \bigwedge_{1 \leq i \neq j \leq k} d(x_i,x_j)
        > 2r \wedge
        \bigwedge_{1 \leq m \leq k} \tau_{n,m}^r(x_m)
        \\
    \end{align*}

    We have expressed $\varphi$ as a disjunction of asymmetric basic local
    sentences
    \begin{equation*}
        \varphi \equiv
        \bigvee_{n}
        \exists x_1, \dots, x_k.
        \bigwedge_{1 \leq i \neq j \leq k} d(x_i, x_j) > 2r \wedge 
        \bigwedge_{1 \leq m \leq k} \tau_{n,m}^{r}(x_m)
        \, . \qedhere
    \end{equation*}
\end{proof}
\end{sendappendix}

\subsection{From Asymmetric Basic Local to Basic Local Sentences}
\label{sec:pgaif:ablstobls}

We are now ready to build the final transformation
between asymmetric basic local sentences
and basic local sentences, 
reusing some of the combinatorics of~\citet[Lemma 4]{grohe2004existential}.

As a convenience, let us write
$\phi/i$ for the sentence $\varphi$ where the variable $x_i$
and local sentence $\psi_i$ are `removed.' For instance,
if $\phi \defined \exists x_1, x_2. d(x_1,x_2) > 2r \wedge \psi_1(x_1) \wedge
\psi_2(x_2)$ then 
$\phi / 1 = \exists x_2. \psi_2(x_2)$ and
$\phi/2 = \exists x_1. \psi_1 (x_1)$.

\begin{fact}[Removing variable weakens]
    \label{fact:loc:removevar}
    If $\varphi
    $ is an asymmetric basic local sentence
    of the form
    $\exists x_1, \ldots, x_k.
    \bigwedge_{i \neq j} d(x_i, x_j) > 2r \wedge 
    \bigwedge_{i} \psi_{i}(x_i)
    $,
    $A \models \varphi$
    and $1 \leq i \leq k$,
    then $A \models \varphi/i$.
\end{fact}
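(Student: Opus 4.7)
The plan is to simply exhibit witnesses for $\varphi/i$ by taking the witnesses for $\varphi$ and discarding the $i$-th one. Concretely, since $A \models \varphi$, there exist elements $a_1, \ldots, a_k \in A$ such that $d_A(a_j, a_{j'}) > 2r$ for all $j \neq j'$ and $A, a_j \models \psi_j(x_j)$ for every $1 \leq j \leq k$.

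Consider the tuple $(a_1, \ldots, a_{i-1}, a_{i+1}, \ldots, a_k)$, obtained by removing the $i$-th coordinate. The pairwise distance constraints appearing in $\varphi/i$ form a subset of those satisfied by the original tuple, so $d_A(a_j, a_{j'}) > 2r$ still holds for all $j \neq j'$ in $\{1, \ldots, k\} \setminus \{i\}$. Likewise, the conjunction $\bigwedge_{j \neq i} \psi_j(a_j)$ is a subconjunction of what already held in $A$, so each $\psi_j$ is still satisfied at $a_j$ in $A$. Hence this shorter tuple witnesses $A \models \varphi/i$.

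There is no genuine obstacle here: the statement is a straightforward monotonicity observation that asymmetric basic local sentences are weakened by removing an existentially quantified variable together with its conjunct, reflecting the purely conjunctive structure of the matrix. The only mild care needed is to notice that after removing $x_i$, the remaining formula is still a well-formed asymmetric basic local sentence with $k - 1$ variables (or a trivially true sentence when $k = 1$), so the notation $\varphi/i$ makes sense and the above derivation directly produces a satisfying assignment.
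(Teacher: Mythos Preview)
Your proof is correct and is exactly the obvious argument; the paper itself states this as a Fact without proof, treating it as immediate from the conjunctive shape of asymmetric basic local sentences.
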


The following lemma allows us to reduce the number of
variables in an asymmetric basic local sentence
under the assumption that some witness is \emph{frequent}.
To simplify notations, let us write
$\exists^{\geq n}_r x. \theta(x)$
as a shorthand for
$\exists x_1, \ldots, x_n.
\bigwedge_{i \neq j} d(x_i, x_j) > 2r \wedge 
\bigwedge_{i} \theta(x_i)$. 
When $\theta(x)$ is a $r$-local
sentence, $\exists^{\geq n}_r x. \theta(x)$ is a basic local sentence.

\begin{lemma}[Repetitions]
    \label{fact:loc:repetitions}
    If $\varphi
    $ is an asymmetric basic local sentence
    of the form
    $\exists x_1, \ldots, x_k.
    \bigwedge_{i \neq j} d(x_i, x_j) > 2r \wedge 
    \bigwedge_{i} \psi_{i}(x_i)
    $
    and $A \models \varphi/i \wedge \exists^{\geq k}_{2r} x. \psi_i(x)$,
    then $A \models \varphi$.
\end{lemma}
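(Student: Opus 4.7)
The plan is to produce one of the $k$ witnesses for $\psi_i$ that is simultaneously far (distance $> 2r$) from all the $k-1$ witnesses supplied by $\varphi/i$. Once such a witness is found, combining it with the $k-1$ previous witnesses yields a tuple of size $k$ with pairwise distance $> 2r$ in which the $j$-th coordinate satisfies $\psi_j$, which is exactly what is required to model $\varphi$.

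More precisely, from $A \models \varphi / i$ I obtain a tuple $(a_j)_{j \neq i}$ in $A$ with pairwise distance $> 2r$ and such that $A, a_j \models \psi_j$ for every $j \neq i$. From $A \models \exists^{\geq k}_{2r} x.\, \psi_i(x)$ I obtain a tuple $(b_1, \ldots, b_k)$ in $A$ with pairwise distance $> 4r$ and with $A, b_\ell \models \psi_i$ for every $\ell$. The key step is the following pigeonhole argument: if every $b_\ell$ were at distance $\leq 2r$ from some $a_{j(\ell)}$, then, since the range of $j$ has size $k - 1$ while the range of $\ell$ has size $k$, there would exist $\ell \neq \ell'$ with $j(\ell) = j(\ell')$, yielding
\[
d(b_\ell, b_{\ell'}) \leq d(b_\ell, a_{j(\ell)}) + d(a_{j(\ell')}, b_{\ell'}) \leq 2r + 2r = 4r,
\]
contradicting the pairwise distance bound on the $b_\ell$'s. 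Hence some $b_{\ell^*}$ satisfies $d(b_{\ell^*}, a_j) > 2r$ for all $j \neq i$.

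I then form the tuple obtained by inserting $b_{\ell^*}$ at position $i$ into $(a_j)_{j \neq i}$. By construction, all coordinates are at pairwise distance $> 2r$ (the $a_j$'s among themselves from $\varphi/i$, and $b_{\ell^*}$ to each $a_j$ from the choice above), while the $j$-th coordinate satisfies $\psi_j$ for $j \neq i$ and $b_{\ell^*}$ satisfies $\psi_i$. This tuple witnesses $A \models \varphi$.

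I do not anticipate a serious obstacle: the hypothesis $\exists^{\geq k}_{2r} x.\, \psi_i(x)$ has precisely the right parameters, namely $k$ points with separation strictly greater than $4r = 2 \cdot 2r$, to make the pigeonhole plus triangle-inequality combination go through. The only thing to double-check is the unpacking of the shorthand $\exists^{\geq k}_{2r}$, which indeed delivers distance $> 4r$ rather than $> 2r$; this slack is exactly what prevents two witnesses from being simultaneously close to the same $a_j$.
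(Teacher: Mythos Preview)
Your proof is correct and follows essentially the same approach as the paper's: both extract the $k-1$ witnesses from $\varphi/i$, the $k$ witnesses from $\exists^{\geq k}_{2r} x.\,\psi_i(x)$ (at pairwise distance $>4r$), and use the pigeonhole principle together with the triangle inequality to find one of the latter that is $>2r$ from all of the former. Your unpacking of the shorthand $\exists^{\geq k}_{2r}$ is exactly right and is indeed the point on which the argument hinges.
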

\begin{proof}
    Let $A$ be a structure such that
    $A \models \varphi/i \wedge \exists^{\geq k}_{2r} x. \psi_i(x)$.
    By definition of $\varphi/i$ there exists 
    a vector $\vec{a} \in A^{k-1}$
    of points at pairwise distance greater than $2r$
    such that $A, a_j \models \psi_j(x)$
    for $1 \leq j \neq i \leq k$.
    To prove that $A \models \varphi$, it suffices
    to find some witness $a_i$ for $\psi_i$
    that is at distance greater than $2r$
    of $\vec{a}$.

    The fact that $A \models \exists^{\geq k}_{2r} x. \psi_i(x)$
    guarantees that we can find $k$ witnesses for $\psi_i$
    at pairwise distance greater than $4r$, let us write $\vec{b}$
    this set of witnesses.

    Assume by contradiction that $\forall b \in \vec{b},
    \exists a \in \vec{a}. d(a,b) \leq 2r$. Since
    $|\vec{a}| = k-1$ and $|\vec{b}| = k$, there exists a point $a \in \vec{a}$
    such that two elements $b_1$ and $b_2$ of $\vec{b}$ are
    at distance less than $2r$ of $a$. The triangular inequality
    implies that $d(b_1, b_2) \leq 4r$ which is absurd.

    We have proven that there exists a $b \in \vec{b}$
    such that $b$ is at distance greater than $2r$ of all elements in $\vec{a}$,
    therefore $A \models \varphi$.
\end{proof}

Let us temporarily fix a structure $A$.
To transform an asymmetric basic local sentence
into a positive Boolean
combination of basic local sentences,
one can proceed by induction on the number of
outer existential quantifications.
Whenever
\cref{fact:loc:removevar} can be applied,
it suffices to use the induction hypothesis.
As a consequence, the base case of our induction
is when \cref{fact:loc:removevar} cannot be applied
at all. For such an asymmetric basic local sentence
$\varphi \defined 
\exists x_1, \dots, x_k. \bigwedge_{1 \leq i \neq j \leq k} d(x_i,x_j) > 2r
\wedge \bigwedge_{1 \leq i \leq k} \psi_i(x_i)$
, the
set $W$ of elements in $A$
that satisfy at least one $\psi_i$
enjoys some sparsity property.
Namely, it is not possible to find more
than $k(k-1)$ points in $W$
whose neighborhoods of radius $2r$ do not intersect.

To handle this base case, we enumerate the possible behaviours of
the set $W$ through \emph{template graphs}.
Given a structure $A$, 
a non-empty finite set $Q$ of $r$-local properties,
a radius $R$ and 
a vector $\vec{a}$
of elements in $A$ such that
every $a \in \vec{a}$ satisfies at least one property $p \in Q$,
we build the template graph $G_\vec{a}^R$ as follows:
its vertices are the elements of $\vec{a}$ and it has
a labelled edge $(u,v,d_A(u,v))$ whenever $d_A(u,v) \leq R$.
Moreover, a node $a \in V(G)$ is coloured by $p \in Q$
whenever $A, a \models p(x)$.
Be careful that a vertex can have multiple
colours; the set of colours of a vertex $v$ is written $C(v)$.

Given a maximal size $K$, a radius $R$
and a non-empty finite set $Q$ of $r$-local properties,
we define
$\Delta(K,R,Q)$
to be the set of template graphs with at most $K$ vertices,
colours in $Q$ and edges labelled with integers at most~$R$.
Graphs in $\Delta(K,Q,R)$ are
ordered using $G \leq G'$ whenever there exists
an isomorphism $h \colon G \to G'$
between the underlying
graphs respecting edge labels
such that $C(v) \subseteq C(h(v))$ for $v \in V(G)$.
In a structure $A$, a graph $G$ is \emph{represented}
by a vector $\vec{a}$ whenever $G_{\vec{a}}^R \geq G$.

From a graph $G \in \Delta(K,R,Q)$
where $Q$ is a non-empty finite set of $r$-local properties,
one can build the $(R+r)$-local formula $\theta_G^R(x)$ that finds a
representative of~$G$ as follows
%% stating that the Gaifman graph of $\Neighb{A}{x}{R}$ contains
%% a vector $\vec{c}$ such that $\vec{c}$ represents $G$
%% and $\Neighb{A}{\vec{c}}{r} \subseteq \Neighb{A}{x}{R}$.
%% Formally this formula is described as follows
\begin{align*}
    \theta_G^R(x) \defined
    &\exists v_1, \ldots, v_{|V(G)|}
    \in \Neighb{}{x}{R}. \\
    &\bigwedge_{(v_i,v_j,h) \in E(G)}
    d(v_i,v_j) = h \\
    &\wedge \bigwedge_{ v_i \in V(G)}
    \bigwedge_{p \in C(i)} p(v_i)
    \\
    &\wedge \bigwedge_{v_i \in V(G)}
        \Neighb{}{v_i}{r} \subseteq \Neighb{}{x}{R} \, .
\end{align*}

\begin{restatable}[Graph representation]{fact}{graphrepresentation}
    \label{fact:loc:abstriaction}
    Let $A$ be a structure,
    $r, R \geq 1$,
    $Q$ be a non-empty finite set of $r$-local properties,
    $G \in \Delta(K,R,Q)$ be a template graph,
    and $a \in A$ be an element of~$A$.  Then
    $A, a \models \theta_G^R(x)$
    if and only if
    $\Neighb{A}{a}{R}$ contains
    points $\vec{b}$
    such that $G_\vec{b}^R \geq G$
    and $\Neighb{A}{\vec{b}}{r} \subseteq \Neighb{A}{a}{R}$.
\end{restatable}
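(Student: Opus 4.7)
The plan is to unfold the biconditional into two essentially symmetric verifications, matching each conjunct of $\theta_G^R(x)$ with the corresponding ingredient of $G_\vec{b}^R \geq G$. The formula $\theta_G^R$ is tailored to express precisely the existence of a representation of $G$ inside $\Neighb{A}{x}{R}$, so the whole argument is definitional once the ordering on template graphs is carefully unfolded.

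For the forward direction, I would extract witnesses $b_1, \ldots, b_{|V(G)|} \in \Neighb{A}{a}{R}$ from $A, a \models \theta_G^R(x)$ and set $\vec{b} \defined (b_1, \ldots, b_{|V(G)|})$. I would then argue that the map $v_i \mapsto b_i$ witnesses $G \leq G_\vec{b}^R$: every labelled edge $(v_i,v_j,h) \in E(G)$ satisfies $h \leq R$ since $G \in \Delta(K,R,Q)$, so the conjunct $d(v_i,v_j) = h$ places an edge with the same label into $G_\vec{b}^R$; the colour conjunct forces $C(v_i) \subseteq C(b_i)$ in $G_\vec{b}^R$; and the last conjunct of $\theta_G^R$ delivers $\Neighb{A}{\vec{b}}{r} \subseteq \Neighb{A}{a}{R}$ directly.

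For the converse, given a vector $\vec{b}$ with $G_\vec{b}^R \geq G$ and $\Neighb{A}{\vec{b}}{r} \subseteq \Neighb{A}{a}{R}$, I would let $\pi$ be an isomorphism witnessing this ordering and plug $b_{\pi(1)}, \ldots, b_{\pi(|V(G)|)}$ as witnesses for the outer existentials of $\theta_G^R(x)$. Each conjunct then follows mechanically: membership in $\Neighb{A}{a}{R}$ is an assumption on $\vec{b}$; each edge of $G$ maps under $\pi$ to an edge of $G_\vec{b}^R$ with matching label, yielding the distance equality; each colour $p \in C(v_i)$ lifts to $p \in C(b_{\pi(i)})$, giving $A, b_{\pi(i)} \models p(x)$; and the neighbourhood containment is the second hypothesis.

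The only delicate bookkeeping, and thus the main obstacle such as it is, lies in pinning down what ``isomorphism between underlying graphs respecting edge labels'' in the definition of $G \leq G'$ unpacks to in terms of edges and colour sets, and checking that this is precisely what the two sides of the biconditional compare. There is no combinatorial or model-theoretic hurdle beyond matching conventions.
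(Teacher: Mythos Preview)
The paper states this as a Fact and supplies no proof, treating the claim as immediate from the construction of $\theta_G^R$. Your definitional unfolding is exactly the intended verification, and you have correctly isolated the only point requiring care---the precise reading of the template-graph ordering---as the place where conventions must be matched.
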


Template graphs miss one key compositional
property, namely if $A$ is a structure
and $\vec{a}, \vec{b}$ are two vectors,
it is not immediate to recover $G_{\vec{ab}}^R$
from $G_{\vec{a}}^R$ and $G_{\vec{b}}^R$.
This is dealt with by restricting our
attention to vectors that are far enough
from one another.

\begin{fact}
    \label{fact:loc:templateunion}
    Let $A$ be a finite structure, $Q$ a non-empty
    finite set of $r$-local properties,
    and $\vec{a}, \vec{b}$ be vectors of elements of $A$
    whose neighborhoods of radius $R \geq r$
    do not intersect.
    Then the template graph $G_{\vec{a}\vec{b}}^R$
    is the disjoint union of $G_{\vec{a}}^R$
    and
    $G_{\vec{b}}^R$.
\end{fact}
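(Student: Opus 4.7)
The plan is to unfold the definition of the template graph and verify, piece by piece, that the vertex set, edge set, and colouring of $G_{\vec{a}\vec{b}}^R$ partition along $\vec{a}$ and $\vec{b}$. Recall that $G_{\vec{c}}^R$ has vertex set $\vec{c}$, a labelled edge $(u,v,d_A(u,v))$ whenever $d_A(u,v) \leq R$, and colouring $C(v) \defined \setof{p \in Q}{A, v \models p(x)}$. The hypothesis I intend to exploit is that $\Neighb{A}{a}{R} \cap \Neighb{A}{b}{R} = \emptyset$ for every $a \in \vec{a}$ and $b \in \vec{b}$, which in particular forces $d_A(a,b) > R$ for any such cross-pair (otherwise $b$ would lie in both $R$-neighborhoods).

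First I would check the vertex sets: the non-intersection hypothesis implies $\vec{a} \cap \vec{b} = \emptyset$ as sets, since every $a \in \vec{a}$ lies in $\Neighb{A}{a}{R}$ and thus cannot belong to any $\Neighb{A}{b}{R}$ with $b \in \vec{b}$; so $V(G_{\vec{a}\vec{b}}^R) = V(G_{\vec{a}}^R) \uplus V(G_{\vec{b}}^R)$. Next, I would rule out cross-edges: if an edge between some $a \in \vec{a}$ and $b \in \vec{b}$ were present in $G_{\vec{a}\vec{b}}^R$, then $d_A(a,b) \leq R$, contradicting the distance lower bound derived above. Finally, the intra-$\vec{a}$ edges and their labels, the intra-$\vec{b}$ edges and their labels, and the colouring of each vertex are all defined purely in terms of the ambient structure $A$ and the properties in $Q$; hence they are preserved when restricting from $G_{\vec{a}\vec{b}}^R$ to $G_{\vec{a}}^R$ and $G_{\vec{b}}^R$ respectively.

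There is no real obstacle here: the statement is essentially a bookkeeping lemma asserting that well-separated neighborhoods can be analysed independently by their template graphs. The only subtlety worth stating explicitly is that the non-intersection of the $R$-neighborhoods yields the strict inequality $d_A(a,b) > R$ between any $a \in \vec{a}$ and $b \in \vec{b}$, which is exactly what guarantees the absence of cross-edges and therefore the disjoint union structure.
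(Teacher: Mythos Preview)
Your proof is correct. The paper does not supply a proof for this fact: it is stated as a \emph{Fact} and left to the reader, since it is immediate from the definition of the template graph once one observes that the non-intersection of the $R$-neighborhoods forces $d_A(a,b) > R$ for any $a \in \vec{a}$ and $b \in \vec{b}$. Your write-up makes this observation explicit and correctly verifies the three ingredients (vertex disjointness, absence of cross-edges, and preservation of intra-edges and colours), which is exactly the routine check the paper is implicitly deferring.
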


To ensure that we can use \cref{fact:loc:templateunion},
we add \emph{security cylinders}
around points, as ensured by the following $\max(3R, R+r)$-local formula
\begin{ifappendix}%
whose behaviour is detailed in \cref{app:sec:pgaif:ablstobls}%
\end{ifappendix}
\begin{align*}
    \pi^R_Q(x) \defined
    &\forall y \in \Neighb{}{x}{3R}.
    &\left(\bigvee_{p \in Q} p(u)\right)
    \implies
    \Neighb{}{y}{r} \subseteq
    \Neighb{}{x}{R}
    \, .
\end{align*}

\begin{sendappendix}%
    \begin{ifappendix}%
        Let us recall the definitions in~\cref{sec:pgaif:ablstobls}
        before proving~\cref{fact:loc:securitycylinders}.
        The formula $\theta_G^R(x)$ is defined as follows
\begin{align*}
    \theta_G^R(x) \defined
    &\exists v_1, \ldots, v_{|V(G)|}
    \in \Neighb{}{x}{R}. \\
    &\bigwedge_{(v_i,v_j,h) \in E(G)}
    d(v_i,v_j) = h \\
    &\wedge \bigwedge_{ v_i \in V(G)}
    \bigwedge_{p \in C(i)} p(v_i)
    \\
    &\wedge \bigwedge_{v_i \in V(G)}
        \Neighb{}{v_i}{r} \subseteq \Neighb{}{x}{R} \, .
\end{align*}

%% And $\pi_Q^R$ is the $\max(3R, R+r)$-local formula
%% >>>>>>> 259e5a3baacd0bda5f2a6551d73e2256f7551fb5
%% \begin{align*}
%%     \pi^R_Q(x) \defined
%%     &\forall y \in \Neighb{}{x}{3R}. 
%%     &\left(\bigvee_{p \in Q} p(u)\right)
%%     \implies
%%     \Neighb{}{y}{r} \subseteq
%%     \Neighb{}{x}{R}
%%     \, .
%% \end{align*}

We restate hereafter~\cref{fact:loc:abstriaction}.

\graphrepresentation*
    \end{ifappendix}%
\end{sendappendix}%
\begin{restatable}[Security Cylinders]{lemma}{securitycylinder}
    \label{fact:loc:securitycylinders}
    For every radii $R,r \geq 1$, 
    for every non-empty finite set $Q$ of $r$-local properties,
    for every graph $G \in \Delta(K,R,Q)$,
    for every structure $A$,
    for every points $a,b \in A$
    such that $d(a,b) \leq 2R$,
    $A, a \models \theta_G^R(x)$
    and $A, b \models \pi^R_Q(x)$
    implies $A, b \models \theta_G^R(x)$.
\end{restatable}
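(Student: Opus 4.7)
The plan is to reuse the very same witnesses that testify $A, a \models \theta_G^R(x)$ as witnesses for $A, b \models \theta_G^R(x)$, and check that each conjunct of $\theta_G^R$ is preserved when the basepoint moves from $a$ to $b$. Concretely, unfold $A, a \models \theta_G^R(x)$ to obtain points $v_1, \ldots, v_{|V(G)|} \in \Neighb{A}{a}{R}$ satisfying the edge-distance constraints of $G$, carrying the colours of $G$ via $\bigwedge_{p \in C(i)} p(v_i)$, and with $\Neighb{A}{v_i}{r} \subseteq \Neighb{A}{a}{R}$.

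The distance equalities $d(v_i,v_j)=h$ and the colour atoms $p(v_i)$ are properties of the ambient structure $A$ that do not depend on the basepoint, so they transfer to $b$ for free. What remains is to show the two spatial containments required by $\theta_G^R(b)$, namely $v_i \in \Neighb{A}{b}{R}$ and $\Neighb{A}{v_i}{r} \subseteq \Neighb{A}{b}{R}$. Here is where the security cylinder enters: from $d_A(a,v_i)\leq R$ and $d_A(a,b)\leq 2R$, the triangle inequality yields $d_A(b,v_i)\leq 3R$, so $v_i \in \Neighb{A}{b}{3R}$. Since each vertex of a template graph arising in the intended usage is coloured by at least one $p \in Q$, we have $A, v_i \models \bigvee_{p \in Q} p(x)$, hence $\pi^R_Q(b)$ applies at $y = v_i$ and delivers $\Neighb{A}{v_i}{r} \subseteq \Neighb{A}{b}{R}$. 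Since $v_i$ lies in its own $r$-ball, this inclusion in particular places $v_i$ inside $\Neighb{A}{b}{R}$.

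The only subtle point I anticipate is the implicit convention that every vertex of a graph in $\Delta(K,R,Q)$ carries at least one colour from $Q$; without it the formula $\pi^R_Q$ has no purchase on uncoloured witnesses. This is consistent with how template graphs are produced in the text (from vectors $\vec a$ whose entries each satisfy some $p \in Q$), and under this convention the proof is a short three-line triangle-inequality argument followed by one application of $\pi^R_Q$. No induction, no combinatorial case analysis, and no manipulation of the edge structure of $G$ is needed — the lemma is essentially a geometric fact about the interaction between the radii $R$, $2R$, and $3R$ appearing in $\theta_G^R$ and $\pi^R_Q$.
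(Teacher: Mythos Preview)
Your proof is correct and follows essentially the same approach as the paper: extract the witnesses $v_1,\dots,v_{|V(G)|}$ from $\theta_G^R(a)$, use the triangle inequality to place them in $\Neighb{A}{b}{3R}$, and invoke $\pi^R_Q(b)$ on each (coloured) witness to obtain $\Neighb{A}{v_i}{r}\subseteq\Neighb{A}{b}{R}$. Your explicit flagging of the convention that every vertex of a template graph carries at least one colour from $Q$ is well taken; the paper's proof relies on the same fact (phrased via \cref{fact:loc:abstriaction}) and it is indeed guaranteed by how template graphs are constructed.
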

\begin{sendappendix}
    \begin{ifappendix}
        \securitycylinder*
    \end{ifappendix}
\begin{proof}
    Since $A,a \models \theta_G^R(x)$
    and using \cref{fact:loc:abstriaction}
    there exists a vector $\vec{c} \in \Neighb{A}{a}{R}$
    such that $G_{\vec{c}}^R = G$ 
    and
    $\Neighb{A}{\vec{c}}{r} \subseteq \Neighb{A}{a}{R}$.
    In particular, every point of $\vec{c}$ satisfies
    at least one property $p \in Q$.
    As $d(a,b) \leq 2R$,
    $\vec{c} \in \Neighb{A}{b}{3R}$.
    As $A, b \models \pi_Q^R(x)$
    his shows that $\Neighb{A}{\vec{c}}{r} \subseteq \Neighb{A}{b}{R}$.
    Using~\cref{fact:loc:abstriaction} this proves 
    that $A, b \models \theta_G^R(x)$.
\end{proof}
\end{sendappendix}

\begin{lemma}[From Asymmetric basic local to Basic Local]
    \label{lem:pos:blvsabl}
Every asymmetric basic local sentence
is equivalent to a positive Boolean combination of
basic local sentences.
\end{lemma}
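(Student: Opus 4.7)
I would proceed by strong induction on $k$, the number of outer existential quantifiers of
$\varphi = \exists x_1,\dots,x_k.\,\bigwedge_{i\neq j} d(x_i,x_j) > 2r \wedge \bigwedge_i \psi_i(x_i)$.
The base case $k = 1$ is immediate, since $\exists x_1.\,\psi_1(x_1)$ is already a basic local sentence (one quantifier, no pairwise distance constraint).

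For the inductive step, for each $i \in \{1,\dots,k\}$ introduce the basic local sentence $\sigma_i := \exists^{\geq k}_{2r} x.\,\psi_i(x)$. Combining Fact~\ref{fact:loc:removevar} with Lemma~\ref{fact:loc:repetitions}, whenever $\sigma_i$ holds in a structure $A$ one has $A \models \varphi$ iff $A \models \varphi/i$. The sentence $\varphi/i$ is asymmetric basic local with $k-1$ quantifiers, so by the induction hypothesis it is equivalent to a positive Boolean combination $F_i$ of basic local sentences. Every "dense" model, i.e.\ one where some $\sigma_i$ holds, is therefore captured by the positive Boolean combination $\bigvee_{i=1}^{k}(\sigma_i \wedge F_i)$.

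It remains to handle the "sparse" case where $\varphi$ holds but no $\sigma_i$ does. In that case each $W_i := \{a \in A : A,a\models\psi_i(x)\}$ has no antichain of size $k$ under distance ${>}4r$, and so lies inside fewer than $k$ balls of radius $4r$ around a maximal antichain; hence $W := \bigcup_i W_i$ is covered by fewer than $k^2$ such balls. Any witness tuple of $\varphi$ is a $k$-subset of $W$, so up to isomorphism only finitely many witness "shapes" are possible, each described by a template graph in $\Delta(k, R, Q)$ with $Q := \{\psi_1,\dots,\psi_k\}$ and $R$ a radius depending only on $k$ and $r$. Grouping the witnesses of a tuple into clusters of mutually close points, the tuple determines a multiset $\mu = \{G_1^{n_1},\dots,G_s^{n_s}\}$ of template graphs with $\sum_j n_j \leq k$. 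For every such multiset $\mu$ that corresponds to a valid cluster decomposition of a witness of $\varphi$, define
$\chi_\mu := \bigwedge_{j} \exists^{\geq n_j}_{R'} x.\,\bigl(\theta_{G_j}^{R}(x) \wedge \pi_Q^{R}(x)\bigr)$
with $R'$ large enough (e.g.\ $R' = 3R$) so that each conjunct is a basic local sentence, and let $\chi := \bigvee_\mu \chi_\mu$. Using Lemma~\ref{fact:loc:securitycylinders} (security cylinders) and Fact~\ref{fact:loc:templateunion} (disjoint union of templates) one then checks that $\chi \Rightarrow \varphi$ and conversely that every sparse model of $\varphi$ satisfies some $\chi_\mu$.

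Combining the two parts yields $\varphi \equiv \bigvee_i (\sigma_i \wedge F_i) \vee \chi$, a positive Boolean combination of basic local sentences. The main obstacle is precisely the sparse case: witness tuples may split into several clusters at arbitrarily large mutual distance, which resists any obvious symmetric encoding. The point of the template-graph formulas $\theta_G^R$ together with the security cylinders $\pi_Q^R$ is exactly to let each cluster \emph{type} $G_j$ be asked for independently in the symmetric form $\exists^{\geq n_j}_{R'} x.\,(\theta_{G_j}^R(x) \wedge \pi_Q^R(x))$; Fact~\ref{fact:loc:templateunion} then ensures the independently-obtained representatives sit in pairwise disjoint neighbourhoods that can be reassembled into a genuine witness tuple of $\varphi$, closing the argument.
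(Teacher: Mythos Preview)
Your overall plan matches the paper's: induction on $k$, dispatch the dense case via $\sigma_i\wedge F_i$, and handle the sparse case with template graphs. The gap is in the implication $\chi_\mu\Rightarrow\varphi$. Your formula
\[
\chi_\mu=\bigwedge_j \exists^{\geq n_j}_{R'} x.\,\bigl(\theta_{G_j}^{R}(x)\wedge\pi_Q^{R}(x)\bigr)
\]
is a conjunction of \emph{independent} basic local sentences, one per template type $G_j$; nothing forces the witnesses of distinct conjuncts to lie at distance greater than $2R$ from one another. Concretely, take $k=2$, $r\geq 1$, and $\mu=\{G_1,G_2\}$ where $G_j$ is a single $\psi_j$-coloured vertex (so $\mu$ is valid). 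Let $A$ consist of two adjacent points $a_1\models\psi_1$ and $a_2\models\psi_2$. Then $A\not\models\varphi$, yet $x=a_1$ witnesses both conjuncts of $\chi_\mu$: it sees a representative of $G_1$ (namely $a_1$) and of $G_2$ (namely $a_2$) inside its $R$-ball, and $\pi_Q^R(a_1)$ holds trivially. Lemma~\ref{fact:loc:securitycylinders} only tells you that two nearby $\pi_Q^R$-points realise the \emph{same} templates; since you never asserted that witnesses for different conjuncts realise \emph{different} templates, it gives no separation, and Fact~\ref{fact:loc:templateunion} cannot be applied.

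The paper's fix is to index by multisets over $\wp(\Delta)$ rather than over $\Delta$: each conjunct of $\Theta_M^R$ demands points whose template profile is \emph{exactly} a given set $S$, via $\bigwedge_{G\in S}\theta_G^R(x)\wedge\bigwedge_{G\notin S}\neg\theta_G^R(x)$. With this negative information in hand, two witnesses with distinct profiles $S\neq S'$ cannot be within distance $2R$ (otherwise security cylinders would force $S=S'$), which is exactly what is needed to invoke Fact~\ref{fact:loc:templateunion} and reassemble a genuine witness tuple for $\varphi$. Note that these negations live \emph{inside} the local formula of each basic local sentence, so the outer combination remains positive, as required.
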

\begin{proof}
    Let $\varphi$ be 
    of the form
    $\exists x_1, \ldots, x_k.
    \bigwedge_{i \neq j} d(x_i, x_j) > 2r \wedge
    \bigwedge_{i} \psi_{i}(x_i)
    $.
    We prove by induction over $k$ that $\varphi$
    is equivalent
    to a positive Boolean combination
    of basic local sentences.
    When $k \leq 1$, $\varphi$ is already a basic local sentence
    hence we assume $k \geq 2$. 

    For $1 \leq i \leq k$, we apply the induction hypothesis on
    $\varphi/i$, which has strictly fewer existentially quantified
    variables, and call $\overline{\varphi/i}$ the obtained
    positive Boolean combination of basic local sentences.

    Let $Q \defined \{ \psi_1, \ldots, \psi_k\}$
    and
    for convenience, let $\Delta$ be
    a shorthand for
    $\Delta(k, 4^{k^2} 6r, Q)$.
    We define $\mathbb{M}$
    to be the multisets of $\wp(\Delta)$
    with up to $k(k-1)$ repetitions per element
    of $\wp(\Delta)$.
    Given a multiset $M \in \mathbb{M}$
    and a radius $R$
    we can write
    the following
    conjunction of
    basic local sentences
    \begin{equation*}
        \Theta_M^R
        \defined 
        \bigwedge_{S \in \wp(\Delta)}
        \exists_{3R}^{\geq M(S)} x.
        \pi^R_Q(x)
        \wedge
        \bigwedge_{G \in S} \theta_G^R(x)
        \wedge
        \bigwedge_{G \not\in S} \neg \theta_G^R(x)
        \, .
    \end{equation*}

    \newcommand{\obt}{\operatorname{Obt}}
    The set of graphs \emph{obtainable}
    from a multiset $M \in \mathbb{M}$
    is written $\obt(M)$
    and defined
    inductively by
    $\obt(\emptyset) \defined \emptyset$
    and $\obt( \{ S \} + M)$
    defined as the union
    of $\obt(M)$, $S$, and
    $\setof{ G \uplus G'}{G \in S \wedge G' \in \obt(M)}
    $.
    We call a multiset $M$ \emph{valid}
    whenever there exists a graph $G$
    obtainable from $M$ that contains vertices
    $v_1, \dots, v_k$ at pairwise weighted distance
    greater than $2r$ and such that $v_i$ is coloured by
    $\psi_i$. The finite set of valid multisets is written $\mathbb{M}_V$.

    Let us prove that
    $\varphi$ is equivalent to the positive Boolean combination
    of basic local sentences $\Psi$ defined as
    \begin{equation*}
        \Psi
        \defined
        \left(\bigvee_{1 \leq i \leq k} \overline{\varphi/i} \wedge \exists^{\geq
        k}_{2r} x. \psi_i(x)\right)
        \vee
        \left(\bigvee_{M \in \mathbb{M}_V} 
        \bigvee_{6r \leq R \leq 4^{k^2} 6r}
        \Theta_M^R\right)
        \, .
    \end{equation*}

    \bigskip\smallskip
    %\begin{itemize}\item
    Assume first that $A \models \varphi$.
            Using \cref{fact:loc:removevar},
            $A \models \varphi/i$ for all $1 \leq i \leq
            k$
            and by construction this proves
            that 
            $A \models \overline{\varphi/i}$ for all $1 \leq i \leq
            k$. As a consequence, we only need
            to treat the case where
            $A \not\models \exists^{\geq
            k}_{2r} x. \psi_i(x)$
            for all $1 \leq i \leq k$.

            In such a structure $A$, let us
            call $W$ the set
            of elements in $A$
            that satisfy at least one $\psi_i$.
            It is not possible to find more
            than $k(k-1)$ points in $W$
            whose neighborhoods of radius $2r$ do not intersect.
            This implies that there exists a vector $\vec{c}$
            in $A$ of size less or equal $k(k-1)$
            such that $W \subseteq \Neighb{A}{\vec{c}}{6r}$.

            Using \cref{lem:ccl:extentedcover}
            over $\vec{c}$ and $6r$ one obtains
            a radius $6r \leq R \leq 4^{k^2} 6 r $
            and a vector $\vec{b}$ such that
            $
            \Neighb{A}{\vec{c}}{6r}
            \subseteq
            \Neighb{A}{\vec{b}}{R}$
            and the neighborhoods of radius $3R$
            around points in $\vec{b}$ do not intersect.

            Given an element $b \in \vec{b}$,
            construct $S_b$
            the collection of 
            the template graphs 
            $G_{\vec{a}}^R$
            when $\vec{a}$
            ranges over the sets of $k$ points
            of $W \cap \Neighb{A}{b}{R}$.
            By construction,
            $S_b \in \wp(\Delta)$.
            Let us write $M_\vec{b}$ the multiset obtained
            by collecting the sets $S_b$ for $b \in \vec{b}$.

            We now prove that $M_\vec{b}$ is valid.
            As $A \models \varphi$, there exists a vector $\vec{a} \in A^k$
            of points at pairwise distance greater than $2r$
            such that $A, a_i \models \psi_i(x)$
            hence it suffices to prove that
            $G_{\vec{a}}^R \in \obt(M_{\vec{b}})$
            to conclude that $M_\vec{b}$ is valid.

            Remark that $\vec{a}$ is included in the disjoint union
            of the $3R$ neighborhoods of the elements of $\vec{b}$.
            As a consequence of \cref{fact:loc:templateunion}
            $G_\vec{a}^R = \biguplus_{b \in \vec{b}} G_{\vec{a}/b}^R$
            where $\vec{a}/b$ is the vector of elements of $\vec{a}$
            that are at distance less than $3R$ of $b$.
            This proves that $G_{\vec{a}}^R \in \obt(M_\vec{b})$.

            As a consequence, $A \models \Theta_{M_\vec{b}}^R$
            with $M$ valid and $6r \leq R \leq 4^{k^2} 6r$
            and therefore $A \models \Psi$.

      %  \item
      \bigskip
      Assume conversely that $A \models \Psi$.
            If $A \models \overline{\varphi/i} \wedge \exists^{\geq
            k}_{2r} x. \psi_i(x)$ for some~$i$,
            then by definition
            $A \models \varphi/i \wedge \exists^{\geq
            k}_{2r} x. \psi_i(x)$
            and using \cref{fact:loc:repetitions},
            $A \models \varphi$.

            Otherwise $A \models \Theta_M^R$ where $M$ is a valid multiset.
            Let us write $M = \{ S_1, \dots, S_n \}$
            and each $S_i$ is repeated $m_i \leq k(k-1)$ times.
            By construction of $\Theta_M^R$, there exist
            points
            $b_i^j$ for $1 \leq i \leq n$
            and $1 \leq j \leq m_i$
            in $A$ such that
            $A, b_i^j \models
            \pi_Q^R(x)
            \wedge
            \bigwedge_{G \in S_i} \theta_G^R(x)
            \wedge
            \bigwedge_{G \not \in S_i} \neg\theta_G^R(x)
            $. Moreover, for each~$i$
            the points $b_i^j$ are at distance
            greater than $6R$.

            Our goal is to prove that the points $b_i^j$ are
            at pairwise distance greater than $2R$ when $i$ varies.
            Assume by contradiction that
            there exists $i,j$ and $i',j'$ such that
            $i \neq i'$ and $d\left(b_i^j, b_{i'}^{j'}\right) \leq 2R$.
            We are going to prove that $S_i = S_{i'}$.
            Assume by contradiction that there exists
            $G \in S_i \setminus S_{i'}$.
            We know that
            $A, b_{i}^{j} \models \theta_G^R(x)$
            and
            $A, b_{i'}^{j'} \models \neg \theta_G^R(x)$.
            As 
            $A, b_{i}^{j} \models \pi^R_Q(x)$,
            \cref{fact:loc:securitycylinders} implies that
            $A, b_{i'}^{j'} \models \theta_G^R(x)$
            which is absurd. Hence, $S_i = S_{i'}$,
            which is in contradiction with the definition of $M$,
            and finally, the points $b_i^j$ must be at distance
            greater than $2R$.

            As the multiset $M$ is valid, there exists a graph
            $G \in \obt(M)$ with vertices 
            $v_1, \dots, v_k$ at pairwise distance
            greater than $2r$
            such that $v_i$ is coloured by $\psi_i$.
            By induction on the construction of $G$,
            and since the $R$-neighborhoods
            around the points $b_i^j$ do not intersect,
            we can use \cref{fact:loc:templateunion}
            to build a vector $\vec{c}$ in $A$
            such that $G_\vec{c}^R \geq G$.
            As a consequence, there exists
            $k$ points $a_1, \dots, a_k$ in $\vec{c}$ that are at pairwise distance 
            greater than $2r$ and such that $A, a_i \models \psi_i(x)$.
            We have proven that $A \models \varphi$.
            \qedhere
    %\end{itemize}
\end{proof}

\positivelocality*
\begin{proof}
    Combine \cref{lem:ccl:existlocal}
    with \cref{lem:pos:asblvsabl}
    and
    \cref{lem:pos:blvsabl}.
\end{proof}

Notice that we do not recover the existential
locality
of~\citeauthor{grohe2004existential} using a radius $1$
since we introduce negations in the construction of the basic
local sentences.
Conversely, we did not manage to apply directly their
result to obtain \cref{thm:ccl:mongaifman}.
\begin{ifnotappendix}
    \Cref{thm:ccl:mongaifman} can be understood as a ``decoupling''
    theorem, stating that detecting a neighbourhood around
    several points
    in a structure is the same as
    detecting repetitions of neighbourhoods around single points.
\end{ifnotappendix}

\section{A Local Preservation Theorem}
\label{sec:localpres}\paragraph*{Positive Locality Theorem}
\label{sec:gaifmaninf}

In order to prove \cref{lem:ccl:existlocalnfinfinite},
we use
a classical construction of model theory.
Given a structure $A \in \Mod(\upsigma)$,
one can build a new langage $\mathcal{L}_A$
consisting of the relations in $\upsigma$
extended with constants $c_a$ for $a \in A$;
the structure $A$ is then canonically interpreted
as a structure in $\mathcal{L}_A$ by interpreting
$c_a$ with the element $a$; this structure is
written $\hat{A}$.
Be careful that this construction makes us
temporarily
leave the realm of \emph{finite relational signatures}.
It is possible, given a fragment $\mathsf{F}$ of $\FO$
over the extended language
to build a theory $T^+(\mathsf{F},A)$
consisting of all the sentences of $\mathsf{F}$
true in $\hat{A}$.

\existlocalinfinite*
\begin{proof}
	The implications $(a) \Rightarrow (b) \Rightarrow (c) \Rightarrow (d)$
	are consequences respectively of the definition of $\locleq{r}{q}{k}$,
	\cref{fact:qo:refinement} and \cref{lem:qo:elemloc}.
	As a consequence, let us focus on the implication $(d) \Rightarrow (a)$.
	%\begin{description}
	%\item[$(a) \implies (b)$]
	%This is by definition of $\locleq{r}{q}{k}$.
	%Consider $\varphi = \exists \vec{x}. \tau(\vec{x})$
	%where $\tau$ is $r$-local around $\vec{x}$.
	%Let us write $k = |\vec{x}|$ and $q$ the quantifier rank of $\tau$.
	%Let $A, B$ be two structures such that
	%$A \locleq{r}{q}{k} B$
	%and $A \models \varphi$.
	%By definition of the preorder,
	%$B \models \varphi$.

	%\item[$(b) \implies (c)$]
	%This is a consequence of \cref{fact:qo:refinement}.

	%\item[$(c) \implies (d)$]
	%This is a consequence of \cref{lem:qo:elemloc}.
	%Assume that $\varphi$ is preserved under
	%$\locleq{\infty}{\infty}{\infty}$.
	%Consider $A \models \varphi$
	%and $B$ such that there exists
	%a local elementary embedding
	%$h \colon A \to B$.
	%Using \cref{lem:qo:elemloc},
	%$A
	%\locleq{\infty}{\infty}{\infty}
	%B$
	%and therefore $A \models \varphi$.

	%\item[$(d) \implies (a)$]
	Assume that $A$ is a structure such that
	$A \models \varphi$.
	Let us build $\mathcal{L}_A$
	the extended language of $A$
	and
	$\mathsf{F}$
	the set of sentences
	that are $r$-local
	around the constants $c_a$.

	Whenever $\hat{B}$
	is a structure over the extended language
	such that $\hat{B} \models T^+(\mathsf{F},A)$
	we have
	an interpretation $\nu$ of the constants
	$c_a$ in $B$. Notice that the function
	$h \colon A \to B$
	defined by $h(a) \defined \nu(c_a)$
	is a local elementary embedding
	from $A$ to $B$.

	\vspace{0.5em}

	Because $A \models \varphi$,
	the theory $T^+(\mathsf{F},A) \cup \{ \neg \varphi \}$
	is inconsistent in the language $\mathcal{L}_A$.
	Indeed,
	assume there exists a structure $\hat{B}$
	such that $\hat{B} \models \neg \varphi$
	and $\hat{B} \models T^+(\mathsf{F},A)$.
	Then there exists an elementary embedding $h \colon A \to B$
	and $A \models \varphi$ so $B \models \varphi$.
	As $\varphi$ does not contain any constant of the form
	$c_a$ this proves that $\hat{B} \models \varphi$
	which is absurd.

	This entails by the compactness theorem
	%of first order logic that
	%there exists 
	the existence of
	a sentence $\psi_A$, a finite conjunction
	of sentences in $T^+(\mathsf{F},A)$
	such that $\psi_A \implies \varphi$.
	Moreover, since finitely many constants appear
	in the sentence $\psi_A$, we can build
	a sentence $\theta_A$ by quantifying existentially
	over them. In particular, the sentence $\theta_A$
	is a first-order sentence over the finite relational
	signature $\upsigma$.
	Recall that $\hat{A} \models T^+(\mathsf{F},A)$ by definition,
	hence $\hat{A} \models \psi_A$ and in particular
	$A \models \theta_A$.

	\vspace{0.5em}

	Let us build $T_\varphi \defined
		\setof{\neg \theta_C}{C \models \varphi}$.
	We prove that
	$T_\varphi \cup \{ \varphi \}$ is inconsistent
	as a model $B$ of this theory
	is such that $\neg \theta_B \in T_\varphi$
	hence $B \models \neg \theta_B$, but this contradicts
	the fact that $B \models \theta_B$.

	%Using a second time the compactness of first order
	%logic, one can extract
	A  second use of the compactness theorem allows us to extract
	$C_1, \dots, C_n$ such that
	$\varphi \implies \theta_{C_1} \vee \dots \vee \theta_{C_n}$.
	Conversely, since $C_i \models \varphi$,
	whenever $B \models \theta_{C_i}$
	the structure $B$ models $\varphi$.
	As a consequence,
	$\varphi \iff \theta_{C_1} \vee \dots \vee \theta_{C_n}$.
	Notice that this is a finite disjunction of finite
	conjunctions of sentences that are of the form
	$\exists \vec{x}. \tau(\vec{x})$ with $\tau$ local around
	$\vec{x}$,
	hence of the appropriate form.
	\qedhere

	%\end{description}
\end{proof}

\paragraph*{Adaptation to the Finite Case}
\label{sec:gaifmanccl}
As hinted in the introduction, the proof of \cref{lem:ccl:existlocalnfinfinite}
does not relativise to finite classes of structures
due to its intensive use of the compactness theorem
of first order logic.
We know from \cref{lem:qo:locqofin} that
$(c)$ and $(d)$ remain equivalent in the finite, and will prove
in \cref{sec:ff:generic} that the equivalence between $(a)$ and $(c)$ fails in
the finite.
The following lemma salvages the equivalence between
$(a)$ and $(b)$ in the finite.

\begin{restatable}[Local preservation in the finite]{lemma}{existlocalnf}
	\label{lem:ccl:existlocalnf}
	Let $\XClass \subseteq \Mod(\upsigma)$
	be a class of structures.
	For a sentence $\varphi \in \FO[\upsigma]$
	the following properties are equivalent
	\begin{enumerate}[(a)]
		\item The sentence $\varphi$ is equivalent (over $\XClass$)
		      to an existential local
		      sentence.
		\item There exist $r,q,k \in \mathbb{N}$ such
		      that $\varphi$ is preserved under $\locleq{r}{q}{k}$
		      over $\XClass$.
	\end{enumerate}
\end{restatable}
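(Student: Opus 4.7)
The implication $(a) \Rightarrow (b)$ is immediate from the definition of $\locleq{r}{q}{k}$, so my plan focuses on $(b) \Rightarrow (a)$. Fix parameters $r,q,k$ witnessing the preservation hypothesis. The observation I intend to exploit is that, for fixed $r,q,k$, the set of $(r,q)$-local types in at most $k$ free variables is finite up to logical equivalence, so only finitely many subsets arise as $\Specter{r}{q}{k}(A)$ as $A$ varies. This finiteness is what will play the role of the compactness argument in \cref{lem:ccl:existlocalnfinfinite}, and is precisely why $(a) \Leftrightarrow (b)$ survives over arbitrary $\XClass$ whereas the equivalence with $(c)$ does not.

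The first step is to associate to each $A$ a characteristic existential local sentence $\theta_A$ recording $\Specter{r}{q}{k}(A)$. Writing $\Specter{r}{q}{k}(A) = \{t_1, \ldots, t_m\}$, I would form for each $t_i$ the single $r$-local formula $\tau_{t_i}(\vec{x}_i)$ of quantifier rank $q$ obtained by conjoining the formulas in $t_i$ with the negations of those not in $t_i$ (exactly as in the proof of \cref{fact:qo:specters}), and set $\theta_A \defined \exists \vec{x}_1, \ldots, \vec{x}_m.\, \bigwedge_i \tau_{t_i}(\vec{x}_i)$. The body is $r$-local around the concatenated tuple because each $\tau_{t_i}$ is $r$-local around its own disjoint slot, so $\theta_A$ is a single existential local sentence. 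By construction $A \models \theta_A$, and for any $B$, if $B \models \theta_A$ then each $t_i$ is realised in $B$, yielding $\Specter{r}{q}{k}(A) \subseteq \Specter{r}{q}{k}(B)$, hence $A \locleq{r}{q}{k} B$ by \cref{fact:qo:specters}.

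Next, let $\mathcal{S}$ be the finite collection of specters $\Specter{r}{q}{k}(A)$ realised by some $A \in \XClass$ with $A \models \varphi$, choose one representative $A_S$ per $S \in \mathcal{S}$, and define $\Psi \defined \bigvee_{S \in \mathcal{S}} \theta_{A_S}$. This is a finite disjunction of existential local sentences, which I would collapse into one by padding the quantified tuples to a common length (reusing a fixed variable in the shorter tuples). To conclude $\varphi \equiv \Psi$ over $\XClass$: if $B \in \XClass$ satisfies $\varphi$, then $\Specter{r}{q}{k}(B) \in \mathcal{S}$, so $B$ models the corresponding disjunct $\theta_{A_S}$; conversely, if $B \in \XClass$ satisfies $\Psi$, then $B \models \theta_{A_S}$ for some $A_S \models \varphi$, so $A_S \locleq{r}{q}{k} B$ and preservation yields $B \models \varphi$.

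I do not anticipate a major obstacle: the proof is bookkeeping with local types and is substantially simpler than that of \cref{lem:ccl:existlocalnfinfinite} precisely because compactness has been traded for the elementary finiteness of the $(r,q,k)$-local type space. The only technical points worth verifying are that the body of $\theta_A$ is genuinely $r$-local around the concatenated tuple, and that the final padding step faithfully collapses the disjunction into a single existential local sentence.
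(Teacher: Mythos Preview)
Your proposal is correct and follows essentially the same approach as the paper: both exploit the finiteness of $\Specter{r}{q}{k}$ over the image of $\modset{\varphi}_\XClass$, encode each realised specter by a conjunction of type-isolating formulas, and disjoin over this finite set. The only cosmetic difference is that the paper writes each $\psi_T$ as $\bigwedge_{\tau \in T} \exists \vec{x}.\,\tau(\vec{x})$ with the existentials inside, obtaining a positive Boolean combination of existential local sentences and then appealing to closure under positive Boolean combinations, whereas you pull the existentials outside from the start; both routes rely on the same underlying observation that a conjunction of $r$-local formulas over disjoint variable blocks is $r$-local around the concatenated tuple, and your version makes this explicit.
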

\begin{proof}
	A sentence of the form $\exists \vec{x}. \tau$
	with $\tau$ a $r$-local sentence of quantifier rank at most $q$ is
	always preserved under $\locleq{r}{q}{|x|}$, which allows us to conclude
	$(a) \implies (b)$.

	For the converse direction,
	assume that $\varphi$ is preserved under $\locleq{r}{q}{k}$
	for some $r,q,k \in \mathbb{N}$.
	Compute $U$ as the image of $\modset{\varphi}_\XClass$
	through the map $\Specter{r}{q}{k}$.
	For a set $T$ of local types in $U$,
	one can build the sentence
	$\psi_T \defined
		\bigwedge_{\tau(\vec{x}) \in T} \exists \vec{x}. \tau(\vec{x})$.
	We define
	$\psi \defined
		\bigvee_{T \in U} \psi_T$
	and prove the following, where the second equality stems from our hypothesis~$(a)$
	\begin{equation}
		\modset{\psi}_\XClass
		=
		\setof{B \in \XClass}{
			\exists A \in \modset{\varphi}_\XClass,
			A \locleq{r}{q}{k} B
		}
		=
		\modset{\varphi}_\XClass\;.
	\end{equation}

	Whenever $B \models \psi$,
	there exists $T \in U$ such that
	$B \models \psi_T$. Hence, there exists $A \in X$ such that
	$A \models \varphi$,
	and
	$T = \Specter{r}{q}{k}(A)$.
	Thanks to \cref{fact:qo:specters},
	$A \locleq{r}{q}{k} B$.

	Conversely, let $A \in \XClass$
	such that $A \models \varphi$ and $B \in \XClass$ such that
	$A \locleq{r}{q}{k} B$.
	Let us write $T \defined \Specter{r}{q}{k}(A)$
	and $T' \defined \Specter{r}{q}{k}(B)$.
	It is always the case that $B$ satisfies its local types
	so
	$B \models \psi_{T'}$
	and since $T \subseteq T'$
	we can conclude that
	$B \models \psi_T$
	which entails that $B \models \psi$.

	We have proven that $\varphi$ is equivalent to $\psi$
	which is a positive Boolean combination of existential local sentences,
	thus, equivalent to an existential local sentence.
\end{proof}

\begin{ifnotappendix}
	From \cref{lem:ccl:existlocalnf}, one
	deduces that a sentence $\varphi$ presreved under
	$\locleq{\infty}{\infty}{\infty}$
	can be rewritten as an existential
	local sentence if and only if
	it is actually preserved
	under
	$\locleq{r}{q}{k}$
	for some finite parameters.
\end{ifnotappendix}

\section{Failure in the finite case}
\label{sec:ffc}%! TEX root = ../acm/monotone_gaifman.tex
The goal of this section is
to prove that \cref{lem:ccl:existlocalnfinfinite} 
does not relativise to finite structures,
in particular we show in \cref{sec:ff:generic} that $(a) \not \Leftrightarrow (c)$ in the finite.

This
naturally leads to a decision problem: given a sentence $\varphi$
preserved under disjoint unions over $\ModF(\upsigma)$,
can it be rewritten as an existential local sentence?
We prove in \cref{sec:ff:undecidable} that this problem, and two other
associated decision
problems, are undecidable.

We complete the picture in \cref{sec:ff:generalise}
by describing the
parameters $(r,q,k)$ such that
first-order sentences preserved under $\locleq{r}{q}{k}$
over $\ModF(\upsigma)$ are equivalent to an existential local sentence.

\subsection{A Generic Counter Example}
\label{sec:ff:generic}

In the following, we will add whenever necessary a unary predicate $B$
to the signatures in order to construct the following property:
$\badphi \defined \forall x. \neg B(x) \vee \varphi_{CC}$,
where
$\varphi_{CC}$ checks that the Gaifman
graph of the structure
has at least two connected components.
Notice that $\badphi$ may not be
definable in $\FO[\upsigma]$ since $\varphi_{CC}$
may not be definable.
Over a class $\XClass$ of finite structures
where $\varphi_{CC}$ is definable as an existential local sentence,
the sentence $\badphi$ is well-defined
and preserved under disjoint unions.
Whenever the class $\XClass$ contains large enough structures,
$\badphi$ cannot be expressed
using an existential local sentence. Indeed,
such an existential local sentence will not distinguish between
a large connected component with one $B$ node
(not satisfying $\badphi$)
and two connected components with one $B$ node
(satisfying $\badphi$). A prototypical example
is the class of finite paths.

\begin{example}[Finite paths]
    \label{ex:ff:finitepaths}
    If $\sigma = \{ (E, 2) \}$
    and ${\uplus}\mathsf{P}$ is the class of disjoint unions of finite paths,
    $\badphi$ is definable, closed under disjoint unions,
    and is not expressible as
    an existential local sentence.
\end{example}
\begin{proof}
    One can detect the presence of
    two connected components
    using the fact that paths have at most
    two vertices of degree below $2$ through
    $\varphi_{CC}$ defined as the disjunction
    of $\exists x_1, x_2, x_3,
    x_4.
    \bigwedge_{1 \leq i \neq j \leq 4} x_i \neq x_j
    \wedge
    \bigwedge_{i =
    1}^{4} \deg (x_i) = 1)$,
    $\exists x_1, x_2. x_1 \neq x_2 \wedge \deg (x_1) = \deg(x_2) = 0)$,
    and
    $\exists x_1, x_2, x_3.
    x_1 \neq x_2 \wedge x_2 \neq x_3 \wedge x_3 \neq x_1 \wedge
    \deg(x_1) = \deg (x_2) = 1
    \wedge
    \deg (x_3) = 0
    $.

    Assume by contradiction that there exists
    a sentence $\psi = \exists \vec{x}. \theta(x)$
    with $\theta(x)$ a $r$-local sentence
    of quantifier rank $q$
    such that $\badphi$ is equivalent to $\psi$
    over ${\uplus}\mathsf{P}$.
    Consider the family of structures
    $P^{\neg B}_k$ that are paths of length $k$
    labelled by $\neg B$ so that 
    $P^{\neg B}_k \models \badphi$.
    Consider $k > |\vec{x}| \cdot (2 r + 1)$.
    Since $P^{\neg B}_k \models \psi$,
    there exists a vector $\vec a$ with $|\vec{x}|$
    elements of $P^{\neg B}_k$ such that
    $\Neighb{P^{\neg B}_k}{\vec{a}}{r} \models \theta(x)$
    and $\Neighb{P^{\neg B}_k}{\vec{a}}{r} \subsetneq P^{\neg B}_k$.
    Consider a point $b \in P^{\neg B}_k$ that is not in
    $\Neighb{P^{\neg B}_k}{\vec{a}}{r}$, and
    build ${P^{B}_k}$ by colouring this point with $B$.
    The structure $P^{B}_k$
    is still a path
    but does not satisfy $\badphi$.
    However, $P^{B}_k \models \psi$
    by construction, and this is absurd.
\end{proof}

Assume that $\BadOrderS$ is a class of finite structures
defined over $\ModF(\upsigma)$
through a finite axiomatisation $\BadAxiom$
using \emph{universal local sentences} (negations
of existential local sentences). The following fact
allows lifting arguments over $\BadOrderS$
to $\ModF(\upsigma)$.

\begin{fact}[Relativisation to $\BadOrderS$]
    \label{fact:ff:relat}
    A sentence $\varphi$
    is equivalent over $\BadOrderS$ to an existential local sentence
    if and only if $\BadAxiom \implies \varphi$
    is equivalent to an existential local sentence
    over $\ModF(\upsigma)$.
    Similarly,
    a sentence $\varphi$ is preserved under disjoint union
    over $\BadOrderS$ if and only if 
    $\BadAxiom \implies \varphi$
    is preserved under disjoint union over $\ModF(\upsigma)$.
\end{fact}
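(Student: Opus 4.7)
The plan is to reduce both equivalences to one preliminary observation: $\neg \BadAxiom$ is itself equivalent to an existential local sentence. Indeed, $\BadAxiom$ is a finite conjunction of universal local sentences $\forall \vec{x}_i.\, \tau_i(\vec{x}_i)$ with $\tau_i$ being $r_i$-local, so its negation is a finite disjunction of sentences $\exists \vec{x}_i.\, \neg\tau_i(\vec{x}_i)$; since negation preserves locality, each disjunct is existential local. Such a disjunction merges into a single existential local sentence by concatenating variable tuples, because a disjunction of $r$-local formulas around disjoint tuples is still local around the concatenation. Call the resulting sentence $\alpha$.

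For the first equivalence, the backward direction is immediate: $\BadAxiom$ holds throughout $\BadOrderS$, so $\BadAxiom \Rightarrow \varphi$ collapses to $\varphi$ there, and any existential local sentence equivalent to $\BadAxiom \Rightarrow \varphi$ over $\ModF(\upsigma)$ is thereby equivalent to $\varphi$ over $\BadOrderS$. For the forward direction, if $\varphi \equiv \psi$ over $\BadOrderS$ with $\psi$ existential local, a short case split on whether $\BadAxiom$ holds shows that $\BadAxiom \Rightarrow \varphi \equiv \alpha \vee \psi$ over $\ModF(\upsigma)$: on models of $\BadAxiom$ both sides reduce to $\psi \equiv \varphi$, and off them both sides hold (the right-hand side through $\alpha$, the left-hand side vacuously). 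The right-hand side is a disjunction of two existential local sentences, hence existential local by the merging step above.

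The second equivalence rests on a \emph{reflection property}: whenever $A \uplus B \models \BadAxiom$, both $A \models \BadAxiom$ and $B \models \BadAxiom$. This uses locality decisively: for any tuple $\vec{a}$ entirely inside $A$, the $r$-neighborhood of $\vec{a}$ in $A \uplus B$ coincides with its $r$-neighborhood in $A$ since disjoint unions introduce no cross-edges, and consequently every universal local sentence true in $A \uplus B$ is true in each summand. With this in hand, the forward direction proceeds by cases on whether $A \uplus B \models \BadAxiom$: in the positive case reflection places $A, B$ in $\BadOrderS$, so both satisfy $\varphi$, and preservation of $\varphi$ over $\BadOrderS$ yields $A \uplus B \models \varphi$; in the negative case $\BadAxiom \Rightarrow \varphi$ is vacuously satisfied by $A \uplus B$. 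The backward direction is symmetric: start from $A, B \in \BadOrderS$ with $A \models \varphi$, lift to $A, B \models \BadAxiom \Rightarrow \varphi$, transport via preservation over $\ModF(\upsigma)$ to $A \uplus B$, and finally invoke $A \uplus B \models \BadAxiom$ to strip the implication.

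The main subtlety is this last step, which needs $A \uplus B$ to satisfy $\BadAxiom$. This does not follow automatically from the universal local form of $\BadAxiom$ alone, but it is the intended setting in the concrete instances (for example the class of disjoint unions of finite paths in \cref{ex:ff:finitepaths}), where $\BadOrderS$ is closed under disjoint unions by construction; I treat this closure as a standing hypothesis. The reflection property, by contrast, is unconditional and is the genuine model-theoretic content of the argument.
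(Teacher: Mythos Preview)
The paper states this as a Fact without proof, so you are supplying the argument the paper leaves implicit. Your treatment of the first equivalence is correct: the key observation that $\neg\BadAxiom$ is itself existential local, together with $(\BadAxiom\Rightarrow\varphi)\equiv(\neg\BadAxiom)\vee\psi$ over all finite structures whenever $\varphi\equiv\psi$ on $\BadOrderS$, is exactly right.

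For the second equivalence your reflection argument is the right idea, but you slide between two readings of ``preserved under disjoint union''. The paper's notion is asymmetric (cf.\ \cref{lem:qo:locqofin}): $A\models\varphi$ alone should yield $A\uplus C\models\varphi$, with no hypothesis on $C$ beyond $A\uplus C$ lying in the class. With that reading, two points in your write-up need adjusting. In the forward direction you only assume $A\models\BadAxiom\Rightarrow\varphi$; in the positive case reflection gives $A\in\BadOrderS$ and hence $A\models\varphi$, but you cannot (and need not) conclude $B\models\varphi$---preservation of $\varphi$ over $\BadOrderS$ applies simply because $A$ and $A\uplus B$ both lie in $\BadOrderS$. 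In the backward direction, ``$A\uplus B\in\BadOrderS$'' is already part of what it means to test preservation over $\BadOrderS$, so $A\uplus B\models\BadAxiom$ is given rather than something to be derived. Your final paragraph's worry therefore dissolves: the Fact holds for any class axiomatised by universal local sentences, and closure of $\BadOrderS$ under $\uplus$ is not needed as a standing hypothesis.
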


Note that the class ${\uplus}\mathsf{P}$ of finite paths
has no finite axiomatisation in $\ModF(\upsigma)$. Thus,
\cref{fact:ff:relat} cannot be used to lift \cref{ex:ff:finitepaths}
to the class of all finite structures.
As a workaround, we refine the counter example provided by \citet{tait1959counterexample}
in the case of the {\ltrsk} and leverage the idea given by the class
of finite paths.
For this purpose, we let $\sigma \defined \{ (\leq, 2), (S, 2), (E,2) \}$.
Define $O_n$ to be the structure $\{1, \dots, n\}$ with
$S$ interpreted as the successor relation, $\leq$ as the usual
ordering of natural numbers and $E$ the empty relation.
Given natural numbers
$2 \leq m \leq n$
one can build
a structure denoted by $O_{m} + \cdots + O_{n}$
by extending the disjoint union
$\biguplus_{m \leq i \leq n} O_{i}$
with new relations
$S(a,b)$ whenever $a$ is the last element of $O_{i}$
and $b$ the first of $O_{i+1}$,
and $E(a,b)$ whenever $a \in O_{i}$,
$b \in O_{i+1}$ and $b$ is below $a$ when interpreted
as integers.
An example of such a structure is given in \cref{fig:ff:exstruct}
with $m = 2$ and $n = 5$.
We define then $\BadOrderS$ to be the class of
finite disjoint unions of structures
of the form
$O_{m} + \cdots + O_{n}$
with $2 \leq m \leq n$.

\begin{figure}
    \begin{center}
\begin{tikzpicture}[x={(0,1cm)},y={(-1cm,0)}]
\draw[dashed, Prune, ->] (2,0) -- node[midway, above] {$\leq$} (2,-1);
\draw[myBlue, ->] (4,0) -- node[midway, above] {$E$} (4,-1);
\draw[ultra thick, ->] (3,0) -- node[midway, above] {$S$} (3,-1);
\node[draw,circle] (X0Y1) at (1, 0) {};
\node[draw,circle] (X1Y2) at (2, -2) {};
\node[draw,circle] (X3Y2) at (2, -6) {};
\node[draw,circle] (X0Y0) at (0, 0) {};
\node[draw,circle] (X3Y3) at (3, -6) {};
\node[draw,circle] (X3Y0) at (0, -6) {};
\node[draw,circle] (X3Y1) at (1, -6) {};
\node[draw,circle] (X2Y1) at (1, -4) {};
\node[draw,circle] (X2Y0) at (0, -4) {};
\node[draw,circle] (X2Y3) at (3, -4) {};
\node[draw,circle] (X2Y2) at (2, -4) {};
\node[draw,circle] (X1Y0) at (0, -2) {};
\node[draw,circle] (X3Y4) at (4, -6) {};
\node[draw,circle] (X1Y1) at (1, -2) {};
\draw (X0Y1) edge[->,dashed,draw=Prune,looseness=5] (X0Y1);
\draw (X1Y2) edge[->,dashed,draw=Prune,looseness=5] (X1Y2);
\draw (X3Y0) edge[->,dashed,draw=Prune,bend right=90] (X3Y2);
\draw (X3Y1) edge[->,dashed,draw=Prune,looseness=5] (X3Y1);
\draw (X2Y3) edge[->,dashed,draw=Prune,looseness=5] (X2Y3);
\draw (X3Y1) edge[->,dashed,draw=Prune,bend right=90] (X3Y4);
\draw (X2Y1) edge[->,dashed,draw=Prune,bend right=90] (X2Y3);
\draw (X2Y1) edge[->,dashed,draw=Prune,bend right=90] (X2Y2);
\draw (X2Y0) edge[->,dashed,draw=Prune,bend right=90] (X2Y1);
\draw (X0Y0) edge[->,dashed,draw=Prune,looseness=5] (X0Y0);
\draw (X1Y1) edge[->,dashed,draw=Prune,looseness=5] (X1Y1);
\draw (X3Y0) edge[->,dashed,draw=Prune,bend right=90] (X3Y1);
\draw (X3Y1) edge[->,dashed,draw=Prune,bend right=90] (X3Y2);
\draw (X3Y4) edge[->,dashed,draw=Prune,looseness=5] (X3Y4);
\draw (X2Y0) edge[->,dashed,draw=Prune,bend right=90] (X2Y3);
\draw (X2Y2) edge[->,dashed,draw=Prune,bend right=90] (X2Y3);
\draw (X2Y1) edge[->,dashed,draw=Prune,looseness=5] (X2Y1);
\draw (X2Y0) edge[->,dashed,draw=Prune,bend right=90] (X2Y2);
\draw (X3Y2) edge[->,dashed,draw=Prune,looseness=5] (X3Y2);
\draw (X3Y2) edge[->,dashed,draw=Prune,bend right=90] (X3Y3);
\draw (X3Y0) edge[->,dashed,draw=Prune,bend right=90] (X3Y3);
\draw (X2Y0) edge[->,dashed,draw=Prune,looseness=5] (X2Y0);
\draw (X3Y2) edge[->,dashed,draw=Prune,bend right=90] (X3Y4);
\draw (X1Y0) edge[->,dashed,draw=Prune,looseness=5] (X1Y0);
\draw (X1Y0) edge[->,dashed,draw=Prune,bend right=90] (X1Y1);
\draw (X1Y1) edge[->,dashed,draw=Prune,bend right=90] (X1Y2);
\draw (X3Y3) edge[->,dashed,draw=Prune,bend right=90] (X3Y4);
\draw (X3Y3) edge[->,dashed,draw=Prune,looseness=5] (X3Y3);
\draw (X3Y1) edge[->,dashed,draw=Prune,bend right=90] (X3Y3);
\draw (X0Y0) edge[->,dashed,draw=Prune,bend right=90] (X0Y1);
\draw (X1Y0) edge[->,dashed,draw=Prune,bend right=90] (X1Y2);
\draw (X2Y2) edge[->,dashed,draw=Prune,looseness=5] (X2Y2);
\draw (X3Y0) edge[->,dashed,draw=Prune,looseness=5] (X3Y0);
\draw (X3Y0) edge[->,dashed,draw=Prune,bend right=90] (X3Y4);
\draw (X0Y1) edge[->,ultra thick,draw=black,out=-45,in=120] (X1Y0);
\draw (X0Y0) edge[->,ultra thick,draw=black] (X0Y1);
\draw (X3Y3) edge[->,ultra thick,draw=black] (X3Y4);
\draw (X1Y0) edge[->,ultra thick,draw=black] (X1Y1);
\draw (X3Y0) edge[->,ultra thick,draw=black] (X3Y1);
\draw (X3Y1) edge[->,ultra thick,draw=black] (X3Y2);
\draw (X1Y1) edge[->,ultra thick,draw=black] (X1Y2);
\draw (X2Y2) edge[->,ultra thick,draw=black] (X2Y3);
\draw (X1Y2) edge[->,ultra thick,draw=black,out=-45,in=120] (X2Y0);
\draw (X2Y0) edge[->,ultra thick,draw=black] (X2Y1);
\draw (X2Y3) edge[->,ultra thick,draw=black,out=-45,in=120] (X3Y0);
\draw (X3Y2) edge[->,ultra thick,draw=black] (X3Y3);
\draw (X2Y1) edge[->,ultra thick,draw=black] (X2Y2);
\draw (X2Y1) edge[->,draw=myBlue] (X3Y1);
\draw (X1Y1) edge[->,draw=myBlue] (X2Y0);
\draw (X2Y0) edge[->,draw=myBlue] (X3Y0);
\draw (X0Y1) edge[->,draw=myBlue] (X1Y0);
\draw (X1Y2) edge[->,draw=myBlue] (X2Y1);
\draw (X2Y2) edge[->,draw=myBlue] (X3Y1);
\draw (X2Y3) edge[->,draw=myBlue] (X3Y2);
\draw (X2Y2) edge[->,draw=myBlue] (X3Y0);
\draw (X2Y3) edge[->,draw=myBlue] (X3Y3);
\draw (X2Y2) edge[->,draw=myBlue] (X3Y2);
\draw (X2Y3) edge[->,draw=myBlue] (X3Y1);
\draw (X1Y2) edge[->,draw=myBlue] (X2Y0);
\draw (X1Y0) edge[->,draw=myBlue] (X2Y0);
\draw (X2Y3) edge[->,draw=myBlue] (X3Y0);
\draw (X1Y1) edge[->,draw=myBlue] (X2Y1);
\draw (X0Y0) edge[->,draw=myBlue] (X1Y0);
\draw (X0Y1) edge[->,draw=myBlue] (X1Y1);
\draw (X2Y1) edge[->,draw=myBlue] (X3Y0);
\draw (X1Y2) edge[->,draw=myBlue] (X2Y2);

		\begin{scope}[on background layer]
            \foreach \i in {0,...,3} {
                \pgfmathsetmacro{\ip}{int(\i + 1)}
                \def\A{X\i Y0}
                \def\B{X\i Y\ip}
			\draw[line width=0.7cm, Prune!5,
				fill=Prune!5,
				rounded corners=1mm,
				line cap=round]
                (\A .south) -- (\B .north);
            }
		\end{scope}

\end{tikzpicture}
    \end{center}
    \caption{The structure $O_2 + O_3 + O_4 + O_5$}
    \label{fig:ff:exstruct}
\end{figure}
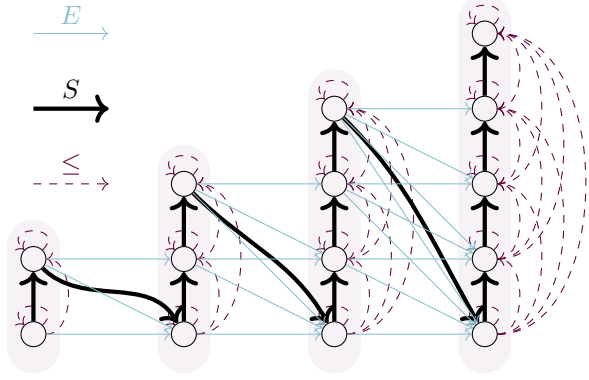

\begin{lemma}[Definition of $\varphi_{CC}$ for $\BadOrderS$]
    \label{lem:ff:phiccbads}
    There exists a sentence $\varphi_{CC}$
    that is existential local such that
    for all $A \in \BadOrderS$,
    $A \models \varphi_{CC}$ if and only
    if $A$ contains at least two connected components.
\end{lemma}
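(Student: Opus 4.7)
The plan is to detect the number of connected components by counting elements that lack an $S$-predecessor, exploiting the fact that each piece $O_{m} + \cdots + O_{n}$ is itself Gaifman-connected via the $S$-chain and contains exactly one such element, namely the minimum of $O_m$.

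Concretely, I first define the $1$-local formula
\[
\psi(x) \defined \neg \exists w \in \Neighb{}{x}{1}.\; S(w, x),
\]
which is $1$-local by construction and semantically equivalent to $\neg \exists w.\; S(w,x)$, because any $S$-predecessor of $x$ necessarily lies in the $1$-neighbourhood of $x$ in the Gaifman graph. I then take
\[
\varphi_{CC} \defined \exists x, y.\; x \neq y \wedge \psi(x) \wedge \psi(y),
\]
which is an existential local sentence of the prescribed form.

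To establish correctness, observe that for any piece $O_{m} + \cdots + O_{n}$ in $\BadOrderS$, the added $S$-edges form a chain from the first element of $O_m$ through all intermediate elements to the last element of $O_n$, so the piece is Gaifman-connected. Conversely, only the first element of $O_m$ satisfies $\psi$: every other element is the $S$-successor either of its predecessor within its own $O_i$ (when it is not the minimum of $O_i$) or of the last element of the preceding $O_{i-1}$ (when it is the minimum of $O_i$ with $i > m$). Hence in any $A \in \BadOrderS$ the elements satisfying $\psi$ are in bijection with the connected components of $A$, and $A \models \varphi_{CC}$ if and only if $A$ has at least two connected components.

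There is no substantial obstacle; the argument is purely structural and the only mildly delicate point is verifying that relativising the quantifier to $\Neighb{}{x}{1}$ preserves the semantics, which is immediate from the Gaifman graph definition and in turn guarantees that $\psi$ is genuinely $1$-local.
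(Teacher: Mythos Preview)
Your proposal is correct and takes essentially the same approach as the paper: the paper defines $\varphi_{CC} \defined \exists x_1,x_2.\; x_1\neq x_2 \wedge \forall y.\, \neg S(y,x_1) \wedge \neg S(y,x_2)$, which is exactly your sentence with the quantifier left unrelativised (relying on the semantic notion of $1$-locality rather than syntactic relativisation). Your write-up is in fact more thorough, since the paper merely states the formula without the correctness argument you supply.
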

\begin{proof}
    Define $\varphi_{CC} \defined
    \exists x_1,x_2.x_1\neq x_2\wedge
    \forall y. \neg S(y,x_1) \wedge \neg S(y,x_2)
    $.
\end{proof}

\begin{ifappendix}
The translation of \cref{ex:ff:finitepaths} to
$\BadOrderS$ is a simple exercise detailed in \cref{app:sec:ff:generic}.
\end{ifappendix}

\begin{restatable}[Counter-example over $\BadOrderS$]{lemma}{badphicord}
    \label{lem:ff:counterbads}
    The sentence $\badphi \defined \forall x. B(x) \vee \varphi_{CC}$
    is preserved under disjoint unions over
    $\BadOrderS$
    but
    is not equivalent to any
    existential local sentence over $\BadOrderS$.
\end{restatable}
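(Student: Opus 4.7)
The plan is to handle the two claims in turn. For preservation under disjoint unions, I would observe that each structure of the form $O_m + \cdots + O_n$ is connected in its Gaifman graph (its elements are linked via $\leq$ inside each block and via $S$, $E$ between consecutive blocks). Hence, for non-empty $A, C \in \BadOrderS$, the disjoint union $A \uplus C$ has strictly more connected components than $A$ alone. A short case analysis on why $A \models \badphi$ then suffices: if $A$ already satisfies $\varphi_{CC}$, so does $A \uplus C$; and if instead $A$ satisfies the universal clause on $B$, then either this clause carries over to $A \uplus C$ or $C$ contains a conflicting element, which immediately forces $A \uplus C$ to have at least two components.

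For the non-expressibility part I would adapt the strategy of \cref{ex:ff:finitepaths}. Suppose for contradiction that $\badphi$ is equivalent over $\BadOrderS$ to some existential local sentence $\psi = \exists x_1 \cdots x_k.\,\theta(\vec{x})$ with $\theta$ of locality radius $r$. Pick a single summand $A = O_m + \cdots + O_n$ with $n - m$ large, and choose its $B$-labelling so that the universal clause of $\badphi$ holds; this makes $A \models \badphi$ and hence $A \models \psi$. Extract a witness tuple $\vec{a} \in A^k$ with $\Neighb{A}{\vec{a}}{r} \models \theta(\vec{a})$.

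The crucial geometric observation is that $\leq$ makes each block $O_i$ into a clique of the Gaifman graph, so the $r$-neighbourhood of a single element meets only $O(r)$ consecutive blocks, and $\Neighb{A}{\vec{a}}{r}$ therefore meets at most $c k r$ blocks for an absolute constant $c$. Choosing $n - m$ large enough, I can pick a block $O_j$ disjoint from this neighbourhood and form $A'$ from $A$ by toggling $B$ on a single element $b \in O_j$. Since $B$ is unary the Gaifman graph is unchanged, and since $b$ lies outside the neighbourhood, the induced substructure witnessing $\theta(\vec{a})$ is identical in $A$ and $A'$. Hence $A' \models \psi$, yet $A'$ is still a single connected structure whose $B$-labelling has been flipped at exactly one element, so it fails $\badphi$; contradiction.

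The main technical obstacle I anticipate is nailing down the precise bound on how $r$-neighbourhoods propagate across blocks through the combination of $\leq$-cliques, the successor $S$, and the cross-block edges $E$. However, only a rough polynomial bound in $r$ and $k$ is needed, and this follows from the elementary fact that traversing from a point of one block to any point of an adjacent block costs a bounded Gaifman-graph distance.
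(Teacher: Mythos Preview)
Your proposal is correct and follows essentially the same approach as the paper. Both argue preservation under disjoint unions via the connectedness of each summand $O_m+\cdots+O_n$ and the behaviour of $\varphi_{CC}$, and both refute existential-local expressibility by taking a single large connected summand with the $B$-labelling satisfying the universal clause, observing that the $r$-neighbourhoods of the $k$ witnesses cannot cover all of it, and flipping $B$ at an uncovered point; your block-wise bound (each Gaifman step moves at most one block, so $k$ witnesses touch at most $k(2r+1)$ blocks) is a slightly more explicit version of the paper's terser node-counting argument that picks $A = O_2 + \cdots + O_{2k(2r+1)+1}$ directly.
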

\begin{sendappendix}
    \begin{ifappendix}
        \badphicord*
    \end{ifappendix}
\begin{proof}
    The sentence $\badphi \defined \forall x. B(x) \vee \varphi_{CC}$
    is well defined and preserved under disjoint unions
    thanks to \cref{lem:ff:phiccbads}.

    Assume by contradiction that $\badphi$
    is equivalent over $\BadOrderS$ to a
    sentence $\psi = \exists x_1, \dots, x_k. \theta(\vec{x})$
    where $\theta$ is a $r$-local sentence.
    Construct the structure
    $A \defined O_1 + O_2 + \cdots + O_{2 k \cdot (2r+1) + 1}$
    in $\BadOrderS$
    and colour all the nodes with $\neg B$.
    This structure satisfies $\badphi$
    hence $\psi$. However,
    there exists at least one node of $A$
    not covered by the $r$-balls around the $k$
    witnesses of $\psi$, the structure $\hat{A}$
    obtained by colouring one of those node
    with $B$
    still satisfies $\psi$ but
    does not satisfy $\badphi$.
\end{proof}
\end{sendappendix}

We give the following axiomatic definition
of $\BadOrderS$ as follows
\begin{enumerate}[(i)]
\item $\leq$ is transitive, reflexive and antisymmetric,
\item $\leq$ has connected components of size at least two:
    $\forall a, \exists b. a < b \vee b < a$,
\item $S$ is an injective partial function without fixed points,
\item $S$ and $\leq$ cannot
    conflict:
    $\forall a,b. \neg (S(a,b) \wedge b \leq a)$,
\item There exists a proto-induction principle:
    $\forall a < b, \exists c, S(a,c) \wedge a \leq c \wedge c \leq b$,
\item Edges $E$ can be factorised trough
    $(\leq) (S \setminus {\leq}) (\leq)$:
    $\forall a,b.
    E(a,b) \implies \exists c_1, c_2. 
    a \leq c_1 \wedge c_1 S c_2 \wedge \neg (c_1 \leq c_2) \wedge c_2 \leq b$,
\item Pre-images through $E$ form a suffix of the ordering:
    $\forall a,b,c.
    a \leq b \wedge E(a,c) \wedge S(a,b) \implies E(b,c)$,
\item Images through $E$ form a prefix of the ordering:
    $\forall a,b,c.
    b \leq c \wedge E(a,c) \wedge S(b,c) \implies E(a,b)
    $,
\item Images through $E$ are strictly increasing subsets:
    $\forall a,b.
    a S b \wedge a \leq b \implies
    \exists c. E(b,c) \wedge \neg E(a,b)$,
\item Pre-images through $E$ are strictly decreasing subsets:
    $\forall a,b.
    a S b \wedge a \leq b \wedge
    (\exists c. E(c,a))
    \implies
    \exists c. E(c,a) \wedge \neg E(c,b)$,
\item The last element of an order cannot be obtained
    through $E$: 
    $\forall a,b.
    E(a,b) \implies \exists c. S(b,c) \wedge b \leq c$,
\item The relation $(\leq)(S \setminus {\leq})$ is included in $E$:
    $\forall a, b.
    \exists c. a \leq c \wedge S(c,b) \wedge \neg (c \leq b)
    \implies
    E(a,b)
    $.
\end{enumerate}

The goal of the rest of this section is to prove
that $\BadOrder = \BadOrderS$.
To simplify notations, let us
consider for a structure $A \in \ModF(\upsigma)$
the structure $(A,\leq)$ to be
the structure $A$ without the relations $S$ and $E$.
Using this convention, a $\leq$-component of a structure $A$
is defined as a connected component of the Gaifman Graph of $(A, \leq)$.
\begin{ifappendix}
The detailed proofs of \cref{lem:ff:shapebis,lem:ff:shape}
can be found in \cref{app:sec:ff:generic}.
\end{ifappendix}

\begin{restatable}[$\BadOrderS$ models $\BadAxiom$]{lemma}{badorderaxiomatic}
\label{lem:ff:shapebis}
The class $\BadOrderS$ is contained in $\BadOrder$.
\end{restatable}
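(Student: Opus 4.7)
My plan is to verify, axiom by axiom, that every structure $A \in \BadOrderS$ satisfies the twelve clauses (i)--(xii) defining $\BadAxiom$. Since $A$ is by definition a finite disjoint union of blocks of the form $O_m + \cdots + O_n$ with $2 \leq m \leq n$, I would first observe that each axiom is a $\forall$- or $\forall\exists$-sentence whose existential witnesses, when any are required, can be placed inside a single block. This reduces the verification to a fixed block $O_m + \cdots + O_n$. The argument then rests on the explicit arithmetic description of the relations: within a block, $\leq$ is the usual order inside each $O_i = \{1, \dots, i\}$ and is trivial across different $O_i$'s; $S$ is the integer successor inside each $O_i$ together with the single boundary edge from the top of $O_i$ to the bottom of $O_{i+1}$; and $E(a,b)$ holds exactly when $a \in O_i$, $b \in O_{i+1}$, and $b < a$ as integers.

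I would dispatch axioms (i)--(v) first, as they constrain only $\leq$ and $S$. Transitivity, reflexivity and antisymmetry of $\leq$ (i) are inherited from the integer model, and the injectivity and absence of fixed points of $S$ (iii) follow from the construction. Axiom (ii) requires each $\leq$-component to have at least two elements, which is guaranteed by the hypothesis $m \geq 2$ because every $O_i$ then has size at least two. Axiom (iv) uses the fact that the only $S$-edge crossing a $\leq$-barrier is the boundary edge, whose endpoints are $\leq$-incomparable. Axiom (v) is immediate because $c = a+1$ witnesses the statement inside the $O_i$ containing $a$ and $b$.

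Axioms (vi)--(xii) are local to a pair of consecutive layers $O_i, O_{i+1}$ and reduce to arithmetic inequalities on their element labels. Axiom (vi) is the canonical factorisation of $E(a,b)$ through the boundary $S$-edge with $c_1$ the top of $O_i$ and $c_2$ the bottom of $O_{i+1}$; the $\leq$-incomparability of $c_1$ and $c_2$ provides the $\neg (c_1 \leq c_2)$ clause. Axioms (vii) and (viii) state that $\{b : E(a,b)\}$ is downward-closed and $\{a : E(a,b)\}$ is upward-closed along the integer order, which is immediate from $b < a$. Axioms (ix) and (x) express strict monotonicity of $E$-images and $E$-preimages along $S$-steps: when $a, b \in O_i$ with $S(a,b)$ and $a \leq b$, so that $b = a+1$, the element $c = a$ viewed inside $O_{i+1}$ serves as the newly-gained or newly-lost witness. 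Axiom (xi) follows because $E(a,b)$ forces $b < a \leq i < i+1$, so $b$ lies strictly below the top of $O_{i+1}$ and hence has an $S$-successor above it; axiom (xii) is the converse of the factorisation (vi), again witnessed by the boundary edge.

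The argument is mechanical rather than conceptual, and the main obstacle is bookkeeping: for each axiom I must pin down exactly where the existential witnesses live and read them off the integer representation of $E$. I would organise the write-up as a short table of twelve entries, listing in each case the chosen witness and the one-line arithmetic justification, rather than repeating the same pattern in prose twelve times. The cases requiring the most care are (ix) and (x), where the hypothesis $aSb \wedge a \leq b$ must be used to locate $a, b$ strictly inside a single $O_i$ so that the integer $a$ is a legitimate element of the next layer $O_{i+1}$.
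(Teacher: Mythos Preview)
Your approach is essentially the paper's: reduce to a single connected block $O_m + \cdots + O_n$ and check each axiom from the arithmetic description of $\leq$, $S$, and $E$.

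One correction is needed. You read the relation $E$ as ``$b < a$ as integers,'' but the intended relation is $b \leq a$: this is visible in \cref{fig:ff:exstruct} (there are $E$-edges from position~$1$ of $O_i$ to position~$1$ of $O_{i+1}$) and is required later for the function $f$ of \cref{lem:ff:shape}, which must map the $\leq$-minimum of $B_1$ to the $\leq$-minimum of $B_2$. With your strict reading, axiom~(xii) fails at $a = 1$: taking $c$ the top of $O_i$ and $b = 1 \in O_{i+1}$, the hypothesis holds but $E(1,1)$ would be false. Correspondingly, your witnesses for (ix) and (x) should be $c = b$ (the integer $a+1$ placed in $O_{i+1}$) rather than $c = a$. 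Once $E$ is taken non-strictly, the rest of your plan goes through unchanged.
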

\begin{sendappendix}
\begin{ifappendix}
    \badorderaxiomatic*
\end{ifappendix}
\begin{proof}
    Before listing the axiomatic $\BadAxiom$, notice that
    it suffices to prove that connected structures in $\BadOrderS$
    models $\BadAxiom$ as they are closed under disjoint unions
    of models.

    Let $A$ be a connected structure in $\BadOrderS$,
    that is $A = O_n + \cdots + O_m$ with $2 \leq n \leq m$.
\begin{enumerate}[(i)]
\item $\leq$ is transitive, reflexive and antisymmetric
    over each $O_i$. As a consequence, $\leq$ is transitive 
    reflexive and antisymmetric over their disjoint unions.
\item $\leq$ has connected components of size at least two.
    This holds because $2 \leq n \leq m$.
\item $S$ is an injective partial function without fixed points.
    This holds because $S$ is the successor relation on $O_i$
    and the maximal element of $O_i$ is connected through $S$
    to the minimal element of $O_{i+1}$.
\item $S$ and $\leq$ cannot
    conflict:
    $\forall a,b. \neg (S(a,b) \wedge b \leq a)$.
    This holds because $S$ is the successor relation over $O_i$.
\item There exists a proto-induction principle:
    $\forall a < b, \exists c, S(a,c) \wedge a \leq c \wedge c \leq b$.
    This holds because $S$ is the successor relation over $O_i$.
\item Edges $E$ can be factorised trough
    $(\leq) (S \setminus {\leq}) (\leq)$:
    $\forall a,b.
    E(a,b) \implies \exists c_1, c_2. 
    a \leq c_1 \wedge c_1 S c_2 \wedge \neg (c_1 \leq c_2) \wedge c_2 \leq b$.
    This holds because edges $E$ between $O_i$ and $O_j$ only 
    appear if $j = i+1$ and the maximal element of $O_i$ for $\leq$
    is connected through $S$ to the minimal element of $O_j$ for $\leq$.
\item Pre-images through $E$ form a suffix of the ordering:
    $\forall a,b,c.
    a \leq b \wedge E(a,c) \wedge S(a,b) \implies E(b,c)$.
    This holds because edges $E(a,b)$ between $O_i$ and $O_j$ 
    exists if and only if $a \leq b$ when considered as integers.
\item Images through $E$ form a prefix of the ordering:
    $\forall a,b,c.
    b \leq c \wedge E(a,c) \wedge S(b,c) \implies E(a,b)
    $.
    This holds because edges $E(a,b)$ between $O_i$ and $O_j$ 
    exists if and only if $a \leq b$ when considered as integers.
\item Images through $E$ are strictly increasing subsets:
    $\forall a,b.
    a S b \wedge a \leq b \implies
    \exists c. E(b,c) \wedge \neg E(a,b)$.
    This holds because edges $E(a,b)$ between $O_i$ and $O_j$ 
    exists if and only if $a \leq b$ when considered as integers.
\item Pre-images through $E$ are strictly decreasing subsets:
    $\forall a,b.
    a S b \wedge a \leq b \wedge
    (\exists c. E(c,a))
    \implies
    \exists c. E(c,a) \wedge \neg E(c,b)$.
    This holds because edges $E(a,b)$ between $O_i$ and $O_j$ 
    exists if and only if $a \leq b$ when considered as integers.
\item The last element of an order cannot be obtained
    through $E$: 
    $\forall a,b.
    E(a,b) \implies \exists c. S(b,c) \wedge b \leq c$.
    This holds because edges $E(a,b)$ between $O_i$ and $O_j$ 
    exists if and only if $a \leq b$ when considered as integers.
\item The relation $(\leq)(S \setminus {\leq})$ is included in $E$:
    $\forall a, b.
    \exists c. a \leq c \wedge S(c,b) \wedge \neg (c \leq b)
    \implies
    E(a,b)
    $.
    This holds because edges $E(a,b)$ between $O_i$ and $O_j$ 
    exists if and only if $a \leq b$ when considered as integers
    and $O_i$ is of size $i$.
    \qedhere
\end{enumerate}
\end{proof}
\end{sendappendix}

\begin{restatable}[$\BadOrder \subseteq \BadOrderS$]{lemma}{badorderaxiomone}
    \label{lem:ff:shape}
    A structure $A$ that models $\BadAxiom$ satisfies the following 
    properties:
    \begin{enumerate}[(a)]
        \item If $B$ is a $\leq$-component of $A$
            then the substructure induced by $B$ in $A$
            is isomorphic to a total ordering of size greater than two
            with no $E$ relations, i.e. 
            $(\{ 1, \dots, n \}, \leq,+1, \emptyset)$ with $n \geq 2$;
        \item If $B_1$ and $B_2$ are two $\leq$-components of $A$
            that are connected in $A$ with the relation $S$, either
            the last element of $B_1$ is connected to the first one of $B_2$
            or the last element of $B_2$ is connected to the first one of $B_1$;
        \item If $B_1$ and $B_2$ are two $\leq$-components of $A$
            connected through the relation $E$,
            then $B_1$ and $B_2$ are connected through the relation $S$;
        \item If $B_1$ and $B_2$ are two $\leq$-components of $A$
            connected through the relation $S$,
            the function
            $f \colon a \mapsto \max_{\leq} \setof{d}{E(a,d)}$
            is a $\leq$-strictly increasing non-surjective 
            function
            from $B_1$ to $B_2$,
            mapping the $\leq$-minimal
            element
            of $B_1$ to the $\leq$-minimal of $B_2$
            satisfying $f(S(a)) = S(f(a))$.

        \item Connected components of $A$ are in $\BadOrderS$.
    \end{enumerate}
\end{restatable}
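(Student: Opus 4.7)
The plan is to analyze models of $\BadAxiom$ by first understanding the internal structure of each $\leq$-component (part (a)), then how two $\leq$-components can be linked through $S$ or $E$ (parts (b) and (c)), then pinning down the edge-function $f$ (part (d)), and finally assembling these results to recover the prescribed shape of a connected component (part (e)).

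For part (a), I would fix a $\leq$-component $B$. Axiom (i) makes $\leq$ a partial order on $B$, and (ii) forces $|B|\geq 2$. The crucial step is to upgrade this partial order to a total order. Given two elements $a, b$ of $B$, they are connected by a zig-zag $\leq$-path; I would proceed by induction on its length, the only nontrivial case being a peak $a\leq c \geq b$. Iterating axiom (v) starting from $a$ produces an $S$-chain inside $B$ going upward; by the injectivity of $S$ from (iii) the same chain witnesses comparability both against $c$ and against $b$, and since $A$ is finite the process terminates, forcing $a\leq b$ or $b\leq a$. Once totality is established, axioms (iii), (iv), and (v) force $S$ restricted to $B$ to be the successor of this total order, and the absence of $E$-edges inside $B$ follows from (vi): any such edge would factor through a pair $c_1\, S\, c_2$ with $c_1, c_2 \in B$ and $\neg(c_1\leq c_2)$, contradicting totality.

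For parts (b) and (c), I would suppose $S(a,b)$ with $a\in B_1$ and $b\in B_2$ distinct $\leq$-components. If $a$ were not the maximum of $B_1$, axiom (v) applied to $a<\max B_1$ would yield an $S$-successor of $a$ inside $B_1$, conflicting with the uniqueness of $S$-successors from (iii); a symmetric argument using the injectivity half of (iii) and iterating (v) backwards forces $b=\min B_2$. Part (c) is then immediate: axiom (vi) factors every $E(a,b)$ through a pair $c_1\, S\, c_2$ with $c_1\in B_1$ and $c_2\in B_2$, so $B_1$ and $B_2$ are $S$-connected.

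For part (d), I would establish the properties of $f$ one at a time. Well-definedness combines axiom (xii), which forces $E(a,\min B_2)$ for every $a\in B_1$, with (viii), which makes the $E$-image of $a$ downward-closed in $B_2$ and thus admits a maximum. Strict monotonicity comes from (ix), non-surjectivity from (xi), and $f(\min B_1)=\min B_2$ from combining (vii) with the successor-based argument below. The key identity $f(S(a))=S(f(a))$ requires the most care: axiom (x) forces the preimage of $S(f(a))$ to strictly shrink when moving from $f(a)$ to $S(f(a))$, and combined with the upward-closure of preimages from (vii) and the injectivity of $S$, this pins down the preimage difference to be exactly $\{a\}$, which is equivalent to $f(S(a))=S(f(a))$. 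For part (e), I would combine the previous parts: any connected component $C$ of $A$ decomposes as a linearly arranged sequence $B_1,\ldots,B_k$ of totally ordered $\leq$-components, each linked to the next by a single $S$-edge from maximum to minimum, with $E$-edges between consecutive $B_i,B_{i+1}$ described exactly by the maps $f_i$ from (d). The successor-preserving embedding $f_i\colon B_i\to B_{i+1}$ mapping $\min B_i$ to $\min B_{i+1}$ identifies $B_i$ with an initial segment of $B_{i+1}$ of size $|B_i|$, and non-surjectivity combined with (xi) pins down $|B_{i+1}|=|B_i|+1$, yielding the sum $O_m+\cdots+O_n$. The main obstacle throughout lies in part (d), where a careful interleaving of axioms (vii)--(xii) is required to lock down the shape of $f$, especially the identity $f(S(a))=S(f(a))$ which is ultimately what forces the unit size-increments between consecutive $\leq$-components.
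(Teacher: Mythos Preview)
Your plan is correct and follows essentially the same route as the paper: the same axioms are invoked at the same points for each of (a)--(e), and the overall architecture (internal structure of $\leq$-components, then $S$- and $E$-links, then the edge-function $f$, then assembly) is identical. The only cosmetic difference is in (a), where you argue totality by induction on zig-zag $\leq$-paths and reduce peaks via $S$-chains, whereas the paper first proves the intermediate claim ``$a\leq b$ implies $b=S^k(a)$ with all intermediate points in $[a,b]$'' and derives totality from it; these are two packagings of the same use of (iii), (v) and finiteness. One small point: in (e) you attribute $|B_{i+1}|=|B_i|+1$ to ``non-surjectivity combined with (xi)'', but those only give $|B_{i+1}|>|B_i|$; the matching upper bound comes from the preimage-shrinking argument via (vii) and (x) that you already set up in (d), exactly as the paper does.
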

\begin{sendappendix}
    \begin{ifappendix}
        \badorderaxiomone*
    \end{ifappendix}
\begin{proof}
Let $A$ be a structure in $\BadOrder$,
without loss of generality, assume that the Gaifman graph of $A$
has a single connected component.

\begin{enumerate}[(a)]
    \item
        Notice that the property $(v)$ combined with
        the fact that $S$ is a partial injective function 
        with no fixed points
        and that $A$ is finite
        allows to prove
        by induction that
        whenever $a \leq b$ there exists $0 \leq k$
        such that $S^k(a) = b$
        and all the intermediate points are between $a$ and $b$
        for $\leq$.

        As a consequence, if $a \leq b_1$ and $a \leq b_2$,
        there exists $0 \leq k$ and $0 \leq l$
        such that $S^k(a) = b_1$ and $S^l(a) = b_2$.
        Without loss of generality $k \leq l$ 
        and $b_2 = S^{k-l}(b_1)$ and 
        $b_1 \leq b_2$.
        Similarly, if $a_1 \leq b$ and $a_2 \leq b$
        there exists $0 \leq k,l$ such that
        $S^{k}(a_1) = b$ and $S^{l}(a_2) = b$

        Combined with $(i)$ and $(iii)$
        this proves that connected components of $A$
        through the relation $\leq$ are \emph{totally ordered}
        by $\leq$ and over these components $S$ is the successor
        relation, while $(ii)$ states that this connected component must have at
        least two elements.

        Assume by contradiction that
        some $\leq$-connected component of $A$
        contains a relation $E(a,b)$
        notice that property $(vii)$ provides $a \leq c_1 S c_2 \leq b$
        and $\neg (c_1 \leq c_1)$ but this is absurd 
        since the connected component is a total ordering
        and property $(v)$ states that we cannot have $c_1 S c_2$ and
        $c_2 \leq c_1$.

\item Assume $B_1$ and $B_2$ are two $\leq$-components of $A$
    connected through the relation $S$.
    As $S$ is a partial injective function
    shows that this can only happen by
    connecting an element that has no successor to an element
    that has no predecessor.
    As a consequence, it is only possible to connect the last element of $B_1$
    to the first one of $B_2$ and vice-versa.

\item Assume that $B_1$ and $B_2$ are two $\leq$-components of $A$
    connected through the relation $E$, that is there exists 
    $a \in B_1$ and $b \in B_2$ such that $E(a,b)$ holds.
    The property $(vi)$ provides
    $c_1$ and $c_2$ such that
    $a \leq c_1$, $c_1 S c_2$, $c_2 \leq b$ and $\neg (c_1 \leq c_2)$.
    As a consequence $c_1$ is in $B_1$ (connected through $\leq$),
    $c_2 \in B_2$ (connected through $\leq$)
    and therefore $B_1$ is connected to $B_2$ through the relation $S$.

\item Assume $B_1$ and $B_2$ are two $\leq$-components of $A$
    connected through the relation $S$.
    The function $g \colon a \mapsto \setof{b}{E(a,b)}$
    from $B_1$ to $\wp(B_2)$ is well-defined
    since we proved that there cannot be edges $E$ outside of $B_2$.
    Since $B_2$ is a finite total ordering with respect to $\leq$
    and the image of $g$ is non-empty thanks to $(xii)$
    the function $f$ is well-defined.

    Property $(ix)$ states that whenever $a,b \in B_1$ and $a S b$ then
    $g(a) \subsetneq g(b)$, while property $(viii)$ states
    that $g(a)$ is $\leq$-downwards-closed in $B_2$.
    As a consequence, $f$ must be strictly increasing.

    Similarly, property $(vii)$ combined with property
    $(x)$ states that if $a S b$ in $B_1$,
    then $|g(a)|+1 = |g(b)|$
    and as a consequence $S(f(a)) = f(b)$.

    Finally, property $(xi)$ states that
    $g$ never covers the last element of $B_2$, and in particular
    $f$ is not surjective.

\item Notice that we proved edges $E$ can only appear
    between two $\leq$-components $B_1$ and $B_2$.
    Let us write $0_1$ and $0_2$ their respective $\leq$-minimal elements.
    We showed that
    $f(S^k(0_1)) = S^k(0_2)$ whenever
    $S^k(0_1)$ is in $B_1$.
    Moreover, if $a = S^k(0_1) \in B_1$
    then
    $\setof{d}{E(a,d)} = \downarrow_{\leq} f(a)
    = \setof{ S^l(0_2) }{l \leq k}$.
    Moreover, property $(x)$
    shows that there cannot exist more than two
    points in $B_2$ that are not obtainable through $f$,
    and as a consequence $|B_2| = |B_1| + 1$.

    Finally, a general connected component
    of $A$
    is of the form $O_n + \cdots + O_m$
    with $2 \leq n \leq m$.
    \qedhere
\end{enumerate}
\end{proof}
\end{sendappendix}

Through \cref{lem:ff:shape} and
\cref{lem:ff:shapebis} we learn that $\BadOrderS$ is definable
using finitely many universal local sentences.
We can lift the counter example provided in
\cref{lem:ff:counterbads} using \cref{fact:ff:relat}.

\begin{corollary}[Counter example over $\ModF(\upsigma)$]
    \label{cor:ff:counterex}
    There exists a sentence $\varphi$
    preserved under disjoint unions over $\ModF(\upsigma)$
    but not equivalent to an existential local
    sentence over $\ModF(\upsigma)$.
\end{corollary}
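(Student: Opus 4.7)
The plan is to use the axiomatic description of $\BadOrderS$ together with \cref{fact:ff:relat} to lift the counter-example of \cref{lem:ff:counterbads} from $\BadOrderS$ up to the class $\ModF(\upsigma)$ of all finite structures. Concretely, I would take as the witnessing sentence
\[
\varphi \defined \BadAxiom \implies \badphi,
\]
where $\badphi = \forall x. B(x) \vee \varphi_{CC}$ is the sentence from \cref{lem:ff:counterbads} (with $\varphi_{CC}$ the explicit existential local formula supplied by \cref{lem:ff:phiccbads}), and $\BadAxiom$ is the finite conjunction of the twelve axioms (i)--(xii) listed above. Since $\BadAxiom$ is a finite $\FO[\upsigma]$ conjunction and $\badphi$ is a first-order sentence over $\upsigma \cup \{(B,1)\}$, $\varphi$ is a genuine first-order sentence in the appropriate signature.

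Before invoking \cref{fact:ff:relat}, I would check that its hypothesis is met, namely that $\BadOrderS$ is finitely axiomatised by $\BadAxiom$ using universal local sentences. The inclusion $\BadOrderS \subseteq \BadOrder$ is \cref{lem:ff:shapebis}, and the converse $\BadOrder \subseteq \BadOrderS$ is \cref{lem:ff:shape}, so $\BadOrder = \BadOrderS$. I would also quickly inspect the form of each axiom (i)--(xii) to confirm that each is (equivalent to) a universal local sentence, i.e.\ the negation of an existential local sentence in the sense used by \cref{fact:ff:relat}.

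The conclusion then follows from two applications of \cref{fact:ff:relat} with $\badphi$ in place of the Fact's generic $\varphi$. For preservation, \cref{lem:ff:counterbads} gives that $\badphi$ is preserved under disjoint unions over $\BadOrderS$, so the Fact yields that $\varphi = \BadAxiom \implies \badphi$ is preserved under disjoint unions over $\ModF(\upsigma)$. For non-expressibility, the contrapositive of \cref{fact:ff:relat} applied to the second statement of \cref{lem:ff:counterbads} yields that $\varphi$ is not equivalent over $\ModF(\upsigma)$ to any existential local sentence.

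There is essentially no obstacle left at this stage: the combinatorial heart of the argument (padding a long enough structure with a single new $B$-point outside the $r$-balls around any fixed tuple of witnesses) was carried out in \cref{lem:ff:counterbads} via the finite-paths prototype of \cref{ex:ff:finitepaths}, and the non-trivial work of designing the twelve axioms so that they pin down exactly the class $\BadOrderS$ was already discharged in \cref{lem:ff:shape,lem:ff:shapebis}. The corollary is therefore just an immediate consequence.
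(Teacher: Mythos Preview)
The proposal is correct and takes essentially the same approach as the paper: both use \cref{lem:ff:shapebis,lem:ff:shape} to establish that $\BadOrderS$ is axiomatised by the universal local sentences $\BadAxiom$, then apply \cref{fact:ff:relat} to transport the counter-example of \cref{lem:ff:counterbads} to $\ModF(\upsigma)$. Your write-up is in fact more explicit than the paper's, which only sketches this lifting in the paragraph preceding the corollary.
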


%\begin{figure}[ht]
%\centering
%\incfig{shape-of-elements-in-c}
%\caption{Shape of elements in $\mathcal{C}$}
%\label{fig:shape-of-elements-in-c}
%\end{figure}

\subsection{Undecidability}
\label{sec:ff:undecidable}

We have proven that deciding whether a sentence preserved under 
local elementary embeddings is equivalent to an
existential local sentence is a non-trivial problem in $\ModF(\upsigma)$.
We strengthen this by proving said problem is actually undecidable.
This work was initially
inspired by the work of \citet*{Kuperberg21}
proving that Lyndon's Positivity Theorem fails for finite words
and providing undecidability of the associated decision problem.
Moreover, the statements and theorems can be interpreted
as variations of those from
\citet*{flum2021},
who considers preservation under induced substructures.

Given a Universal Turing Machine $U$
over an alphabet $\Sigma$ and with control states
$Q$,
we extend the signature of $\BadOrderS$ with
unary predicates $q/1$ for $q \in Q$,
$P_a/1$ for $a \in \Sigma$,
$P_\$/1$ and $P_\square/1$
to encode configurations of $U$ in $\leq$-components.
Without loss of generality, we assume that this Universal
Turing Machine accepts only on a specific state $q_f^a \in Q$
and rejects only on a specific state $q_f^r \in Q$;
those two states being the only ones with no possible forward transitions.
In a structure $A$, we call $C(a)$ the $\leq$-component of $a \in A$.

\begin{fact}
    There exists a $1$-local formula $\theta_C(x)$
    such that for all structure $A \in \BadOrderS$, element $a \in A$,
    $A, a \models \theta_C(x)$ if and only if
    $C(a)$ represents a valid configuration of $U$.
\end{fact}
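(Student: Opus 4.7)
The key observation for the proof is that, by property (i) of $\BadAxiom$ together with the description of $\leq$-components from \cref{lem:ff:shape}(a), each $\leq$-component of a structure $A \in \BadOrderS$ is a totally ordered set of size at least two. Since the Gaifman graph records an edge between any two distinct elements appearing together in some tuple of the $\leq$ relation, this means that the $\leq$-component $C(a)$ of any element $a$ is a clique in $\Gaif{A}$. Consequently, $C(a) \subseteq \Neighb{A}{a}{1}$, so a 1-local formula around $x$ has quantifier access to every element of $C(x)$.

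Next, I would make explicit what it means for $C(a)$ to encode a valid configuration of $U$: each element of $C(a)$ should carry exactly one unary predicate among $\{P_b : b \in \Sigma\} \cup \{P_\square, P_\$\} \cup Q$; exactly one element of $C(a)$ should carry a state predicate $q \in Q$ (the head position); and, if one encodes delimiters, the $\leq$-minimal and $\leq$-maximal elements of $C(a)$ should be those labelled $P_\$$. All three conditions are expressible by first-order formulas that quantify only over $C(x)$.

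To enforce 1-locality, I would \emph{guard} every such quantifier. Inside $\Neighb{}{x}{1}$, membership in $C(x)$ is defined by the quantifier-free predicate $\gamma(x,y) \defined (y \leq x) \vee (x \leq y)$. The formula then takes the shape
\begin{align*}
\theta_C(x) \defined\ & \bigl(\forall y \in \Neighb{}{x}{1}.\ \gamma(x,y) \rightarrow \mathrm{unique\text{-}label}(y)\bigr) \\
& \wedge\ \bigl(\exists y \in \Neighb{}{x}{1}.\ \gamma(x,y) \wedge \bigvee_{q \in Q} q(y)\bigr) \\
& \wedge\ \bigl(\forall y, z \in \Neighb{}{x}{1}.\ \gamma(x,y) \wedge \gamma(x,z) \wedge y \neq z \rightarrow \neg (\mathrm{state}(y) \wedge \mathrm{state}(z))\bigr),
\end{align*}
augmented with similar guarded clauses fixing the $P_\$$ markers at the endpoints (which are themselves definable from $S$ and $\leq$ within $C(x)$).

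The only point requiring care is that $\Neighb{A}{a}{1}$ strictly contains $C(a)$ in general, since the $\leq$-maximal element of $C(a)$ may be $S$- or $E$-adjacent to elements of a neighbouring component; this is exactly what the guard $\gamma(x,y)$ rules out. Once the guard is in place, correctness follows directly: $A, a \models \theta_C(x)$ iff the restriction of $A$ to $C(a)$, equipped with its unary labels, constitutes a syntactically valid configuration of $U$.
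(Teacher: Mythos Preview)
The paper states this as a \emph{Fact} without proof, treating it as routine. Your proposal correctly supplies the missing details: the key observation that each $\leq$-component is a clique in the Gaifman graph (so $C(a) \subseteq \Neighb{A}{a}{1}$), and the use of the guard $\gamma(x,y) \defined (y \leq x) \vee (x \leq y)$ to filter out elements of neighbouring components reachable via $S$ or $E$ edges, are exactly what is needed to make the formula $1$-local and correct. Your argument is sound and matches the intended approach.
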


The only difficulty in representing 
runs of the machine $U$ is
to map positions from one $\leq$-component
to its successor.
To that end, we exploit the
first-order definability of the function 
$f \colon a \mapsto \max_{\leq} \setof{d}{E(a,d)}$
that links $\leq$-components
\begin{ifappendix}%
    (see~\cref{app:sec:ff:undecidable}).%
\end{ifappendix}

\begin{restatable}[Transitions are definable]{lemma}{definabletransitions}
    There exists a $1$-local formula $\theta_T(x,y)$
    such that for every structure
    $A \in \BadOrderS$ and points $a,b \in A$
    the following two properties are equivalent:
    \begin{enumerate}[(i)]
        \item The $\leq$-components of $a$ and $b$ 
            are connected through $S$ and
            represent valid configurations $C$ and $C'$
            satisfying $C \to_U C'$,
        \item $A, ab \models \theta_T(x,y)$.
    \end{enumerate}
\end{restatable}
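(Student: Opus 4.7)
The plan is to construct $\theta_T(x,y)$ as a $1$-local formula expressing the conjunction of three conditions: (a) the $\leq$-components $C(x)$ and $C(y)$ are distinct and $C(y)$ is the $S$-successor of $C(x)$; (b) both $C(x)$ and $C(y)$ encode valid configurations of $U$; and (c) the two configurations are related by a transition of $U$.

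The crucial observation is that whenever $x$ lies in a $\leq$-component $B_1$ and $y$ in a successor $\leq$-component $B_2$, the neighborhood $\Neighb{A}{\{x,y\}}{1}$ already contains all of $B_1 \cup B_2$. Indeed, by \cref{lem:ff:shape}(a) each $\leq$-component is totally ordered, and by transitivity of $\leq$ (axiom (i)) any two elements of the same $\leq$-component are directly linked by a $\leq$-edge and hence Gaifman-adjacent; therefore $B_1 \subseteq \Neighb{A}{x}{1}$ and $B_2 \subseteq \Neighb{A}{y}{1}$. This observation makes it possible to carry out every subsequent quantification inside $\Neighb{A}{\{x,y\}}{1}$.

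Condition (a) is captured by $\neg(x \leq y) \wedge \neg(y \leq x) \wedge \exists u, v.\, (u \leq x \vee x \leq u) \wedge (v \leq y \vee y \leq v) \wedge S(u, v)$: the first conjuncts force $C(x) \neq C(y)$, and the existential witnesses the unique boundary $S$-edge between $C(x)$ and $C(y)$ guaranteed by \cref{lem:ff:shape}(b). Condition (b) is given by $\theta_C(x) \wedge \theta_C(y)$, which remains $1$-local around $\{x,y\}$ because the $1$-neighborhood of $x$ inside $\Neighb{A}{\{x,y\}}{1}$ coincides with $\Neighb{A}{x}{1}$ (and likewise for $y$), so the $1$-local formula $\theta_C$ can be reused unchanged. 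For (c), I first observe that the function $f$ of \cref{lem:ff:shape}(d) is $1$-locally definable: $f(a) = b$ is expressed by $E(a,b) \wedge (b \leq y \vee y \leq b) \wedge \forall b'.\, ((b' \leq y \vee y \leq b') \wedge E(a, b')) \Rightarrow b' \leq b$, using only quantifications that stay inside $C(y)$. By \cref{lem:ff:shape}(d), $f$ provides a position-preserving, $S$-equivariant bijection between $C(x)$ and $C(y)$ deprived of its $\leq$-maximum, yielding a cell-by-cell correspondence between the two tapes. The transition relation of $U$ is then encoded by a finite case analysis at each $a \in C(x)$ with $b = f(a)$: depending on whether the head of $U$ sits at $a$, at $S(a)$, at $S^{-1}(a)$, or elsewhere in $C(x)$, the state/symbol predicates at $b$ must either follow $U$'s transition rule or copy those at $a$; the unique element of $C(y)$ outside the image of $f$ corresponds to a blank cell appended to the right of the tape and is constrained to hold $P_\square$ by a separate conjunct.

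The main obstacle is ensuring uniformly that every quantifier can be confined to $\Neighb{A}{\{x,y\}}{1}$; this reduces entirely to the totality-plus-transitivity observation above, which places $C(x)$, $C(y)$, and in particular all $S$-neighbors, $E$-neighbors, $f$-images and $f$-preimages required by (a), (b) and (c), within distance $1$ of $\{x, y\}$. Once this is granted, relativising the conjunction of (a), (b) and (c) to $\Neighb{}{xy}{1}$ yields the required $1$-local formula $\theta_T(x,y)$, with the desired equivalence following by inspection of the case analysis together with \cref{lem:ff:shape}.
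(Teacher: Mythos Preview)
Your proposal is correct and follows essentially the same approach as the paper: both argue $1$-locality from the fact that $\leq$-components are Gaifman cliques, both express the position-correspondence between successive components via the first-order definable function $f(a)=\max_{\leq}\{d:E(a,d)\}$, and both finish with the standard case analysis around the head. Your write-up is in fact a bit more explicit (conditions~(a)--(c), the blank cell) than the paper's sketch, but the ideas coincide.
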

\begin{sendappendix}
\begin{ifappendix}
    \definabletransitions*
\end{ifappendix}
\begin{proof}
    This follows the standard encoding of transitions.
    We check that the sentence is $1$-local since
    the ball of radius $1$ around $a$ (or $b$) contains
    both $\leq$-components entirely.

    The only technical issue is relating
    positions in the configuration $C(a)$ to positions
    in the configuration $C(b)$, which 
    is done through the use of the function
    $f \colon a \mapsto \max_{\leq} \setof{d}{E(a,d)}$
    which is first-order definable and $1$-local.
    Let us spell out the definition of $f$ as a formula:
    \begin{equation*}
        \phi_f(x,y) \defined E(x,y) \wedge \forall z. y < z \implies \neg E(x,z)
    \end{equation*}

    For instance, to assert
    that every letter except those near the current position
    in the tape are left unchanged, one can first write a formula
    stating that the position of the head of the Turing Machine
    in the $\leq$-component of $x$ is not close to $x$:
    \begin{equation*}
        \phi_Q(x) \defined \forall z. (z \leq x \vee x \leq z) \wedge 
        (S(x,z) \vee S(z,x) \vee x = z) \implies \bigwedge_{q \in Q} \neg q(x)
    \end{equation*}

    As a shorthand, let us write $z \in C(x)$ instead of $z \leq x \vee x \leq
    z$.
    Using $\phi_f$ and $\phi_Q$ it is easy to write a formula stating
    that letters far from the head of the Turing Machine are unchanged
    in a transition:
    \begin{align*}
        \phi(x,y) \defined
        &\forall z. z \in C(x) \wedge \phi_Q(z) \\
        &\implies \exists z'. z' \in C(y)
        \wedge \phi_f(z,z')
        \wedge
        \bigwedge_{a \in \Sigma \cup \{ \$, \square \}} P_a(z) \Leftrightarrow P_a(z')
    \end{align*}

    We can assert that some specific transition has been
    taken similarly, by first checking the movement of the head,
    change of letters around the head, and the evolution of the state.
\end{proof}
\end{sendappendix}

\begin{fact}
    Given a word $w \in \Sigma^*$
    there exists a $1$-local formula $\theta_I^w(x)$
    such that for $A \in \BadOrderS$
    and $a \in A$,
    $A,a \models \theta_I^w(x)$ if and only
    if $C(a)$ is the initial configuration of $U$
    on $w$
    and has no $S$-predecessor in $A$.
\end{fact}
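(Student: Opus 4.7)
The plan is to assemble $\theta_I^w(x)$ as a conjunction of three $1$-local formulas: the configuration-validity check $\theta_C(x)$ from the preceding fact, a shape-and-labelling formula $\theta^w_{\mathrm{shape}}(x)$ asserting that $C(x)$ encodes the initial configuration on $w$, and a predecessor-absence formula $\theta_{\mathrm{nopred}}(x)$.

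For $\theta^w_{\mathrm{shape}}(x)$, I rely on \cref{lem:ff:shape}, by which every $\leq$-component of a structure in $\BadOrderS$ is a total order whose elements are pairwise Gaifman-adjacent; in particular $C(x) \subseteq \Neighb{}{x}{1}$. A $1$-local formula can therefore identify $m \defined \min_{\leq}(C(x))$, walk along $S$ a fixed number of steps bounded only by $|w|$, and check at each step that the expected predicate ($P_\$$, the $i$-th letter of $w$, $P_\square$, or the state $q_0$ at the head position) holds. An additional conjunct pins down $|C(x)|$ to match the tape size required by the initial configuration on $w$. Because $|w|$ is fixed, this is a Boolean combination of a bounded number of atomic checks, all relativised to $\Neighb{}{x}{1}$.

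For $\theta_{\mathrm{nopred}}(x)$, I write $\neg \exists y.\, S(y,m)$ with the existential relativised to $\Neighb{}{x}{1}$. The soundness of this relativisation depends on the potential $S$-predecessor $p$ of $m$ lying in $\Neighb{}{x}{1}$ whenever it exists in $A$. By the structural analysis of \cref{lem:ff:shape}, the predecessor component is linked to $C(x)$ through the $E$-edges encoding the strictly increasing function $f$; together with the direct $S$-edge from $p$ to $m$, these edges guarantee that $p$ sits at Gaifman distance $1$ from $x$ whenever $x$ is $m$ itself or sits in the $\leq$-interior of $C(x)$.

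The main obstacle is the extremal case $x = \max_{\leq}(C(x))$, where neither the $S$-edge nor any $E$-edge puts $p$ inside $\Neighb{}{x}{1}$. Closing this case is the delicate part of the proof: one must combine the size constraint extracted from $\theta^w_{\mathrm{shape}}$ with the precise $E$-pattern described in \cref{lem:ff:shape} to either show that the relativised quantifier still returns the right answer, or to split $\theta_I^w(x)$ into a disjunction indexed by the $\leq$-rank of $x$ that sidesteps the extremal position. Once this hurdle is cleared, the conjunction $\theta_C(x) \wedge \theta^w_{\mathrm{shape}}(x) \wedge \theta_{\mathrm{nopred}}(x)$ is a $1$-local formula meeting the required specification.
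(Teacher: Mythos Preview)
The paper states this as a Fact without proof, so there is no argument to compare against; your three-part decomposition is the natural approach, and your identification of the extremal case $x = \max_{\leq}(C(x))$ is exactly right. But your optimism that this case can be ``closed'' is misplaced: the obstacle is real and cannot be overcome---the Fact, read literally, is false. Take $A_1 \defined O_{n_w}$ and $A_2 \defined O_{n_w-1} + O_{n_w}$ (with $n_w \geq 3$ the size of the initial configuration, and $O_{n_w}$ labelled identically in both). For $a = \max_{\leq}(O_{n_w})$, no element of $O_{n_w-1}$ is Gaifman-adjacent to $a$ in $A_2$: the $S$-predecessor of $a$ lies inside $O_{n_w}$, and $E(b,a)$ with $b \in O_{n_w-1}$ would need $a \leq b$ as integers, impossible since $b \leq n_w-1 < n_w = a$. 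Hence $\Neighb{A_1}{a}{1}$ and $\Neighb{A_2}{a}{1}$ are isomorphic pointed structures, so every $1$-local formula agrees at $a$; yet $C(a)$ has an $S$-predecessor in $A_2$ and not in $A_1$. Neither of your two proposed workarounds can help here, since the $1$-neighbourhood of the maximal element simply carries no information about the preceding component.

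The repair is easy and does not affect the paper's applications, which only use $\exists x.\,\theta_I^w(x)$: conjoin the requirement $\exists y.\, x < y$ (i.e.\ $x$ is not $\leq$-maximal in $C(x)$) and test for the predecessor via $\neg\exists y.\, E(y,x)$. Both conjuncts are $1$-local because any witness is automatically Gaifman-adjacent to $x$, and by your own analysis the $E$-test is correct at every non-maximal $a$. The resulting formula is vacuously false at the maximal element, but the outer existential still finds a witness elsewhere in the component whenever the intended condition holds.
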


\begin{fact}
    Given a word $w \in \Sigma^*$
    there exists a $1$-local formula $\theta_F(x)$
    such that for $A \in \BadOrderS$ and $a \in A$,
    $A,a \models \theta_F(x)$ if and only if
    $C(a)$ is a final configuration of $U$
    and has no $S$-successor in A.
\end{fact}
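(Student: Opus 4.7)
The plan is to reduce the evaluation of $\theta_F(x)$ to purely local information about $C(x)$, using the structural analysis of $\BadOrderS$ from \cref{lem:ff:shape}. First I would observe that within a single $\leq$-component $B$ of a structure $A \in \BadOrderS$, the relation $\leq$ is a total order on $B$ and hence every pair of elements of $B$ is adjacent in the Gaifman graph of $A$. Consequently $C(x) \subseteq \Neighb{A}{x}{1}$, so the $\leq$-maximum $m$ of $C(x)$ and every labelling predicate on points of $C(x)$ are accessible to a $1$-local formula around $x$.

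Next I would write $\theta_F(x)$ as the conjunction, relativised to $\Neighb{}{x}{1}$, of three clauses: (i) $\theta_C(x)$, which already guarantees that $C(x)$ encodes a valid configuration of $U$; (ii) a quantifier-free finite disjunction stating that the unique head position in $C(x)$ carries one of the halting control states $q_f^a$ or $q_f^r$, which by design are the only states from which no forward transition of $U$ is possible; and (iii) the assertion that the $\leq$-maximum $m$ of $C(x)$ (definable inside $\Neighb{}{x}{1}$ by $\forall z.\, z \leq m$ restricted to the component) has no outgoing $S$-edge. Clauses (i) and (ii) are obviously $1$-local by the first paragraph, and the word $w$ plays no role here since finality depends only on the control state, not on the input.

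The delicate step is justifying that clause (iii) remains $1$-local, since a priori it mentions elements outside $C(x)$. This will be the main obstacle, and I would resolve it using \cref{lem:ff:shape}(d): if $m$ does have an $S$-successor $m'$, then $m'$ is the $\leq$-minimum of the next $\leq$-component, and the strictly increasing injection $f$ of that lemma forces $E(a, m')$ to hold for \emph{every} $a \in C(x)$. In particular $m' \in \Neighb{A}{x}{1}$, so the existence of an $S$-successor of $m$ in $A$ is equivalent to its existence inside $\Neighb{A}{x}{1}$, and its negation can be written as a formula of the form $\neg \exists y \in \Neighb{}{x}{1}.\, S(m, y)$, which is $1$-local. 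This makes the full formula $\theta_F(x)$ a $1$-local formula with the required semantics.
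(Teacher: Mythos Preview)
The paper states this fact without proof, so there is nothing to compare against; your proposal is a correct fleshing-out of what the paper leaves implicit. Your reduction to three clauses is exactly right, and the observation that $C(x)\subseteq\Neighb{A}{x}{1}$ because $\leq$ is total on each component is the key structural point.

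One small imprecision: \cref{lem:ff:shape}(d) alone does not immediately give $E(a,m')$ for \emph{every} $a\in C(x)$; item (d) tells you $f$ is well-defined and maps the minimum of $C(x)$ to $m'$, but to conclude $E(a,m')$ for arbitrary $a$ you also need that the $E$-image of $a$ is $\leq$-downward closed in the next component, which is axiom (viii) or the analysis in the proof of \cref{lem:ff:shape}(e). Alternatively and more directly, axiom (xii) gives $E(a,m')$ outright from $a\leq m$ and $S(m,m')$ with $\neg(m\leq m')$. Either way the conclusion stands and clause (iii) is $1$-local as you claim. Your remark that $w$ is irrelevant to $\theta_F$ is also correct; the quantification over $w$ in the statement is vestigial.
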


\begin{fact}
    There exists a $1$-local formula $\theta_N(x)$
    such that for $A \in \BadOrderS$
    and $a \in A$, 
    $A, a \models \theta_N(x)$ if and only
    if $C(a)$ has no $S$-successor in $A$.
\end{fact}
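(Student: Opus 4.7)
The plan is to express $\theta_N(x)$ as a relativisation of the formula ``the $\leq$-maximum of $C(x)$ has no $S$-successor,'' and to show that the relevant witnesses always lie in the $1$-ball of $x$. The key structural observation, following from \cref{lem:ff:shape}, is that $C(x) \subseteq \Neighb{}{x}{1}$: since $\leq$ is total on each $\leq$-component, every element of $C(x)$ is $\leq$-comparable to $x$ and thus at Gaifman distance at most $1$ from $x$. In particular, the $\leq$-maximum $m$ of $C(x)$ lies in $\Neighb{}{x}{1}$, and maximality in $C(x)$ can be tested by quantifying over $\Neighb{}{x}{1}$.

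The subtler point, which drives the whole argument, is that the $S$-successor of $m$ (if it exists) also belongs to $\Neighb{}{x}{1}$. Let $m'$ denote such a successor. If $x = m$, then $S(x, m')$ is a direct Gaifman edge. If instead $x < m$ in $C(x)$, then axiom~(xii) applied with $a = x$, $c = m$ and $b = m'$ (using $x \leq m$, $S(m,m')$, and $\neg(m \leq m')$ since $m$ and $m'$ lie in different $\leq$-components) yields $E(x, m')$, which again places $m'$ at Gaifman distance $1$ from $x$. Thus, in both cases, the putative witness of ``$C(x)$ has an $S$-successor'' is visible from $x$ at radius $1$.

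Combining these observations, define
\[
    \theta_N(x) \defined \neg \exists y, z \in \Neighb{}{x}{1}. \bigl(\mathrm{max}_{x}(y) \wedge S(y,z)\bigr),
\]
where
\[
    \mathrm{max}_{x}(y) \defined (y \leq x \vee x \leq y) \wedge \forall w \in \Neighb{}{x}{1}.\ \bigl((w \leq x \vee x \leq w) \implies w \leq y\bigr)
\]
correctly expresses ``$y$ is the $\leq$-maximum of $C(x)$'' because $C(x) \subseteq \Neighb{}{x}{1}$. By construction all quantifiers are relativised to $\Neighb{}{x}{1}$, so $\theta_N$ is $1$-local around $x$. Then $A, a \not\models \theta_N(x)$ iff there are $y, z \in \Neighb{}{x}{1}$ with $y$ the $\leq$-maximum of $C(a)$ and $S(y,z)$, which by \cref{lem:ff:shape}(b) happens iff $C(a)$ has an $S$-successor $\leq$-component in $A$, as required.

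The only delicate step is establishing $1$-locality when $x$ is strictly below the maximum of $C(x)$; here axiom~(xii) is precisely what is needed, since without the guaranteed $E$-edge from $x$ to $m'$ the witness $z$ could only be reached at radius $2$ via the edge $S(m, m')$.
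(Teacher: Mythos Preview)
The paper states this as a Fact without providing a proof, so there is no paper proof to compare against. Your argument is correct and complete: the crucial point is indeed that the $S$-successor $m'$ of the maximum $m$ of $C(x)$, when it exists, lies in $\Neighb{}{x}{1}$, and your use of axiom~(xii) establishes this cleanly (note that the case split is in fact unnecessary, since $x \leq m$ holds by reflexivity even when $x = m$, so axiom~(xii) covers both cases uniformly). Concretely in $\BadOrderS$, the successor $m'$ is the minimum element of the next $O_{i+1}$, and $E(x,m')$ always holds because $m' = 1 \leq x$ as integers; this confirms the axiom-based argument. Your formula $\theta_N$ and its correctness justification via \cref{lem:ff:shape} are sound.
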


\begin{theorem}[Undecidability]
    \label{thm:ff:undecidable}
It is in general not possible
to decide whether a sentence $\varphi$
closed under disjoint unions
over $\ModF(\upsigma)$
has an existential local equivalent
form.
\end{theorem}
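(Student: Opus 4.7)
The plan is to reduce from an undecidable Turing machine problem (for instance the halting problem of the fixed universal Turing machine $U$). Given an input $w$, the goal is to construct a first-order sentence $\varphi_{U,w}$ over the extended signature (containing $\upsigma$, the unary predicate $B$, and the TM-encoding predicates introduced in the previous paragraphs) such that $\varphi_{U,w}$ is preserved under disjoint unions in $\ModF(\upsigma)$, while $\varphi_{U,w}$ admits an existential local equivalent precisely when $U$ halts (or, depending on the exact formulation, fails to halt) on $w$.

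The construction would combine three ingredients. First, the class $\BadOrderS$ is axiomatisable by the finite set $\BadAxiom$ of universal local sentences (as shown in \cref{lem:ff:shapebis,lem:ff:shape}), allowing one to lift statements between $\BadOrderS$ and $\ModF(\upsigma)$ via \cref{fact:ff:relat}. Second, inside $\BadOrderS$ the formulas $\theta_C, \theta_I^w, \theta_T, \theta_F$ express that a $\leq$-component is a valid configuration, the initial configuration on $w$, that two consecutive components are linked by a valid transition, or a final accepting configuration; I would assemble from them a sentence $\alpha_{U,w}$ asserting the existence of an accepting run of $U$ on $w$ encoded in the structure as a chain of $\leq$-components. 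Third, the sentence $\badphi$ of \cref{cor:ff:counterex} is preserved under disjoint unions but not equivalent to any existential local sentence over $\ModF(\upsigma)$. A natural candidate is
\begin{equation*}
\varphi_{U,w} \defined \alpha_{U,w} \lor (\BadAxiom \to \badphi),
\end{equation*}
whose preservation under disjoint unions follows from each disjunct being preserved (for $\alpha_{U,w}$ because it is existential local; for $\BadAxiom \to \badphi$ because both $\BadAxiom$ and $\badphi$ are preserved under disjoint unions of $\BadOrderS$-structures).

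In the non-halting case, $\alpha_{U,w}$ has no models in $\BadOrderS$, so $\varphi_{U,w}$ is equivalent over $\BadOrderS$ to $\badphi$; using \cref{fact:ff:relat} together with \cref{cor:ff:counterex}, this prevents an existential local equivalent form over $\ModF(\upsigma)$. In the halting case, the shortest accepting run has length bounded by some $K_w$ depending only on $w$, so $\alpha_{U,w}$ is a genuine existential local sentence of fixed radius and quantifier rank; one would then argue that $\varphi_{U,w}$ itself collapses to an existential local sentence, essentially because any structure in $\BadOrderS$ that witnesses $\badphi$ without an existential local alternative is forced to contain the bounded accepting run encoded by $\alpha_{U,w}$.

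The main obstacle is the halting case: showing that $\varphi_{U,w}$ does simplify to an existential local sentence when $U$ halts on $w$. This requires a careful combinatorial argument that ties the TM-halting property to a syntactic simplification of the sentence, in the spirit of \citet{Kuperberg21} for Lyndon's Positivity Theorem over finite words and \citet{flum2021} for the \L{}o\'s-Tarski Theorem. Technically, one has to neutralise the \emph{non-existential-local} behaviour of $\badphi$ by exploiting that, once a $K_w$-bounded accepting run is available, the obstruction exhibited in \cref{lem:ff:counterbads} (where one "hides" a $B$-point outside the witness neighbourhoods) can be absorbed into $\alpha_{U,w}$; this is the delicate step on which the whole reduction rests.
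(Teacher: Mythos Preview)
Your construction has a genuine gap in the halting case, and the reduction as sketched does not go through. The non-existential-local behaviour of $\badphi$ is completely independent of whether $U$ halts on $w$, so disjoining with $\alpha_{U,w}$ cannot neutralise it. Concretely, take any connected $A \in \BadOrderS$ that does \emph{not} contain the initial configuration of $U$ on $w$ (for instance $A = O_{m} + \cdots + O_{n}$ with $n-m$ large and TM-labels that do not match the initial configuration) and has no $B$-labelled point. Then $A \models \badphi$ via the disjunct $\forall x.\,\neg B(x)$, while $A \not\models \alpha_{U,w}$. Now relabel one point of $A$ with $B$, chosen outside any fixed $r$-neighbourhood of $k$ points: the resulting structure fails $\badphi$ (one $B$-point, still connected) and still fails $\alpha_{U,w}$. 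This reproduces exactly the obstruction of \cref{lem:ff:counterbads}, regardless of whether $U$ halts on $w$, so $\varphi_{U,w}$ is never equivalent to an existential local sentence and the reduction collapses. (A secondary remark: your justification that $\alpha_{U,w}$ is existential local is also incorrect---expressing ``all transitions are valid'' needs a genuine universal quantifier; $\varphi_{U,w}$ does happen to be preserved under disjoint unions, but only thanks to the $\varphi_{CC}$ disjunct hidden inside $\badphi$, not for the reason you give.)

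The paper's construction is both simpler and structurally different: it discards $\badphi$ and the predicate $B$ entirely and works directly with $\varphi \defined \varphi_M \vee \varphi_{CC}$ over $\BadOrderS$, where $\varphi_M$ asserts the presence of the initial configuration together with the \emph{universal} clause that every $S$-transition between $\leq$-components is a valid $U$-step (no final-state requirement). Preservation under disjoint unions is immediate since any non-trivial disjoint union satisfies $\varphi_{CC}$. The non-local content lives in the universal transition check of $\varphi_M$ itself: if $M$ halts, valid partial runs have bounded length and $\varphi_M$ collapses on connected structures to a finite disjunction of explicit existential local sentences $\varphi_M^n$; if $M$ does not halt, valid partial runs exist of arbitrary length, and perturbing one transition outside the witnesses of any candidate existential local sentence defeats it. The key idea you are missing is that the TM encoding must \emph{itself} supply the hard universal condition whose locality status tracks halting, rather than being bolted onto the separate counterexample $\badphi$.
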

\begin{proof}
    Without loss of generality
    thanks to \cref{fact:ff:relat}, we only work over
    $\BadOrderS$, and
    we reduce from the halting problem.

    Let $M$ be a Turing Machine
    and $\langle M \rangle$ its code in the alphabet of the
    Universal Turing Machine $U$.
    Let $\varphi_M = \exists x. \theta_I^{\langle M \rangle}(x)
    \wedge
    \forall x,y. S(x,y) \wedge \neg (x \leq y) \implies
    \theta_T(x,y)$.
    We consider the sentence $\varphi \defined \varphi_M \vee \varphi_{CC}$
    that is closed under disjoint unions over $\BadOrderS$. This sentence
    is computable from the data $\langle M \rangle$.

    Assume that $M$ halts in at most $k'$ steps,
    there exists a bound $k$ for the run of the universal
    Turing Machine $U$.
    Given a size $n \in \mathbb{N}$, we define
    $\varphi_M^n$ to be the existential local sentence
    $\exists x_1, \dots, x_n. \theta_I^{\langle M \rangle}(x_1)
    \wedge \theta_T(x_1, x_2) \wedge \dots \wedge \theta_T(x_{n-1}, x_n)
    \wedge \theta_N(x_n)$.
    It is a routine check that $\varphi$ is equivalent to
    $\varphi_{CC} \vee \bigvee_{1 \leq k \leq n} \varphi_M^n$ over $\BadOrderS$.

    Assume that $M$ does not halt.
    The universal Turing Machine $U$ does not halt
    on the word $\langle M \rangle$.
    Assume by contradiction that $\varphi$ is equivalent to a sentence
    $\psi \defined \exists x_1, \dots, x_k. \theta$
    where $\theta$ is $r$-local.
    Find a run of size greater than $k \cdot (2r + 1)$
    and evaluate $\psi$.  It cannot look at all the configurations
    simultaneously; change the state of the one not seen: it still satisfies
    $\psi$ but this is no longer a run of $U$, which is absurd.
\end{proof}

\begin{ifappendix}%
    An analogous proof allows us to conclude
    that the semantic property of closure under disjoint unions
    is also undecidable (see~\cref{app:sec:ff:undecidable}).
    \Cref{thm:ff:undecidable} is then strengthened
    to prove that equivalent existential sentences
    are uncomputable in general.%
\end{ifappendix}

\begin{restatable}[Undecidable semantic property]{theorem}{undecidabledisunion}
    It not possible, given a sentence $\varphi$,
    to decide whether or not it is closed under disjoint unions
    over $\ModF(\upsigma)$.
\end{restatable}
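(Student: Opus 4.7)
The plan is to reduce from the (co-)halting problem. Given a Turing machine $M$, I will construct a first-order sentence $\psi_M$ such that $\psi_M$ is closed under disjoint unions over $\BadOrderS$ if and only if $M$ does not halt on $\langle M \rangle$. Applying~\cref{fact:ff:relat} to $\BadAxiom \implies \psi_M$ then transfers this equivalence to closure over $\ModF(\upsigma)$, and the undecidability of halting yields the theorem.

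The sentence $\psi_M$ will be designed so that its models in $\BadOrderS$ are exactly the connected structures encoding a complete halting run of the universal Turing machine $U$ on input $\langle M \rangle$. The key ingredient is a first-order expression of ``the Gaifman graph has a single connected component'' over $\BadOrderS$: by~\cref{lem:ff:shape}, each connected component of a model of $\BadAxiom$ is a chain $O_n + \cdots + O_m$ whose unique element without an $S$-successor is the $\leq$-maximum of $O_m$; hence connectedness is captured by the first-order sentence $\chi_c$ asserting the existence and uniqueness of an element with no $S$-successor. I then set
\begin{equation*}
\psi_M \defined \chi_c \wedge \exists x. \theta_I^{\langle M \rangle}(x) \wedge \exists y. \theta_F(y) \wedge \forall u, v. \left(S(u,v) \wedge \neg(u \leq v)\right) \implies \theta_T(u,v)\;.
\end{equation*}

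If $M$ halts in $k$ steps on $\langle M \rangle$, then the corresponding full run $A = O_n + \cdots + O_{n+k} \in \BadOrderS$ is a model of $\psi_M$; however $A \uplus A$ has two connected components and fails $\chi_c$, so $\psi_M$ is not closed under disjoint unions. Conversely, if $M$ does not halt on $\langle M \rangle$, then any model of $\psi_M$ would be a connected $\BadOrderS$-structure whose first $\leq$-component is an initial configuration of $U$ on $\langle M \rangle$, whose inter-component $S$-transitions are all valid, and whose last $\leq$-component is a halting configuration---i.e.\ it would encode a complete halting run of $U$ on $\langle M \rangle$, which does not exist. Hence $\psi_M$ is unsatisfiable and thus vacuously closed under disjoint unions.

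The main obstacle is verifying the first-order expressibility of connectedness over $\BadOrderS$ and checking that $\chi_c$ interacts correctly with the local formulas $\theta_I^{\langle M \rangle}, \theta_F, \theta_T$; both points follow from the structural analysis of~\cref{lem:ff:shape}, which guarantees that the $S$-chain within a single Gaifman-connected component is precisely the sequence of tape configurations of a prospective run.
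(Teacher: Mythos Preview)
Your proposal is correct and follows essentially the same approach as the paper: reduce from halting by building a sentence whose models in $\BadOrderS$ are precisely the connected encodings of a complete halting run of $U$ on $\langle M\rangle$, so that the sentence has a model (and hence fails closure under disjoint unions via doubling) iff $M$ halts, and is unsatisfiable (hence vacuously closed) otherwise. The only cosmetic difference is that the paper expresses connectedness as $\neg\varphi_{CC}$ (at most one element without an $S$-predecessor) whereas you use $\chi_c$ (a unique element without an $S$-successor); over $\BadOrderS$ these are equivalent by~\cref{lem:ff:shape}.
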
%
\begin{sendappendix}%
    \begin{ifappendix}%
        \undecidabledisunion*%
    \end{ifappendix}%
\begin{proof}
    Without loss of generality, we only work over
    $\BadOrderS$, and
    we reduce from the halting problem.
    Consider the sentence
    $\varphi_M \defined \exists x. \theta_I^{\langle M \rangle}(x) 
                  \wedge 
    \exists x. \theta_F(x) 
                  \wedge
    \forall x,y. S(x,y) \wedge \neg (x \leq y) \Rightarrow
    \theta_T(x,y)$,
    and let $\varphi = \varphi_M \wedge \neg \varphi_{CC}$.

    Assume that $M$ halts. There
    exists exactly one model of $\varphi$, the unique
    run of the universal Turing machine $U$,
    and this contradicts closure under disjoint unions.

    Assume that $M$ does not halt.
    The sentence $\varphi$ has no finite model and
    in particular is closed under disjoint unions.

    Hence, the sentence $\varphi$ is
    closed under disjoint unions
    if and only if $M$ does not halt.
\end{proof}
\end{sendappendix}%

\begin{theorem}[Uncomputable equivalence]
    There is no algorithm,
    which given a sentence $\varphi$
    that is equivalent to an existential local
    sentence over $\ModF(\upsigma)$
    computes such a sentence.
\end{theorem}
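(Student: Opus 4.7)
My plan is to reduce from the halting problem by recycling the construction used in the proof of \cref{thm:ff:undecidable}. Suppose for contradiction that such an algorithm $A$ exists, taken as a total computable function (the standard notion). Given a Turing machine $M$, I reuse the sentence $\varphi \defined \varphi_M \vee \varphi_{CC}$ built from $M$ in that earlier proof; via \cref{fact:ff:relat} this can be carried out over $\ModF(\upsigma)$. The earlier analysis establishes the key dichotomy: $\varphi$ is equivalent over $\ModF(\upsigma)$ to an existential local sentence if and only if $M$ halts.

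I would then dovetail two semi-decision processes. Process (a) simply simulates $M$ and returns ``halts'' as soon as $M$ stops. Process (b) computes $\psi \defined A(\varphi)$, first checks syntactically that $\psi$ has the form $\exists \vec{x}. \tau(\vec{x})$ with $\tau$ an $r$-local formula---returning ``does not halt'' immediately if it does not---and then enumerates finite structures $B$ in order of size until it finds one on which $\varphi$ and $\psi$ disagree, at which point it returns ``does not halt''. Exactly one process terminates on any input $M$: when $M$ halts, $\varphi$ satisfies the promise, so $\psi \equiv \varphi$, no distinguishing $B$ exists, and (a) wins; when $M$ does not halt, the key step of \cref{thm:ff:undecidable} forbids $\varphi$ from being equivalent to any existential local sentence, so either $\psi$ is malformed (caught immediately) or $\psi$ is existential local and hence necessarily distinguishable from $\varphi$ on some finite $B$. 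In either case, (b) terminates with ``does not halt'', so the dovetail decides halting.

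The main technical subtlety is the reliance on totality of $A$, which guarantees that process (b) always has a finite output to inspect; without totality, $A$ could diverge on the non-promise input $\varphi$ when $M$ fails to halt, and the argument would only recover the already-known semi-decidability of halting instead of full decidability. The remaining ingredients---syntactic recognition of the existential local shape and the semi-decidable enumeration of finite structures on which two sentences disagree---are both routine.
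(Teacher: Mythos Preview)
Your dovetailing argument is correct under the totality assumption you flag, but the paper takes a different and more robust route. Instead of reusing the sentence from \cref{thm:ff:undecidable}, the paper modifies $\varphi_M$ by adding the conjunct $\exists x.\,\theta_F(x)$ requiring a final configuration. With this change, when $M$ does not halt $\varphi_M$ has no model at all, so $\varphi \equiv \varphi_{CC}$, which is already existential local; and when $M$ halts, $\varphi$ is equivalent to an existential local description of the unique accepting run. Thus the constructed input \emph{always} satisfies the promise, and the hypothetical algorithm must terminate on it regardless of whether one reads ``algorithm'' as total or merely partial-on-the-promise. The paper then avoids dovetailing entirely: from the computed sentence $\mu$ it reads off the number $k$ of outer existentials and the locality radius $r$, argues (as in \cref{lem:ff:counterbads}) that $\mu$ cannot witness a complete run longer than $2k(2r+1)$ without also accepting non-runs, and concludes that $M$ halts iff it halts within that explicit bound.

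The trade-off is this: your approach is more generic---it works whenever one already has a non-relativisation dichotomy of the form ``equivalent to a sentence in fragment $\mathsf{F}$ iff $M$ halts'' plus decidable model-checking---but it only refutes total algorithms, and you must acknowledge (as you do) that outside the promise $A$ might diverge. The paper's approach needs a new formula but stays inside the promise throughout, yielding the stronger statement and a concrete decision procedure from the parameters of the output rather than an unbounded search for a separating structure.
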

\begin{proof}
    Without loss of generality
    thanks to \cref{fact:ff:relat}, we only work over
    $\BadOrderS$, and
    we reduce from the halting problem.
    Let $M$ be a Turing Machine,
    $
        \varphi_M \defined \exists x. \theta_I^{\langle M \rangle}(x) 
                  \wedge 
    \exists x. \theta_F(x) 
                  \wedge
    \forall x,y. S(x,y) \wedge \neg (x \leq y) \implies
    \theta_T(x,y)
    $,
    and $\varphi \defined \varphi_M \vee \varphi_{CC}$.

    Assume that $M$ halts. Then $\varphi$
    is equivalent to an existential local sentence
    $\varphi' \vee \varphi_{CC}$,
    as noticed in the proof of \cref{thm:ff:undecidable}.
    Assume that $M$ does not halt.
    Then $\varphi$ is equivalent to $\varphi_{CC}$,
    which is existential local.
    We have proven that $\varphi$ is equivalent to an existential
    local sentence in all cases.

    Assume by contradiction that there exists 
    an algorithm
    computing an existential local sentence $\mu$ 
    that is equivalent to $\varphi_M \vee \varphi_{CC}$ over $\BadOrderS$.
    One can use $\mu$ to decide whether $M$ halts.
    Let us write $k$
    for the number of existential quantifiers of $\mu$
    and $r$ for the locality radius of its inner formula.
    If the sentence $\mu$ accepts the coding of a run of size greater
    than $2k \cdot (2r + 1)$, then it accepts
    structures that are not coding runs of $U$,
    and $\mu$ cannot be equivalent to $\varphi$.
    As a consequence, if $M$ terminates, it must
    terminate in
    at most $2k \cdot (2r + 1)$ steps, which is decidable.
\end{proof}

%The corresponding problems
%for $\subseteq_i$ and existential sentences
%are also undecidable~\citep*{flum2021}.
%However, advancing results from \cref{sec:presthm}, we can decide whether
%a sentence preserved under extensions
%has an equivalent existential local form (since it always holds),
%and compute such a form.
%This leads to a meta-conjecture
%stating roughly that ``whenever a preservation theorem is false over
%$\ModF(\upsigma)$
%the associated decision problem is undecidable''.

%Unfortunately, we were not able
%to provide an algorithm
%that associates to a sentence $\varphi$
%an equivalent existential local sentence
%over a class $\mathcal{C}$ where the preservation
%theorem holds.
%\begin{ifnotappendix}
%The na\"ive algorithm would look like 

%\begin{algorithm}
    %\SetAlgoLined
    %\KwData{A sentence $\varphi$}
    %\KwResult{A finite set $F \subseteq \mathcal{C}$ such that $\uparrow F =
    %\modset{\varphi}_{\mathcal{C}}$}
    %$F$ $\leftarrow$ $\emptyset$\;
    %\While{$\uparrow F \subsetneq \modset{\varphi}_\mathcal{C}$}{
        %A $\leftarrow$ Some structure in $\mathcal{C}$\;
    %\If{$A \models \varphi$}{
        %$F$ $\leftarrow$ $F \cup \{ A \}$\;
    %}
    %}
    %\caption{Saturation algorithm}
%\end{algorithm}

%Note that this algorithm terminates and computes such a set,
%however it is not actually writable since the strict inclusion
%is not decidable. We actually conjecture that there exist
%no such algorithm when $\mathcal{C} = \Mod(\upsigma)$ and $\sigma$
%has enough binary relations.
%\end{ifnotappendix}

\subsection{Generalisation to weaker preorders}
\label{sec:ff:generalise}

We characterise in this section the set of parameters $(r,q,k)$ for which
a sentence preserved under $\locleq{r}{q}{k}$ over
$\ModF(\upsigma)$ is equivalent to an existential local sentence
through a finer analysis of
$\BadAxiom$.

\begin{lemma}[Failure at arbitrary radius]
\label{lem:failure:counterexgen}
There exists a sentence $\badphi$
preserved under $\locleq{\infty}{q}{k}$ over $\ModF(\upsigma)$
for $k \geq 2$ and $q \geq 1$
that is not equivalent to an existential local
sentence over $\ModF(\upsigma)$.
\end{lemma}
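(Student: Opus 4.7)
My approach is to reuse the counter-example from \cref{cor:ff:counterex}, namely the sentence $\badphi \defined \forall x. \neg B(x) \vee \varphi_{CC}$ taken over $\BadOrderS$ and lifted to $\ModF(\upsigma)$ via $\BadAxiom \implies \badphi$. The non-equivalence with any existential local sentence is already established by \cref{lem:ff:counterbads} and \cref{cor:ff:counterex}; the new content is upgrading the preservation statement from disjoint unions (i.e.\ $\locleq{\infty}{\infty}{\infty}$) to the finer preorder $\locleq{\infty}{q}{k}$ for every $q \geq 1$ and $k \geq 2$.

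I would first establish preservation over $\BadOrderS$ by case analysis on which disjunct of $\badphi$ holds in $A$. If $A \models \varphi_{CC}$, then by \cref{lem:ff:phiccbads} the sentence $\varphi_{CC}$ is equivalent to an existential $1$-local sentence with two variables and inner quantifier rank one, which is automatically preserved by $\locleq{\infty}{q}{k}$ for $q \geq 1$ and $k \geq 2$. Otherwise $A \models \forall x. \neg B(x)$, and for each $r \geq 0$ the existential $r$-local sentence $\psi_r \defined \exists x_1, x_2. \forall y \in \Neighb{}{x_1 x_2}{r}. \neg B(y)$ holds in $A$ (with two variables and inner quantifier rank one). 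Transferring $\psi_r$ to $B$ for $r$ chosen larger than the diameter of any $\leq$-component of $B$ forces some connected component of $B$ to be $B$-free; either $B$ has no $B$-element at all (and trivially satisfies $\badphi$) or some $B$-element exists in a different component and $B \models \varphi_{CC}$, whence $B \models \badphi$.

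The step I expect to be the main obstacle is the lift from $\BadOrderS$ to $\ModF(\upsigma)$. \cref{fact:ff:relat} handles this for disjoint unions, but adapting it to $\locleq{\infty}{q}{k}$ requires care: some axioms in $\BadAxiom$, such as the transitivity of $\leq$, are naturally three-variable and their negations are not captured by $\locleq{\infty}{q}{k}$ when $k = 2$. I would therefore either verify by hand that the problematic cases where $A \not\models \BadAxiom$ but $B \models \BadAxiom$ cannot occur under $\locleq{\infty}{q}{k}$ via a finer analysis of $\BadAxiom$, or slightly reformulate the axiomatisation so that $\neg \BadAxiom$ can be witnessed within $k \geq 2$ variables and quantifier rank $q \geq 1$ while keeping $\BadOrderS$ and the preceding counter-example intact.
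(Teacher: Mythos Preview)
Your overall plan matches the paper's: same sentence $\badphi$, same reduction to $\BadOrderS$, same appeal to \cref{lem:ff:counterbads} for non-equivalence. However, your preservation argument over $\BadOrderS$ contains a genuine gap, and you have the two difficulties inverted relative to the paper.

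The gap is in the second case of your case analysis. You claim that $\psi_r \defined \exists x_1, x_2.\ \forall y \in \Neighb{}{x_1 x_2}{r}.\ \neg B(y)$ has ``inner quantifier rank one''. It does not: the relativisation $\forall y \in \Neighb{}{x_1 x_2}{r}.\ \neg B(y)$ must encode the predicate $d(\vec{x},y) \leq r$, whose quantifier rank grows with $r$ (the paper explicitly warns that $\varphi_{\leq r}$ ``might have a higher quantifier rank''). So $\psi_r$ is not of the form $\exists x_1,x_2.\ \tau$ with $\tau$ an $r$-local formula of quantifier rank~$\leq 1$, and it does \emph{not} transfer under $\locleq{r}{1}{2}$. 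More fundamentally, over a binary signature every quantifier-rank-$1$ formula that is $r$-local is already $1$-local (a single quantified $y$ can only be tied to $\vec{x}$ through an atomic relation, which forces distance $\leq 1$), so no choice of ``large $r$'' can save this line of argument at $q=1$.

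The paper proceeds in the opposite order from what you anticipate. What you call ``the main obstacle'' --- rewriting $\BadAxiom$ so that each axiom is of the form $\forall x_1, x_2.\ \tau(x_1,x_2)$ with $\tau$ $1$-local of quantifier rank~$\leq 1$ --- is the \emph{first} and principal step in the paper's proof. Once that syntactic check is done, $\BadOrderS$ is downwards closed for $\locleq{1}{1}{2}$, which simultaneously handles the lift to $\ModF(\upsigma)$ and supplies the syntactic shape needed for preservation. The paper then notes that $\varphi_{CC}$ already has this shape, and that verifying preservation of $\badphi$ under $\locleq{\infty}{1}{2}$ over $\BadOrderS$ is an ``easy check'' using the same syntactic analysis rather than the semantic $\psi_r$-style argument you attempt. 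In short: do the axiom reformulation first, not last, and rely on it (not on large-radius formulas) for both parts.
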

\begin{proof}
    Using \cref{fact:qo:refinement} it suffices to consider
    the case where $k = 2$ and $q = 1$ to conclude.

    A first check is that $\BadOrderS$
    can alternatively be defined by
    sentences that have at most $2$ universal quantifiers
    over variables $x_1$ and $x_2$
    and then one variable $y$ at distance at most 1
    from the tuple $x_1x_2$.
    This proves  that $\BadOrderS$ is definable
    and downwards closed for $\locleq{1}{1}{2}$.
    In particular, sentences preserved under $\locleq{\infty}{1}{2}$
    over $\BadOrderS$ are preserved under $\locleq{\infty}{1}{2}$
    over $\ModF(\upsigma)$, which is a strengthening of
    \cref{fact:ff:relat}.

    Using the same syntactical analysis, $\varphi_{CC}$ is preserved
    under $\locleq{\infty}{1}{2}$ over $\BadOrderS$.
    As a consequence it is an easy check that
    $\badphi \defined \forall x. \neg B(x) \vee \varphi_{CC}$
    is preserved under $\locleq{\infty}{1}{2}$
    over $\BadOrderS$.
    Moreover, it was stated in \cref{lem:ff:counterbads}
    that $\badphi$ cannot be defined as an existential local sentence
    over $\BadOrderS$.
\end{proof}

\begin{ifappendix}%
Using similar techniques one can tackle the case $q = \infty$, $k \geq 2$
and $r \geq 1$.
When $k = 1$ such methods will not apply because
the preorders $\locleq{r}{q}{1}$
cannot distinguish a structure $A$ from $A \uplus A$.
We use this fact in~\cref{app:sec:ff:generalise}
to provide a positive answer in this case.%
\end{ifappendix}

\begin{restatable}[Failure at arbitrary quantifier rank]{lemma}{arbitraryquantifierrank}
    \label{lem:ff:arbitraryqr}
    For every $r \geq 1, k \geq 2$, 
    there exists a sentence $\varphi$ preserved under
    $\locleq{r}{\infty}{k}$
    over $\ModF(\upsigma)$
    but not equivalent to an existential local sentence
    over $\ModF(\upsigma)$.
\end{restatable}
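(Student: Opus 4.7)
The plan is to re-use the class $\BadOrderS$ and the sentence $\badphi \defined \forall x. \neg B(x) \vee \varphi_{CC}$ from the proof of \cref{lem:failure:counterexgen}, only replacing the preservation analysis from $\locleq{\infty}{1}{2}$ by one at $\locleq{r}{\infty}{2}$. Non-expressibility of $\badphi$ over $\BadOrderS$ is already given by \cref{lem:ff:counterbads}, and lifting it to $\ModF(\upsigma)$ is handled by \cref{fact:ff:relat}. The case $k \geq 3$ follows from the case $k = 2$ via the inclusion $\locleq{r}{\infty}{k} \subseteq \locleq{r}{\infty}{2}$ of \cref{fact:qo:refinement}, so I focus on $k = 2$.

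First I rewrite the two relevant ingredients with radius $r$ in mind. As observed in the proof of \cref{lem:failure:counterexgen}, $\BadAxiom$ can be written as a conjunction of sentences of the form $\forall x_1 x_2.\, \psi(x_1, x_2)$ where $\psi$ is $1$-local of bounded quantifier rank; consequently $\BadOrderS$ is downwards closed under $\locleq{r}{\infty}{2}$ for every $r \geq 1$. Moreover, over $\BadOrderS$, $\varphi_{CC}$ is equivalent to the existential $1$-local sentence
\[
  \exists x_1, x_2.\, x_1 \neq x_2 \wedge \bigwedge_{i=1,2} \forall y \in \Neighb{}{x_i}{1}.\, \neg S(y, x_i),
\]
because by the axioms of $\BadOrderS$ any $S$-predecessor of an element must lie in its $1$-neighbourhood. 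Hence $\varphi_{CC}$ is preserved under $\locleq{r}{\infty}{2}$.

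The main step, and the key obstacle, is preservation of $\badphi$ itself over $\BadOrderS$ under $\locleq{r}{\infty}{2}$. Given $A \locleq{r}{\infty}{2} B$ in $\BadOrderS$ with $A \models \badphi$, the case $A \models \varphi_{CC}$ is immediate. Otherwise $A$ is a single connected chain $O_{m_0} + \cdots + O_{n_0}$ carrying no $B$-vertex, and the non-trivial situation is when $B$ is also a single chain $O_{m_1} + \cdots + O_{n_1}$ with a hypothetical $B$-vertex somewhere. The difficulty is that the bounded radius $r$ cannot detect a $B$-vertex lying far from every witness pair, so the rank-one trick of \cref{lem:failure:counterexgen} is unavailable. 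I plan to bypass this via a rigidity argument: in a single chain the block sizes are strictly increasing consecutive integers, so the $r$-local $\infty$-rank type of a $2$-tuple sitting in block $O_j$ determines its $r$-ball up to pointed isomorphism, which in turn records the sequence of nearby block sizes $(|O_{j-c}|, \ldots, |O_{j+c}|)$ with $c$ depending only on $r$; this sequence uniquely identifies $j$. The boundary $2$-tuples of $A$ additionally witness ``no $S$-predecessor'' and ``no $S$-successor'', forcing $m_1 = m_0$ and $n_1 = n_0$, so that $B$ has exactly the same block profile as $A$. Applying rigidity at the position $j_0$ of the hypothetical $B$-vertex of $B$: the $2$-tuple type of $A$ at $j_0$ witnesses ``no $B$-vertex in the $r$-ball'', while the only $2$-tuple of $B$ realising the corresponding size profile is the one sitting at $j_0$, whose $r$-ball does contain the $B$-vertex; this contradicts $A \locleq{r}{\infty}{2} B$.

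The lifting to $\ModF(\upsigma)$ then proceeds exactly as in the proof of \cref{lem:failure:counterexgen}: combining \cref{fact:ff:relat} with the downwards closure of $\BadOrderS$ under $\locleq{r}{\infty}{2}$ just established, the sentence $\BadAxiom \Rightarrow \badphi$ is preserved under $\locleq{r}{\infty}{2}$ over $\ModF(\upsigma)$, while its non-equivalence to an existential local sentence is inherited from \cref{lem:ff:counterbads}.
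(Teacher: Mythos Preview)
Your proof is correct and follows essentially the same approach as the paper's: reduce to $r=1,k=2$ (the paper does this explicitly via \cref{fact:qo:refinement}, you keep general $r\geq 1$ which costs nothing), use that $\BadAxiom$ can be written with two outer universals and a $1$-local body to get downward closure of $\BadOrderS$ under $\locleq{r}{\infty}{2}$, and then establish preservation of $\badphi$ over $\BadOrderS$ via the rigidity observation that with $q=\infty$ the local type pins down the ball up to isomorphism, so block sizes---which are pairwise distinct in a single chain---identify positions and force $A=B$. Your treatment of the boundary blocks and of the hypothetical $B$-vertex is a bit more explicit than the paper's, but the argument is the same.
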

\begin{sendappendix}%
    \begin{ifappendix}%
        \arbitraryquantifierrank*%
    \end{ifappendix}%
\begin{proof}
    Using \cref{fact:qo:refinement} it suffices
    to consider the case $r = 1$ and $k = 2$.
    The proof follows the same pattern as \cref{lem:failure:counterexgen},
    we enumerate the axioms from $\BadAxiom$
    and notice that they are of the form $\forall x. \theta(x)$
    where $\theta(x)$ is a $1$-local formula.
    As a consequence,
    sentences preserved under $\locleq{1}{\infty}{2}$
    over $\BadOrderS$ are preserved under $\locleq{1}{\infty}{2}$
    over $\ModF(\upsigma)$, which is a strengthening of
    \cref{fact:ff:relat}.

    Moreover, $\varphi_{CC}$ preserved under $\locleq{1}{\infty}{2}$
    using a simple syntactical analysis.
    We now check that $\badphi$ is preserved under $\locleq{1}{\infty}{2}$
    over $\BadOrderS$.
    Let $A, B \in \BadOrderS$ such that $A \models \badphi$
    and $A \locleq{1}{\infty}{1} B$.

    \begin{itemize}
        \item Let us first examine the case
     where $A$ has a single connected component.

    Let $a \in A$; since $\Neighb{A}{a}{1}$ is finite,
    there exists a $1$-local formula $\psi_a(x)$
    of quantifier rank less than $|\Neighb{A}{a}{1}|+1$
    such that $B, b \models \psi_a(x)$ if and only 
    if $\Neighb{A}{a}{1}$ is isomorphic to $\Neighb{B}{b}{1}$.
    In particular, if $C$ is a 
    $\leq$-component of $A$, it is of radius less than $1$
    and there exits $C'$ a $\leq$-component of $B$
    isomorphic to $C$.
    Moreover, the $1$-neighborhood of a $\leq$-component
    contains the previous and next $\leq$-component for $S$.

    If $B$ has a single connected component, then
    two distinct $\leq$-components in $B$ must have distinct sizes.
    Using the fact that the components of $A$ are all found in $B$
    and that their relative position is preserved,
    this proves that $B$ contains exactly the same $\leq$-components as $A$,
    which
    only happens if $A=B$.

    If $B$ has at least two connected components,
    then it satisfies $\varphi_{CC}$ which implies $\badphi$.
        \item In the case where $A$ has two connected components,
            $A \models \varphi_{CC}$ but then
            $B \models \varphi_{CC}$ and $B \models \badphi$.
    \end{itemize}

    Moreover, we know from \cref{lem:ff:counterbads}
    that $\badphi$ cannot be defined as an existential local sentence
    over $\BadOrderS$.
\end{proof}%
\end{sendappendix}%

\begin{sendappendix}
\begin{ifappendix}
Before studying the case $k = 1$, let us
describe the behaviour of $\locleq{r}{q}{k}$
with respect to disjoint unions. In particular,
we prove that at a fixed $k$, the preorder cannot
distinguish between more than $k$ copies of
the same structure.
\end{ifappendix}
\begin{example}[Disjoint unions]
    \label{ex:loc:disjointunions}
    Let $A, B$ be finite structures.
    For all $0 \leq r,q  \leq \infty$
    and $1 \leq k < \infty$,
    $A \locleq{r}{q}{k} A \uplus B$
    and $\biguplus_{i = 1}^{k + n} A \locleq{r}{q}{k} \biguplus_{i = 1}^{k} A$.
\end{example}
\begin{proof}
    Let us prove that
    $A \locleq{r}{q}{k} A \uplus B$.
    Consider a vector $\vec{a} \in A^k$,
    it is clear that this vector appears as-is in $A \uplus B$
    and since the union is disjoint,
    $\Neighb{A \uplus B}{\vec{a}}{r} = \Neighb{A}{\vec{a}}{r}$.
    In particular, $\NeighbT{A}{\vec{a}}{r}{q} = \NeighbT{A \uplus
    B}{\vec{a}}{r}{q}$ for all $q \geq 0$.

    Let us write $A^{\uplus k} \defined \biguplus_{i = 1}^{k} A$
    and $A^{\uplus k+n} \defined \biguplus_{i = 1}^{k+n} A$.
    Remark that the previous statement shows
    $A^{\uplus k} \locleq{r}{q}{k} A^{\uplus k+n}$.
    Let us prove now $A^{\uplus k+n} \locleq{r}{q}{k} A^{\uplus k}$.
    Consider a vector $\vec{a} \in (A^{\uplus k+n})^k$,
    this vector has elements in at most $k$ copies of $A$,
    hence one can select one copy of $A$ in $A^{\uplus k}$ for each
    of those and consider the exact same elements in those copies.
    As the unions are disjoint, the obtained neighborhoods are
    isomorphic to those in $A^{\uplus k+n}$ and in particular share
    the same local types.
\end{proof}

\end{sendappendix}

\begin{restatable}[Success at $k = 1$]{lemma}{successatone}
    \label{lem:ff:successatone}
    For every $0 \leq r,q \leq \infty$,
    for every sentence
    $\varphi$ preserved under $\locleq{r}{q}{1}$
    over $\ModF(\upsigma)$,
    there exists an existential local sentence $\psi$
    that is equivalent to $\varphi$ over $\ModF(\upsigma)$.
\end{restatable}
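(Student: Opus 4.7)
The plan is to invoke \cref{lem:ccl:existlocalnf} after exhibiting finite parameters $r_0, q_0$ for which $\varphi$ is already preserved under $\locleq{r_0}{q_0}{1}$. By \cref{fact:qo:refinement}, the hypothesis ``preserved under $\locleq{r}{q}{1}$'' becomes weaker as $r$ and $q$ grow, so it will suffice to treat the most general case $r = q = \infty$; I therefore assume throughout that $\varphi$ is preserved under $\locleq{\infty}{\infty}{1}$ over $\ModF(\upsigma)$. Apply the classical Gaifman locality theorem to write $\varphi$ as a Boolean combination of finitely many basic local sentences
\[
\beta_i \defined \exists x_1, \ldots, x_{k_i}.\ \bigwedge_{j \neq j'} d(x_j, x_{j'}) > 2 r_0 \,\wedge\, \bigwedge_{j} \psi_i(x_j),
\]
where each $\psi_i$ is an $r_0$-local formula of quantifier rank at most $q_0$, and set $K \defined 1 + \max_i k_i$.

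The central claim will be that $\varphi$ is preserved under $\locleq{r_0}{q_0}{1}$ on $\ModF(\upsigma)$. To prove it I will fix $A, B \in \ModF(\upsigma)$ with $A \locleq{r_0}{q_0}{1} B$ and $A \models \varphi$, introduce the blown-up structure $C \defined A^{\uplus K} \uplus B^{\uplus K}$ (with $K$ disjoint copies of each), and argue in three steps. \emph{Step (a):} every $(r', q', 1)$-local type realised in $A$ is realised in each of the $K$ copies of $A$ inside $C$, so $\Specter{r'}{q'}{1}(A) \subseteq \Specter{r'}{q'}{1}(C)$ for every $r', q' \in \mathbb{N}$; by \cref{fact:qo:specters} this yields $A \locleq{\infty}{\infty}{1} C$, and preservation gives $C \models \varphi$. \emph{Step (b):} each $\beta_i$ takes the same truth value on $C$ as on $B^{\uplus K}$. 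Indeed, \cref{fact:qo:specters} applied to the hypothesis $A \locleq{r_0}{q_0}{1} B$ yields $\Specter{r_0}{q_0}{1}(C) = \Specter{r_0}{q_0}{1}(A) \cup \Specter{r_0}{q_0}{1}(B) = \Specter{r_0}{q_0}{1}(B^{\uplus K})$, and each realised type is witnessed in $B^{\uplus K}$ (hence in $C$) by at least $K > k_i$ points lying in distinct connected components, which are pairwise at infinite Gaifman distance and therefore act as scattered witnesses for $\beta_i$; hence $C \models \beta_i$ iff some type in $\Specter{r_0}{q_0}{1}(B)$ contains $\psi_i$, iff $B^{\uplus K} \models \beta_i$. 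Consequently $\varphi(C) = \varphi(B^{\uplus K})$ and $B^{\uplus K} \models \varphi$. \emph{Step (c):} the $(r', q', 1)$-local type of any point of $B^{\uplus K}$ is computed inside a single copy of $B$ and coincides with that point's type in $B$, so $B^{\uplus K} \locleq{\infty}{\infty}{1} B$, and preservation finally yields $B \models \varphi$.

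Putting the three steps together, $\varphi$ is preserved under $\locleq{r_0}{q_0}{1}$ with finite parameters, and \cref{lem:ccl:existlocalnf} delivers an equivalent existential local sentence. The main delicacy will be step (b): the preorder $\locleq{r_0}{q_0}{1}$ controls only $(r_0, q_0)$-local types and not the full isomorphism classes of connected components, so $C$ may a priori carry local behaviours inherited from $A$ that are invisible in $B^{\uplus K}$. The step works because, precisely at the radius and rank provided by Gaifman's theorem, this discrepancy has already been absorbed into the spectrum of $B$, and the $K$-fold blow-up furnishes enough scattered representatives of every realised type to saturate the existential requirements of each basic local sentence $\beta_i$.
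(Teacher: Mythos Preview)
Your proof is correct and follows essentially the same strategy as the paper's: reduce to preservation under $\locleq{\infty}{\infty}{1}$, pass to a Gaifman normal form, blow up by disjoint copies so that every realised $(r_0,q_0,1)$-local type has enough scattered witnesses to saturate each basic local sentence, and conclude via \cref{lem:ccl:existlocalnf}. The only cosmetic differences are that the paper uses $A \uplus B^{\uplus k}$ rather than $A^{\uplus K} \uplus B^{\uplus K}$ and argues directly with witnesses instead of through the type-collection $\Specter{r_0}{q_0}{1}$.
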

\begin{sendappendix}%
\begin{ifappendix}%
    \successatone*%
\end{ifappendix}%
\begin{proof}
    Without loss of generality thanks to \cref{fact:qo:refinement}
    we consider a sentence $\varphi$ preserved under
    $\locleq{\infty}{\infty}{1}$.
    Let us prove that $\varphi$ is preserved under $\locleq{r}{q}{k}$
    where all parameters are finite.
    This property combined with \cref{lem:ccl:existlocalnf} will prove that
    $\varphi$ is equivalent to an existential local sentence.

    Let us write $\exists^{\geq k}_r x. \psi(x)$
    as a shorthand for
    $\exists x_1, \dots, x_k. \bigwedge_{1 \leq i \neq j \leq k} d(x_i, x_j) >
    2r \wedge \bigwedge_{1 \leq i \leq k} \psi(x_i)$.
    Thanks to Gaifman's Locality Theorem, we can
    assume that $\varphi$ is a Boolean combination of the following
    basic local sentences for $1 \leq i \leq n$: $\theta_i = \exists^{\geq k_i}_{r_i} x. \psi_i(x)$
    where $\psi_i(x)$ is a $r_i$-local sentence of quantifier rank $q_i$.
    Define $r \defined \max \setof{r_i}{1 \leq i \leq n}$,
    $q \defined \max \setof{q_i}{1 \leq i \leq n}$
    and $k \defined \max \setof{k_i}{1 \leq i \leq n}$.

    Let $A,B$ be two finite structures
    such that $A \models \varphi$ and
    $A \locleq{r}{q}{k} B$. Our goal is to prove that $B \models \varphi$.
    Let us write $B^{\uplus k}$ the disjoint union of $k$ copies of $B$.

    Let us show that $A \uplus B^{\uplus k} \models \varphi$ if and
    only if $B^{\uplus k} \models
    \varphi$. To that end, let us fix $1 \leq i \leq n$ and prove
    that $A \uplus B^{\uplus k} \models \exists^{\geq k_i}_{r_i} x. \psi_i(x)$
    if and only if $B^{\uplus k} \models \exists^{\geq k_i}_{r_i} x. \psi_i(x)$.

    \begin{itemize}

        \item Assume that $A \uplus B^{\uplus k} \models \exists^{\geq k_i}_{r_i} x.
            \psi_i(x)$,
    there exists a vector $\vec{c}$ of witnesses
    of $\psi$
    at pairwise distance
    greater than $2r_i$ in $A \uplus B^{\uplus k}$. 
    If $\vec{c}$ lies in $B^{\uplus k}$ then we conclude, otherwise
    some element of $\vec{c}$ lies in $A$.
    In particular, $A \models \exists x. \psi(x)$. Since $A \locleq{r}{q}{k} B$,
    we know that
    $B \models \exists x. \psi_i(x)$ and
    as $k_i \leq k$, $B^{\uplus k} \models \exists^{\geq k_i}_{r_i} x. \psi_i(x)$.
\item Conversely, whenever $B^{\uplus k} \models \exists^{\geq k_i}_{r_i} x. \psi_i(x)$
    the structure $A \uplus B^{\uplus k}$ satisfies $\exists^{\geq k_i}_{r_i} x. \psi(x)$
    as basic local sentences are preserved under disjoint unions.
    \end{itemize}

    Since $A \locleq{\infty}{\infty}{1} A \uplus B \locleq{\infty}{\infty}{1} A \uplus B^{\uplus k}$
    and $A \models \varphi$, we know that $A \uplus B^{\uplus k} \models \varphi$.
    This implies that $B^{\uplus k} \models \varphi$, and
    we deduce from \cref{ex:loc:disjointunions}
    that $B^{\uplus k}$ is $\locleq{\infty}{\infty}{1}$-equivalent to
    $B$. In particular $B^{\uplus k} \models \varphi$ implies $B \models \varphi$.
\end{proof}
\end{sendappendix}%

We provide in \cref{fig:map-of-all-the-cases}
a panel of the existence 
of an existential local form
for different values for $r,q$ and $k$
over the class of finite structures $\ModF(\upsigma)$,
collecting the results from
\cref{lem:failure:counterexgen,lem:ff:arbitraryqr,lem:ff:successatone},
\cref{lem:ccl:existlocalnf},
and the forthcoming \cref{cor:pthm:localisable}
in the case of induced substructures.

%\begin{table}
    %\begin{tabular}{cccl}
        %\toprule
        %r & q & k & \textbf{existential local form} \\
        %\midrule
        %$< \infty$ & $< \infty$ & $<\infty$ & OK \\
        %$0$ & $\leq \infty$ & $\leq \infty$ & OK \\
        %$\leq \infty$ & $0$ & $\leq \infty$ & OK \\
        %$\infty$ & $1 \leq$ & $2 \leq $ & NOK \\
        %$\infty$ & $2 \leq$ & $1 \leq $ & NOK \\
        %$\infty$ & $1$ & $1$ & ? \\
        %$1 \leq$ & $\infty$ & $1 \leq$ & NOK \\
        %$1 \leq$ & $\infty$ & $1 \leq$ & NOK \\
        %\bottomrule
    %\end{tabular}
%\end{table}

\begin{figure}[htb]
\centering
\def\svgwidth{1\columnwidth}
\import{./figures/}{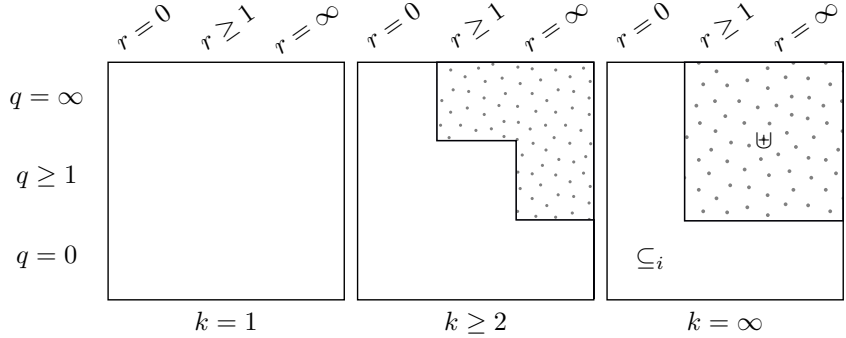}

\vspace{-1em}
\caption{Parameters {\boldmath $(r,q,k)$} leading to an existential normal form
(white), those with a counter example (dots).}
\label{fig:map-of-all-the-cases}
\end{figure}

\section{Localising the \ltrsk}
\label{sec:presthm}Let us recall our proof scheme to handle preservation under extensions
over subclasses of $\ModF(\upsigma)$ in two distinct steps:
\begin{inparaenum}
	\item[(\localisable)]
        $\XClass$ is \emph{localisable}, i.e.
	sentences preserved under extensions over $\XClass$
	are equivalent to existential local sentences $\XClass$,
	\item[(\locreduce)]
        $\XClass$ satisfies \emph{existential local preservation under extensions}, i.e.
        existential local sentences preserved under extensions
        over $\XClass$
        are equivalent to existential sentences over $\XClass$.
\end{inparaenum}
We prove in \cref{sec:presthm:inducedsub}
that step (\localisable) can be done on every hereditary class of finite
structures stable under disjoint unions, generalising the proof
of preservation under extensions over structures of bounded degree
of \citet[Theorem 4.3]{atserias2008preservation}.
Moreover, we show in \cref{sec:presthm:localdecstruct} that (\locreduce) can be done
under very mild assumptions on the class. These two results
are represented as a path from sentences preserved under extensions to existential 
sentences in \cref{fig:intro:diagonal}, which
allows us to
significantly improve known classes of finite structures
enjoying preservation under extensions in \cref{sec:presthm:localwb},
as depicted in \cref{fig:considered_classes}.

\subsection{Localisable Classes (\localisable)}
\label{sec:presthm:inducedsub}

Given a sentence $\varphi$ preserved under extensions,
our goal is to prove that it has an existential local normal form.
As a typical first step in preservation theorems, we will
focus on the $\subseteq_i$-minimal
models of $\varphi$. For this study to make sense
we restrict our attention to classes $\XClass$ that are
preserved under induced substructures, also known as hereditary
classes.

\begin{lemma}[Minimal models]
    \label{lem:induced:locmm}
    Let $\XClass$ be a hereditary class of finite structures.
    A sentence $\varphi$ preserved under extensions over
    $\XClass$ has an equivalent existential local normal form
    if and only if there exists $k\geq 1,r \geq 0$
    such that
    its $\subseteq_i$-minimal models are all found in $\Balls{\XClass}{r}{k}$.
\end{lemma}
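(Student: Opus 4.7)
The plan is to reformulate membership in $\Balls{\XClass}{r}{k}$ as an intrinsic property of the structure, and then prove both implications by direct arguments combining hereditariness with preservation under extensions.

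First I would establish the following key observation: a finite structure $M$ lies in $\Balls{\XClass}{r}{k}$ if and only if there is some $\vec{b} \in M^{\leq k}$ such that $M = \Neighb{M}{\vec{b}}{r}$, when $M \in \XClass$. The nontrivial direction starts from $M = \Neighb{B}{\vec{b}}{r}$ with $B \in \XClass$ and $\vec{b} \in B^{\leq k}$: given $m \in M$, any shortest $B$-path $m = v_0, v_1, \dots, v_\ell = b_i$ with $\ell \leq r$ has every intermediate vertex $v_j$ at $B$-distance at most $r$ from $b_i$, hence $v_j \in M$. Therefore this path lies entirely inside $M$, giving $d_M(m, b_i) = d_B(m, b_i) \leq r$. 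This reduces the ball-membership hypothesis to a property of the minimal models themselves, independently of any ambient structure.

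For the forward direction, assume $\varphi$ is equivalent over $\XClass$ to an existential local sentence $\exists \vec{x}.\,\tau(\vec{x})$ with $\tau$ an $r$-local formula and $|\vec{x}| \leq k$. Given a $\subseteq_i$-minimal model $A \in \XClass$ of $\varphi$, I would pick witnesses $\vec{a} \in A^{\leq k}$ with $A, \vec{a} \models \tau$; by $r$-locality the $r$-neighbourhood $\Neighb{A}{\vec{a}}{r}$ already satisfies $\exists \vec{x}.\,\tau(\vec{x})$, hence $\varphi$. Since $\XClass$ is hereditary, $\Neighb{A}{\vec{a}}{r}\in \XClass$, and $\subseteq_i$-minimality of $A$ forces $A = \Neighb{A}{\vec{a}}{r}$, placing $A$ in $\Balls{\XClass}{r}{k}$.

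For the backward direction, I would show that $\varphi$ is equivalent over $\XClass$ to $\psi \defined \exists x_1\dots x_k.\, \varphi_{\leq r}(\vec{x})$, which is an existential local sentence by construction. The implication $\psi \Rightarrow \varphi$ is immediate: if $\Neighb{A}{\vec{a}}{r} \models \varphi$, then hereditariness puts $\Neighb{A}{\vec{a}}{r}$ in $\XClass$, so preservation under extensions over $\XClass$ yields $A \models \varphi$. For the converse, given $A \in \XClass$ with $A \models \varphi$, I would pick a $\subseteq_i$-minimal $M \subseteq_i A$ satisfying $\varphi$. By the intrinsic characterisation above there is $\vec{b} \in M^{\leq k}$ with $M = \Neighb{M}{\vec{b}}{r}$; since paths in $M$ remain paths in $A$, taking $\vec{a}$ as the image of $\vec{b}$ in $A$ gives $M \subseteq \Neighb{A}{\vec{a}}{r}$. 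Preservation under extensions inside $\XClass$ then promotes $M \models \varphi$ to $\Neighb{A}{\vec{a}}{r} \models \varphi$, so $A \models \psi$.

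The only delicate step is the shortest-path observation behind the intrinsic characterisation; once it is in place, both implications are short and avoid any compactness or finite-axiomatisation detour, which is exactly what makes the lemma work uniformly over all hereditary classes.
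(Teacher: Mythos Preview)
Your proof is correct, and the forward direction coincides with the paper's. The backward direction takes a genuinely different route: the paper shows that $\varphi$ is preserved under $\locleq{r}{qr}{k}$ and then invokes \cref{lem:ccl:existlocalnf} to extract an existential local sentence, whereas you bypass the preorder machinery entirely and exhibit the sentence $\psi = \exists \vec{x}.\,\varphi_{\leq r}(\vec{x})$ directly. Your route is more elementary and yields an explicit, compact formula rather than the large disjunction over local types produced by \cref{lem:ccl:existlocalnf}; the paper's route, on the other hand, makes visible the exact parameters $(r,qr,k)$ for which $\varphi$ is preserved under $\locleq{r}{q}{k}$, which feeds into the broader narrative of \cref{sec:locqo,sec:localpres}. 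It is also worth noting that your ``intrinsic characterisation'' of $\Balls{\XClass}{r}{k}$ (via the shortest-path argument) makes explicit a step that the paper's proof uses implicitly when it writes ``let $\vec{a}\in A^k$ be the centers of the balls of radius $r$ in $A$ that contain $A_0$'': this sentence quietly assumes that the ball centres witnessing $A_0\in\Balls{\XClass}{r}{k}$ can be taken inside $A_0$ itself, which is exactly what your shortest-path observation establishes. One small caveat: for signatures with relations of arity $\geq 3$, concluding $d_M(m,b_i)\leq r$ from ``all vertices of the shortest $B$-path lie in $M$'' also requires checking that the Gaifman edges along the path survive in $M$; this holds because any element sharing a relation tuple with $v_{j+1}$ is at $B$-distance at most $\ell-j\leq r$ from $b_i$, but you may want to spell it out.
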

\begin{proof}[Proof of $\Rightarrow$]
%    ~~\\\vspace{-1em}
%    \begin{description}
%        \item[If it has an existential local normal form]
            If $\varphi \equiv \exists \vec{x}. \tau(\vec{x})$ over
        $\XClass$ where $\tau(x)$ is a $r$-local formula, then a
        minimal model $A\in\XClass$ of $\varphi$ necessarily contains
        a vector $\vec{a}$ such that
        $\Neighb{A}{\vec{a}}{r}, \vec{a} \models \tau$. This shows
        that $\Neighb{A}{\vec{a}}{r} \models \varphi$
        where $\Neighb{A}{\vec{a}}{r}\subseteq_i A$ by definition.
        Since $\XClass$ is hereditary,
        $\Neighb{A}{\vec{a}}{r}\in\XClass$, thus $A
        = \Neighb{A}{\vec{a}}{r}$ by minimality and
        $A \in \Balls{\XClass}{r}{|\vec{x}|}$.
\end{proof}\begin{proof}[Proof of $\Leftarrow$]
%        \item[If minimal models are controlled]
        Assume that the minimal models
        of $\varphi$ are all found in $\Balls{\XClass}{r}{k}$.
        Let $q$ be the quantifier rank of $\varphi$.
        We are going to show that $\varphi$
        is preserved under
        $\locleq{r}{qr}{k}$ over~$\XClass$ and deduce
        by \cref{lem:ccl:existlocalnf}
        that $\varphi$ has an equivalent existential local
        sentence over~$\XClass$.

        Let $A \models \varphi$ and $A \locleq{r}{qr}{k} B$.  Since
        $A \models \varphi$, there exists a minimal model
        $A_0 \in \Balls{\XClass}{r}{k}$ with $A_0\subseteq_i A$. Let
        $\vec{a} \in A^k$ be the centers of the balls of
        radius $r$ in $A$ that contain $A_0$.  Since
        $A \locleq{r}{qr}{k} B$ there exists a vector $\vec{b} \in B^k$
        such that $\NeighbT{A}{\vec{a}}{r}{qr}
        = \NeighbT{B}{\vec{b}}{r}{qr}$.

        Notice that $A_0 \subseteq_i \Neighb{A}{\vec{a}}{r}$, hence
        $\Neighb{A}{\vec{a}}{r} \models \varphi$ since $\varphi$ is
        preserved under extensions.  Thus, $\Neighb{A}{\vec{a}}{r},\vec
        a\models \varphi_{\leq r}(\vec x)$ where $\varphi_{\leq
        r}(\vec x)$ is an $r$-local formula around its $k$
        variables~$\vec x$ with quantifier rank $qr$. This shows that
        $\Neighb{B}{\vec{b}}{r},\vec b \models \varphi_{\leq r}(\vec
        x)$ using the equivalence of types up to quantifier rank $qr$.
        To conclude, observe that this entails
        $\Neighb{B}{\vec{b}}{r}\models\varphi$, hence
        $B \models \varphi$ since $\Neighb{B}{\vec{b}}{r} \subseteq_i
        B$ and $\varphi$ is preserved
        under extensions.
%    \end{description}
\end{proof}

The proof of preservation under extensions
over some specific classes
provided by \citet[Theorem 4.3]{atserias2008preservation}
is done by contradiction, using the fact that
minimal structures that are large enough
must contain large scattered sets of points.
Forgetting about the size of the structure,
this actually proves that minimal models
do not have large scattered sets, hence are
in some $\Balls{\XClass}{r}{k}$ for well-chosen parameters.
\begin{ifappendix}
See \cref{app:sec:presthm:inducedsub} for a proof of the following
variant of \citeauthor{atserias2008preservation}'s result.
\end{ifappendix}
\begin{restatable}[Minimal models]{lemma}{inducedminimalmodels}
    \label{lem:failure:inducedminimalmodels}
    Let $\XClass \subseteq \ModF(\upsigma)$
    be a hereditary class of finite structures
    closed under disjoint unions
    and $\varphi \in \FO[\upsigma]$ be
    a sentence
    preserved under $\subseteq_i$ over $\XClass$.
    There exist $R,K$ 
    such that the minimal models of $\varphi$
    are in $\Balls{\XClass}{R}{K}$.
\end{restatable}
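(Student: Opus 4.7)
The plan is to adapt the Gaifman-Hanf-style strategy used by \citeauthor{atserias2008preservation} for the bounded-degree case, replacing the bounded-degree hypothesis by the finiteness of the number of local types. First, I put $\varphi$ in Gaifman normal form so that $\varphi$ becomes a Boolean combination of basic local sentences, all of whose parameters are bounded by some triple $(r,q,m)$: a common locality radius, a bound on the quantifier rank of the one-variable local formulas, and a bound on the number of scattered witnesses. Let $T$ denote the (finite) number of $(r,q)$-local $1$-types over $\upsigma$ up to logical equivalence, and set $R$ to a small multiple of $r$ (say $R \defined 4r$) and $K \defined m \cdot T$.

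Given a minimal model $A \models \varphi$ with $A \notin \Balls{\XClass}{R}{K}$, the goal is to derive a contradiction. A greedy packing argument produces $K+1$ points of $A$ at pairwise distance strictly greater than $2R$; then the pigeonhole principle applied to $(r,q)$-local $1$-types yields $m+1$ such points $b_1, \dots, b_{m+1}$ all sharing the same type~$t^\ast$. The plan is then to remove $\Neighb{A}{b_{m+1}}{2r}$ from $A$ to obtain a proper induced substructure $A' \subsetneq A$, which still lies in $\XClass$ by hereditariness, and to argue that $A' \models \varphi$, contradicting the minimality of~$A$.

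The heart of the argument will be showing that $A$ and $A'$ satisfy the same basic local sentences of parameters $(r, q, \ell)$ for $\ell \leq m$. The easy direction is $A' \models \theta \Rightarrow A \models \theta$: witnesses in $A'$ are also witnesses in $A$, with their $r$-neighborhoods and hence their local types preserved (distances in the induced substructure can only grow). For the converse direction, given $\ell \leq m$ scattered witnesses of $\theta$ in $A$, one reshuffles them using the remaining representatives $b_1, \dots, b_m$ of type $t^\ast$: since only $\ell$ of the $m+1$ points $b_i$ can lie within distance $3r$ of the original witnesses, at least one is free to play the role of the witness that was lost inside $\Neighb{A}{b_{m+1}}{2r}$.

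The main obstacle is exactly this reshuffling step, because the $b_i$'s only share the formal type $t^\ast$ among themselves, not necessarily with the actual witnesses of~$\theta$. If the local formula $\psi$ of $\theta$ does not belong to $t^\ast$, a direct swap is not possible and a more delicate argument is needed. The backup plan is to choose, for each basic local sentence $\theta$ appearing in the Gaifman normal form of $\varphi$, a distinct $b_{i_\theta}$ whose ball is to be removed, and to use closure of $\XClass$ under disjoint unions with finite structures to re-add disjoint copies of the removed balls so as to compensate for any disturbance of the multiset of local types of boundary points. The parameters $R$ and $K$ may need to be inflated accordingly, but they remain absolute constants depending only on~$\varphi$, which is exactly what the statement requires.
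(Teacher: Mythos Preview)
Your argument has a genuine gap at the step you call the ``easy direction''. You claim that if $A' \models \theta$ for a basic local sentence $\theta = \exists^{\geq \ell}_r x.\,\psi(x)$, then $A \models \theta$, because ``witnesses in $A'$ are also witnesses in $A$, with their $r$-neighbourhoods and hence their local types preserved''. This is false. If $c \in A'$ lies in the annulus $2r < d_A(c,b_{m+1}) \leq 3r$, then some points of $\Neighb{A}{c}{r}$ were removed (they lie in $\Neighb{A}{b_{m+1}}{2r}$), so $\Neighb{A'}{c}{r} \subsetneq \Neighb{A}{c}{r}$ and the $r$-local type of $c$ changes when passing from $A$ to $A'$. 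Such a boundary point may well satisfy $\psi$ in $A'$ but not in $A$. Consequently, for basic local sentences appearing \emph{negatively} in the Gaifman normal form of $\varphi$, one can have $A \models \neg\theta$ yet $A' \models \theta$, and the equivalence you need collapses. Your backup plan of re-adding disjoint copies of removed balls does not repair this: the boundary points of $A'$ still have truncated neighbourhoods regardless of what is added disjointly elsewhere.

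This is precisely the obstruction the paper singles out before its proof, and it is why the paper's argument leaves first-order logic and works with $\MSO$ local types instead. The point is that the formula $\exists X.\,\psi^X(x)$ (relativise $\psi$ to a set variable $X$) is an $\MSO$ $r$-local formula, and a boundary point $c$ of $A'$ satisfying $\psi$ in $A'$ corresponds exactly to $c$ satisfying $\exists X.\,\psi^X(x)$ in $A$; so the $\MSO$ $(r,q{+}1)$-type of $c$ in $A$ already records all possible behaviours of $c$ under restriction to substructures. The paper then runs a type-covering argument (rare versus frequent $\MSO$ types) and uses closure under disjoint unions not to ``re-add removed balls'' but to manufacture, inside $B$, a carefully pruned piece $E'$ in which the frequent types are realised with the right restrictions. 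Your first-order pigeonhole on $1$-types is not fine enough to control these boundary effects; you would need either the $\MSO$ detour or a substantially different combinatorial argument that explicitly handles the annulus around the deleted ball.
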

\begin{sendappendix}
We provide here a self-contained proof
of \citet[Theorem 4.3]{atserias2008preservation}
tailored to the study of local neighborhoods $\Balls{\XClass}{r}{k}$.
\begin{ifappendix}
\inducedminimalmodels*
\end{ifappendix}
To prove \cref{lem:failure:inducedminimalmodels}, a first attempt
is to
consider a formula $\varphi$ in Gaifman normal form,
and a minimal model $A$ of $\varphi$ with respect to $\subseteq_i$.
The structure $A$ models a conjunction of (potential negations of)
sentences
of the form $\exists^{\geq k}_r x. \psi(x)$,
which we use as a shorthand
for
$\exists x_1, \dots, x_k. \bigwedge_{1 \leq i \neq j \leq k} d(x_i, x_j) >
2r \wedge \bigwedge_{1 \leq i \leq k} \psi(x_i)$.
Considering all basic local sentences that appear positively in the conjunction,
one can build a vector $\vec{a} \in A$ containing the witnesses of the
existential quantifications.
The main hope being that $\Neighb{A}{\vec{a}}{r} \models \varphi$, as
this would imply, by minimality of $A$, that $A = \Neighb{A}{\vec{a}}{r}$.
By letting $R = r$ and $K = |\vec{a}|$, which is bounded
independently of $A$, we would conclude.

Unfortunately the structure $\Neighb{A}{\vec{a}}{r}$ does not
satisfy $\varphi$ in general.
The crucial issue comes
from \emph{intersections}
of neighborhoods: there are new neighborhoods appearing in
$\Neighb{A}{\vec{a}}{r}$ as the intersection of the neighborhood 
of $\vec{a}$ with the neighborhood of a point inside $\Neighb{A}{\vec{a}}{r}$.
This is problematic because $\varphi$ contains basic local sentences that
appear negatively, and the fact that $A$ does not contain some local behaviour
does not transport to $\Neighb{A}{\vec{a}}{r}$.

To tackle this issue, we temporarily leave the realm of first-order
logic and consider $\MSO$ local types,
written $\NeighbM{A}{\vec{a}}{r}{q}$. There are
finitely many $\MSO$ local types up to
logical equivalence at a
given quantifier rank and locality radius.
We update our \emph{type-collector} function accordingly through
\begin{equation}
    \SpecterM{r}{q}{k}(A)
\defined \setof{
\NeighbM{A}{\vec{a}}{r}{q}
}{\vec{a} \in A^k} \, .
\end{equation}

As for first-order local types, $\MSO$ local types
are enough to characterise the natural preorder associated
to \emph{existential local $\MSO$-sentences}, that is, sentences
of the form $\exists \vec{x}. \theta(\vec{x})$, where
$\theta(\vec{x})$ is an $\MSO$ formula $r$-local around $\vec{x}$.
Before going through the proof of~\cref{lem:failure:inducedminimalmodels}
we translate the main properties of local types and existential local
sentences to $\MSO$-local types and existential local $\MSO$-sentences.

\begin{fact}
    \label{fact:pres:mso:specter1}
For all structures $A,B$ such that $\SpecterM{r}{q}{k}(B)$
contains $\SpecterM{r}{q}{k}(A)$,
for all $r$-local $\MSO$ formula $\theta(\vec{x})$ of quantifier rank
less than $q$,
$A \models \exists \vec{x}. \theta(x)$ implies $B \models \exists \vec{x}.
\theta(x)$.
\end{fact}

\begin{fact}
    \label{fact:pres:mso:specter2}
Let $A,B$ be structures such that 
for all $r$-local $\MSO$ formula $\theta(\vec{x})$ of quantifier rank
less than $q$,
$A \models \exists \vec{x}. \theta(x)$ implies $B \models \exists \vec{x}.
\theta(x)$. Then $\SpecterM{r}{q}{k}(B)$
contains $\SpecterM{r}{q}{k}(A)$.
\end{fact}

\begin{lemma}[Local normal form adapted to $\MSO$]
    \label{lem:pres:mso:existlocalnf}
    Let $\XClass$ be a class of finite structures.
    Let $\varphi$ be a sentence preserved over $\XClass$ under $\MSO$-local types with
    finite parameters $(r,k,q)$.
    Then $\varphi$ is equivalent over $\XClass$
    to an existential local $\MSO$ sentence $\psi = 
    \exists \vec{x}. \theta(\vec{x})$
    where $\theta(\vec{x})$ is an $r$-local $\MSO$-formula.
\end{lemma}
\begin{proof}
    It is the same proof as~\cref{lem:ccl:existlocalnf}
    when replacing $\Specter{r}{q}{k}$ with $\SpecterM{r}{q}{k}$,
    leveraging
    \cref{fact:pres:mso:specter1,fact:pres:mso:specter2}.
\end{proof}

An $\MSO$ $(q,r)$-type $t$ with a single free variable
is $R$-\emph{covered} by a subset $C$
if for all realisations $a$ of $t$ in $A$
the $r$-neighborhood $\Neighb{A}{a}{r}$
is included in
$\Neighb{A}{C}{R}$.
The following lemma is a uniform version
of the one given by \citet*[][Lemma 8]{dawar2006approximation}.
It can be thought as a generalisation of
the technique from~\cref{lem:ccl:extentedcover}
to describe the spatial repartition of points of interest
in a structure.

\begin{lemma}[Type covering]
    \label{lem:pres:mso:typecover}
    For all $r,q,k \geq 0$,
    there exists $K_m$ and $R_m$ such that
    for all structures $A \in \Mod(\upsigma)$
    there exists $r \leq R \leq R_m$,
    a subset $C^A \subseteq A$
    and a subset $G^A \subseteq A$
    satisfying the following properties
    \begin{enumerate}[(i)]
        \item $|C^A| \leq K_m$.
        \item $|G^A| \leq K_m$.
        \item Elements of $G^A$ are at pairwise distance greater than $2R$
            and at distance greater than $2R$ of $C^A$.
        \item For every $a \in A$,
            either $\NeighbM{A}{a}{r}{q}$ is $R$-covered
            by $C^A$
            or there exists $k$ elements $b_1, \dots, b_k$
            in $G^A$
            such that
            $\NeighbM{A}{a}{r}{q} = \NeighbM{A}{b_i}{r}{q}$.
    \end{enumerate}
\end{lemma}
\begin{proof}
    Let $r,q,k \geq 0$ be natural numbers and
    consider $Q$ the number of different $\MSO$-local types
    at radius $r$ and quantifier rank $q$ around $1$ variable.
    Define $K_m = Q \times Q \times k$ and $R_m = 3^{Q+1} r$.

    Let $A$ be a structure and consider
    $T \defined \SpecterM{r}{q}{1}(A)$.
    By definition, $|T| \leq Q$.
    We construct by induction sets $S_i \subseteq T$
    and $C_i \subseteq A$ such that
    every type of $S_i$ is $3^i r$-covered by $C_i$
    and
    $|C_i| \leq K_m$.
    Let $S_0 \defined \emptyset$, $C_0 \defined \emptyset$.
    Assume that $S_i$ and $C_i$
    have been defined and that the following holds:
    \begin{align*}
        \forall G \subseteq A,
        &\, \exists u, v \in G, d_A(u,v) \leq 2 \times 3^i r \\
        &\vee  \exists u \in G, d_A(u,C_i) \leq 2 \times 3^i r \\
        &\vee  \exists t \in T \setminus S_i,
        \neg \left(\exists u_1 \neq u_2 \neq \dots \neq u_k \in G, \bigwedge_{1 \leq i \leq k}
        \NeighbT{A}{u_i}{r}{q} = t\right)
    \end{align*}

    We enumerate types in $T \setminus S_i$ in a sequence
    $(t_p)_{1 \leq p \leq |T \setminus S_i|}$.
    Using this sequence, we construct
    iteratively a set $G_i^j$ of size at most $Q \times k$
    such that points of $G_i^j$ are at pairwise distance greater than
    $2 \times 3^i r$ and at distance greater than $2 \times 3^i r$  from $C_i$
    as follows. Let $G_i^0 \defined \emptyset$,
    and construct $G_i^{j+1}$ by selecting the first type $t_p$
    that has less than $k$ witnesses in $G_i^{j}$, this is possible
    by the assumption made on $S_i$ and $C_i$. 
    If there exists a point $a \in A$ at distance greater than $2 \times 3^i r$
    from $G_i^j$ and $C_i$ and of type $t_p$,
    we can add it to $G_i^j$ to build $G_i^{j+1}$.
    Otherwise, every choice of point $a \in A$ of type $t_p$
    is at distance at most $2 \times 3^i r$ from $C_i \cup G_i^j$,
    in this case let $C_{i+1} = C_i \cup G_i^j$ and $S_{i+1} = S_i \cup \{ t_p
    \}$, the hypothesis on $S_i$ is that
    every type in $S_i$ is $3^{i} r$-covered by $C_i$
    and we showed that $t_p$ was $3^{i+1} r$-covered by $C_i \cup G_i^j$.

    The set $S_i$ is strictly increasing and of size bounded by $Q$,
    as a consequence, there exists a step $i \leq Q$ such that one cannot
    build $S_{i+1}$ and $C_{i+1}$.
    By definition, this means that one can build a subset $G \subseteq A$
    such that 
    \begin{align*}
        &\forall u, v \in G, d_A(u,v) > 2 \times 3^i r \\
        &\wedge  \forall u \in G, d_A(u,C_i) > 2 \times 3^i r \\
        &\wedge  \forall t \in T \setminus S_i,
        \exists u_1 \neq u_2 \neq \dots \neq u_k \in G, \bigwedge_{1 \leq i \leq k}
        \NeighbT{A}{u_i}{r}{q} = t
    \end{align*}

    By extracting only $k$ elements per type in $T \setminus S_i$
    from such a set $G$,
    we can construct that $G^A$ of size at most $Q \times k \leq K_m$.
    Let $C^A \defined C_i$ and $R = 3^i r$. The only thing left to check
    is the size of $C^A$, which is below $Q \times Q \times k$ since
    we do at most $Q$ steps and each step adds at most $Q \times k$ elements
    to $C_i$.
\end{proof}

%\begin{lemma}[\protect{\citet*[][Lemma 8]{dawar2006approximation}}]
    %\label{lem:failure:spatialdistrib}
    %For all $q,r,k \geq 0$
    %there exists $\hat{K} \geq k$
    %and $\hat{R} \geq r$
    %such that for all structures
    %$A \in \Mod(\upsigma)$
    %there exists $k \leq K \leq \hat{K}$,
    %$r \leq R \leq \hat{R}$
    %and $C^A \subseteq A$
    %such that for $\kappa = 2^{K^2 - 1}$
    %the following properties are satisfied:
    %\begin{enumerate}
        %\item $|C^A| \leq K$;
        %\item Elements of $C^A$ are pairwise
            %distant of at least $\kappa \times 10 R$;
        %\item Each $(r,q)$-$\MSO$ local type realised in $A$
            %is either $R$-covered by $C^A$
            %or $(\kappa 10 R, 10K)$-free over $C^A$.
    %\end{enumerate}
%\end{lemma}

Following the terminology used
by \citeauthor*{dawar2006approximation}
given a subset $C^A$ obtained through
the preceding lemma, types covered by $C^A$
will be called \emph{rare}
and those obtained in $G^A$ will be called
\emph{frequent}.

\begin{figure}[ht]
    \centering
\def\svgwidth{1\columnwidth}
\import{./figures/}{mso_construction.pdf_tex}

    \caption{Construction of $\Neighb{A}{C^A}{5r} \uplus E'$ using $\MSO$-types.}
    \label{fig:mso_construction}
\end{figure}

\begin{lemma}[Preservation under $\MSO$ types]
    \label{lem:failure:presmso}
    Let $\XClass \subseteq \ModF(\upsigma)$
    that is hereditary and closed under
    disjoint unions
    and $\varphi \in \FO[\upsigma]$
    a sentence
    preserved under $\subseteq_i$ over $\XClass$.
    There exists $R,Q,K$ such that for all $A, B \in \XClass$,
    $A \models \varphi$ and $\SpecterM{R}{Q}{K}(A)
    \subseteq \SpecterM{R}{Q}{K}(B)$
    implies $B \models \varphi$.
\end{lemma}
\begin{proof}
    Consider a sentence $\varphi$
    preserved under $\subseteq_i$ over $\XClass$.
    As a first step, we 
    write $\varphi$ in Gaifman normal form
    and collect $\theta_1, \dots, \theta_l$
    the basic local sentences appearing in this normal form.
    The sentences $\theta_i$ are
    of the form $\exists^{\geq k_i}_{r_i} x. \psi_i(x)$
    where $\psi_i$ is an $r_i$-local formula around $x$
    of quantifier rank $q_i$.
    Recall that we use $\exists^{\geq k_i}_{r_i} x. \psi_i(x)$
    as a shorthand
    for
    $\exists x_1, \dots, x_{k_i}. \bigwedge_{1 \leq p \neq q \leq k_i} d(x_p, x_q) >
    2r_i \wedge \bigwedge_{1 \leq p \leq k} \psi_i(x_p)$.

    Let $r$ be the maximal locality radius of these
    sentences, $q$ their maximal quantifier rank
    and $k$ the maximal number of external existential
    quantifications.
    We use \cref{lem:pres:mso:typecover}
    over the tuple $(2r, 2 \times k \times l, q + 1)$ to obtain
    numbers $2r \leq R_m$ and $k \leq K_m$.
    Define $K = 2 K_m$, $R = 2R_m$ and $Q = 2R_m + k + q + 1 + \max \rk \theta_i$.
    Our goal is to prove that
    $\varphi$ is preserved under $\MSO$-local types with
    parameters $(R,Q,K)$.

    Let us now consider $A \models \varphi$.
    We build
    $C^A$ and $G^A$ as provided
    by \cref{lem:pres:mso:typecover}
    over the tuple $(2r,2k, q + 1)$. Without loss of generality
    assume that the radius given is $R_m$ (the largest possible one).
    The sets $C^A$ and $G^A$ are of size below $K_m$.
    We call $I_f$ the set of indices $1 \leq i \leq l$
    such that $\psi_i(x)$ is in a type represented by $G^A$.
    We call $I_m$ the set of indices $1 \leq i \leq l$
    such that $\exists X. \psi_i^X(x)$
    is in a type represented by $G^A$, where
    $\psi_i^X(x)$ is the relativisation to the set variable $X$
    of $\psi_i$.

    Let $B$ be a structure such that
    $\SpecterM{R}{Q}{K}(A)$
    is contained in
    $\SpecterM{R}{Q}{K}(B)$.
    By definition, there exists
    sets $C^B$ and $G^B$
    in $B$ such that
    \begin{equation}
        \NeighbM{A}{C^A G^A}{R}{Q}
        =
        \NeighbM{B}{C^B G^B}{R}{Q}
        \, .
    \end{equation}

    As $Q$ is large enough
    to check distances up to $R$,
    the distance between two elements in $G^B$
    is greater than $2R$,
    and the distance between one element of $G^B$
    and one element of $C^B$ is greater than $2R$.

    Let us define
    $E \defined B \setminus \Neighb{B}{C^B}{R}$.
    Notice that $\Neighb{B}{G^B}{r} \subseteq E$
    because $2r \leq R_m \leq 2R_m = R$ and the distance between $E$ and $C^B$
    is greater than $2R_m$.
    Let $i \in I_m$, one can choose $k$ elements in $G^B$
    such that $B, b \models \exists X. \psi_i^X(x)$,
    let us call this vector $\vec{b}_i^m$.
    Let $i \in I_f$, one can choose $k$ elements in $G^B$
    such that $B, b \models \psi_i^X(x)$,
    let us call this vector $\vec{b}_i^f$.
    Without loss of generality since types in $G^B$ have
    $2 \times k \times l$ witnesses, we can assume that
    none of these vectors share elements. 
    Let $i \in I_m$ and $b \in \vec{b}_i^m$,
    there exists $b \in F_b \subseteq \Neighb{B}{b}{r}$
    such that $\Neighb{B}{b}{r} \cap F_b, b \models \psi_i(x)$.
    Let us build $E'$ as the structure $E$ where the complementaries
    of the sets $F_b$ have been removed, i.e.
    $E' \defined E \setminus \bigcup_{i \in I_m} \bigcup_{b \in \vec{b}_i^m}
    \Neighb{B}{b}{r} \setminus F_b$.

    We assert that for every $1 \leq i \leq l$
    the following properties are equivalent
    \begin{enumerate}[(i)]
        \item $A \uplus E' \models \theta_i$,
        \item $\Neighb{A}{C^A}{R_m} \uplus E' \models \theta_i$,
        \item $\Neighb{B}{C^B}{R_m} \uplus E' \models \theta_i$.
    \end{enumerate}

    Since $\NeighbM{A}{C^A}{R}{Q} = \NeighbM{B}{C^B}{R}{Q}$
    and $r \leq R_m \leq R$, it is clear that the local types
    $\NeighbM{A}{C^A}{R_m}{Q}$ and $\NeighbM{B}{C^B}{R_m}{Q}$
    are equal.
    Therefore,
    $\Neighb{A}{C^A}{R_m} \uplus E'$ and
    $\Neighb{B}{C^B}{R_m} \uplus E'$ satisfy the same first-order
    sentences at quantifier rank less than $Q$.
    As $Q \geq \rk \theta_i$ for $1 \leq i \leq l$, this proves the equivalence
    between $(iii)$ and $(ii)$.
    Let us now prove that $(i)$ is equivalent to $(ii)$.
    \begin{itemize}
        \item Assume $A \uplus E' \models \theta_i$.
            \begin{itemize}
                \item If all witnesses $a$ of $\theta_i$
                    are such that $\Neighb{A}{a}{r} \subseteq
                    \Neighb{A}{C^A}{R_m}$
                    or $a \in E'$ then we conclude.
                \item If some witness $a$ of $\theta_i$
                    is in $A$ but $\Neighb{A}{a}{r} \not \subseteq
                    \Neighb{A}{C^A}{R_m}$. By definition of $C^A$ the type
                    of $a$ is found at least $2 kl$ times in $G^A$.
                    As a consequence, the vector $\vec{b}_i^f$
                    of size $k$ is found in $G^B$ such that
                    $\Neighb{B}{\vec{b}_i^f}{r} \subseteq E'$
                    and
                    we have proven that $E' \models \theta_i$.
                    In turn, this implies $\Neighb{A}{C^A}{r_1} \uplus E' \models
                    \theta_i$.
            \end{itemize}
        \item Assume that $\Neighb{A}{C^A}{r_1} \uplus E' \models \theta_i$.
            \begin{itemize}
                \item If all witnesses $a$ of $\theta_i$
                    are such that $\Neighb{A}{a}{r} \subseteq
                    \Neighb{A}{C^A}{R_m}$
                    or $a \in E'$ then we conclude.
                \item If some witness $a$
                    is found in $\Neighb{A}{C^A}{R_m}$
                    but $\Neighb{A}{a}{r}$ is not included in
                    $\Neighb{A}{C^A}{R_m}$,
                    then the type of $a$ in $A$ is found at least
                    $2kl$ times in $G^A$.
                    Notice that $A, a \models \exists X. \psi_i^X(x)$,
                    where $X$ can be chosen as $\Neighb{A}{a}{r} \cap
                    \Neighb{A}{C^A}{R_m}$.

                    Let $b \in \vec{b}_i^m$,
                    $E', b \models \psi_i(x)$ because
                    $\Neighb{B}{b}{r}\cap F_b \models \psi_i(x)$
                    and $\Neighb{E'}{b}{r}$ equals $\Neighb{B}{b}{r} \cap F_b$.
                    As the elements of $\vec{b}_i^m$ are at pairwise distance
                    greater than $2R_m \geq 2r$.

                    We have proven that $E' \models \theta_i$,
                    hence, that $A \uplus E' \models \theta_i$.
            \end{itemize}
    \end{itemize}

    The assumption that $A \models \varphi$, which is closed under
    disjoint unions, 
    directly leads to $A \uplus E' \models \varphi$. Moreover, the equivalences
    above
    assert that $\Neighb{B}{C^B}{R_m} \uplus E' \models \varphi$.
    As $\Neighb{B}{C^B}{R_m} \uplus E' \subseteq_i B$
    and since $\varphi$ is preserved under $\subseteq_i$
    we conclude that $B \models \varphi$.

\end{proof}
We finally are able to prove that sentences preserved under extensions
have their minimal models in $\Balls{\XClass}{r}{k}$ for some $r,k \geq 0$
whenever $\XClass$ is hereditary and stable under disjoint unions.
\cref{lem:failure:inducedminimalmodels}.
    \inducedminimalmodels*
\begin{proof}
    Applying  \cref{lem:failure:presmso}
    provides $R,Q,K$
    such that $\varphi$ is preserved
    over $\XClass$ under $(R,Q,K)$ $\MSO$ types.
    Using \cref{lem:pres:mso:existlocalnf},
    $\varphi$ is equivalent over $\XClass$ to a sentence
    $\psi = \exists \vec{x}. \theta(\vec{x})$
    where $\theta(\vec{x})$ is a $r$-local $\MSO$-formula.

    Consider $A$ a minimal model of $\psi$ with respect to $\subseteq_i$,
    in $\XClass$. As $A \models \psi$, there exists a vector $\vec{a} \in
    A^{|\vec{x}|}$
    such that $A, \vec{a} \models \theta(\vec{x})$,
    and since $\theta$ is $r$-local, this
    proves that $\Neighb{A}{\vec{a}}{r}, \vec{a} \models \theta(\vec{x})$.
    As a consequence, $\Neighb{A}{\vec{a}}{r} \models \psi$
    and lies in $\XClass$ at is it hereditary.
    The minimality of $A$ proves that
    $A = \Neighb{A}{\vec{a}}{r}$, thus, that
    $A \in \Balls{\XClass}{r}{|\vec{x}|}$ which is independent of $A$.
\end{proof}
\end{sendappendix}

\begin{theorem}[Localisable Classes \localisable]
    \label{lem:failure:inducedpres}
    If $\XClass \subseteq \ModF(\upsigma)$
    is hereditary and closed under disjoint
    unions then it is localisable.
\end{theorem}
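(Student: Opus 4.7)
The plan is to combine the two preceding lemmas, namely \cref{lem:induced:locmm} and \cref{lem:failure:inducedminimalmodels}, which together essentially give the result. Concretely, I would take an arbitrary sentence $\varphi \in \FO[\upsigma]$ preserved under extensions over $\XClass$, and show that its $\subseteq_i$-minimal models in $\XClass$ are all contained in $\Balls{\XClass}{R}{K}$ for some fixed $R, K$ depending only on $\varphi$; an appeal to \cref{lem:induced:locmm} then yields an existential local sentence equivalent to $\varphi$ over $\XClass$.

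The first step is to observe that the hypothesis of \cref{lem:failure:inducedminimalmodels} is exactly the hypothesis of the theorem: $\XClass$ is hereditary and closed under disjoint unions, and $\varphi$ is preserved under $\subseteq_i$. Thus \cref{lem:failure:inducedminimalmodels} directly furnishes radii $R, K \in \mathbb{N}$ such that every minimal model of $\varphi$ in $\XClass$ lies in $\Balls{\XClass}{R}{K}$. This is the real engine of the argument, where the combinatorial work on $\MSO$ local types, type covering, and frequent vs.\ rare types was already carried out.

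The second step is a direct application of the ``if'' direction of \cref{lem:induced:locmm}: since $\XClass$ is hereditary, $\varphi$ is preserved under extensions over $\XClass$, and its minimal models lie in $\Balls{\XClass}{R}{K}$, this lemma immediately produces an existential local sentence equivalent to $\varphi$ over $\XClass$. By definition of ``localisable'', this is exactly what we need. Since $\varphi$ was arbitrary, $\XClass$ is localisable.

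The only potential obstacle is ensuring that the hypotheses of the two intermediate lemmas line up without extra assumptions, but this is transparent: hereditariness is shared, closure under disjoint unions is used only in \cref{lem:failure:inducedminimalmodels}, and preservation under extensions is exactly preservation under $\subseteq_i$. No additional combinatorial work is required at this final step; the proof is a two-line composition of the prior results.
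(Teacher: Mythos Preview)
Your proposal is correct and follows exactly the same route as the paper: apply \cref{lem:failure:inducedminimalmodels} to place the $\subseteq_i$-minimal models of $\varphi$ inside some $\Balls{\XClass}{R}{K}$, then invoke the $\Leftarrow$ direction of \cref{lem:induced:locmm} to obtain an equivalent existential local sentence. The paper's proof is essentially the two-line composition you describe (its opening phrase ``preserved under local elementary embeddings'' is a slip for ``preserved under extensions'', as the invoked lemma requires preservation under $\subseteq_i$).
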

\begin{sendappendix}
    \begin{proof}
        Consider a sentence $\varphi$ preserved under local elementary
        embeddings over $\XClass$. 
        Using \cref{lem:failure:inducedminimalmodels} its minimal models
        are in some $\Balls{\XClass}{r}{k}$,
        and using \cref{lem:induced:locmm}
        this provides an equivalent existential local sentence over $\XClass$.
    \end{proof}
\end{sendappendix}

\begin{corollary}
    \label{cor:pthm:localisable}
    A sentence $\varphi$ preserved under extensions over $\ModF(\upsigma)$
    is equivalent over $\ModF(\upsigma)$ to an existential local sentence.
\end{corollary}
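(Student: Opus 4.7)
The plan is to apply \cref{lem:failure:inducedpres} directly to the class $\ModF(\upsigma)$, which reduces the task to checking the two structural hypotheses of that theorem for the class of all finite structures.

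First I would observe that $\ModF(\upsigma)$ is hereditary: any induced substructure of a finite structure is itself finite, so closure under $\subseteq_i$ holds trivially. Second, $\ModF(\upsigma)$ is closed under disjoint unions, since the disjoint union of two finite structures is finite. Both verifications are entirely routine and amount to set-theoretic bookkeeping.

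With these two hypotheses in place, \cref{lem:failure:inducedpres} yields that $\ModF(\upsigma)$ is localisable in the sense of property (\localisable) from the introduction; unfolding this definition gives exactly the conclusion of the corollary, namely that every first-order sentence preserved under extensions over $\ModF(\upsigma)$ is equivalent over $\ModF(\upsigma)$ to an existential local sentence. There is no real obstacle to overcome at this stage: the two checks are immediate and all of the substantive work has already been packaged inside \cref{lem:failure:inducedpres}, which in turn rests on the $\MSO$-type machinery of \cref{lem:failure:inducedminimalmodels} combined with \cref{lem:induced:locmm}.
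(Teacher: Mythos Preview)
Your proposal is correct and matches the paper's approach exactly: the corollary is stated immediately after \cref{lem:failure:inducedpres} with no separate proof, precisely because $\ModF(\upsigma)$ is trivially hereditary and closed under disjoint unions, so the theorem applies directly.
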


The use of disjoint unions was crucial
in the construction, and removing the assumption
that $\XClass$ is closed under this operation
provides counter-examples to \cref{lem:failure:inducedpres}
\begin{ifappendix}%
    (see~\cref{app:sec:presthm:inducedsub})%
\end{ifappendix}.

\begin{restatable}[Counter example without disjoint unions]{example}{exwithoutdisjointunions}
    Let $\mathcal{C}$ be the downwards closure of the class
    of finite cycles for $\subseteq_i$.
    The sentence $\varphi \defined \forall x. \deg(x) = 2$
    is preserved under extensions
    over $\mathcal{C}$ but not equivalent to an existential
    local sentence over $\mathcal{C}$.
\end{restatable}
\begin{sendappendix}%
    \begin{ifappendix}%
        \exwithoutdisjointunions*%
    \end{ifappendix}%
    \begin{proof}
        This class is \emph{not}
        closed under disjoint unions.
        The sentence $\varphi$ detects the cycles in $\mathcal{C}$ which are
        all incomparable maximal elements for $\subseteq_i$.
        As a consequence, $\varphi$ is monotone for $\subseteq_i$ in $\mathcal{C}$,
        and for every pair $k,r$
        there exists a
        minimal models for $\varphi$
        that is not in $\Balls{\mathcal{C}}{k}{r}$.
\end{proof}
\end{sendappendix}

%However, some classes are localisable
%but not closed under disjoint unions,
%for instance the class $\{ P_2 \}$ consisting in the single
%paths of size two, and more generally, any finite class of finite
%structures.

\subsection{Existential Local Preservation Under Extensions (\locreduce)}
\label{sec:presthm:localdecstruct}

Given an existential local sentence $\varphi$ preserved under extensions,
we want to prove that $\varphi$ is equivalent to an existential sentence.
As existential local sentences focus on neighbourhoods of the structures,
we decompose our
class $\XClass$ of finite structures
into \emph{local neighbourhoods}, that is $\Balls{\XClass}{r}{k}$ with
$r, k$ ranging over the natural numbers.
As we assume $\XClass$ to be
hereditary, this is a subset $\Balls{\XClass}{r}{k}\subseteq\XClass$
of the structures in~$\XClass$.
Quite surprisingly, we do not need closure under disjoint unions
to carry on step (\locreduce).

\begin{lemma}[Locally well-behaved classes]
    \label{lem:pres:locwbpres}
    Let $\XClass$ be a hereditary class of finite structures.
    \begin{enumerate}[(i)]
        \item $\Balls{\XClass}{r}{k}$ satisfies preservation under
        extensions for $r,k \geq 0$,
        \item existential local sentences preserved under extensions
            over $\XClass$ are equivalent to existential sentences.
    \end{enumerate}
\end{lemma}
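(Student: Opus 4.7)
The plan is to prove both implications of the biconditional, in each case leveraging that $\XClass$ is hereditary.

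For the direction $(i) \Rightarrow (ii)$, I would take an existential local sentence $\varphi = \exists \vec{x}.\, \tau(\vec{x})$---with $\tau$ an $r$-local formula and $k = |\vec{x}|$---preserved under extensions over $\XClass$, and observe that its restriction to $\Balls{\XClass}{r}{k} \subseteq \XClass$ is still preserved under extensions. Hypothesis $(i)$ then produces an equivalent existential sentence $\psi$ on $\Balls{\XClass}{r}{k}$, whose number $n$ of outer existential quantifiers bounds the size of the minimal models of $\varphi$ there. The key point is that any $\subseteq_i$-minimal model $A$ of $\varphi$ in the full class $\XClass$ already belongs to $\Balls{\XClass}{r}{k}$: choosing a witness $\vec{a}$ for $\exists \vec{x}.\, \tau(\vec{x})$ in $A$, the $r$-locality of $\tau$ yields $\Neighb{A}{\vec{a}}{r} \models \varphi$, and minimality forces $A = \Neighb{A}{\vec{a}}{r} \in \Balls{\XClass}{r}{k}$. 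Consequently the minimal models of $\varphi$ over $\XClass$ have cardinality at most $n$ and thus form a finite collection $M_1, \dots, M_N$ up to isomorphism; since $\varphi$ is preserved under extensions, it is then equivalent over $\XClass$ to the existential sentence stating the existence of an induced copy of some $M_i$.

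For the direction $(ii) \Rightarrow (i)$, I would fix $r, k$, take $\varphi$ preserved under extensions over $\Balls{\XClass}{r}{k}$, and lift it to an existential local sentence over $\XClass$ by relativisation: let $\varphi^* \defined \exists \vec{x}.\, \varphi_{\leq r}(\vec{x})$, where $\varphi_{\leq r}(\vec{x})$ is the $r$-local relativisation of $\varphi$ to a $k$-tuple $\vec{x}$, so that $C, \vec{c} \models \varphi_{\leq r}(\vec{x})$ iff $\Neighb{C}{\vec{c}}{r} \models \varphi$. This $\varphi^*$ is preserved under extensions over $\XClass$: if $A \subseteq_i B$ in $\XClass$ and $\vec{a} \in A^k$ witnesses $\Neighb{A}{\vec{a}}{r} \models \varphi$, then $\Neighb{A}{\vec{a}}{r} \subseteq_i \Neighb{B}{\vec{a}}{r}$ (distances in $B$ being no larger than in $A$), both neighbourhoods belong to $\Balls{\XClass}{r}{k}$, and preservation of $\varphi$ there yields $\Neighb{B}{\vec{a}}{r} \models \varphi$, i.e.\ $B \models \varphi^*$. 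Hypothesis $(ii)$ then delivers an existential sentence $\psi^*$ equivalent to $\varphi^*$ over $\XClass$. To conclude, I would check that $\varphi \equiv \varphi^*$ on $\Balls{\XClass}{r}{k}$: any $A \in \Balls{\XClass}{r}{k}$ is of the form $\Neighb{C}{\vec{c}}{r}$, and the paths in $C$ certifying proximity to $\vec{c}$ already lie inside $A$, so $A = \Neighb{A}{\vec{c}}{r}$; thus $A \models \varphi$ implies $A \models \varphi^*$, while the converse follows as above from the preservation of $\varphi$ on $\Balls{\XClass}{r}{k}$.

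The main obstacle lies in the first direction, where one must upgrade a bound on minimal models that is given only over $\Balls{\XClass}{r}{k}$ to a bound on minimal models over the full class $\XClass$. This is possible precisely because the $r$-local shape of $\varphi$ forces a minimal model in $\XClass$ to coincide with an $r$-ball around a witness tuple. The second direction is more mechanical, but rests on the mild metric fact that $r$-balls computed in an induced substructure coincide with those computed in the ambient structure, which is what allows the translation between $\varphi$ and its relativisation $\varphi^*$ to behave well.
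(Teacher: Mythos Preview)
Your direction $(ii) \Rightarrow (i)$ is correct and essentially matches the paper's argument. The problem is in $(i) \Rightarrow (ii)$: the assertion that the number $n$ of outer quantifiers of $\psi$ ``bounds the size of the minimal models of $\varphi$ there'' is not justified. The class $\Balls{\XClass}{r}{k}$ is in general \emph{not} hereditary---an induced substructure of an $r$-ball around $k$ points need not itself be such a ball---so from a minimal model $B \in \Balls{\XClass}{r}{k}$ of $\varphi$ you can extract an $n$-element witness substructure $A_0 \subseteq_i B$ with $A_0 \models \psi$, but you have no reason to expect $A_0 \in \Balls{\XClass}{r}{k}$, hence no way to conclude $A_0 \models \varphi$ and invoke minimality. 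Your last paragraph identifies the difficulty as transferring a bound from $\Balls{\XClass}{r}{k}$ to $\XClass$, but the real obstacle is obtaining any bound over $\Balls{\XClass}{r}{k}$ in the first place.

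The paper fills this gap with an explicit construction. Having shown (as you do) that a minimal model $B$ of $\varphi$ in $\XClass$ satisfies $B = \Neighb{B}{\vec b}{r}$ for some $\vec b \in B^k$, one takes a bounded-size model $A \subseteq_i B$ of the existential sentence $\theta$ and enlarges it to $A' \subseteq_i B$ by adjoining the centres $\vec b$ together with paths in $B$ of length at most $r$ from each element of $A$ to some $b_i$. Every element of $A'$ is then within distance $r$ of $\vec b$ inside $A'$, so $A' = \Neighb{A'}{\vec b}{r} \in \Balls{\XClass}{r}{k}$; and $A' \models \theta$ since $\theta$ is existential and $A \subseteq_i A'$, hence $A' \models \varphi$. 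This yields a model of $\varphi$ in $\XClass$ of size at most $|A| + k + r|A|$, contradicting minimality of $B$ once $|B|$ exceeds this bound.
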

\begin{proof}
Since
existential sentences are preserved under extensions in general
$(iii) \Rightarrow (i)$, while
$(i) \Rightarrow (ii)$ is simply the hypothesis that $\XClass$ is localisable.

In order to prove $(ii) \Rightarrow (iii)$,
let us consider a sentence $\varphi$
preserved under extensions over $\XClass$.
As $\XClass$ is localisable,
$\varphi$ is equivalent to an existential local sentence
and
\cref{lem:induced:locmm}
provides $k$ and $r$ such that
the minimal models of $\varphi$
are all found in $\Balls{\XClass}{r}{k}$.

As $\Balls{\XClass}{r}{k} \subseteq \XClass$,
the sentence $\varphi$ is also preserved under extensions over
$\Balls{\XClass}{r}{k}$. Since {\puext} holds over
$\Balls{\XClass}{r}{k}$, there exists an existential sentence
$\theta$ equivalent to $\varphi$ over $\Balls{\XClass}{r}{k}$.
Moreover, $\XClass$ is hereditary, which entails that $\theta$ has finitely many
$\subseteq_i$-minimal models $A_1, \dots, A_m$ in $\XClass$.

Let us define $M \defined \max_{1 \leq i \leq m} |A_i|$.  Consider a
minimal model $B$ of $\varphi$ in $\XClass$ and assume by
contradiction that $|B| > M + k + r \cdot M$.  Since
$B \models \varphi$ and $B \in \Balls{\XClass}{r}{k}$,
$B\models\theta$ and there exists a $\subseteq_i$-minimal model $A$ of
$\theta$ such that $A\subseteq_i B$. There exists a vector~$\vec b\in
B^k$ such that $B=\Neighb{B}{\vec b}{r}$.  It is then possible to
define an induced substructure $A'\subseteq_i B$ that contains both $A$
and~$\vec b$ and belongs to $\Balls{\XClass}{r}{k}$: this is ensured
by adding paths taken from~$B$ of length at most~$r$ from each element of $A$
to an element of~$\vec b$.  Hence, $A' \models \theta$ because
$A\subseteq_i A'$, and therefore $A'\models\varphi$ because
$A'\in\Balls{\XClass}{r}{k}$. But $|A'| \leq
|A| + k + r \cdot |A| \leq M + k + r\cdot M < |B|$, which is
absurd.
\begin{ifappendix}
    See \cref{fig:presthm:extracting} in \cref{app:sec:presthm:localdecstruct}
    for a graphical representation
    of this construction.%
\end{ifappendix}%
\begin{ifnotappendix}%
    See \cref{fig:presthm:extracting} for a graphical representation
    of this construction.
\end{ifnotappendix}

\begin{sendappendix}
\begin{figure}
\begin{center}

\begin{tikzpicture}
    \newcommand{\tikztemplategraph}{
        \node[draw, circle,added] (A) at (0.5,0) {};
        \node[draw, circle,temporary] (B) at (2,0) {$b_1$};
        \node[draw, circle,noise] (C) at (3.5,0) {};
        \node[draw, circle,substruct] (D) at (5,0) {};
        \node[draw, circle,substruct] (E) at (0,1) {};
        \node[draw, circle,substruct] (F) at (1,1) {};
        \node[draw, circle,substruct] (G) at (2,1) {};
        \node[draw, circle,noise] (H) at (3,1) {};
        \node[draw, circle,temporary] (I) at (4,1) {$b_2$};
        \node[draw, circle,substruct] (J) at (5,1) {};

        \draw[added] (A) -- (E);
        \draw[added] (A) -- (F);
        \draw[noise] (G) -- (H) -- (C) -- (I) -- (H);
        \draw[substruct] (E) -- (F) -- (G);
        \draw[temporary] (B) -- (G);
        \draw[temporary] (B) -- (A);
        \draw[temporary] (I) -- (D);
        \draw[temporary] (I) -- (J);
    }
    \begin{scope}[yshift=0cm,
        noise/.style={color=black},
        temporary/.style={color=blue,font=\tiny,inner sep=0pt},
        substruct/.style={color=red},
        added/.style={color=teal},
        ]
        \draw (3,-1) node {The whole structure $B\models\varphi$
        is centered around $(b_1,b_2)$.};
        \tikztemplategraph
    \end{scope}
    \begin{scope}[yshift=3cm,
        noise/.style={color=white},
        temporary/.style={color=blue,font=\tiny,inner sep=0pt},
        substruct/.style={color=red},
        added/.style={color=teal},
        ]
        \draw (3,-1) node {The extended substructure $A'\in\Balls{\XClass}{2}{2}$.};
        \tikztemplategraph
    \end{scope}
    \begin{scope}[yshift=6cm,
        noise/.style={color=white},
        temporary/.style={color=white},
        added/.style={color=white},
        substruct/.style={color=red}]
        \draw (3,-1) node {The induced substructure $A\models\theta$.};
        \tikztemplategraph
    \end{scope}

\end{tikzpicture}

\end{center}
\caption{Extracting a structure in $\Balls{\XClass}{2}{2}$
from an induced substructure $A$ of a larger structure $B$.}
\label{fig:presthm:extracting}
\end{figure}
\end{sendappendix}

Therefore, $\subseteq_i$-minimal models of $\varphi$ have bounded size.
It is a usual
consequence that
$\varphi$ is definable as an
existential sentence.
\end{proof}

\subsection{Preservation under Extensions on Locally Well-Behaved Classes}
\label{sec:presthm:localwb}

We can now combine our study of step (\localisable)
in \cref{sec:presthm:inducedsub} and step (\locreduce)
in \cref{sec:presthm:localdecstruct}
to harvest new classes enjoying preservation under extensions,
by characterising classes satisfying
preservation under extensions as those
locally satisfying preservation under extensions.
Quite surprisingly given the
non-relativisation properties of preservation theorems, we are able to
state an equivalence between preservation under extensions over a set
$\XClass$ and preservation under extensions on its local
neighbourhoods.

\begin{restatable}[Local {\puext}]{theorem}{localtarski}
    \label{prop:pres:localtrsk}
    Let $\XClass$ be a hereditary class of finite structures 
    stable under disjoint unions.
    Preservation under extensions holds over $\XClass$
    if and only if 
    preservation under extension holds over $\Balls{\XClass}{r}{k}$
    for all $r \geq 0$ and $k \geq 1$.
\end{restatable}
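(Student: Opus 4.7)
The plan is to combine the two steps \localisable{} and \locreduce{} established earlier, yielding the theorem as a direct composition of \cref{lem:failure:inducedpres} and \cref{lem:pres:locwbpres}. Since $\XClass$ is hereditary and closed under disjoint unions, \cref{lem:failure:inducedpres} guarantees localisability, and \cref{lem:pres:locwbpres} connects the condition ``preservation under extensions holds over every $\Balls{\XClass}{r}{k}$'' with the condition ``existential local sentences preserved under extensions over $\XClass$ are equivalent to existential sentences over $\XClass$''.

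For the implication from left to right, I would assume preservation under extensions holds over $\XClass$. Then in particular every existential local sentence preserved under extensions over $\XClass$ is equivalent to an existential sentence over $\XClass$, which is precisely condition $(ii)$ of \cref{lem:pres:locwbpres}. The lemma then gives condition $(i)$: preservation under extensions holds over $\Balls{\XClass}{r}{k}$ for all $r \geq 0$ and $k \geq 1$.

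For the converse direction, assume preservation under extensions holds over each $\Balls{\XClass}{r}{k}$. Given a sentence $\varphi$ preserved under extensions over $\XClass$, I would first apply \cref{lem:failure:inducedpres} to obtain an equivalent existential local sentence $\psi$ over $\XClass$; this $\psi$ is still preserved under extensions over $\XClass$ by the equivalence. The hypothesis supplies condition $(i)$ of \cref{lem:pres:locwbpres}, which in turn yields condition $(ii)$, so $\psi$ is equivalent over $\XClass$ to an existential sentence. Composing the two equivalences produces an existential sentence equivalent to $\varphi$ over $\XClass$, establishing preservation under extensions.

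There is no genuine new obstacle here: all of the heavy lifting has been done previously, through the Gaifman-type and $\MSO$-type analysis underlying localisability and the minimal-model extraction in local neighbourhoods that powers \cref{lem:pres:locwbpres}. The theorem is essentially the statement that the two-step decomposition into \localisable{} and \locreduce{} is both sound and complete on hereditary classes closed under disjoint unions, with the role of \cref{lem:failure:inducedpres} being to bypass the non-relativisation of \cref{lem:ccl:existlocalnfinfinite} in the finite.
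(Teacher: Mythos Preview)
Your proof is correct and matches the paper's approach: both directions are obtained by composing \cref{lem:failure:inducedpres} (localisability) with \cref{lem:pres:locwbpres}. The only cosmetic difference is that for the forward direction the paper spells out the relativisation $\psi \defined \exists x_1,\dots,x_k.\,\varphi_{|\Neighb{}{\vec{x}}{r}}$ and shows directly that it is preserved under extensions over $\XClass$, rather than routing through condition~(ii) of \cref{lem:pres:locwbpres}; but this is precisely the argument underlying the $(ii)\Rightarrow(i)$ implication you invoke.
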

\begin{sendappendix}
    \begin{ifappendix}
        \localtarski*
    \end{ifappendix}
\begin{proof}
    Assume that the {\puext} holds over 
    $\Balls{\XClass}{r}{k}$ for $r,k \geq 0$. Let $\varphi$ 
    be a sentence preserved under extensions over $\XClass$.
    Because $\XClass$ is hereditary and closed under disjoint unions,
    we can apply
    \cref{lem:failure:inducedpres}, and
    $\XClass$ satisfies (\localisable). Hence, $\varphi$ is equivalent
    to an existential local sentence $\psi$ over $\XClass$.
    Since $\XClass$ is hereditary and locally satisfies
    preservation under extensions, we can apply
    \cref{lem:pres:locwbpres}, and $\XClass$ satisfies (\locreduce).
    Therefore, $\psi$ is equivalent over $\XClass$ to an existential sentence.
    We have proven that $\XClass$ satisfies preservation under extensions.

    Conversely, assume that the {\puext} holds over $\XClass$.
    Let us fix $r,k \in \mathbb{N}^2$
    and consider
    a sentence $\varphi$ preserved under
    extensions over $\Balls{\XClass}{r}{k}$.

    Let $\psi \defined \exists x_1, \dots, x_k.
    \varphi_{|\Neighb{}{\vec{x}}{r}}$,
    i.e.
    the relativisation of $\varphi$ to a neighborhood
    of size $r$ around some elements $x_1, \dots, x_k$.
    It is left as an exercice to prove that $\psi$ and $\varphi$
    are equivalent on $\Balls{\XClass}{r}{k}$.

    Assume that $A, B \in \XClass^2$ are such that
    $A \models \psi$ and $A \subseteq_i B$.
    This amounts to the existence of a strong injective
    morphism $h \colon A \to B$.
    Since $A \models \psi$, there exists $\vec{a} \in A^k$
    (possibly with repetition) 
    such that $\Neighb{A}{\vec{a}}{r} \models \varphi$.
    Moreover, $\Neighb{A}{\vec{a}}{r} \subseteq_i
    \Neighb{B}{h(\vec{a})}{r}$ through the morphism $h$.
    As both neighborhoods are built using radius $r$ and
    tuples of $k$ elements, they belong to $\Balls{\XClass}{r}{k}$.
    The assumption on $\varphi$ implies
    that $\Neighb{B}{h(\vec{a})}{r} \models \varphi$
    and in turn
    this proves
    $B \models \psi$.

    Using the {\puext} over $\XClass$ on $\psi$
    gives an existential sentence $\theta$
    equivalent to $\psi$ over $\XClass$.
    In particular, $\theta$ is equivalent to $\psi$ over
    $\Balls{\XClass}{r}{k}$ since it is a subset of $\XClass$,
    thus $\theta$ is equivalent to $\psi$ on $\Balls{\XClass}{r}{k}$.

    We have proven that {\puext} holds over $\Balls{\XClass}{r}{k}$.
\end{proof}

\end{sendappendix}

The spaces $\Balls{\XClass}{r}{k}$ appear naturally in the study of
\emph{sparse} structures~\cite{Neetil12},
through the notions of \emph{wideness} and \emph{quasi-wideness},
which were already at play in recent proofs
of preservation
theorems~\cite{atserias2006preservation,atserias2008preservation}.
Indeed, they rely crucially on the existence of 
$(r,m)$-scattered sets, that is,
sets of $m$ points with disjoint neighborhoods of size $r$.
Recall that a class $\XClass \subseteq \ModF(\upsigma)$ is \emph{wide}
when there exists
$\rho \colon \mathbb{N}^2 \to \mathbb{N}$,
such that for all $n,m \in \mathbb{N}$,
for all $A \in \XClass$ of size greater than $\rho(r,m)$,
there exists a $(r,m)$-scattered set in $A$.
In particular~\citeauthor*{atserias2008preservation}
prove that for a \emph{wide}, \emph{hereditary},
\emph{closed under disjoint unions}
class of structures $\XClass$, {\puext}
holds~\cite[Theorem 4.3]{atserias2008preservation}.
Hereditary wide classes are exactly those that are
locally finite%
%% \begin{ifappendix}%
%%     (see~\cref{app:sec:presthm:localdecstruct}
%% for a self-contained proof)%
%% \end{ifappendix}
.

\begin{restatable}[\protect{\citet*[Theorem 5.1]{Nesetril2011}}]{fact}{finiteballs}
\label{prop:presthm:finiteballs}
For a hereditary class $\XClass$
it is equivalent to ask for $\XClass$ to be \emph{wide}
or for $\Balls{\XClass}{r}{k}$ to be \emph{finite} for all $r, k \geq 1$.
\end{restatable}
\begin{sendappendix}
    \begin{ifappendix}
        \finiteballs*
    \end{ifappendix}
\begin{proof}
Notice that whenever $A$
has a $(r,m)$-scattered set
then it cannot be a member of $\Balls{\XClass}{r}{k}$ with $k < r$.
Conversely, if $A$ does not contain any $(r,m)$-scattered set,
then it is a union of $m-1$ balls of radius $r$.
Assume that $\XClass$ is wide, the set
$\Balls{\XClass}{r}{k}$ contains only structures of size less than $\rho(r,k)$
and is therefore finite.
Assume that $\Balls{\XClass}{r}{k}$ is finite for every $r,k \in \mathbb{N}$,
then we let $\rho(r,m) \defined \max \setof{|A|}{A \in \Balls{\XClass}{r}{k}, k
< m}$,
whenever $|A| > \rho(r,m)$, the structure $A$ cannot be in $\Balls{\XClass}{r}{k}$
with $k < m$, hence it has a $(r,m)$-scattered set.
\end{proof}
\end{sendappendix}

Over a finite set of finite models, every sentence $\varphi$ is
equivalent to an existential sentence, hence {\puext} holds
trivially. We can recover
\citeauthor{atserias2008preservation}'s result: hereditary wide 
classes closed ender disjoint unions enjoy preservation under extensions.
Indeed,
hereditary wide classes are locally finite (\cref{prop:presthm:finiteballs}),
and hereditary locally finite classes stable under disjoint unions
satisfy preservation under extensions (\cref{prop:pres:localtrsk}).

Let us apply~\cref{prop:pres:localtrsk} to properties known to imply
preservation under extensions: finiteness, bounded tree-depth,
well-quasi-orderings, as depicted in \cref{fig:considered_classes}.
Recall that arrows represent
strict inclusions of properties and dashed boxes are the new properties
introduced in this paper.
\begin{ifnotappendix}
As no logic is involved in the generation of
those classes, we effectively \emph{decoupled} our proofs of preservation
theorems in a \emph{locality argument} followed by
a \emph{combinatorial argument}.
\end{ifnotappendix}

The remaining of this section is devoted to proving that the inclusions
in \cref{fig:considered_classes} are strict, and introducing
the mentioned properties.
Recall here that the \emph{tree-depth} $\td(G)$ of a graph~$G$ is the
minimum height of the comparability graphs~$F$ of partial orders such
that~$G$ is a subgraph of~$F$~\cite[Chapter~6]{Neetil12}.  This
extends as usual to structures by considering
the tree-depth of the Gaifman graphs of said structures.  We shall say
that $\XClass$ has locally bounded tree-depth, if for all $r,k\geq 1$,
there is a bound on the tree-depth of the structures in
$\Balls{\XClass}{r}{k}$.

Note that working with $\Balls{\XClass}{r}{k}$ rather than
$\Balls{\XClass}{r}{1}$ is a somewhat uncommon way to localise
properties.  Thankfully, for properties that are well-behaved with
respect to disjoint unions, the localisation using a
single ball or several ones will coincide; examples of such properties
are wideness, exclusion of a minor, or bounded clique-width.  The
following proposition illustrates this point in the case of bounded
tree-depth.

\begin{restatable}[Locally bounded tree-depth]{lemma}{locboundedtd}
\label{prop:presthm:locboundtd}
A class $\XClass \subseteq \Mod(\upsigma)$ has locally bounded
tree-depth
if and only if
$\exists \rho \colon \mathbb{N} \to \mathbb{N},
\forall A \in \XClass, \forall a \in A, \forall r \geq 1,
\td(\Neighb{A}{a}{r}) \leq \rho(r)$.
\end{restatable}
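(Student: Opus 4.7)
My plan is to prove the two directions separately. The forward direction ($\Rightarrow$) is immediate, since by specialising the definition of locally bounded tree-depth at $k=1$, we obtain a bound $\rho(r)$ on the tree-depth of the structures in $\Balls{\XClass}{r}{1}$, which is precisely the condition on single-point neighborhoods on the right-hand side.

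For the converse ($\Leftarrow$), the key ingredient is \cref{lem:ccl:extentedcover}. Given $r, k \geq 1$ and a structure $B = \Neighb{A}{\vec{a}}{r} \in \Balls{\XClass}{r}{k}$, that lemma produces a vector $\vec{b} \in A^{\leq k}$ and a radius $r \leq R \leq 4^k r$ such that $\Neighb{A}{\vec{a}}{r} \subseteq \Neighb{A}{\vec{b}}{R}$, and the balls $\Neighb{A}{b}{3R}$ for $b \in \vec{b}$ are pairwise disjoint. The plan is to use $\Neighb{A}{\vec{b}}{R}$ as a convenient "enclosing" structure whose tree-depth we can easily bound.

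The step that requires a short argument is observing that the enclosing structure $\Neighb{A}{\vec{b}}{R}$ is, as an induced substructure of $A$, the \emph{disjoint} union (also in the Gaifman-graph sense) of the individual balls $\Neighb{A}{b}{R}$: indeed, if two points $c \in \Neighb{A}{b_i}{R}$ and $c' \in \Neighb{A}{b_j}{R}$ with $i \neq j$ were Gaifman-adjacent, the triangle inequality would yield $\dist(b_i,b_j) \leq 2R + 1 \leq 3R$, contradicting the pairwise disjointness of the $3R$-balls. Consequently $\td(\Neighb{A}{\vec{b}}{R}) = \max_{b \in \vec{b}} \td(\Neighb{A}{b}{R}) \leq \rho(R) \leq \rho(4^k r)$, where we use the hypothesis on single-point neighborhoods and the monotonicity of tree-depth under finite disjoint unions.

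To conclude, it suffices to recall that tree-depth is monotone under subgraphs (a tree-depth decomposition restricts to the vertex set of any subgraph by suppressing omitted nodes, which can only decrease the height). Since $B \subseteq_i \Neighb{A}{\vec{b}}{R}$, the Gaifman graph of $B$ is a subgraph of that of $\Neighb{A}{\vec{b}}{R}$, so $\td(B) \leq \rho(4^k r)$. This bound depends only on $r$ and $k$, showing that tree-depth is bounded uniformly on $\Balls{\XClass}{r}{k}$, as required. I do not foresee a real obstacle here; the only mild subtlety is correctly upgrading the pairwise disjointness of the $3R$-balls to the absence of Gaifman edges between the $R$-balls, which is handled by the triangle inequality as above.
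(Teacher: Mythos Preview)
Your proof is correct, but it takes a different route from the paper. For the converse direction, the paper argues directly by induction on the number of connected components of a structure $A \in \Balls{\XClass}{r}{k}$: if $A$ splits as a disjoint union one recurses (using that tree-depth of a disjoint union is the maximum), and if $A$ is connected then the whole of $A$ sits inside a single ball $\Neighb{B}{a_1}{r(2k+1)}$ around one of the centres, giving the bound $\rho(r(2k+1))$. You instead invoke \cref{lem:ccl:extentedcover} to manufacture an enclosing structure that is already a disjoint union of single balls of radius $R \leq 4^k r$, and then conclude via subgraph monotonicity of tree-depth. Both arguments ultimately rest on the same two facts about tree-depth (behaviour under disjoint unions and under subgraphs); the paper's version is more self-contained and yields a linear radius bound $r(2k+1)$, whereas yours reuses machinery already developed in the paper at the cost of the exponential radius $4^k r$. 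One minor point: when you write $\rho(R) \leq \rho(4^k r)$ you are implicitly assuming $\rho$ is nondecreasing; this is harmless (replace $\rho$ by $r \mapsto \max_{s \leq r} \rho(s)$), and the paper makes the same assumption explicitly.
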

\begin{sendappendix}
    \begin{ifappendix}
        \locboundedtd*
    \end{ifappendix}
    \begin{proof}
    Assuming that for all $k$ there is a bound
    on the tree-depth of
    elements of $\Balls{\XClass}{r}{k}$, we define
    $\rho(r)$ to be
    the maximum of $\td(A)$
    for $A$ in $\Balls{\XClass}{r}{1}$.

    Conversely, assume that there exists
    an increasing function $\rho$ such that
    $\td(A) \leq \rho(r)$ for every $A \in \Balls{\XClass}{r}{1}$
    a simple induction on $A$ shows that
    $\td(A)  \leq \rho(r \times (2k + 1))$ for every $A \in \Balls{\XClass}{r}{k}$.
    \begin{itemize}
        \item If $A \in \Balls{\XClass}{r}{k}$ can be written $A_1 \uplus A_2$
            then $\td(A) = \max(\td(A_1), \td(A_2))$ and
            we conclude by induction hypothesis since $\rho$ is increasing.
        \item If $A$ is totally connected and in $\Balls{\XClass}{r}{k}$
            then it is included in a ball of radius $r \times (2k + 1)$
            hence the tree-depth is bounded by
            $\rho(r \times (2k + 1))$.
            \qedhere
    \end{itemize}
    \end{proof}
\end{sendappendix}

The following examples
\begin{ifappendix}(detailed
in~\cref{app:sec:presthm:localwb})\end{ifappendix}
prove that the inclusions between these new classes are strict, i.e., that
they strictly generalise previously known classes where preservation
under extensions holds

\begin{example}[Cliques]
    \label{ex:cliques}
    Consider the class $\mathsf{\uplus K}$
    of finite disjoint unions of cliques.
    This class is not of locally bounded tree-depth
    but is well-quasi-ordered (and locally well-quasi-ordered).
\end{example}

\begin{example}[Diamonds]
    \label{ex:diamond}
    Let us call $D_n$
    the cycle of length $n$ extended with two new points
    $a$ and $b$ such that both are connected to every node 
    of the cycle using a path of length $n$.
    Consider the class
    $\mathsf{\uplus D}$ of induced subgraphs of finite
    disjoint unions of some $D_n$.
    This class is not well-quasi-ordered, but is of
    locally bounded tree-depth, hence, locally well-quasi-ordered.
    Moreover, it is not locally finite.
\end{example}

\begin{restatable}[Pointed graphs]{example}{pointedgraphs}
    \label{ex:cpp}
    Consider the class $\Delta_2$ of graphs of degree bounded by $2$.
    The class $P\Delta_2$ is obtained by adding one point connected
    to every other point in a structure of $\Delta_2$.  The class
    is not locally well-quasi-ordered but satisfies
    preservation under extensions.
\end{restatable}
\begin{sendappendix}
    \begin{ifappendix}
        \pointedgraphs*
    \end{ifappendix}
\begin{proof}
    The class $P\Delta_2$ is not locally well-quasi-ordered
    because it contains the infinite antichain of \emph{wheels},
    that are cycles with one added point connected to the whole cycle.
    Les us prove that it satisfies preservation under extensions.

    The class $C\Delta_2$ of
    elements of $\Delta_2$ with one added colour $c$
    satisfies preservation under extensions
    because it is hereditary, closed under disjoint unions 
    and locally finite.

    We will use generic stability properties of preservation
    theorems~\cite[e.g.][Section 5]{lopez2020preservation}
    to transport the theorem over $C\Delta_2$ to
    $P\Delta_2$.
    The class $I\Delta_2$ defined
    as elements of $C\Delta_2$ satisfying
    $\exists x. c(x)$ and $\forall x,y. c(x) \wedge c(y) \implies x = y$
    satisfies preservation under extensions
    as it is a Boolean combination of definable upwards closed
    subsets of $C\Delta_2$.

    Let us define
    $I \colon I\Delta_2 \to P\Delta_2$
    the first-order interpretation
    via $I_E(x,y) \defined c(x) \vee c(y) \vee E(x,y)$.
    The map $I$ is surjective
    and monotone with respect to $\subseteq_i$,
    as a consequence, $P\Delta_2$
    satisfies preservation under extensions.
\end{proof}
\end{sendappendix}

\begin{sendappendix}
\begin{figure}[t]
    \centering
    \begin{tikzpicture}[scale=0.5, circle/.style = {
            inner sep=2pt,
        }]
        \def \diamondsize {5}
        \def \diamondprop {360 / \diamondsize}
        \newcommand{\tpath}[3]{
            \pgfmathsetmacro{\moinsun}{int(#1 - 1)}
            \pgfmathsetmacro{\moinsdeux}{int(#1 - 2)}
            \foreach \i in {1,...,\moinsun} {
                \pgfmathsetmacro{\coefa}{\i / #1}
                \pgfmathsetmacro{\coefb}{1 - \i / #1}
                \node[draw,circle] (#2#3p{\i}) at
                    (barycentric cs:#2=\coefa,#3=\coefb)
                    {};
            }
            \draw (#3) -- (#2#3p{1});
            \draw (#2#3p{\moinsun}) --  (#2);
            \foreach \i in {1,...,\moinsdeux} {
                \pgfmathsetmacro{\plusun}{int(\i + 1)}
                \draw (#2#3p{\i}) --  (#2#3p{\plusun});
            }
        }
        \node[draw,circle] (A) at (-3,0) {};
        \node[draw,circle] (B) at (3,0)  {};
        \node[draw,circle] (C) at (0,2)  {};
        \node[draw,circle] (D) at (0,-2) {};

        \tpath{3}{A}{B};
        \draw (A)  .. controls (0,3.5) .. (B);
        \tpath{4}{C}{A};
        \tpath{4}{C}{B};
        \tpath{4}{D}{A};
        \tpath{4}{D}{B};

        \foreach \y in {1,...,2} {
            \tpath{4}{C}{ABp{\y}}
            \tpath{4}{D}{ABp{\y}}
        }

    \end{tikzpicture}
    \caption{An element of $\mathsf{\uplus D}$, $D_4$}
    \label{fig:diamond_class}
\end{figure}
\end{sendappendix}

\section{Concluding Remarks}
\label{sec:remarks}We investigate
the notion of positive locality
through three fragments of first-order logic: existential local sentences,
a positive variant of the Gaifman normal form, and
sentences preserved under local elementary embeddings.
We prove that those three fragments are equally expressive in
the case of arbitrary structures, but that this fails in the finite.
Following the line of undecidability results for preservation
theorems~\cite{Kuperberg21,flum2021},
we prove that most of the associated decision problems are undecidable
in the case of finite structures.

Maybe surprisingly,
the study of this seemingly arbitrary notion of positive locality
has a direct application in the study of preservation under extensions
over classes of finite structures.
In the case of finite structures,
our notion of local elementary embeddings
describes exactly disjoint unions,
and might explain why they
featured so prominently in the study of
preservation under
extensions~\citep{atserias2006preservation,atserias2008preservation,Rossman08,harwath2014preservation}.

We prove that under mild assumptions on the class $\XClass$
of structures considered, sentences preserved under extensions
can be rewritten as existential local sentences.
We leverage this to craft a locality principle
relating preservation under extensions over the neighborhoods of a class
and preservation under extensions over the whole class.

This allows us to build new classes of structures
where preservation under extensions holds
by localising known properties.
This proof scheme does not always yield 
new classes:
for instance, \emph{nowhere dense} classes
\citep*[see][Chapter 5]{Neetil12}
are locally nowhere dense and vice-versa.

\clearpage
\section*{Acknowledgments} 
 I thank Jean Goubault-Larrecq and Sylvain Schmitz
for their unconditional help and support in writing this paper.
I thank Thomas Colombet for sharing with me his
syntactical proof of the Feferman-Vaught
theorem.

%%
%% Bibliography
%%
\bibliographystyle{ACM-Reference-Format}
\bibliography{globals/ressources}

\clearpage
\appendix\pagenumbering{roman}
\setlength\textwidth{390pt}
\setlength\oddsidemargin{31pt}
\onecolumn

\begin{ifappendix}
   
\section{Appendix to ``Locality preorders'' (Section \ref{sec:locqo})}\label{app:sec:locqo}

    \begin{ifappendix}
        \lemmaelemloc*
    \end{ifappendix}
\begin{proof}
    Fix $q,r,k \in \mathbb{N}$,
    and consider a tuple $\vec{a} \in A^k$.
    By construction, the tuple $h(\vec{a}) \in B^k$
    satisfies the same local $\FO$ formulas,
    and in particular, $\NeighbT{A}{\vec{a}}{r}{q}
    = \NeighbT{B}{h(\vec{a})}{r}{q}$, thus $A \locleq{r}{q}{k} B$.
\end{proof}

    \begin{ifappendix}
        \exampleinfinitegrid*
    \end{ifappendix}
\begin{proof}
    Assume by contradiction that there exists an elementary embedding $h$ from
    $G'$ to $G$. There exist two points $u, v$ in $G'$
    that are at an infinite distance,
    and their images $h(u), h(v)$ are at a finite distance $r$ in $G$.
    The sentence $\psi(x,y)$ stating that $x$ and $y$ are at distance
    less than $r$ is $r$-local around $x$ and $y$, hence
    is satisfied for $h(u)$ and $h(v)$ if and only if
    it is satisfied in $G'$ for $u$ and $v$; this is absurd.

    However, $G' \locleq{\infty}{\infty}{\infty} G$
    since given $r,q \geq 0$, $k \geq 1$ and a vector $\vec{u} \in (G')^k$
    one can find points in $G$ at
    long enough distances such that the neighborhoods coincide. \qedhere
\end{proof}

\begin{ifappendix}
    \extensionpreorder*
\end{ifappendix}
\begin{proof}
    We first notice that $\locleq{0}{q}{\infty}$
    and $\locleq{r}{0}{\infty}$ cannot quantify
    over the neighborhoods of a vector $\vec{a}$
    since they either contain only $\vec{a}$
    or cannot quantify at all. This explains the first equality.

    \begin{enumerate}
        \item Assume $A \locleq{0}{q}{\infty} B$ for some $q \in \mathbb{N} \cup
            \{ \infty \}$, consider
            $k \defined |A|$
            and let $\vec{a} \in A^k$ be the list of all elements in $A$.
            Assume without loss of generality that $q$ is finite.
            The inequality provides a vector $\vec{b} \in B^k$
            such that $\NeighbT{A}{\vec{a}}{0}{q} = \NeighbT{B}{\vec{b}}{0}{q}$.
            In particular, $\Neighb{A}{\vec{a}}{0} = A$ and
            $\Neighb{B}{\vec{b}}{0} = \vec{b}$.
            Since it is possible to express using quantifier free formulas
            all the relations between elements in $A$,
            the substructure of $B$ induced by $\vec{b}$
            is isomorphic to $A$.
            In particular, $A \subseteq_i B$.

        \item Assume that $A \subseteq_i B$ though a map $h$,
            let $k \in \mathbb{N}$
            and $\vec{a} \in A^k$.
            Let us define $\vec{b} \defined h(\vec{a})$;
            in particular, $\Neighb{A}{\vec{a}}{0} = A$ and
            $\Neighb{B}{\vec{b}}{0} = h(\vec{a})$.
            By definition of the map $h$,
            $h(\vec{a})$ induces a substructure of $B$
            that is isomorphic to the substructure induced by $\vec{a}$ in $A$.
            This entails that the two structures are elementary equivalent,
            hence that
            $\NeighbT{A}{\vec{a}}{0}{q} = \NeighbT{B}{\vec{b}}{0}{q}$
            for all $q \geq 0$.
            We have proven that $A \locleq{0}{q}{k}$ for all $q,k \in
            \mathbb{N}$, hence also for $q,k \in \mathbb{N} \cup \{ \infty \}$.
            \qedhere
\end{enumerate}
\end{proof}

\section{Appendix to ``Positive Gaifman Normal Form'' (Section \ref{sec:pgaif})}\label{app:sec:pgaif}

    \begin{ifappendix}
        \extendedcover*
    \end{ifappendix}
\begin{proof}
    We proceed by induction over $k$.
    \begin{itemize}
        \item When $k \leq 1$ it suffices to take
            $R \defined r$ and for every vector $\vec{a} \in A^{\leq k}$
            build
            $\vec{b} \defined \vec{a}$
            and notice that
            $\Neighb{A}{\vec{a}}{r}
            = \Neighb{A}{\vec{b}}{R}$.

        \item When $k \geq 2$, we proceed by a simple case analysis
            \begin{enumerate}
                \item Either the balls $\Neighb{A}{a}{3r}$ are pairwise disjoint
                    when $a$ ranges over $\vec{a}$; 
                    in which case it suffices to consider $\vec{b} \defined
                    \vec{a}$ and $R \defined r$ to conclude.
                \item Or at least two of the balls $\Neighb{A}{a}{3r}$
                    intersect 
                    when $a$ ranges over $\vec{a}$,
                    and we can assume without loss of generality
                    that the neighborhoods of $a_1$ and $a_2$ at
                    radius $3r$ intersect.

                    Let us consider $c \in \Neighb{A}{a_1}{3r} \cap
                    \Neighb{A}{a_2}{3r}$.
                    Define $\vec{c} \defined (c, a_3, \ldots, a_{k})$,
                    this vector is of size $1$ when
                    $k = 2$.

                    Because $d(a_1, c) \leq 3r$ and $d(a_2, c) \leq 3r$,
                    $\Neighb{A}{a_1 a_2}{r} \subseteq
                    \Neighb{A}{c}{4r}$.

                    By induction hypothesis,
                    there exists a radius $4r \leq R \leq 4^{k-1} (4r)$
                    and a vector $\vec{b} \in A^{\leq k - 1}$
                    such that 
                    $\Neighb{A}{\vec{c}}{4r} \subseteq 
                    \Neighb{A}{\vec{b}}{R}$
                    and
                    $\forall b\neq b' \in \vec{b},
                    \Neighb{A}{b}{3R} \cap \Neighb{A}{b'}{3R} = \emptyset$.

                    Since
                    $\Neighb{A}{\vec{a}}{r} \subseteq 
                    \Neighb{A}{\vec{c}}{4r}$
                    and $r \leq 4r \leq R \leq 4^k r$
                    we concluded.
                    \qedhere
            \end{enumerate}
    \end{itemize}
\end{proof}
\subsection{Appendix to ``From Existential Local Sentences to Asymmetric Basic Local Sentences'' (Section \ref{sec:pgaif:elstoabls})}\label{app:sec:pgaif:elstoabls}

    \begin{ifappendix}
        \asblvsabl*
    \end{ifappendix}
\begin{proof}
    Let $\varphi \defined \exists \vec{x}.
    \bigwedge_{i \neq j} d(x_i, x_j) > 2r \wedge 
    \psi(\vec{x})$
    where $\psi$ is a $r$-local formula around $\vec{x}$.
    Following a syntactical proof of Feferman-Vaught
    given by Thomas Colcombet,
    we introduce \emph{types} $T_1, \dots, T_{|\vec{x}|}$
    and define
    typed formulas inductively as follows
    \begin{align*}
        \tau :=\, &
            x:T \mid
            R(x_1 : T, \dots, x_n : T) \quad \text{ when } R \in \upsigma \\
                &\tau \wedge \tau \mid
            \tau \vee \tau \mid
            \neg \tau \mid \\
                &\exists x:T. \tau \mid
            \forall x:T. \tau
    \end{align*}
    In this new language, a relation can only be checked using variables
    of the same type $T$.
    The evaluation of a formula $\tau$ is defined as the
    evaluation of the formula $\tau'$ obtained by removing type annotations.

    An \emph{environment} $\rho$ is a mapping from finitely many variables to
    types. Given an environment $\rho$, we write $\rho[x \mapsto T]$
    to denote the environment $\rho'$ obtained by $\rho'(y) \defined \rho(y)$
    if $x \neq y$ and $\rho'(x) = T$.
    We first translate $\psi$ into a typed formula inductively
    given an environment $\rho$ from the free variables of $\psi$
    to types:
    \begin{itemize}
        \item $f(\exists x. \psi, \rho) = \bigvee_{1 \leq i \leq |\vec{x}|}
            \exists x:T_i. f(\psi, \rho[x \mapsto T_i])$,
        \item $f(\psi_1 \vee \psi_2, \rho) = f(\psi_1, \rho) \vee f(\psi_2,
            \rho)$,
        \item $f(\neg \psi, \rho) = \neg f(\psi, \rho)$,
        \item $f(R(y_1, \dots, y_n), \rho) = R(y_1 : \rho(y_1), \dots, y_n :
            \rho(y_n))$ if all the variables have the same type under $\rho$,
            and
            $f(R(y_1, \dots, y_n), \rho) = \bot$ otherwise.
    \end{itemize}

    By a straightforward induction on the formula $\psi$,
    one can show that for every structure $A$,
    every valuation $\nu$ from variables to elements of $A$,
    and environment $\rho$ mapping free variables of $\psi$
    to types, such that $d_A(\nu(x),\nu(y)) \leq 1$
    implies $\rho(x) = \rho(y)$
    the following holds:
    $A, \nu \models \psi(\vec{x})$ if and only
    if $A, \nu \models f(\psi(\vec{x}), \rho)$.
    The only non trivial case is the translation of relations,
    which is handled of through the hypothesis on $\nu$ and $\rho$.

    As a consequence, for every structure $A$,
    $A \models \exists \vec{x}.
    \bigwedge_{i \neq j} d(x_i, x_j) > 2r \wedge 
    \psi(\vec{x})$
    if and only if 
    $A \models \exists \vec{x}.
    \bigwedge_{i \neq j} d(x_i, x_j) > 2r \wedge 
    f(\psi(\vec{x}), x_i \mapsto T_i)$.

    A second induction allows us to prove that
    typed formulas are equivalent to positive Boolean combinations
    of \emph{monotyped} formulas, i.e. formulas where only one type $T$
    appears.
    The only non trivial case is the existential quantification,
    handled through the equivalence between
    $\exists x : T. (\psi_1 \wedge \psi_2)$
    and $(\exists x : T. \psi_1) \wedge \psi_2$
    whenever $\psi_2$ contains no variable of type $T$.

    Since the environment $\rho \colon x_i \mapsto T_i$
    assigns a different type to every free variable of $\psi(\vec{x})$,
    the positive Boolean combination obtained by transforming $f(\psi(\vec{x}), \rho)$
    is composed of formulas with exactly one free variable. Let us write
    $\bigvee_{n} \bigwedge_{1 \leq m \leq k} \tau_{n,m} (x_m : T_m)$
    for this positive Boolean combination.
    For every structure $A$,
    $A \models \exists \vec{x}.
    \bigwedge_{i \neq j} d(x_i, x_j) > 2r \wedge 
    \psi(\vec{x})$
    if and only if
    $A \models \exists \vec{x}.
    \bigwedge_{i \neq j} d(x_i, x_j) > 2r \wedge 
    \bigvee_{n} \bigwedge_{1 \leq m \leq k} \tau_{n,m}(x_m : T_m)
    $.
    By removing type annotations on the formulas $\tau_{n,m}$
    and considering their relativisation to the $r$-neighborhood
    of their single free variable, one obtains $\tau_{n,m}^r(x)$.
    Let us prove that this relativisation preserves the expected equivalence
    with $\varphi$:
    \begin{align*}
        A \models \varphi &\iff
        \exists \vec{a} \in A^k, \bigwedge_{1 \leq i \neq j \leq k} d_A(a_i,a_j)
        > 2r \wedge A, \vec{a} \models \psi(\vec{x}) & \text{definition of
        $\varphi$}\\
                          &\iff
        \exists \vec{a} \in A^k, \bigwedge_{1 \leq i \neq j \leq k} d_A(a_i,a_j)
        > 2r \wedge \Neighb{A}{\vec{a}}{r}, \vec{a} \models \psi(\vec{x})
                          &\text{$\psi$ is $r$-local}
        \\
                          &\iff
        \exists \vec{a} \in A^k, \bigwedge_{1 \leq i \neq j \leq k} d_A(a_i,a_j)
        > 2r \wedge \Neighb{A}{\vec{a}}{r}, \vec{a} \models 
        \bigvee_{n} \bigwedge_{1 \leq m \leq k} \tau_{n,m}(x_m : T_m)
                          &\text{by definition}
        \\
                          &\iff
        \exists n. \exists \vec{a} \in A^k, \bigwedge_{1 \leq i \neq j \leq k} d_A(a_i,a_j)
        > 2r \wedge \Neighb{A}{\vec{a}}{r}, \vec{a} \models 
        \bigwedge_{1 \leq m \leq k} \tau_{n,m}(x_m : T_m)
                          &\text{$\vee$}
        \\
                          &\iff
        \exists n. \exists \vec{a} \in A^k, \bigwedge_{1 \leq i \neq j \leq k} d_A(a_i,a_j)
        > 2r \wedge A, \vec{a} \models 
        \bigwedge_{1 \leq m \leq k} \tau_{n,m}^r(x_m)
                          &\text{relativisation}
        \\
                          &\iff
        \exists n. 
         A \models 
        \exists \vec{x}. \bigwedge_{1 \leq i \neq j \leq k} d(x_i,x_j)
        > 2r \wedge
        \bigwedge_{1 \leq m \leq k} \tau_{n,m}^r(x_m)
                          &\text{$\exists$ and $\wedge$}
        \\
    \end{align*}

    We have expressed $\varphi$ as a disjunction of asymmetric basic local
    sentences
    \begin{equation*}
        \varphi \equiv
        \bigvee_{n}
        \exists x_1, \dots, x_k.
        \bigwedge_{1 \leq i \neq j \leq k} d(x_i, x_j) > 2r \wedge 
        \bigwedge_{1 \leq m \leq k} \tau_{n,m}^{r}(x_m)
        \, . \qedhere
    \end{equation*}
\end{proof}
\subsection{Appendix to ``From Asymmetric Basic Local to Basic Local Sentences'' (Section \ref{sec:pgaif:ablstobls})}\label{app:sec:pgaif:ablstobls}
    \begin{ifappendix}%
        Let us recall the definitions in~\cref{sec:pgaif:ablstobls}
        before proving~\cref{fact:loc:securitycylinders}.
        The formula $\theta_G^R(x)$ is defined as follows
\begin{align*}
    \theta_G^R(x) \defined
    &\exists v_1, \ldots, v_{|V(G)|}
    \in \Neighb{}{x}{R}. \\
    &\bigwedge_{(v_i,v_j,h) \in E(G)}
    d(v_i,v_j) = h \\
    &\wedge \bigwedge_{ v_i \in V(G)}
    \bigwedge_{p \in C(i)} p(v_i)
    \\
    &\wedge \bigwedge_{v_i \in V(G)}
        \Neighb{}{v_i}{r} \subseteq \Neighb{}{x}{R} \, .
\end{align*}

%% And $\pi_Q^R$ is the $\max(3R, R+r)$-local formula
%% >>>>>>> 259e5a3baacd0bda5f2a6551d73e2256f7551fb5
%% \begin{align*}
%%     \pi^R_Q(x) \defined
%%     &\forall y \in \Neighb{}{x}{3R}. 
%%     &\left(\bigvee_{p \in Q} p(u)\right)
%%     \implies
%%     \Neighb{}{y}{r} \subseteq
%%     \Neighb{}{x}{R}
%%     \, .
%% \end{align*}

We restate hereafter~\cref{fact:loc:abstriaction}.

\graphrepresentation*
    \end{ifappendix}%

    \begin{ifappendix}
        \securitycylinder*
    \end{ifappendix}
\begin{proof}
    Since $A,a \models \theta_G^R(x)$
    and using \cref{fact:loc:abstriaction}
    there exists a vector $\vec{c} \in \Neighb{A}{a}{R}$
    such that $G_{\vec{c}}^R = G$ 
    and
    $\Neighb{A}{\vec{c}}{r} \subseteq \Neighb{A}{a}{R}$.
    In particular, every point of $\vec{c}$ satisfies
    at least one property $p \in Q$.
    As $d(a,b) \leq 2R$,
    $\vec{c} \in \Neighb{A}{b}{3R}$.
    As $A, b \models \pi_Q^R(x)$
    his shows that $\Neighb{A}{\vec{c}}{r} \subseteq \Neighb{A}{b}{R}$.
    Using~\cref{fact:loc:abstriaction} this proves 
    that $A, b \models \theta_G^R(x)$.
\end{proof}

\section{Appendix to ``Failure in the finite case'' (Section \ref{sec:ffc})}\label{app:sec:ffc}

\subsection{Appendix to ``A Generic Counter Example'' (Section \ref{sec:ff:generic})}\label{app:sec:ff:generic}

    \begin{ifappendix}
        \badphicord*
    \end{ifappendix}
\begin{proof}
    The sentence $\badphi \defined \forall x. B(x) \vee \varphi_{CC}$
    is well defined and preserved under disjoint unions
    thanks to \cref{lem:ff:phiccbads}.

    Assume by contradiction that $\badphi$
    is equivalent over $\BadOrderS$ to a
    sentence $\psi = \exists x_1, \dots, x_k. \theta(\vec{x})$
    where $\theta$ is a $r$-local sentence.
    Construct the structure
    $A \defined O_1 + O_2 + \cdots + O_{2 k \cdot (2r+1) + 1}$
    in $\BadOrderS$
    and colour all of the nodes with $\neg B$.
    This structure satisfies $\badphi$
    hence $\psi$. However,
    there exists at least one node of $A$
    not covered by the $r$-balls around the $k$
    witnesses of $\psi$, the structure $\hat{A}$
    obtained by colouring one of those node
    with $B$
    still satisfies $\psi$ but
    does not satisfy $\badphi$.
\end{proof}

\begin{ifappendix}
    \badorderaxiomatic*
\end{ifappendix}
\begin{proof}
    Before listing the axiomatic $\BadAxiom$, notice that
    it suffices to prove that connected structures in $\BadOrderS$
    models $\BadAxiom$ as they are closed under disjoint unions
    of models.

    Let $A$ be a connected structure in $\BadOrderS$,
    that is $A = O_n + \cdots + O_m$ with $2 \leq n \leq m$.
\begin{enumerate}[(i)]
\item $\leq$ is transitive, reflexive and antisymmetric
    over each $O_i$. As a consequence, $\leq$ is transitive 
    reflexive and antisymmetric over their disjoint unions.
\item $\leq$ has connected components of size at least two.
    This holds because $2 \leq n \leq m$.
\item $S$ is an injective partial function without fixed points.
    This holds because $S$ is the successor relation on $O_i$
    and the maximal element of $O_i$ is connected through $S$
    to the minimal element of $O_{i+1}$.
\item $S$ and $\leq$ cannot
    conflict:
    $\forall a,b. \neg (S(a,b) \wedge b \leq a)$.
    This holds because $S$ is the successor relation over $O_i$.
\item There exists a proto-induction principle:
    $\forall a < b, \exists c, S(a,c) \wedge a \leq c \wedge c \leq b$.
    This holds because $S$ is the successor relation over $O_i$.
\item Edges $E$ can be factorised trough
    $(\leq) (S \setminus {\leq}) (\leq)$:
    $\forall a,b.
    E(a,b) \implies \exists c_1, c_2. 
    a \leq c_1 \wedge c_1 S c_2 \wedge \neg (c_1 \leq c_2) \wedge c_2 \leq b$.
    This holds because edges $E$ between $O_i$ and $O_j$ only 
    appear if $j = i+1$ and the maximal element of $O_i$ for $\leq$
    is connected through $S$ to the minimal element of $O_j$ for $\leq$.
\item Pre-images through $E$ form a suffix of the ordering:
    $\forall a,b,c.
    a \leq b \wedge E(a,c) \wedge S(a,b) \implies E(b,c)$.
    This holds because edges $E(a,b)$ between $O_i$ and $O_j$ 
    exists if and only if $a \leq b$ when considered as integers.
\item Images through $E$ form a prefix of the ordering:
    $\forall a,b,c.
    b \leq c \wedge E(a,c) \wedge S(b,c) \implies E(a,b)
    $.
    This holds because edges $E(a,b)$ between $O_i$ and $O_j$ 
    exists if and only if $a \leq b$ when considered as integers.
\item Images through $E$ are strictly increasing subsets:
    $\forall a,b.
    a S b \wedge a \leq b \implies
    \exists c. E(b,c) \wedge \neg E(a,b)$.
    This holds because edges $E(a,b)$ between $O_i$ and $O_j$ 
    exists if and only if $a \leq b$ when considered as integers.
\item Pre-images through $E$ are strictly decreasing subsets:
    $\forall a,b.
    a S b \wedge a \leq b \wedge
    (\exists c. E(c,a))
    \implies
    \exists c. E(c,a) \wedge \neg E(c,b)$.
    This holds because edges $E(a,b)$ between $O_i$ and $O_j$ 
    exists if and only if $a \leq b$ when considered as integers.
\item The last element of an order cannot be obtained
    through $E$: 
    $\forall a,b.
    E(a,b) \implies \exists c. S(b,c) \wedge b \leq c$.
    This holds because edges $E(a,b)$ between $O_i$ and $O_j$ 
    exists if and only if $a \leq b$ when considered as integers.
\item The relation $(\leq)(S \setminus {\leq})$ is included in $E$:
    $\forall a, b.
    \exists c. a \leq c \wedge S(c,b) \wedge \neg (c \leq b)
    \implies
    E(a,b)
    $.
    This holds because edges $E(a,b)$ between $O_i$ and $O_j$ 
    exists if and only if $a \leq b$ when considered as integers
    and $O_i$ is of size $i$.
    \qedhere
\end{enumerate}
\end{proof}

    \begin{ifappendix}
        \badorderaxiomone*
    \end{ifappendix}
\begin{proof}
Let $A$ be a structure in $\BadOrder$,
without loss of generality, assume that the Gaifman graph of $A$
has a single connected component.

\begin{enumerate}[(a)]
    \item
        Notice that the property $(v)$ combined with
        the fact that $S$ is a partial injective function 
        with no fixed points
        and that $A$ is finite
        allows to prove
        by induction that
        whenever $a \leq b$ there exists $0 \leq k$
        such that $S^k(a) = b$
        and all the intermediate points are between $a$ and $b$
        for $\leq$.

        As a consequence, if $a \leq b_1$ and $a \leq b_2$,
        there exists $0 \leq k$ and $0 \leq l$
        such that $S^k(a) = b_1$ and $S^l(a) = b_2$.
        Without loss of generality $k \leq l$ 
        and $b_2 = S^{k-l}(b_1)$ and 
        $b_1 \leq b_2$.
        Similarly, if $a_1 \leq b$ and $a_2 \leq b$
        there exists $0 \leq k,l$ such that
        $S^{k}(a_1) = b$ and $S^{l}(a_2) = b$

        Combined with $(i)$ and $(iii)$
        this proves that connected components of $A$
        through the relation $\leq$ are \emph{totally ordered}
        by $\leq$ and over these components $S$ is the successor
        relation, while $(ii)$ states that this connected component must have at
        least two elements.

        Assume by contradiction that
        some $\leq$-connected component of $A$
        contains a relation $E(a,b)$
        notice that property $(vii)$ provides $a \leq c_1 S c_2 \leq b$
        and $\neg (c_1 \leq c_1)$ but this is absurd 
        since the connected component is a total ordering
        and property $(v)$ states that we cannot have $c_1 S c_2$ and
        $c_2 \leq c_1$.

\item Assume $B_1$ and $B_2$ are two $\leq$-components of $A$
    connected through the relation $S$.
    As $S$ is a partial injective function
    shows that this can only happen by
    connecting an element that has no successor to an element
    that has no predecessor.
    As a consequence, it is only possible to connect the last element of $B_1$
    to the first one of $B_2$ and vice-versa.

\item Assume that $B_1$ and $B_2$ are two $\leq$-components of $A$
    connected through the relation $E$, that is there exists 
    $a \in B_1$ and $b \in B_2$ such that $E(a,b)$ holds.
    The property $(vi)$ provides
    $c_1$ and $c_2$ such that
    $a \leq c_1$, $c_1 S c_2$, $c_2 \leq b$ and $\neg (c_1 \leq c_2)$.
    As a consequence $c_1$ is in $B_1$ (connected through $\leq$),
    $c_2 \in B_2$ (connected through $\leq$)
    and therefore $B_1$ is connected to $B_2$ through the relation $S$.

\item Assume $B_1$ and $B_2$ are two $\leq$-components of $A$
    connected through the relation $S$.
    The function $g \colon a \mapsto \setof{b}{E(a,b)}$
    from $B_1$ to $\wp(B_2)$ is well-defined
    since we proved that there cannot be edges $E$ outside of $B_2$.
    Since $B_2$ is a finite total ordering with respect to $\leq$
    and the image of $g$ is non-empty thanks to $(xii)$
    the function $f$ is well-defined.

    Property $(ix)$ states that whenever $a,b \in B_1$ and $a S b$ then
    $g(a) \subsetneq g(b)$, while property $(viii)$ states
    that $g(a)$ is $\leq$-downwards-closed in $B_2$.
    As a consequence, $f$ must be strictly increasing.

    Similarly, property $(vii)$ combined with property
    $(x)$ states that if $a S b$ in $B_1$,
    then $|g(a)|+1 = |g(b)|$
    and as a consequence $S(f(a)) = f(b)$.

    Finally, property $(xi)$ states that
    $g$ never covers the last element of $B_2$, and in particular
    $f$ is not surjective.

\item Notice that we proved edges $E$ can only appear
    between two $\leq$-components $B_1$ and $B_2$.
    Let us write $0_1$ and $0_2$ their respective $\leq$-minimal elements.
    We showed that
    $f(S^k(0_1)) = S^k(0_2)$ whenever
    $S^k(0_1)$ is in $B_1$.
    Moreover, if $a = S^k(0_1) \in B_1$
    then
    $\setof{d}{E(a,d)} = \downarrow_{\leq} f(a)
    = \setof{ S^l(0_2) }{l \leq k}$.
    Moreover, property $(x)$
    shows that there cannot exist more than two
    points in $B_2$ that are not obtainable through $f$,
    and as a consequence $|B_2| = |B_1| + 1$.

    Finally, a general connected component
    of $A$
    is of the form $O_n + \cdots + O_m$
    with $2 \leq n \leq m$.
    \qedhere
\end{enumerate}
\end{proof}
\subsection{Appendix to ``Undecidability'' (Section \ref{sec:ff:undecidable})}\label{app:sec:ff:undecidable}

\begin{ifappendix}
    \definabletransitions*
\end{ifappendix}
\begin{proof}
    This follows the standard encoding of transitions.
    We check that the sentence is $1$-local since
    the ball of radius $1$ around $a$ (or $b$) contains
    both $\leq$-components entirely.

    The only technical issue is relating
    positions in the configuration $C(a)$ to positions
    in the configuration $C(b)$, which 
    is done through the use of
    $f \colon a \mapsto \max_{\leq} \setof{d}{E(a,d)}$
    which is first-order definable and $1$-local.
    Let us spell out the definition of $f$ as a formula:
    \begin{equation*}
        \phi_f(x,y) \defined E(x,y) \wedge \forall z. y < z \implies \neg E(x,z)
    \end{equation*}

    For instance, to assert
    that every letter except those near the current position
    in the tape are left unchanged, one can first write a formula
    stating that the position of the head of the Turing Machine
    in the $\leq$-component of $x$ is not close to $x$:
    \begin{equation*}
        \phi_Q(x) \defined \forall z. (z \leq x \vee x \leq z) \wedge 
        (S(x,z) \vee S(z,x) \vee x = z) \implies \bigwedge_{q \in Q} \neg q(x)
    \end{equation*}

    As a shorthand, let us write $z \in C(x)$ instead of $z \leq x \vee x \leq
    z$.
    Using $\phi_f$ and $\phi_Q$ it is easy to write a formula stating
    that letters far from the head of the Turing Machine are unchanged
    in a transition:
    \begin{equation*}
        \phi(x,y) \defined
        \forall z. z \in C(x) \wedge \phi_Q(z) \implies \exists z'. z' \in C(y)
        \wedge \phi_f(z,z')
        \wedge
        \bigwedge_{a \in \Sigma \cup \{ \$, \square \}} P_a(z) \Leftrightarrow P_a(z')
    \end{equation*}

    We can assert that some specific transition has been
    taken in a similar fashion, by first checking the movement of the head,
    change of letters around the head, and the evolution of the state.
\end{proof}
    \begin{ifappendix}%
        \undecidabledisunion*%
    \end{ifappendix}%
\begin{proof}
    Without loss of generality, we only work over
    $\BadOrderS$ and
    we reduce from the halting problem.
    Consider the sentence
    $\varphi_M \defined \exists x. \theta_I^{\langle M \rangle}(x) 
                  \wedge 
    \exists x. \theta_F(x) 
                  \wedge
    \forall x,y. S(x,y) \wedge \neg (x \leq y) \Rightarrow
    \theta_T(x,y)$,
    and let $\varphi = \varphi_M \wedge \neg \varphi_{CC}$.

    Assume that $M$ halts. There
    exists exactly one model of $\varphi$, the unique
    run of the universal Turing machine $U$,
    and this contradicts closure under disjoint unions.

    Assume that $M$ does not halt.
    The sentence $\varphi$ has no finite model and
    in particular is closed under disjoint unions.

    Hence, the sentence $\varphi$ is
    closed under disjoint unions
    if and only if $M$ does not halt.
\end{proof}
\subsection{Appendix to ``Generalisation to weaker preorders'' (Section \ref{sec:ff:generalise})}\label{app:sec:ff:generalise}
    \begin{ifappendix}%
        \arbitraryquantifierrank*%
    \end{ifappendix}%
\begin{proof}
    Using \cref{fact:qo:refinement} it suffices
    to consider the case $r = 1$ and $k = 2$.
    The proof follows the same pattern as \cref{lem:failure:counterexgen},
    we enumerate the axioms from $\BadAxiom$
    and notice that they are of the form $\forall x. \theta(x)$
    where $\theta(x)$ is a $1$-local formula.
    As a consequence,
    sentences preserved under $\locleq{1}{\infty}{2}$
    over $\BadOrderS$ are preserved under $\locleq{1}{\infty}{2}$
    over $\ModF(\upsigma)$, which is a strengthening of
    \cref{fact:ff:relat}.

    Moreover, $\varphi_{CC}$ preserved under $\locleq{1}{\infty}{2}$
    using a simple syntactical analysis.
    We now check that $\badphi$ is preserved under $\locleq{1}{\infty}{2}$
    over $\BadOrderS$.
    Let $A, B \in \BadOrderS$ such that $A \models \badphi$
    and $A \locleq{1}{\infty}{1} B$.

    \begin{itemize}
        \item Let us first examine the case
     where $A$ has a single connected component.

    Let $a \in A$; since $\Neighb{A}{a}{1}$ is finite,
    there exists a $1$-local formula $\psi_a(x)$
    of quantifier rank less than $|\Neighb{A}{a}{1}|+1$
    such that $B, b \models \psi_a(x)$ if and only 
    if $\Neighb{A}{a}{1}$ is isomorphic to $\Neighb{B}{b}{1}$.
    In particular, if $C$ is a 
    $\leq$-component of $A$, it is of radius less than $1$
    and there exits $C'$ a $\leq$-component of $B$
    isomorphic to $C$.
    Moreover, the $1$-neighborhood of a $\leq$-component
    contains the previous and next $\leq$-component for $S$.

    If $B$ has a single connected component, then
    two distinct $\leq$-components in $B$ must have distinct sizes.
    Using the fact that the components of $A$ are all found in $B$
    and that their relative position is preserved,
    this proves that $B$ contains exactly the same $\leq$-components as $A$,
    which
    only happens if $A=B$.

    If $B$ has at least two connected components,
    then it satisfies $\varphi_{CC}$ which implies $\badphi$.
        \item In the case where $A$ has two connected components,
            $A \models \varphi_{CC}$ but then
            $B \models \varphi_{CC}$ and $B \models \badphi$.
    \end{itemize}

    Moreover, we know from \cref{lem:ff:counterbads}
    that $\badphi$ cannot be defined as an existential local sentence
    over $\BadOrderS$.
\end{proof}%

\begin{ifappendix}
Before studying the case $k = 1$, let us
descibe the behaviour of $\locleq{r}{q}{k}$
with respect to disjoint unions. In particular,
we prove that at a fixed $k$, the preorder cannot
distinguish between more than $k$ copies of
the same structure.
\end{ifappendix}
\begin{example}[Disjoint unions]
    \label{ex:loc:disjointunions}
    Let $A, B$ be finite structures.
    For all $0 \leq r,q  \leq \infty$
    and $1 \leq k < \infty$,
    $A \locleq{r}{q}{k} A \uplus B$
    and $\biguplus_{i = 1}^{k + n} A \locleq{r}{q}{k} \biguplus_{i = 1}^{k} A$.
\end{example}
\begin{proof}
    Let us prove that
    $A \locleq{r}{q}{k} A \uplus B$.
    Consider a vector $\vec{a} \in A^k$,
    it is clear that this vector appears as-is in $A \uplus B$
    and since the union is disjoint,
    $\Neighb{A \uplus B}{\vec{a}}{r} = \Neighb{A}{\vec{a}}{r}$.
    In particular, $\NeighbT{A}{\vec{a}}{r}{q} = \NeighbT{A \uplus
    B}{\vec{a}}{r}{q}$ for all $q \geq 0$.

    Let us write $A^{\uplus k} \defined \biguplus_{i = 1}^{k} A$
    and $A^{\uplus k+n} \defined \biguplus_{i = 1}^{k+n} A$.
    Remark that the previous statement shows
    $A^{\uplus k} \locleq{r}{q}{k} A^{\uplus k+n}$.
    Let us prove now $A^{\uplus k+n} \locleq{r}{q}{k} A^{\uplus k}$.
    Consider a vector $\vec{a} \in (A^{\uplus k+n})^k$,
    this vector has elements in at most $k$ copies of $A$,
    hence one can select one copy of $A$ in $A^{\uplus k}$ for each
    of those and consider the exact same elements in those copies.
    As the unions are disjoint, the obtained neighborhoods are
    isomorphic to those in $A^{\uplus k+n}$ and in particular share
    the same local types.
\end{proof}
\begin{ifappendix}%
    \successatone*%
\end{ifappendix}%
\begin{proof}
    Without loss of generality thanks to \cref{fact:qo:refinement}
    we consider a sentence $\varphi$ preserved under
    $\locleq{\infty}{\infty}{1}$.
    Let us prove that $\varphi$ is preserved under $\locleq{r}{q}{k}$
    where all parameters are finite.
    This property combined with \cref{lem:ccl:existlocalnf} will prove that
    $\varphi$ is equivalent to an existential local sentence.

    Let us write $\exists^{\geq k}_r x. \psi(x)$
    as a shorthand for
    $\exists x_1, \dots, x_k. \bigwedge_{1 \leq i \neq j \leq k} d(x_i, x_j) >
    2r \wedge \bigwedge_{1 \leq i \leq k} \psi(x_i)$.
    Thanks to Gaifman's Locality Theorem, we can
    assume that $\varphi$ is a Boolean combination of the following
    basic local sentences for $1 \leq i \leq n$: $\theta_i = \exists^{\geq k_i}_{r_i} x. \psi_i(x)$
    where $\psi_i(x)$ is a $r_i$-local sentence of quantifier rank $q_i$.
    Define $r \defined \max \setof{r_i}{1 \leq i \leq n}$,
    $q \defined \max \setof{q_i}{1 \leq i \leq n}$
    and $k \defined \max \setof{k_i}{1 \leq i \leq n}$.

    Let $A,B$ be two finite structures
    such that $A \models \varphi$ and
    $A \locleq{r}{q}{k} B$. Our goal is to prove that $B \models \varphi$.
    Let us write $B^{\uplus k}$ the disjoint union of $k$ copies of $B$.

    Let us show that $A \uplus B^{\uplus k} \models \varphi$ if and
    only if $B^{\uplus k} \models
    \varphi$. To that end, let us fix $1 \leq i \leq n$ and prove
    that $A \uplus B^{\uplus k} \models \exists^{\geq k_i}_{r_i} x. \psi_i(x)$
    if and only if $B^{\uplus k} \models \exists^{\geq k_i}_{r_i} x. \psi_i(x)$.

    \begin{itemize}

        \item Assume that $A \uplus B^{\uplus k} \models \exists^{\geq k_i}_{r_i} x.
            \psi_i(x)$,
    there exists a vector $\vec{c}$ of witnesses
    of $\psi$
    at pairwise distance
    greater than $2r_i$ in $A \uplus B^{\uplus k}$. 
    If $\vec{c}$ lies in $B^{\uplus k}$ then we conclude, otherwise
    some element of $\vec{c}$ lies in $A$.
    In particular, $A \models \exists x. \psi(x)$. Since $A \locleq{r}{q}{k} B$,
    we know that
    $B \models \exists x. \psi_i(x)$ and
    as $k_i \leq k$, $B^{\uplus k} \models \exists^{\geq k_i}_{r_i} x. \psi_i(x)$.
\item Conversely, whenever $B^{\uplus k} \models \exists^{\geq k_i}_{r_i} x. \psi_i(x)$
    the structure $A \uplus B^{\uplus k}$ satisfies $\exists^{\geq k_i}_{r_i} x. \psi(x)$
    as basic local sentences are preserved under disjoint unions.
    \end{itemize}

    Since $A \locleq{\infty}{\infty}{1} A \uplus B \locleq{\infty}{\infty}{1} A \uplus B^{\uplus k}$
    and $A \models \varphi$, we know that $A \uplus B^{\uplus k} \models \varphi$.
    This implies that $B^{\uplus k} \models \varphi$, and
    we deduce from \cref{ex:loc:disjointunions}
    that $B^{\uplus k}$ is $\locleq{\infty}{\infty}{1}$-equivalent to
    $B$. In particular $B^{\uplus k} \models \varphi$ implies $B \models \varphi$.
\end{proof}
\section{Appendix to ``Localising \ltrsk'' (Section \ref{sec:presthm})}\label{app:sec:presthm}

\subsection{Appendix to ``Localisable Classes'' (Section \ref{sec:presthm:inducedsub})}\label{app:sec:presthm:inducedsub}

We provide here a self-contained proof
an adaptation of \citet[Theorem 4.3]{atserias2008preservation}
tailored to the study of local neighborhoods $\Balls{\XClass}{r}{k}$.

\inducedminimalmodels*

To prove \cref{lem:failure:inducedminimalmodels}, a first attempt
is to
consider a formula $\varphi$ in Gaifman normal form,
and a minimal model $A$ of $\varphi$ with respect to $\subseteq_i$.
The structure $A$ models a conjunction of (potential negations of)
sentences
of the form $\exists^{\geq k}_r x. \psi(x)$,
which we use as a shorthand
for
$\exists x_1, \dots, x_k. \bigwedge_{1 \leq i \neq j \leq k} d(x_i, x_j) >
2r \wedge \bigwedge_{1 \leq i \leq k} \psi(x_i)$.
Considering all basic local sentences that appear positively in the conjunction,
one can build a vector $\vec{a} \in A$ containing the witnesses of the
existential quantifications.
The main hope being that $\Neighb{A}{\vec{a}}{r} \models \varphi$, as
this would imply, by minimality of $A$, that $A = \Neighb{A}{\vec{a}}{r}$.
By letting $R = r$ and $K = |\vec{a}|$, which is bounded
independently of $A$, we would conclude.

Unfortunately the structure $\Neighb{A}{\vec{a}}{r}$ does not
satisfy $\varphi$ in general.
The crucial issue comes
from \emph{intersections}
of neighborhoods: there are new neighborhoods appearing in
$\Neighb{A}{\vec{a}}{r}$ as the intersection of the neighborhood 
of $\vec{a}$ with the neighborhood of a point inside $\Neighb{A}{\vec{a}}{r}$.
This is problematic because $\varphi$ contains basic local sentences that
appear negatively, and the fact that $A$ does not contain some local behaviour
does not transport to $\Neighb{A}{\vec{a}}{r}$.

To tackle this issue, we temporarily leave the realm of first-order
logic and consider $\MSO$ local types,
written $\NeighbM{A}{\vec{a}}{r}{q}$. There are
finitely many $\MSO$ local types up to
logical equivalence at a
given quantifier rank and locality radius.
We update our \emph{type-collector} function accordingly through
\begin{equation}
    \SpecterM{r}{q}{k}(A)
\defined \setof{
\NeighbM{A}{\vec{a}}{r}{q}
}{\vec{a} \in A^k} \, .
\end{equation}

As for first-order local types, $\MSO$ local types
are enough to characterise the natural preorder associated
to \emph{existential local $\MSO$-sentences}, that is, sentences
of the form $\exists \vec{x}. \theta(\vec{x})$, where
$\theta(\vec{x})$ is an $\MSO$ formula $r$-local around $\vec{x}$.
Before going through the proof of~\cref{lem:failure:inducedminimalmodels}
we translate the main properties of local types and existential local
sentences to $\MSO$-local types and existential local $\MSO$-sentences.

\begin{fact}
    \label{fact:pres:mso:specter1}
For all structures $A,B$ such that $\SpecterM{r}{q}{k}(B)$
contains $\SpecterM{r}{q}{k}(A)$,
for all $r$-local $\MSO$ formula $\theta(\vec{x})$ of quantifier rank
less than $q$,
$A \models \exists \vec{x}. \theta(x)$ implies $B \models \exists \vec{x}.
\theta(x)$.
\end{fact}

\begin{fact}
    \label{fact:pres:mso:specter2}
Let $A,B$ be structures such that 
for all $r$-local $\MSO$ formula $\theta(\vec{x})$ of quantifier rank
less than $q$,
$A \models \exists \vec{x}. \theta(x)$ implies $B \models \exists \vec{x}.
\theta(x)$. Then $\SpecterM{r}{q}{k}(B)$
contains $\SpecterM{r}{q}{k}(A)$.
\end{fact}

\begin{lemma}[Local normal form adapted to $\MSO$]
    \label{lem:pres:mso:existlocalnf}
    Let $\XClass$ be a class of finite structures.
    Let $\varphi$ be a sentence preserved over $\XClass$ under $\MSO$-local types with
    finite parameters $(r,k,q)$.
    Then $\varphi$ is equivalent over $\XClass$
    to an existential local $\MSO$ sentence $\psi = 
    \exists \vec{x}. \theta(\vec{x})$
    where $\theta(\vec{x})$ is an $r$-local $\MSO$-formula.
\end{lemma}
\begin{proof}
    It is the same proof as~\cref{lem:ccl:existlocalnf}
    when replacing $\Specter{r}{q}{k}$ with $\SpecterM{r}{q}{k}$,
    leveraging
    \cref{fact:pres:mso:specter1,fact:pres:mso:specter2}.
\end{proof}

An $\MSO$ $(q,r)$-type $t$ with a single free variable
is $R$-\emph{covered} by a subset $C$
if for all realisations $a$ of $t$ in $A$
the $r$-neighborhood $\Neighb{A}{a}{r}$
is included in
$\Neighb{A}{C}{R}$.
The following lemma is a uniform version
of the one given by \citet*[][Lemma 8]{dawar2006approximation}.
It can be thought as a generalisation of
the technique from~\cref{lem:ccl:extentedcover}
to describe the spatial repartition of points of interest
in a structure.

\begin{lemma}[Type covering]
    \label{lem:pres:mso:typecover}
    For all $r,q,k \geq 0$,
    there exists $K_m$ and $R_m$ such that
    for all structures $A \in \Mod(\upsigma)$
    there exists $r \leq R \leq R_m$,
    a subset $C^A \subseteq A$
    and a subset $G^A \subseteq A$
    satisfying the following properties
    \begin{enumerate}[(i)]
        \item $|C^A| \leq K_m$.
        \item $|G^A| \leq K_m$.
        \item Elements of $G^A$ are at pairwise distance greater than $2R$
            and at distance greater than $2R$ of $C^A$.
        \item For every $a \in A$,
            either $\NeighbM{A}{a}{r}{q}$ is $R$-covered
            by $C^A$
            or there exists $k$ elements $b_1, \dots, b_k$
            in $G^A$
            such that
            $\NeighbM{A}{a}{r}{q} = \NeighbM{A}{b_i}{r}{q}$.
    \end{enumerate}
\end{lemma}
\begin{proof}
    Let $r,q,k \geq 0$ be natural numbers and
    consider $Q$ the number of different $\MSO$-local types
    at radius $r$ and quantifier rank $q$ around $1$ variable.
    Define $K_m = Q \times Q \times k$ and $R_m = 3^{Q+1} r$.

    Let $A$ be a structure and consider
    $T \defined \SpecterM{r}{q}{1}(A)$.
    By definition, $|T| \leq Q$.
    We construct by induction sets $S_i \subseteq T$
    and $C_i \subseteq A$ such that
    every type of $S_i$ is $3^i r$-covered by $C_i$
    and
    $|C_i| \leq K_m$.
    Let $S_0 \defined \emptyset$, $C_0 \defined \emptyset$.
    Assume that $S_i$ and $C_i$
    have been defined and that the following holds:
    \begin{align*}
        \forall G \subseteq A,
        &\, \exists u, v \in G, d_A(u,v) \leq 2 \times 3^i r \\
        &\vee  \exists u \in G, d_A(u,C_i) \leq 2 \times 3^i r \\
        &\vee  \exists t \in T \setminus S_i,
        \neg \left(\exists u_1 \neq u_2 \neq \dots \neq u_k \in G, \bigwedge_{1 \leq i \leq k}
        \NeighbT{A}{u_i}{r}{q} = t\right)
    \end{align*}

    We enumerate types in $T \setminus S_i$ in a sequence
    $(t_p)_{1 \leq p \leq |T \setminus S_i|}$.
    Using this sequence, we construct
    iteratively a set $G_i^j$ of size at most $Q \times k$
    such that points of $G_i^j$ are at pairwise distance greater than
    $2 \times 3^i r$ and at distance greater than $2 \times 3^i r$  from $C_i$
    as follows. Let $G_i^0 \defined \emptyset$,
    and construct $G_i^{j+1}$ by selecting the first type $t_p$
    that has less than $k$ witnesses in $G_i^{j}$, this is possible
    by the assumption made on $S_i$ and $C_i$. 
    If there exists a point $a \in A$ at distance greater than $2 \times 3^i r$
    from $G_i^j$ and $C_i$ and of type $t_p$,
    we can add it to $G_i^j$ to build $G_i^{j+1}$.
    Otherwise, every choice of point $a \in A$ of type $t_p$
    is at distance at most $2 \times 3^i r$ from $C_i \cup G_i^j$,
    in this case let $C_{i+1} = C_i \cup G_i^j$ and $S_{i+1} = S_i \cup \{ t_p
    \}$, the hypothesis on $S_i$ is that
    every type in $S_i$ is $3^{i} r$-covered by $C_i$
    and we showed that $t_p$ was $3^{i+1} r$-covered by $C_i \cup G_i^j$.

    The set $S_i$ is strictly increasing and of size bounded by $Q$,
    as a consequence, there exists a step $i \leq Q$ such that one cannot
    build $S_{i+1}$ and $C_{i+1}$.
    By definition, this means that one can build a subset $G \subseteq A$
    such that 
    \begin{align*}
        &\forall u, v \in G, d_A(u,v) > 2 \times 3^i r \\
        &\wedge  \forall u \in G, d_A(u,C_i) > 2 \times 3^i r \\
        &\wedge  \forall t \in T \setminus S_i,
        \exists u_1 \neq u_2 \neq \dots \neq u_k \in G, \bigwedge_{1 \leq i \leq k}
        \NeighbT{A}{u_i}{r}{q} = t
    \end{align*}

    By extracting only $k$ elements per type in $T \setminus S_i$
    from such a set $G$,
    we can construct that $G^A$ of size at most $Q \times k \leq K_m$.
    Let $C^A \defined C_i$ and $R = 3^i r$. The only thing left to check
    is the size of $C^A$, which is below $Q \times Q \times k$ since
    we do at most $Q$ steps and each step adds at most $Q \times k$ elements
    to $C_i$.
\end{proof}

%\begin{lemma}[\protect{\citet*[][Lemma 8]{dawar2006approximation}}]
    %\label{lem:failure:spatialdistrib}
    %For all $q,r,k \geq 0$
    %there exists $\hat{K} \geq k$
    %and $\hat{R} \geq r$
    %such that for all structures
    %$A \in \Mod(\upsigma)$
    %there exists $k \leq K \leq \hat{K}$,
    %$r \leq R \leq \hat{R}$
    %and $C^A \subseteq A$
    %such that for $\kappa = 2^{K^2 - 1}$
    %the following properties are satisfied:
    %\begin{enumerate}
        %\item $|C^A| \leq K$;
        %\item Elements of $C^A$ are pairwise
            %distant of at least $\kappa \times 10 R$;
        %\item Each $(r,q)$-$\MSO$ local type realised in $A$
            %is either $R$-covered by $C^A$
            %or $(\kappa 10 R, 10K)$-free over $C^A$.
    %\end{enumerate}
%\end{lemma}

Following the terminology used
by \citeauthor*{dawar2006approximation}
given a subset $C^A$ obtained through
the preceding lemma, types covered by $C^A$
will be called \emph{rare}
and those obtained in $G^A$ will be called
\emph{frequent}.

\begin{figure}[ht]
    \centering
\def\svgwidth{1\columnwidth}
\import{./figures/}{mso_construction.pdf_tex}

    \caption{Construction of $\Neighb{A}{C^A}{5r} \uplus E'$ using $\MSO$-types.}
    \label{fig:mso_construction}
\end{figure}

\begin{lemma}[Preservation under $\MSO$ types]
    \label{lem:failure:presmso}
    Let $\XClass \subseteq \ModF(\upsigma)$
    that is hereditary and closed under
    disjoint unions
    and $\varphi \in \FO[\upsigma]$
    a sentence
    preserved under $\subseteq_i$ over $\XClass$.
    There exists $R,Q,K$ such that for all $A, B \in \XClass$,
    $A \models \varphi$ and $\SpecterM{R}{Q}{K}(A)
    \subseteq \SpecterM{R}{Q}{K}(B)$
    implies $B \models \varphi$.
\end{lemma}
\begin{proof}
    Consider a sentence $\varphi$
    preserved under $\subseteq_i$ over $\XClass$.
    As a first step, we 
    write $\varphi$ in Gaifman normal form
    and collect $\theta_1, \dots, \theta_l$
    the basic local sentences appearing in this normal form.
    The sentences $\theta_i$ are
    of the form $\exists^{\geq k_i}_{r_i} x. \psi_i(x)$
    where $\psi_i$ is an $r_i$-local formula around $x$
    of quantifier rank $q_i$.
    Recall that we use $\exists^{\geq k_i}_{r_i} x. \psi_i(x)$
    as a shorthand
    for
    $\exists x_1, \dots, x_{k_i}. \bigwedge_{1 \leq p \neq q \leq k_i} d(x_p, x_q) >
    2r_i \wedge \bigwedge_{1 \leq p \leq k} \psi_i(x_p)$.

    Let $r$ be the maximal locality radius of these
    sentences, $q$ their maximal quantifier rank
    and $k$ the maximal number of external existential
    quantifications.
    We use \cref{lem:pres:mso:typecover}
    over the tuple $(2r, 2 \times k \times l, q + 1)$ to obtain
    numbers $2r \leq R_m$ and $k \leq K_m$.
    Define $K = 2 K_m$, $R = 2R_m$ and $Q = 2R_m + k + q + 1 + \max \rk \theta_i$.
    Our goal is to prove that
    $\varphi$ is preserved under $\MSO$-local types with
    parameters $(R,Q,K)$.

    Let us now consider $A \models \varphi$.
    We build
    $C^A$ and $G^A$ as provided
    by \cref{lem:pres:mso:typecover}
    over the tuple $(2r,2k, q + 1)$. Without loss of generality
    assume that the radius given is $R_m$ (the largest possible one).
    The sets $C^A$ and $G^A$ are of size below $K_m$.
    We call $I_f$ the set of indices $1 \leq i \leq l$
    such that $\psi_i(x)$ is in a type represented by $G^A$.
    We call $I_m$ the set of indices $1 \leq i \leq l$
    such that $\exists X. \psi_i^X(x)$
    is in a type represented by $G^A$, where
    $\psi_i^X(x)$ is the relativisation to the set variable $X$
    of $\psi_i$.

    Let $B$ be a structure such that
    $\SpecterM{R}{Q}{K}(A)$
    is contained in
    $\SpecterM{R}{Q}{K}(B)$.
    By definition, there exists
    sets $C^B$ and $G^B$
    in $B$ such that
    \begin{equation}
        \NeighbM{A}{C^A G^A}{R}{Q}
        =
        \NeighbM{B}{C^B G^B}{R}{Q}
        \, .
    \end{equation}

    As $Q$ is large enough
    to check distances up to $R$,
    the distance between two elements in $G^B$
    is greater than $2R$,
    and the distance between one element of $G^B$
    and one element of $C^B$ is greater than $2R$.

    Let us define
    $E \defined B \setminus \Neighb{B}{C^B}{R}$.
    Notice that $\Neighb{B}{G^B}{r} \subseteq E$
    because $2r \leq R_m \leq 2R_m = R$ and the distance between $E$ and $C^B$
    is greater than $2R_m$.
    Let $i \in I_m$, one can choose $k$ elements in $G^B$
    such that $B, b \models \exists X. \psi_i^X(x)$,
    let us call this vector $\vec{b}_i^m$.
    Let $i \in I_f$, one can choose $k$ elements in $G^B$
    such that $B, b \models \psi_i^X(x)$,
    let us call this vector $\vec{b}_i^f$.
    Without loss of generality since types in $G^B$ have
    $2 \times k \times l$ witnesses, we can assume that
    none of these vectors share elements. 
    Let $i \in I_m$ and $b \in \vec{b}_i^m$,
    there exists $b \in F_b \subseteq \Neighb{B}{b}{r}$
    such that $\Neighb{B}{b}{r} \cap F_b, b \models \psi_i(x)$.
    Let us build $E'$ as the structure $E$ where the complementaries
    of the sets $F_b$ have been removed, i.e.
    $E' \defined E \setminus \bigcup_{i \in I_m} \bigcup_{b \in \vec{b}_i^m}
    \Neighb{B}{b}{r} \setminus F_b$.

    We assert that for every $1 \leq i \leq l$
    the following properties are equivalent
    \begin{enumerate}[(i)]
        \item $A \uplus E' \models \theta_i$,
        \item $\Neighb{A}{C^A}{R_m} \uplus E' \models \theta_i$,
        \item $\Neighb{B}{C^B}{R_m} \uplus E' \models \theta_i$.
    \end{enumerate}

    Since $\NeighbM{A}{C^A}{R}{Q} = \NeighbM{B}{C^B}{R}{Q}$
    and $r \leq R_m \leq R$, it is clear that 
    $\NeighbM{A}{C^A}{R_m}{Q} = \NeighbM{B}{C^B}{R_m}{Q}$.
    Therefore,
    $\Neighb{A}{C^A}{R_m} \uplus E'$ and
    $\Neighb{B}{C^B}{R_m} \uplus E'$ satisfy the same first-order
    sentences at quantifier rank less than $Q$.
    As $Q \geq \rk \theta_i$ for $1 \leq i \leq l$, this proves the equivalence
    between $(iii)$ and $(ii)$.
    Let us now prove that $(i)$ is equivalent to $(ii)$.
    \begin{itemize}
        \item Assume $A \uplus E' \models \theta_i$.
            \begin{itemize}
                \item If all witnesses $a$ of $\theta_i$
                    are such that $\Neighb{A}{a}{r} \subseteq
                    \Neighb{A}{C^A}{R_m}$
                    or $a \in E'$ then we conclude.
                \item If some witness $a$ of $\theta_i$
                    is in $A$ but $\Neighb{A}{a}{r}$ is not included in
                    $\Neighb{A}{C^A}{R_m}$. By definition of $C^A$ the type
                    of $a$ is found at least $2 kl$ times in $G^A$.
                    As a consequence, the vector $\vec{b}_i^f$
                    of size $k$ is found in $G^B$ such that
                    $\Neighb{B}{\vec{b}_i^f}{r} \subseteq E'$
                    and
                    we have proven that $E' \models \theta_i$.
                    In turn, this implies $\Neighb{A}{C^A}{r_1} \uplus E' \models
                    \theta_i$.
            \end{itemize}
        \item Assume that $\Neighb{A}{C^A}{r_1} \uplus E' \models \theta_i$.
            \begin{itemize}
                \item If all witnesses $a$ of $\theta_i$
                    are such that $\Neighb{A}{a}{r} \subseteq
                    \Neighb{A}{C^A}{R_m}$
                    or $a \in E'$ then we conclude.
                \item If some witness $a$
                    is found in $\Neighb{A}{C^A}{R_m}$
                    but $\Neighb{A}{a}{r}$ is not included in
                    $\Neighb{A}{C^A}{R_m}$,
                    then the type of $a$ in $A$ is found at least
                    $2kl$ times in $G^A$.
                    Notice that $A, a \models \exists X. \psi_i^X(x)$,
                    where $X$ can be chosen as $\Neighb{A}{a}{r} \cap
                    \Neighb{A}{C^A}{R_m}$.

                    Let $b \in \vec{b}_i^m$,
                    $E', b \models \psi_i(x)$ because
                    $\Neighb{B}{b}{r}\cap F_b \models \psi_i(x)$
                    and $\Neighb{E'}{b}{r} = \Neighb{B}{b}{r} \cap F_b$.
                    As the elements of $\vec{b}_i^m$ are at pairwise distance
                    greater than $2R_m \geq 2r$.

                    We have proven that $E' \models \theta_i$,
                    hence, that $A \uplus E' \models \theta_i$.
            \end{itemize}
    \end{itemize}

    The assumption that $A \models \varphi$, which is closed under
    disjoint unions, 
    directly leads to $A \uplus E' \models \varphi$. Moreover, the equivalences
    above
    assert that $\Neighb{B}{C^B}{R_m} \uplus E' \models \varphi$.
    As $\Neighb{B}{C^B}{R_m} \uplus E' \subseteq_i B$
    and since $\varphi$ is preserved under $\subseteq_i$
    we conclude that $B \models \varphi$.

\end{proof}
\begin{ifappendix}
    \inducedminimalmodels*
\end{ifappendix}
\begin{proof}
    Applying  \cref{lem:failure:presmso}
    provides $R,Q,K$
    such that $\varphi$ is preserved
    over $\XClass$ under $(R,Q,K)$ $\MSO$ types.
    Using \cref{lem:pres:mso:existlocalnf},
    $\varphi$ is equivalent over $\XClass$ to a sentence
    $\psi = \exists \vec{x}. \theta(\vec{x})$
    where $\theta(\vec{x})$ is a $r$-local $\MSO$-formula.

    Consider $A$ a minimal model of $\psi$ with respect to $\subseteq_i$,
    in $\XClass$. As $A \models \psi$, there exists a vector $\vec{a} \in
    A^{|\vec{x}|}$
    such that $A, \vec{a} \models \theta(\vec{x})$,
    and since $\theta$ is $r$-local, this
    proves that $\Neighb{A}{\vec{a}}{r}, \vec{a} \models \theta(\vec{x})$.
    As a consequence, $\Neighb{A}{\vec{a}}{r} \models \psi$
    and lies in $\XClass$ at is it hereditary.
    The minimality of $A$ proves that
    $A = \Neighb{A}{\vec{a}}{r}$, thus, that
    $A \in \Balls{\XClass}{r}{|\vec{x}|}$ which is independent of $A$.
\end{proof}
    \begin{ifappendix}%
        \exwithoutdisjointunions*%
    \end{ifappendix}%
    \begin{proof}
        This class is \emph{not}
        closed under disjoint unions.
        The sentence $\varphi$ detects the cycles in $\mathcal{C}$ which are
        all incomparable maximal elements for $\subseteq_i$.
        As a consequence, $\varphi$ is monotone for $\subseteq_i$ in $\mathcal{C}$,
        and for every pair $k,r$
        there exists a
        minimal models for $\varphi$
        that is not in $\Balls{\mathcal{C}}{r}{k}$.
\end{proof}
\subsection{Appendix to ``Preservation Under Extensions'' (Section \ref{sec:presthm:localdecstruct})}\label{app:sec:presthm:localdecstruct}

\begin{figure}
\begin{center}

\begin{tikzpicture}
    \newcommand{\tikztemplategraph}{
        \node[draw, circle,added] (A) at (0.5,0) {};
        \node[draw, circle,temporary] (B) at (2,0) {$b_1$};
        \node[draw, circle,noise] (C) at (3.5,0) {};
        \node[draw, circle,substruct] (D) at (5,0) {};
        \node[draw, circle,substruct] (E) at (0,1) {};
        \node[draw, circle,substruct] (F) at (1,1) {};
        \node[draw, circle,substruct] (G) at (2,1) {};
        \node[draw, circle,noise] (H) at (3,1) {};
        \node[draw, circle,temporary] (I) at (4,1) {$b_2$};
        \node[draw, circle,substruct] (J) at (5,1) {};

        \draw[added] (A) -- (E);
        \draw[added] (A) -- (F);
        \draw[noise] (G) -- (H) -- (C) -- (I) -- (H);
        \draw[substruct] (E) -- (F) -- (G);
        \draw[temporary] (B) -- (G);
        \draw[temporary] (B) -- (A);
        \draw[temporary] (I) -- (D);
        \draw[temporary] (I) -- (J);
    }
    \begin{scope}[yshift=0cm,
        noise/.style={color=black},
        temporary/.style={color=blue,font=\tiny,inner sep=0pt},
        substruct/.style={color=red},
        added/.style={color=teal},
        ]
        \draw (3,-1) node {The whole structure $B\models\varphi$
        is centered around $(b_1,b_2)$.};
        \tikztemplategraph
    \end{scope}
    \begin{scope}[yshift=3cm,
        noise/.style={color=white},
        temporary/.style={color=blue,font=\tiny,inner sep=0pt},
        substruct/.style={color=red},
        added/.style={color=teal},
        ]
        \draw (3,-1) node {The extended substructure $A'\in\Balls{\XClass}{2}{2}$.};
        \tikztemplategraph
    \end{scope}
    \begin{scope}[yshift=6cm,
        noise/.style={color=white},
        temporary/.style={color=white},
        added/.style={color=white},
        substruct/.style={color=red}]
        \draw (3,-1) node {The induced substructure $A\models\theta$.};
        \tikztemplategraph
    \end{scope}

\end{tikzpicture}

\end{center}
\caption{Extracting a structure in $\Balls{\XClass}{2}{2}$
from an induced substructure $A$ of a larger structure $B$.}
\label{fig:presthm:extracting}
\end{figure}

    \begin{ifappendix}
        \localtarski*
    \end{ifappendix}
\begin{proof}
    Assume that the {\puext} holds on the spaces 
    $\Balls{\XClass}{r}{k}$ for $r,k \geq 1$, 
    the class $\XClass$,
    \cref{lem:pres:locwbpres}
    proves that $\XClass$ satisfies preservation under extensions.

    Conversely, assume that the {\puext} holds over $\XClass$.
    Let us fix $r,k \in \mathbb{N}^2$
    and consider
    a sentence $\varphi$ preserved under
    extensions over $\Balls{\XClass}{r}{k}$.

    Let $\psi \defined \exists x_1, \dots, x_k.
    \varphi_{|\Neighb{}{\vec{x}}{r}}$,
    i.e.
    the relativisation of $\varphi$ to a neighborhood
    of size $r$ around some elements $x_1, \dots, x_k$.
    It is left as an exercice to prove that $\psi$ and $\varphi$
    are equivalent on $\Balls{\XClass}{r}{k}$.

    Assume that $A, B \in \XClass^2$ are such that
    $A \models \psi$ and $A \subseteq_i B$.
    This amounts to the existence of a strong injective
    morphism $h \colon A \to B$.
    Since $A \models \psi$, there exists $\vec{a} \in A^k$
    (possibly with repetition) 
    such that $\Neighb{A}{\vec{a}}{r} \models \varphi$.
    Moreover, $\Neighb{A}{\vec{a}}{r} \subseteq_i
    \Neighb{B}{h(\vec{a})}{r}$ through the morphism $h$.
    As both neighborhoods are built using radius $r$ and
    tuples of $k$ elements, they belong to $\Balls{\XClass}{r}{k}$.
    The assumption on $\varphi$ implies
    that $\Neighb{B}{h(\vec{a})}{r} \models \varphi$
    and in turn
    this proves
    $B \models \psi$.

    Using the {\puext} over $\XClass$ on $\psi$
    gives an existential sentence $\theta$
    equivalent to $\psi$ over $\XClass$.
    In particular, $\theta$ is equivalent to $\psi$ over
    $\Balls{\XClass}{r}{k}$ since it is a subset,
    thus $\theta$ is equivalent to $\psi$ on $\Balls{\XClass}{r}{k}$.

    We have proven that {\puext} holds over $\Balls{\XClass}{r}{k}$.
\end{proof}

    \begin{ifappendix}
        \finiteballs*
    \end{ifappendix}
\begin{proof}
Notice that whenever $A$
has a $(r,m)$-scattered set
then it cannot be a member of $\Balls{\XClass}{r}{k}$ with $k < r$.
Conversely, if $A$ does not contain any $(r,m)$-scattered set,
then it is a union of $m-1$ balls of radius $r$.
Assume that $\XClass$ is wide, the set
$\Balls{\XClass}{r}{k}$ contains only structures of size less than $\rho(r,k)$
and is therefore finite.
Assume that $\Balls{\XClass}{r}{k}$ is finite for every $r,k \in \mathbb{N}$,
then we let $\rho(r,m) \defined \max \setof{|A|}{A \in \Balls{\XClass}{r}{k}, k
< m}$,
whenever $|A| > \rho(r,m)$, the structure $A$ cannot be in $\Balls{\XClass}{r}{k}$
with $k < m$, hence it has a $(r,m)$-scattered set.
\end{proof}
\subsection{Appendix to ``Preservation under Extensions on Locally Well-Behaved Classes'' (Section \ref{sec:presthm:localwb})}\label{app:sec:presthm:localwb}

    \begin{ifappendix}
        \locboundedtd*
    \end{ifappendix}
    \begin{proof}
    Assuming that for all $k$ there is a bound
    on the tree-depth of
    elements of $\Balls{\XClass}{r}{k}$, we define
    $\rho(r)$ to be
    the maximum of $\td(A)$
    for $A$ in $\Balls{\XClass}{r}{1}$.

    Conversely, assume that there exists
    an increasing function $\rho$ such that
    $\td(A) \leq \rho(r)$ for every $A \in \Balls{\XClass}{r}{1}$
    a simple induction on $A$ shows that
    $\td(A)  \leq \rho(r \times (2k + 1))$ for every $A \in \Balls{\XClass}{r}{k}$.
    \begin{itemize}
        \item If $A \in \Balls{\XClass}{r}{k}$ can be written $A_1 \uplus A_2$
            then $\td(A) = \max(\td(A_1), \td(A_2))$ and
            we conclude by induction hypothesis since $\rho$ is increasing.
        \item If $A$ is totally connected and in $\Balls{\XClass}{r}{k}$
            then it is included in a ball of radius $r \times (2k + 1)$
            hence the tree-depth is bounded by
            $\rho(r \times (2k + 1))$.
            \qedhere
    \end{itemize}
    \end{proof}

    \begin{ifappendix}
        \pointedgraphs*
    \end{ifappendix}
\begin{proof}
    The class $P\Delta_2$ is not locally well-quasi-ordered
    because it contains the infinite antichain of \emph{wheels},
    that are cycles with one added point connected to the whole cycle.
    Les us prove that it satisfies preservation under extensions.

    The class $C\Delta_2$ of
    elements of $\Delta_2$ with one added colour $c$
    satisfies preservation under extensions
    because it is hereditary, closed under disjoint unions 
    and locally finite.

    We will use generic stability properties of preservation
    theorems~\cite[e.g.][Section 5]{lopez2020preservation}
    to transport the theorem over $C\Delta_2$ to
    $P\Delta_2$.
    The class $I\Delta_2$ defined
    as elements of $C\Delta_2$ satisfying
    $\exists x. c(x)$ and $\forall x,y. c(x) \wedge c(y) \implies x = y$
    satisfies preservation under extensions
    as it is a Boolean combination of definable upwards closed
    subsets of $C\Delta_2$.

    Let us define
    $I \colon I\Delta_2 \to P\Delta_2$
    the first-order interpretation
    via $I_E(x,y) \defined c(x) \vee c(y) \vee E(x,y)$.
    The map $I$ is surjective
    and monotone with respect to $\subseteq_i$,
    as a consequence, $P\Delta_2$
    satisfies preservation under extensions.
\end{proof}

\begin{figure}[t]
    \centering
    \begin{tikzpicture}[scale=0.5, circle/.style = {
            inner sep=2pt,
        }]
        \def \diamondsize {5}
        \def \diamondprop {360 / \diamondsize}
        \newcommand{\tpath}[3]{
            \pgfmathsetmacro{\moinsun}{int(#1 - 1)}
            \pgfmathsetmacro{\moinsdeux}{int(#1 - 2)}
            \foreach \i in {1,...,\moinsun} {
                \pgfmathsetmacro{\coefa}{\i / #1}
                \pgfmathsetmacro{\coefb}{1 - \i / #1}
                \node[draw,circle] (#2#3p{\i}) at
                    (barycentric cs:#2=\coefa,#3=\coefb)
                    {};
            }
            \draw (#3) -- (#2#3p{1});
            \draw (#2#3p{\moinsun}) --  (#2);
            \foreach \i in {1,...,\moinsdeux} {
                \pgfmathsetmacro{\plusun}{int(\i + 1)}
                \draw (#2#3p{\i}) --  (#2#3p{\plusun});
            }
        }
        \node[draw,circle] (A) at (-3,0) {};
        \node[draw,circle] (B) at (3,0)  {};
        \node[draw,circle] (C) at (0,2)  {};
        \node[draw,circle] (D) at (0,-2) {};

        \tpath{3}{A}{B};
        \draw (A)  .. controls (0,3.5) .. (B);
        \tpath{4}{C}{A};
        \tpath{4}{C}{B};
        \tpath{4}{D}{A};
        \tpath{4}{D}{B};

        \foreach \y in {1,...,2} {
            \tpath{4}{C}{ABp{\y}}
            \tpath{4}{D}{ABp{\y}}
        }

    \end{tikzpicture}
    \caption{An element of $\mathsf{\uplus D}$, $D_4$}
    \label{fig:diamond_class}
\end{figure}

\end{ifappendix}

\end{document}